\documentclass[11pt]{article}
\sloppy

\usepackage{amsmath,amsthm,amssymb}
\usepackage{fullpage}
\usepackage{authblk}
\usepackage{hyperref}
\usepackage[usenames,dvipsnames,svgnames,table]{xcolor}
\usepackage{graphicx}
\usepackage{caption}
\usepackage{subcaption}

\newtheorem{theorem}{Theorem}
\newtheorem{problem}{Problem}
\newtheorem{corollary}[theorem]{Corollary}
\newtheorem{lemma}{Lemma}
\newtheorem{proposition}{Proposition}


\newsavebox{\mybox}
\newenvironment{boxedproblem}[2]
{\begin{lrbox}{\mybox}\begin{minipage}{0.95\textwidth} \begin{problem}[\textsc{#1}]\label{#2} \  }
{\end{problem} \end{minipage}\end{lrbox} \begin{center}\fbox{\usebox{\mybox}} \end{center}}

\newcommand{\qedclaim}{\hfill $\diamond$ \medskip}

\newenvironment{proofof}[1]{\medskip\noindent\emph{Proof of #1. }\ignorespaces}{\hfill\qed\medskip\par\noindent\ignorespacesafterend}

\theoremstyle{definition}
\newtheorem{definition}{Definition}

\newcounter{propertycounter}

\newcommand{\todo}[1]{}
\renewcommand{\todo}[1]{{\color{red} TODO: {#1}}}

\begin{document}

\title{Fully polynomial FPT algorithms for some classes of bounded clique-width graphs\thanks{This work
    has been partially supported by ANR project Stint under reference
    ANR-13-BS02-0007 and ANR program ``Investments for the Future'' under reference
    ANR-11-LABX-0031-01.}}
\date{}

\author[1]{David Coudert}
\author[1,2,3]{Guillaume Ducoffe}
\author[2,4]{Alexandru Popa}

\affil[1]{\small Universit\'e C\^ote d'Azur, Inria, CNRS, I3S, France}
\affil[2]{\small National Institute for Research and Development in Informatics, Romania}
\affil[3]{\small The Research Institute of the University of Bucharest ICUB, Romania}
\affil[4]{\small University of Bucharest, Faculty of Mathematics and Computer Science}

\maketitle

\begin{abstract}
Parameterized complexity theory has enabled a refined classification of the difficulty of NP-hard optimization problems on graphs with respect to key structural properties, and so to a better understanding of their true difficulties.  
More recently, hardness results for problems in P were achieved using reasonable complexity theoretic assumptions such as: Strong Exponential Time Hypothesis (SETH), 3SUM and All-Pairs Shortest-Paths (APSP). 
According to these assumptions, many graph theoretic problems do not admit truly subquadratic algorithms, nor even truly subcubic algorithms (Williams and Williams, FOCS 2010 and Abboud, Grandoni, Williams, SODA 2015).
%
A central technique used to tackle the difficulty of the above mentioned problems is fixed-parameter algorithms for polynomial-time problems with {\em polynomial dependency} in the fixed parameter (P-FPT). 
This technique was rigorously formalized by Giannopoulou et al. (IPEC 2015).
Following that, it was continued by Abboud, Williams and Wang in SODA 2016, by Husfeldt (IPEC 2016) and  Fomin et al. (SODA 2017), using the treewidth as a parameter.
Applying this technique to {\em clique-width}, another important graph parameter, remained to be done.

In this paper we study several graph theoretic problems for which hardness results exist such as {\em cycle problems} (triangle detection, triangle counting, girth, diameter), {\em distance problems} (diameter, eccentricities, Gromov hyperbolicity, betweenness centrality) and {\em maximum matching}. 
We provide hardness results and fully polynomial FPT algorithms, using clique-width and some of its upper-bounds as parameters (split-width, modular-width and $P_4$-sparseness). 
We believe that our most important result is an $\mathcal{O}(k^4 \cdot n + m)$-time algorithm for computing a maximum matching where $k$ is either the modular-width or the $P_4$-sparseness. 
The latter generalizes many algorithms that have been introduced so far for specific subclasses such as cographs, $P_4$-lite graphs, $P_4$-extendible graphs and $P_4$-tidy graphs.

Our algorithms are based on preprocessing methods using modular decomposition, split decomposition and primeval decomposition.
Thus they can also be generalized to some graph classes with unbounded clique-width.  

\end{abstract}

\section{Introduction}

The classification of problems according to their complexity is one of the main goals in computer science.
This goal was partly achieved by the theory of NP-completeness which helps to identify the problems that are unlikely to have polynomial-time algorithms.
However, there are still many problems in P for which it is not known if the running time of the best current algorithms can be improved.
Such problems arise in various domains such as computational geometry, string matching or graphs.
Here we focus on the existence and the design of {\em linear-time} algorithms, for solving several graph problems when restricted to classes of bounded {\em clique-width}.
The problems considered comprise the detection of short cycles ({\it e.g.}, {\sc Girth} and {\sc Triangle Counting}), some distance problems ({\it e.g.}, {\sc Diameter}, {\sc Hyperbolicity}, {\sc Betweenness Centrality}) and the computation of maximum matchings in graphs.
We refer to Sections~\ref{sec:pbs-cw},~\ref{sec:dist-pbs} and~\ref{sec:maxmatching}, respectively, for a recall of their definitions.

Clique-width is an important graph parameter in structural graph theory, that intuitively represents the closeness of a graph to a cograph --- {\it a.k.a.}, $P_4$-free graphs~\cite{CPS85,CoB00}.
Some classes of perfect graphs, including distance-hereditary graphs, and so, trees, have bounded clique-width~\cite{GoR00}.
Furthermore, clique-width has many algorithmic applications.
Many algorithmic schemes and metatheorems have been proposed for classes of bounded clique-width~\cite{CMR00,Cou12,EGW01}.
Perhaps the most famous one is Courcelle's theorem, that states that every graph problem expressible in Monadic Second Order logic ($MSO_1$) can be solved in $f(k) \cdot n$-time when restricted to graphs with clique-width at most $k$, for some computable function $f$ that only depends on $k$~\cite{CMR00}.
Some of the problems considered in this work can be expressed as an $MSO_1$ formula. 
However, the dependency on the clique-width in Courcelle's theorem is super-polynomial, that makes it less interesting for the study of graphs problems in P.
Our goal is to derive a {\em finer-grained} complexity of polynomial graph problems when restricted to classes of bounded clique-width, that requires different tools than Courcelle's theorem.

\medskip
Our starting point is the recent theory of ``Hardness in P'' that aims at better hierarchizing the complexity of polynomial-time solvable problems~\cite{VaW15}. 
This approach mimics the theory of NP-completeness.
Precisely, since it is difficult to obtain unconditional hardness results, it is natural to obtain hardness results assuming some complexity theoretic conjectures. 
In other words, there are key problems that are widely believed not to admit better algorithms such as 3-SAT (k-SAT), 3SUM and All-Pairs Shortest Paths (APSP).
Roughly, a problem in P is hard if the existence of a faster algorithm for this problem implies the existence of a faster algorithm for one of these fundamental problems mentioned above.
In their seminal work, Williams and Williams~\cite{VaW10} prove that many important problems in graph theory are all equivalent under subcubic reductions.
That is, if one of these problems admits a truly sub-cubic algorithms, then all of them do.
Their results have extended and formalized prior work from, {\it e.g.},~\cite{GaO95,KrS06}.
The list of such problems was further extended in~\cite{AGV15,BCH16}.

Besides purely negative results ({\it i.e.}, conditional lower-bounds) the theory of ``Hardness in P'' also comes with renewed algorithmic tools in order to leverage the existence, or the nonexistence, of improved algorithms for some graph classes.
The tools used to improve the running time of the above mentioned problems are similar to the ones used to tackle NP-hard problems, namely approximation and FPT algorithms.
Our work is an example of the latter, of which we first survey the most recent results.

\paragraph{Related work: Fully polynomial parameterized algorithms.}
FPT algorithms for polynomial-time solvable problems were first considered by Giannopoulou et al.~\cite{GMN17}.
Such a parameterized approach makes sense for any problem in P for which a conditional hardness result is proved, or simply no linear-time algorithms are known. 
Interestingly, the authors of~\cite{GMN17} proved that a matching of cardinality at least $k$ in a graph can be computed in ${\cal O}(kn + k^3)$-time.  
We stress that {\sc Maximum Matching} is a classical and intensively studied problem in computer science~\cite{DuP14,FGV99,FPT97,GaT83,KaS81,MiV80,MNN16,YuY93}. 
The well known ${\cal O} (m\sqrt{n})$-time algorithm in~\cite{MiV80} is essentially the best so far for {\sc Maximum Matching}. 
Approximate solutions were proposed by Duan and Pettie~\cite{DuP14}.

More related to our work is the seminal paper of Abboud, Williams and Wang~\cite{AVW16}.
They obtained rather surprising results when using {\em treewidth}: another important graph parameter that intuitively measures the closeness of a graph to a tree~\cite{Bod06}.
Treewidth has tremendous applications in pure graph theory~\cite{RoS86} and parameterized complexity~\cite{Cou90}.
Furthermore, improved algorithms have long been known for ''hard'' graph problems in P, such as {\sc Diameter} and {\sc Maximum Matching}, when restricted to trees~\cite{Jor69}.
However, it has been shown in~\cite{AVW16} that under the Strong Exponential Time Hypothesis, for any $\varepsilon >0$ there can be no $2^{o(k)} \cdot n^{2-\varepsilon}$-time algorithm for computing the diameter of graphs with treewidth at most $k$.
This hardness result even holds for {\em pathwidth}, that leaves little chance to find an improved algorithm for any interesting subclass of bounded-treewidth graphs while avoiding an exponential blow-up in the parameter.
We show that the situation is different for clique-width than for treewidth, in the sense that the hardness results for clique-width do not hold for important subclasses.

We want to stress that a familiar reader could ask why the hardness results above do not apply to clique-width directly since it is upper-bounded by a function of treewidth~\cite{CoR05}.
However, clique-width cannot be {\em polynomially} upper-bounded by the treewidth~\cite{CoR05}.
Thus, the hardness results from~\cite{AVW16} do not preclude the existence of, say, an ${\cal O}(k n)$-time algorithm for computing the diameter of graphs with clique-width at most $k$.

\medskip
On a more positive side, the authors in~\cite{AVW16} show that {\sc Radius} and {\sc Diameter} can be solved in $2^{{\cal O}(k\log{k})} \cdot n^{1+{\cal O}(1)}$-time, where $k$ is treewidth.
Husfeldt~\cite{Hus16} shows that the eccentricity of every vertex in an undirected graph on $n$ vertices can be computed in time $n \cdot \text{exp}\left[{\cal O}(k \log d) \right]$, where $k$ and $d$ are the treewidth and the diameter of the graph, respectively.
More recently, a tour de force was achieved by Fomin et al.~\cite{FLPS+17} who were the first to design parameterized algorithms with {\em polynomial dependency} on the treewidth, for {\sc Maximum Matching} and {\sc Maximum Flow}.
Furthermore they proved that for graphs with treewidth at most $k$, a tree decomposition of width ${\cal O}(k^2)$ can be computed in ${\cal O}(k^7 \cdot n\log n)$-time.
We observe that their algorithm for {\sc Maximum Matching} is {\em randomized}, whereas ours are deterministic.

We are not aware of the study of another parameter than treewidth for polynomial graph problems.
However, some authors choose a different approach where they study the parameterization of a fixed graph problem for a broad range of graph invariants~\cite{BFNN17,FKMN+17,MNN16}.
As examples, {\em clique-width} is part of the graph invariants used in the parameterized study of {\sc Triangle Listing}~\cite{BFNN17}.
Nonetheless, clique-width is not the main focus in~\cite{BFNN17}.
Recently, Mertzios, Nichterlein and Niedermeier~\cite{MNN16} propose algorithms for {\sc Maximum Matching} that run in time ${\cal O} (k^{{ \cal O} (1)} \cdot (n+m))$, for several parameters such as feedback vertex set or feedback edge set. 
Moreover, the authors in~\cite{MNN16} suggest that {\sc Maximum Matching} may become the ``drosophila'' of the study of the FPT algorithms in P. 
We advance in this research direction.

\subsection{Our results}

In this paper we study the parameterized complexity of several classical graph problems under a wide range of parameters such as clique-width and its upper-bounds {\em modular-width}~\cite{CoB00}, {\em split-width}~\cite{Rao08b}, {\em neighbourhood diversity}~\cite{Lam12} and {\em $P_4$-sparseness}~\cite{BaO99}. 
The results are summarized in Table~\ref{tab:summary}.

\smallskip
Roughly, it turns out that some hardness assumptions for general graphs do not hold anymore for graph classes of bounded clique-width.
This is the case in particular for {\sc Triangle Detection} and other cycle problems that are subcubic equivalent to it such as, {\it e.g.}, {\sc Girth}, that all can be solved in linear-time, with quadratic dependency on the clique-width, with the help of dynamic programming (Theorems~\ref{thm:cw-triangle} and~\ref{thm:cw-girth}).
The latter complements the results obtained for {\sc Triangle Listing} in~\cite{BFNN17}. 
However many hardness results for {\em distance problems} when using treewidth are proved to also hold when using clique-width (Theorems~\ref{thm:cw-diam},~\ref{thm:cw-bc} and~\ref{thm:cw-hyp}).
These negative results have motivated us to consider some upper-bounds for clique-width as parameters, for which better results can be obtained than for clique-width.
Another motivation stems from the fact that the existence of a parameterized algorithm for computing the clique-width of a graph remains a challenging open problem~\cite{CHLB+00}.
We consider some upper-bounds for clique-width that are defined via {\em linear-time} computable graph decompositions.
Thus if these parameters are small enough, say, in ${\cal O}(n^{1-\varepsilon})$ for some $\varepsilon > 0$, we get truly subcubic or even truly subquadratic algorithms for a wide range of problems.

\begin{table}[htb]
\centering
\begin{tabular}{| p{5cm} | p{4cm} p{4cm} |}
\hline
Problem & \multicolumn{2}{l|}{Parameterized time complexity}\\
\hline
\textsc{Diameter}, \textsc{Eccentricities}
   & ${\cal O}(mw(G)^2 \cdot n + m)$ & ${\cal O}(sw(G)^2 \cdot n +m)$ \\
   & ${\cal O}(mw(G)^3 + n+m)$         & ${\cal O}(q(G)^3 + n+m)$\\
\hline
\textsc{Betweenness Centrality}
   & ${\cal O}(mw(G)^2 \cdot n + m)$ & ${\cal O}(sw(G)^2 \cdot n +m)$ \\
   & ${\cal O}(nd(G)^3 + n+m)$ & \\
\hline
\textsc{Hyperbolicity}
   & ${\cal O}(mw(G)^3 \cdot n + m)$ & ${\cal O}(nd(G)^4 + n+m)$\\
   & ${\cal O}(sw(G)^3 \cdot n+m )$ & ${\cal O}(q(G)^3 \cdot n+m)$\\
\hline
\textsc{Maximum Matching}
   & ${\cal O}(mw(G)^4 \cdot n + m)$ & ${\cal O}(q(G)^4 \cdot n+m)$ \\
\hline
\textsc{Triangle Detection}, \textsc{Triangle Counting}, \textsc{Girth} 
   & \multicolumn{2}{l|}{${\cal O}(k^2 \cdot (n + m))$ for any $k\in\{cw(G), mw(G), sw(G), q(G)\}$} \\
   \hline
\end{tabular}
\caption{Summary of positive results.}
\label{tab:summary}
\end{table}

\subsubsection*{Graph parameters and decompositions considered}

Let us describe the parameters considered in this work as follows.
The following is only an informal high level description (formal definitions are postponed to Section~\ref{sec:prelim}).

\paragraph*{\sc Split Decomposition.}
A {\em join} is a set of edges inducing a complete bipartite subgraph.
Roughly, clique-width can be seen as a measure of how easy it is to reconstruct a graph by adding joins between some vertex-subsets.
A {\em split} is a join that is also an edge-cut.
By using pairwise non crossing splits, termed ``strong splits'', we can decompose any graph into degenerate and prime subgraphs, that can be organized in a treelike manner.
The latter is termed {\em split decomposition}~\cite{GiP12}.

We take advantage of the treelike structure of split decomposition in order to design dynamic programming algorithms for distance problems such as {\sc Diameter}, {\sc Gromov Hyperbolicity} and {\sc Betweenness Centrality} (Theorems~\ref{thm:sw-ecc},~\ref{thm:sw-hyp} and~\ref{thm:sw-bc}, respectively).  
Although clique-width is also related to some treelike representations of graphs~\cite{CHMPR15}, the same cannot be done for clique-width as for split decomposition because the edges in the treelike representations for clique-width may not represent a join.

\paragraph*{\sc Modular Decomposition.}
Then, we can improve the results obtained with split decomposition by further restricting the type of splits considered. 
As an example, let $(A,B)$ be a bipartition of the vertex-set that is obtained by removing a split.
If every vertex of $A$ is incident to some edges of the split then $A$ is called a {\em module} of $G$.
That is, for every vertex $v \in B$, $v$ is either adjacent or nonadjacent to every vertex of $A$.
The well-known {\em modular decomposition} of a graph is a hierarchical decomposition that partitions the vertices of the graph with respect to the modules~\cite{HaP10}.
Split decomposition is often presented as a refinement of modular decomposition~\cite{GiP12}.
We formalize the relationship between the two in Lemma~\ref{lem:mw-to-sw}, that allows us to also apply our methods for split decomposition to modular decomposition.

However, we can often do better with modular decomposition than with split decomposition.
In particular, suppose we partition the vertex-set of a graph $G$ into modules, and then we keep exactly one vertex per module.
The resulting {\em quotient graph} $G'$ keeps most of the distance properties of $G$.
Therefore, in order to solve a distance problem for $G$, it is often the case that we only need to solve it for $G'$.
We so believe that modular decomposition can be a powerful {\em Kernelization} tool in order to solve graph problems in P.
As an application, we improve the running time for some of our algorithms, from time ${\cal O}(k^{{\cal O}(1)} \cdot n + m)$ when parameterized by the {\em split-width} (maximum order of a prime subgraph in the split decomposition), to ${\cal O}(k^{{\cal O}(1)} + n + m)$-time when parameterized by the {\em modular-width} (maximum order of a prime subgraph in the modular decomposition). 
See Theorem~\ref{thm:mw-ecc}.

Furthermore, for some more graph problems, it may also be useful to further restrict the internal structures of modules.
We briefly explore this possibility through a case study for {\em neighbourhood diversity}.
Roughly, in this latter case we only consider modules that are either independent sets (false twins) or cliques (true twins).
New kernelization results are obtained for {\sc Hyperbolicity} and {\sc Betweenness Centrality} when parameterized by the neighbourhood diversity (Theorems~\ref{thm:nd-hyp} and~\ref{thm:nd-bc}, respectively).
It is worth pointing out that so far, we have been unable to obtain kernelization results for {\sc Hyperbolicity} and {\sc Betweenness Centrality} when only parameterized by the modular-width.
It would be very interesting to prove separability results between split-width, modular-width and neighbourhood diversity in the field of fully polynomial parameterized complexity.

\paragraph*{\sc Graphs with few $P_4$'s.}
We finally use modular decomposition as our main tool for the design of new linear-time algorithms when restricted to graphs with few induced $P_4$'s.
The $(q,t)$-graphs have been introduced by Babel and Olariu in~\cite{BaO98}.
They are the graphs in which no set of at most $q$ vertices can induce more than $t$ paths of length four.
Every graph is a $(q,t)$-graph for some large enough values of $q$ and $t$.
Furthermore when $q$ and $t$ are fixed constants, $t \leq q-3$, the class of $(q,t)$-graphs has bounded clique-width~\cite{MaR99}.
We so define the $P_4$-sparseness of a given graph $G$, denoted by $q(G)$, as the minimum $q \geq 7$ such that $G$ is a $(q,q-3)$-graph. 
The structure of the quotient graph of a $(q,q-3)$-graph, $q$ being a constant, has been extensively studied and characterized in the literature~\cite{Bab98,BaO98,BaO99,Bab00,JaO95}.
We take advantage of these existing characterizations in order to generalize our algorithms with modular decomposition to ${\cal O}(q(G)^{{\cal O}(1)} \cdot n + m)$-time algorithms (Theorems~\ref{thm:qq3-ecc} and~\ref{thm:qq4-hyp}).

Let us give some intuition on how the $P_4$-sparseness can help in the design of improved algorithms for hard graph problems in P. 
We consider the class of {\em split graphs} ({\it i.e.}, graphs that can be bipartitioned into a clique and an independent set).
Deciding whether a given split graph has diameter $2$ or $3$ is hard~\cite{BCH16}.
However, suppose now that the split graph is a $(q,q-3)$-graph $G$, for some fixed $q$.
An induced $P_4$ in $G$ has its two ends $u,v$ in the independent set, and its two middle vertices are, respectively, in $N_G(u) \setminus N_G(v)$ and $N_G(v) \setminus N_G(u)$.
Furthermore, when $G$ is a $(q,q-3)$-graph, it follows from the characterization of~\cite{Bab98,BaO98,BaO99,Bab00,JaO95} either it has a quotient graph of bounded order ${\cal O}(q)$ or it is part of a well-structured subclass where the vertices of all neighbourhoods in the independent set follow a rather nice pattern (namely, spiders and a subclass of $p$-trees, see Section~\ref{sec:prelim}).
As a result, the diameter of $G$ can be computed in ${\cal O}(\max\{q^3,n+m\})$-time when $G$ is a $(q,q-3)$ split graph.
We generalize this result to every $(q,q-3)$-graph by using modular decomposition.

\medskip
All the parameters considered in this work have already received some attention in the literature, especially in the design of FPT algorithms for NP-hard problems~\cite{Bab00,GaP03,GiP12,GLO13,Rao08b}.
However, we think we are the first to study clique-width and its upper-bounds for polynomial problems.
There do exist linear-time algorithms for {\sc Diameter}, {\sc Maximum Matching} and some other problems we study when restricted to some graph classes where the split-width or the $P_4$-sparseness is bounded ({\it e.g.}, cographs~\cite{YuY93}, distance-hereditary graphs~\cite{Dra97,DrN00}, $P_4$-tidy graphs~\cite{FPT97}, etc.).
Nevertheless, we find the techniques used for these specific subclasses hardly generalize to the case where the graph has split-width or $P_4$-sparseness at most $k$, $k$ being any fixed constant.
For instance, the algorithm that is proposed in~\cite{DrN00} for computing the diameter of a given distance-hereditary graph is based on some properties of LexBFS orderings.
Distance-hereditary graphs are exactly the graphs with split-width at most two~\cite{GiP12}.
However it does not look that simple to extend the properties found for their LexBFS orderings to bounded split-width graphs in general.
As a byproduct of our approach, we also obtain new linear-time algorithms when restricted to well-known graph families such as cographs and distance-hereditary graphs.

\subsubsection*{Highlight of our {\sc Maximum Matching} algorithms}

Finally we emphasize our algorithms for {\sc Maximum Matching}. 
Here we follow the suggestion of Mertzios, Nichterlein and Niedermeier~\cite{MNN16} that {\sc Maximum Matching} may become the ``drosophila'' of the study of the FPT algorithms in P.
Precisely, we propose ${\cal O} (k^4 \cdot n + m)$-time algorithms for {\sc Maximum Matching} when parameterized either by modular-width or by the $P_4$-sparseness of the graph (Theorems~\ref{thm:mw-maxmatching} and~\ref{thm:qq3-maxmatching}).
The latter subsumes many algorithms that have been obtained for specific subclasses~\cite{FPT97,YuY93}.

\smallskip
Let us sketch the main lines of our approach.
Our algorithms for {\sc Maximum Matching} are recursive.
Given a partition of the vertex-set into modules, first we compute a maximum matching for the subgraph induced by every module separately.
Taking the union of all the outputted matchings gives a matching for the whole graph, but this matching is not necessarily maximum. 
So, we aim at increasing its cardinality by using augmenting paths~\cite{Ber57}.

In an unpublished paper~\cite{Nov89}, Novick followed a similar approach and, based on an integer programming formulation, he obtained an ${\cal O}(k^{{\cal O}(k^3)}n + m)$-time algorithm for {\sc Maximum Matching} when parameterized by the modular-width.
Our approach is more combinatorial than his.

Our contribution in this part is twofold.
First we carefully study the possible ways an augmenting path can cross a module.
Our analysis reveals that in order to compute a maximum matching in a graph of modular-width at most $k$ we only need to consider augmenting paths of length ${\cal O}(k)$.
Then, our second contribution is an efficient way to compute such paths.
For that, we design a new type of characteristic graph of size ${\cal O}(k^4)$. 
The same as the classical quotient graph keeps most distance properties of the original graph, our new type of characteristic graph is tailored to enclose the main properties of the current matching in the graph.
We believe that the design of new types of characteristic graphs can be a crucial tool in the design of improved algorithms for graph classes of bounded modular-width. 

\medskip
We have been able to extend our approach with modular decomposition to an ${\cal O}(q^4 \cdot n + m)$-time algorithm for computing a maximum matching in a given $(q,q-3)$-graph.
However, a characterization of the quotient graph is not enough to do that.
Indeed, we need to go deeper in the $p$-connectedness theory of~\cite{BaO99} in order to better characterize the nontrivial modules in the graphs (Theorem~\ref{thm:stronger-mdc-qq3}).
Furthermore our algorithm for $(q,q-3)$-graph not only makes use of the algorithm with modular decomposition.
On our way to solve this case we have generalized different methods and reduction rules from the literature~\cite{KaS81,YuY93}, that is of independent interest.

We suspect that our algorithm with modular decomposition can be used as a subroutine in order to solve {\sc Maximum Matching} in linear-time for bounded split-width graphs.
However, this is left for future work.

\subsection{Organization of the paper}

In Section~\ref{sec:prelim} we introduce definitions and basic notations.

\medskip
Then, in Section~\ref{sec:cycle} we show FPT algorithms when parameterized by the clique-width.
The problems considered are \textsc{Triangle Counting} and \textsc{Girth}. 
To the best of our knowledge, we present the first known polynomial parameterized algorithm for {\sc Girth} (Theorem~\ref{thm:cw-girth}). 
Roughly, the main idea behind our algorithms is that given a labeled graph $G$ obtained from a $k$-expression, we can compute a minimum-length cycle for $G$ by keeping up to date the pairwise distances between every two label classes.
Hence, if a $k$-expression of length $L$ is given as part of the input we obtain algorithms running in time ${\cal O}(k^2L)$ and space ${\cal O}(k^2)$.

\medskip 
In Section~\ref{sec:dist} we consider distance related problems, namely: {\sc Diameter}, {\sc Eccentricities}, {\sc Hyperbolicity} and {\sc Betweenness Centrality}. 

\smallskip
We start proving, in Section~\ref{sec:hardness-cw}, none of these problems above can be solved in time $2^{o(k)} n^{2 - \varepsilon}$, for any $\varepsilon > 0$, when parameterized by the clique-width (Theorems~\ref{thm:cw-diam}---\ref{thm:cw-hyp}).
These are the first known hardness results for clique-width in the field of ``Hardness in P''.
Furthermore, as it is often the case in this field, our results are conditioned on the Strong Exponential Time Hypothesis~\cite{IPZ98}.
In summary, we take advantage of recent hardness results obtained for {\em bounded-degree} graphs~\cite{EvD16}.
Clique-width and treewidth can only differ by a constant-factor in the class of bounded-degree graphs~\cite{Cou12,GuW00}.
Therefore, by combining the hardness constructions for bounded-treewidth graphs and for bounded-degree graphs, we manage to derive hardness results for graph classes of bounded clique-width.

\smallskip
In Section~\ref{sec:algos-split} we describe fully polynomial FPT algorithms for {\sc Diameter}, {\sc Eccentricity}, {\sc Hyperbolicity} and {\sc Betweenness centrality} parameterized by the split-width. 
Our algorithms use split-decomposition as an efficient preprocessing method.  
Roughly, we define weighted versions for every problem considered (some of them admittedly technical).
In every case, we prove that solving the original distance problem can be reduced in linear-time to the solving of its weighted version for every subgraph of the split decomposition separately.

\smallskip
Then, in Section~\ref{sec:algos-modular-dec} we apply the results from Section~\ref{sec:algos-split} to modular-width. 
First, since $sw(G) \leq mw(G)+1$ for any graph $G$, all our algorithms parameterized by split-width are also algorithms parameterized by modular-width.
Moreover for {\sc Eccentricities}, and for {\sc Hyperbolicity} and {\sc Betweenness Centrality} when parameterized by the neighbourhood diversity, we show that it is sufficient only to process the quotient graph of $G$.
We thus obtain algorithms that run in ${\cal O}(mw(G)^{{\cal O}(1)} + n + m)$-time, or ${\cal O}(nd(G)^{{\cal O}(1)} + n + m)$-time, for all these problems.

\smallskip
In Section~\ref{sec:dist-qq3} we generalize our previous algorithms to be applied to the $(q,q-3)$-graphs. 
We obtain our results by carefully analyzing the cases where the quotient graph has size $\Omega(q)$. 
These cases are given by Lemma~\ref{lem:reduce-qq3}.
 
\medskip 
Section~\ref{sec:maxmatching} is dedicated to our main result, linear-time algorithms for {\sc Maximum Matching}.
First in Section~\ref{sec:matching-mw} we propose an algorithm parameterized by the modular-width that runs in ${\cal O}(mw(G)^{4} \cdot n + m)$-time.
In Section~\ref{sec:matching-qq3} we generalize this algorithm to $(q,q-3)$-graphs. 

\medskip
Finally, in Section~\ref{sec:applications} we discuss applications to other graph classes.

\section{Preliminaries}\label{sec:prelim}

We use standard graph terminology from~\cite{BoM08,Die10}.
Graphs in this study are finite, simple (hence without loops or multiple edges) and unweighted -- unless stated otherwise.
Furthermore we make the standard assumption that graphs are encoded as adjacency lists.

We want to prove the existence, or the nonexistence, of graph algorithms with running time of the form $k^{{\cal O}(1)} \cdot (n+m)$, $k$ being some fixed graph parameter.
In what follows, we introduce the graph parameters considered in this work.

\subsubsection*{Clique-width}

A labeled graph is given by a pair $\langle G, \ell \rangle$ where $G=(V,E)$ is a graph and $\ell : V \to \mathbb{N}$ is called a labeling function.
A {\em k-expression} can be seen as a sequence of operations for constructing a labeled graph $\langle G, \ell \rangle$, where the allowed four operations are:
\begin{enumerate}
\item Addition of a new vertex $v$ with label $i$ (the labels are taken in $\{1, 2, \ldots, k\}$), denoted $i(v)$;
\item Disjoint union of two labeled graphs $\langle G_1, \ell_1 \rangle$ and $\langle G_2, \ell_2 \rangle$, denoted $\langle G_1, \ell_1 \rangle \oplus \langle G_2, \ell_2 \rangle$;
\item Addition of a join between the set of vertices labeled $i$ and the set of vertices labeled $j$, where $i \neq j$, denoted $\eta(i,j)$;
\item Renaming label $i$ to label $j$, denoted $\rho(i,j)$.
\end{enumerate}
See Fig.~\ref{fig:cw-example} for examples.
The {\em clique-width} of $G$, denoted by $cw(G)$, is the minimum $k$ such that, for some labeling $\ell$, the labeled graph $\langle G, \ell \rangle$ admits a $k$-expression~\cite{CER93}.
We refer to~\cite{CMR00} and the references cited therein for a survey of the many applications of clique-width in the field of parameterized complexity.

\begin{figure}[h!]
\centering
\begin{subfigure}[b]{.15\textwidth}\centering
\includegraphics[width=.25\textwidth]{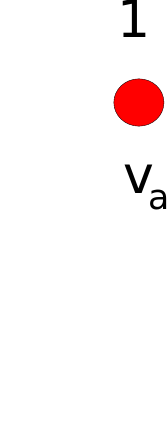}
\caption{$1(v_a)$}
\end{subfigure}\hfill
\begin{subfigure}[b]{.15\textwidth}\centering
\includegraphics[width=.5\textwidth]{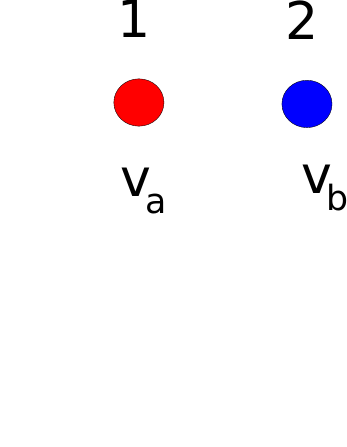}
\caption{$2(v_b)$}
\end{subfigure}\hfill
\begin{subfigure}[b]{.15\textwidth}\centering
\includegraphics[width=.5\textwidth]{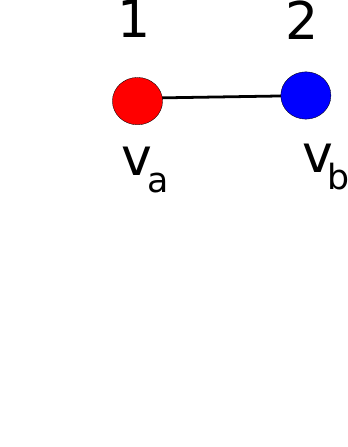}
\caption{$\eta(1,2)$}
\end{subfigure}\hfill
\begin{subfigure}[b]{.15\textwidth}\centering
\includegraphics[width=.5\textwidth]{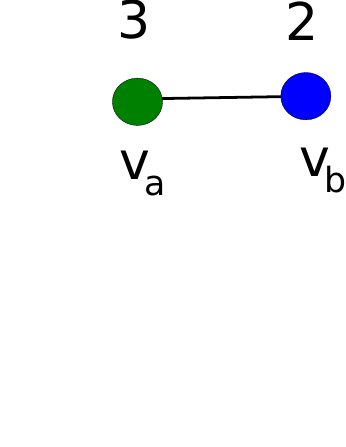}
\caption{$\rho(1,3)$}
\end{subfigure}\hfill
\begin{subfigure}[b]{.15\textwidth}\centering
\includegraphics[width=.75\textwidth]{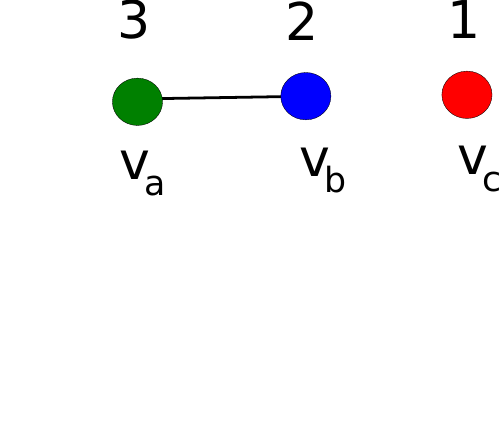}
\caption{$1(v_c)$}
\end{subfigure}\hfill
\begin{subfigure}[b]{.15\textwidth}\centering
\includegraphics[width=.75\textwidth]{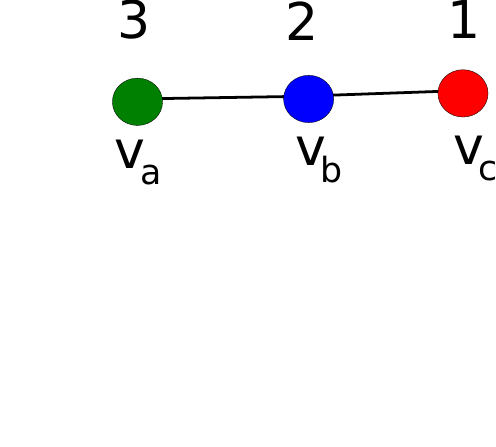}
\caption{$\eta(2,1)$}
\end{subfigure}\vspace{15pt}
\begin{subfigure}[b]{.15\textwidth}\centering
\includegraphics[width=.75\textwidth]{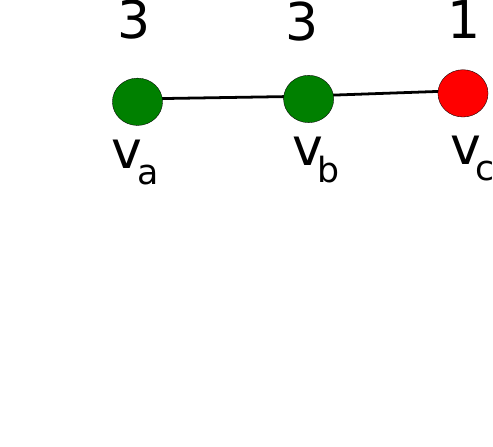}
\caption{$\rho(2,3)$}
\end{subfigure}\hfill
\begin{subfigure}[b]{.15\textwidth}\centering
\includegraphics[width=\textwidth]{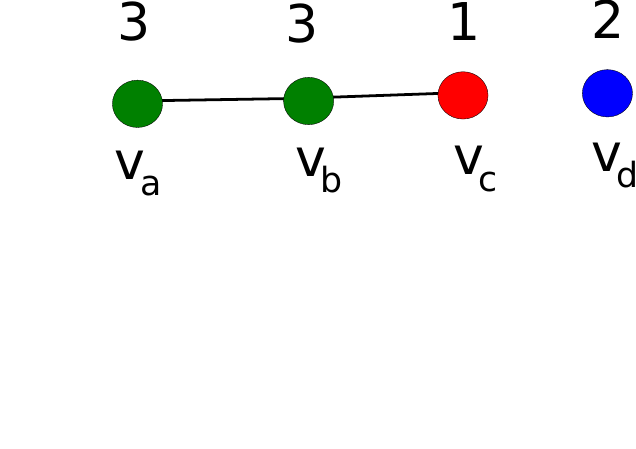}
\caption{$2(v_d)$}
\end{subfigure}\hfill
\begin{subfigure}[b]{.15\textwidth}\centering
\includegraphics[width=\textwidth]{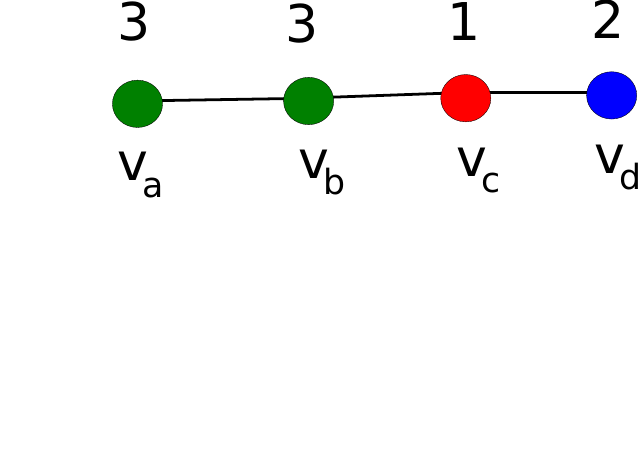}
\caption{$\eta(1,2)$}
\end{subfigure}
\caption{A $3$-expression for the path $P_4$.}
\label{fig:cw-example}
\end{figure}

Computing the clique-width of a given graph is NP-hard~\cite{FRRS09}.
However, on a more positive side the graphs with clique-width two are exactly the cographs and they can be recognized in linear-time~\cite{CPS85,CoB00}.
Clique-width three graphs can also be recognized in polynomial-time~\cite{CHLB+00}.
The parameterized complexity of computing the clique-width is open.
In what follows, we focus on upper-bounds on clique-width that are derived from some graph decompositions.  

\subsubsection*{Modular-width}

A {\em module} in a graph $G=(V,E)$ is any subset $M \subseteq V(G)$ such that for any $v \in V \setminus M$, either $M \subseteq N_G(v)$ or $M \cap N_G(v) = \emptyset$.
Note that $\emptyset, \ V, \ \mbox{and} \ \{v\}$ for every $v \in V$ are trivial modules of $G$.
A graph is called {\em prime} for modular decomposition if it only has trivial modules.

A module $M$ is {\em strong} if it does not overlap any other module, {\it i.e.}, for any module $M'$ of $G$, either one of $M$ or $M'$ is contained in the other or $M$ and $M'$ do not intersect. 
Furthermore, let ${\cal M}(G)$ be the family of all inclusion wise maximal strong modules of $G$ that are proper subsets of $V$.
The {\em quotient graph} of $G$ is the graph $G'$ with vertex-set ${\cal M}(G)$ and an edge between every two $M,M' \in {\cal M}(G)$ such that every vertex of $M$ is adjacent to every vertex of $M'$.

Modular decomposition is based on the following structure theorem from Gallai.

\begin{theorem}[~\cite{Gal67}]\label{thm:modular-dec}
For an arbitrary graph $G$ exactly one of the following conditions is satisfied.
\begin{enumerate}
\item $G$ is disconnected;
\item its complement $\overline{G}$ is disconnected;
\item or its quotient graph $G'$ is prime for modular decomposition.
\end{enumerate}
\end{theorem}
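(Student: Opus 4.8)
The plan is to reproduce the classical proof of Gallai's decomposition theorem, organised around a few elementary lemmas on modules. First I would establish the basic ``module algebra'': if $M_1,M_2$ are modules of $G$ that \emph{overlap} (each of $M_1\setminus M_2$, $M_2\setminus M_1$, $M_1\cap M_2$ is nonempty), then $M_1\cap M_2$, $M_1\cup M_2$, $M_1\setminus M_2$ and $M_1\triangle M_2$ are again modules; each is a short case check on a vertex outside the relevant set. An immediate consequence is that the strong modules form a laminar family, so that the members of $\mathcal M(G)$ are pairwise disjoint, and since every singleton is a strong module they in fact partition $V(G)$ whenever $|V(G)|\geq 2$; hence the quotient graph $G'$ on vertex set $\mathcal M(G)$ is well defined. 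I would also record the correspondence between $G$ and $G'$: using that the edges between two disjoint modules of $G$ are all present or all absent, a set $\mathcal N\subseteq\mathcal M(G)$ is a module of $G'$ if and only if $\bigcup_{M\in\mathcal N}M$ is a module of $G$; thus modules of $G'$ pull back exactly to the modules of $G$ that are unions of members of $\mathcal M(G)$. Finally I would note the auxiliary fact that if $V(H)$ is the union of two overlapping modules of $H$, then $H$ or $\overline H$ is disconnected.

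Next, mutual exclusivity. Conditions (1) and (2) cannot both hold because the complement of a disconnected graph is connected. If $G$ is disconnected, its connected components are precisely the members of $\mathcal M(G)$ (each component is a module that cannot be overlapped, and no strict superset is a strong proper module), so $G'$ is edgeless; dually, if $\overline G$ is disconnected then $G'$ is complete. Using the correspondence together with the auxiliary fact, any disconnection of $G'$ lifts to one of $G$, and likewise for the complements; so when neither (1) nor (2) holds, both $G'$ and $\overline{G'}$ are connected, which forces $|V(G')|\geq 4$. In particular an edgeless or complete quotient (the situations produced by (1) and (2)) is a degenerate graph and hence not prime, under the standard convention that prime graphs have order at least four. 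This shows (3) is incompatible with each of (1) and (2), and since (1) and (2) are themselves exclusive, at most one condition holds; and if neither (1) nor (2) holds, it remains to show (3) does.

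So the substantive half is: if $G$ and $\overline G$ are both connected, then $G'$ is prime. The key lemma is that every maximal strong proper module of $G'$ is a singleton. Suppose $\mathcal S\subseteq\mathcal M(G)$ is a strong module of $G'$ with $|\mathcal S|\geq 2$, and set $\widehat S:=\bigcup_{M\in\mathcal S}M$, a module of $G$. It is not strong, since a strong proper module of $G$ strictly containing a member of $\mathcal S\subseteq\mathcal M(G)$ would contradict maximality; so some module $M'$ of $G$ overlaps $\widehat S$. Because every member of $\mathcal M(G)$ is strong and hence cannot be overlapped, a short argument shows $M'$ is itself a union of members of $\mathcal M(G)$, so $M'$ corresponds to a set $\mathcal N'\subseteq\mathcal M(G)$, and the overlap of $M'$ with $\widehat S$ becomes an overlap of $\mathcal N'$ with $\mathcal S$ in $G'$ --- contradicting that $\mathcal S$ is strong. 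With the lemma in hand, suppose $G'$ had a nontrivial module $\mathcal N$; choose among the modules $\mathcal T$ with $\mathcal N\subseteq\mathcal T\subsetneq V(G')$ one that is inclusion-wise maximal. By module algebra it must be strong (a proper overlapping extension would cover $V(G')$ and, by the auxiliary fact, force $G'$ or $\overline{G'}$ disconnected), so $\mathcal T$ is a nontrivial strong proper module, hence lies in some member of $\mathcal M(G')$ of size $\geq 2$, contradicting the lemma. Therefore $G'$ has only trivial modules and is prime.

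The main obstacle I anticipate is exactly this last step: controlling how an arbitrary module $M'$ of $G$ can meet a union $\widehat S$ of maximal strong modules, and dually propagating a hypothetical nontrivial module of $G'$ up to a maximal strong one. Both rely on deploying the laminarity of strong modules and the module algebra carefully enough that the case analysis stays finite, and on making sure the degenerate low-order quotients (which arise only under (1) or (2)) are correctly excluded from primality.
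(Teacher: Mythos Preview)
The paper does not give its own proof of this statement; Theorem~\ref{thm:modular-dec} is simply quoted as Gallai's classical structure theorem with a citation to~\cite{Gal67}. So there is no in-paper argument to compare against. Your outline is the standard proof and is correct: the module algebra, the laminarity of strong modules, the $G\leftrightarrow G'$ correspondence for unions of maximal strong modules, and the ``two overlapping modules covering $V$'' lemma are exactly the ingredients, and your two-step conclusion (first that $G'$ has no nontrivial strong proper module, then that a hypothetical nontrivial module extends to a maximal proper one which must be strong) is sound. The only place to be careful is the convention on primality for quotient graphs of order at most~$2$; you already flagged this, and it is consistent with how the paper uses the trichotomy (treating the edgeless/complete quotients from cases~(1)--(2) as the degenerate, non-prime outcomes).
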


Theorem~\ref{thm:modular-dec} suggests the following recursive procedure in order to decompose a graph, that is sometimes called modular decomposition.
If $G = G'$ ({\it i.e.}, $G$ is complete, edgeless or prime for modular decomposition) then we output $G$.
Otherwise, we output the quotient graph $G'$ of $G$ and, for every strong module $M$ of $G$, the modular decomposition of $G[M]$.
The modular decomposition of a given graph $G=(V,E)$ can be computed in linear-time~\cite{TCHP08}.
See Fig.~\ref{fig:modular-dec} for an example.

\begin{figure}[h!]
\centering
\begin{subfigure}[b]{.46\textwidth}\centering
\includegraphics[width=.35\textwidth]{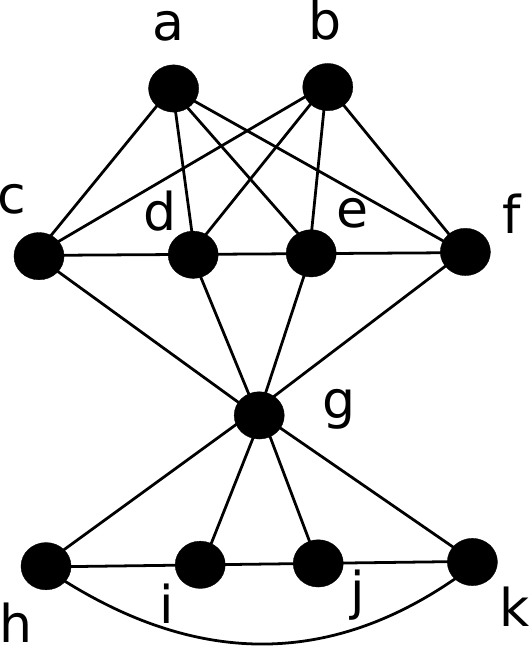}
\end{subfigure}\hfill
\begin{subfigure}[b]{.52\textwidth}\centering
\includegraphics[width=.65\textwidth]{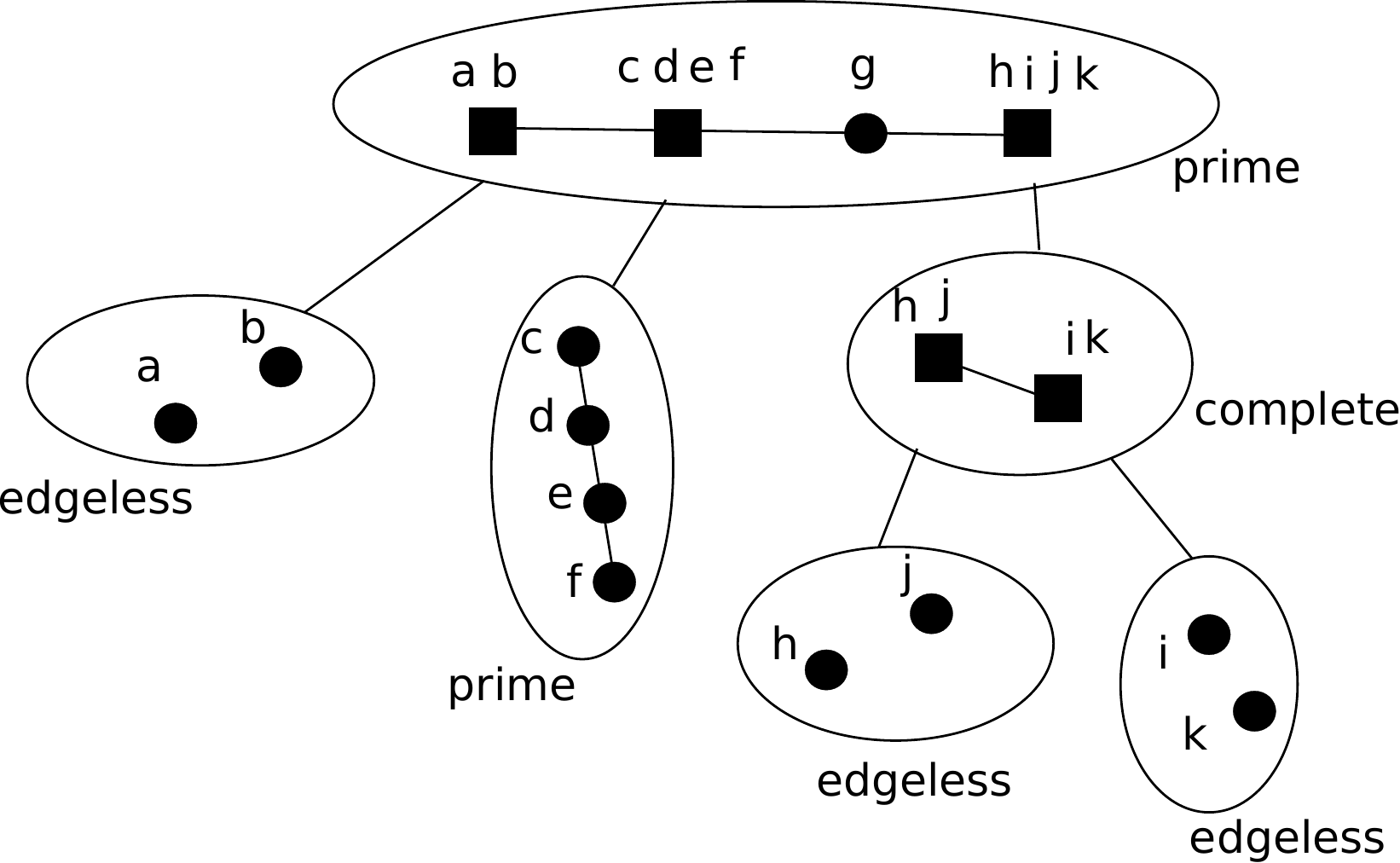}
\end{subfigure}
\caption{A graph and its modular decomposition.}
\label{fig:modular-dec}
\end{figure}

Furthermore, by Theorem~\ref{thm:modular-dec} the subgraphs from the modular decomposition are either edgeless, complete, or prime for modular decomposition. 
The {\em modular-width} of $G$, denoted by $mw(G)$, is the minimum $k \geq 2$ such that any prime subgraph in the modular decomposition has order (number of vertices) at most $k$~\footnote{This term has another meaning in~\cite{Rao08}. We rather follow the terminology from~\cite{CoB00}.}.
The relationship between clique-width and modular-width is as follows.

\begin{lemma}[~\cite{CMR00}]\label{lem:mw-cw}
For every $G=(V,E)$, we have $cw(G) \leq mw(G)$, and a $mw(G)$-expression defining $G$ can be constructed in linear-time.
\end{lemma}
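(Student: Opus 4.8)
The plan is to prove Lemma~\ref{lem:mw-cw} by structural induction on the modular decomposition tree, showing that the recursive decomposition from Theorem~\ref{thm:modular-dec} translates directly into the construction of a $k$-expression with $k = mw(G)$.

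First I would set up the induction. The base case is a single vertex $\{v\}$, handled by the operation $1(v)$, which is trivially a $1$-expression. For the inductive step, let $G$ be a graph with at least two vertices and let ${\cal M}(G) = \{M_1,\dots,M_p\}$ be its maximal strong modules, with quotient graph $G'$ on vertex set ${\cal M}(G)$. By induction, each $G[M_i]$ admits an $mw(G[M_i])$-expression; since $mw(G[M_i]) \le mw(G)$ by definition, each $G[M_i]$ admits an $mw(G)$-expression. The idea is to combine these sub-expressions, using disjoint union to place the modules side by side, and then to add the joins prescribed by the edges of $G'$. To control the number of labels, before combining I would relabel every vertex inside $G[M_i]$ to a single label $\lambda_i$ using $\rho$ operations (this is why we may collapse each module: within a module no further internal join operations remain to be performed, so the internal label structure can be forgotten). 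Then I take the disjoint union of the $p$ relabeled graphs, and for each edge $M_iM_j$ of $G'$ I apply $\eta(\lambda_i,\lambda_j)$ to install the complete bipartite graph between $M_i$ and $M_j$, which is exactly the adjacency structure of $G$ by the definition of the quotient graph. This realizes $\langle G, \ell\rangle$ for the labeling $\ell$ that assigns $\lambda_i$ to all of $M_i$.

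The number of distinct labels used at the top level is $p = |{\cal M}(G)|$, the order of the quotient graph $G'$. By Theorem~\ref{thm:modular-dec}, $G'$ is either edgeless, complete, or prime for modular decomposition. If $G'$ is prime then $p \le mw(G)$ directly. If $G'$ is complete or edgeless, then $p$ may be large, but in these two cases the join structure is uniform: for an edgeless quotient we simply take disjoint unions and never add any join, so two labels suffice (one ``active'' label reused for each successive module after relabeling the accumulated part to a second label); for a complete quotient we add a join to every previously-added module, which can also be done with a bounded number of labels by a standard incremental construction (relabel the accumulated graph to label $1$, add the next module with label $2$, apply $\eta(1,2)$, relabel $2$ to $1$, repeat). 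Either way the top level uses at most $\max\{2, mw(G)\} = mw(G)$ labels, and since the sub-expressions for the $G[M_i]$ use at most $mw(G)$ labels, the whole expression is an $mw(G)$-expression. For the linear-time claim, I would invoke the linear-time modular decomposition algorithm of~\cite{TCHP08} to obtain the decomposition tree, and then observe that the expression described above has size linear in $n+m$ (each vertex contributes one addition operation, each relabeling is charged to a node of the decomposition tree, and each join operation at a complete-quotient or prime node is charged either to an edge or to the bounded-size quotient) and can be emitted by a single traversal of the tree.

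The main obstacle is the bookkeeping for the complete and edgeless quotient cases: naively adding $p$ modules with $p$ distinct labels would blow the label count past $mw(G)$, so one must be careful to give the incremental ``fold in one module at a time, then rename'' construction and verify it uses only two labels while still producing the correct graph (in particular that renaming $2\mapsto1$ after each join does not accidentally merge label classes in a way that creates spurious edges later — it does not, because all subsequent joins in these cases are to the entire accumulated vertex set). The prime case is conceptually the cleanest since the bound $p \le mw(G)$ is immediate, but one should still check that the join operations exactly reconstruct the edges of $G$ between distinct modules and add no edges within a module, which follows because $\eta$ only adds edges between two different label classes and each module is a single class.
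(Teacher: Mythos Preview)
The paper does not prove this lemma; it is quoted from~\cite{CMR00} without argument. Your proposal is the standard proof and is correct: recurse on the modular decomposition tree, build each $G[M_i]$ by induction, collapse it to a single label, and then either use the at most $mw(G)$ labels dictated by a prime quotient to install the inter-module joins, or use the incremental two-label fold for series/parallel (complete/edgeless) quotients. Since $mw(G)\ge 2$ by definition, the degenerate cases never exceed the bound.

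One minor remark on the linear-time accounting, which you left slightly informal. The number of rename operations is $O(n)$: when collapsing a module $M_i$ to a single label you perform one $\rho$ per label currently in use, which is at most the number of children of $M_i$ in the modular decomposition tree, and the tree has $O(n)$ edges in total. Vertex additions and disjoint unions are clearly $O(n)$. The number of $\eta$ operations is bounded by the total number of edges across all quotient graphs, which is $O(n+m)$ (this is the bound $\sum_i |E(C_i)| = O(n+m)$ cited from~\cite{Rao08b} elsewhere in the paper). Hence the expression has length $O(n+m)$ and can be emitted in a single traversal once the modular decomposition is computed via~\cite{TCHP08}.
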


We refer to~\cite{HaP10} for a survey on modular decomposition.
In particular, graphs with modular-width two are exactly the cographs, that follows from the existence of a cotree~\cite{Sum73}.
Cographs enjoy many algorithmic properties, including a linear-time algorithm for {\sc Maximum Matching}~\cite{YuY93}.
Furthermore, in~\cite{GLO13} Gajarsk{\`y}, Lampis and Ordyniak prove that for some $W$-hard problems when parameterized by clique-width there exist FPT algorithms when parameterized by modular-width.

\subsubsection*{Split-width}

A {\em split} $(A,B)$ in a {\em connected} graph $G=(V,E)$ is a partition $V = A \cup B$ such that: $\min \{ |A|, |B| \} \geq 2$; and there is a complete join between the vertices of $N_G(A)$ and $N_G(B)$.
For every split $(A,B)$ of $G$, let $a \in N_G(B), \ b \in N_G(A)$ be arbitrary.
The vertices $a,b$ are termed {\em split marker vertices}.
We can compute a ``simple decomposition'' of $G$ into the subgraphs $G_A = G[A \cup \{b\}]$ and $G_B = G[B \cup \{a\}]$.

There are two cases of ``indecomposable'' graphs.
Degenerate graphs are such that every bipartition of their vertex-set is a split.
They are exactly the complete graphs and the stars~\cite{Cun82}.
A graph is prime for split decomposition if it has no split.

A split decomposition of a connected graph $G$ is obtained by applying recursively a simple decomposition, until all the subgraphs obtained are either degenerate or prime.
A split decomposition of an {\em arbitrary} graph $G$ is the union of a split decomposition for each of its connected components.
Every graph has a canonical split decomposition, with minimum number of subgraphs, that can be computed in linear-time~\cite{CDR12}.
The {\em split-width}
of $G$, denoted by $sw(G)$, is the minimum $k \geq 2$ such that any prime subgraph in the canonical split decomposition of $G$ has order at most $k$.
See Fig.~\ref{fig:split-dec} for an illustration.

\begin{lemma}[~\cite{Rao08b}]\label{lem:sw-cw}
For every $G=(V,E)$, we have $cw(G) \leq 2 \cdot sw(G) + 1$, and a $(2 \cdot sw(G) + 1)$-expression defining $G$ can be constructed in linear-time.
\end{lemma}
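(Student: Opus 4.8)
The plan is to turn the canonical split decomposition of $G$ into a $(2k+1)$-expression by a bottom-up dynamic program, where $k := sw(G)$. First I would reduce to the connected case: if $G$ has connected components $G_1,\dots,G_t$ then a $(2k+1)$-expression for $G$ is obtained from $(2k+1)$-expressions of the $G_i$'s by iterated disjoint unions (no join is added, so the label budget is untouched), and $sw(G)=\max_i sw(G_i)$. So assume $G$ is connected, let $T$ be its canonical split decomposition --- the nodes of $T$ are the pieces, every prime piece has order at most $k$, and every degenerate piece is a complete graph or a star --- and root $T$ at an arbitrary node. For a node $v$ with piece $G_v$ write $p_v$ for the marker vertex of $G_v$ pointing to the parent of $v$ (absent if $v$ is the root), let $T_v$ be the subtree rooted at $v$, and let $H_v$ be the subgraph of $G$ induced by the real vertices occurring in $T_v$. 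The single fact from split decomposition theory that the whole argument rests on is: the edges of $G$ joining $V(H_v)$ to $V\setminus V(H_v)$ form a complete bipartite graph, whose part inside $H_v$ --- call it the boundary $B_v$ --- is described as follows in terms of $G_v$. List the non-parent vertices of $G_v$ as $x_1,\dots,x_{p-1}$; to $x_i$ attach the set $R(x_i)\subseteq V(H_v)$ equal to $\{x_i\}$ if $x_i$ is a real vertex of $G$ and equal to $B_{c_i}$ if $x_i$ is the marker pointing to a child $c_i$ of $v$. Then $B_v=\bigcup\{R(x_i)\ :\ x_i \text{ adjacent to } p_v \text{ in } G_v\}$; for every edge $x_ix_j$ of $G_v$ the whole of $R(x_i)$ is completely joined to the whole of $R(x_j)$ in $G$; and $G$ has no edge among $\bigcup_i R(x_i)$ other than these and the ones already inside the $H_{c_i}$'s.

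Given this, I would build by induction on $T_v$ a $(2k+1)$-expression $E_v$ evaluating to $H_v$ in which every vertex of $B_v$ carries label $1$ and every other vertex carries label $2$ (thus the ``state'' of a processed subtree is a $2$-coloured graph, boundary versus interior). To process $G_v$ from the children's expressions: (i) for each child $c_i$, post-compose $E_{c_i}$ with relabellings so that $B_{c_i}$ moves to a private label $\ell_i\notin\{1,2\}$ and the rest moves to a common ``interior'' label $d$; for a real non-parent vertex $x_j$, add it with its own private label; (ii) take the disjoint union of all these gadgets; (iii) for each edge $x_ix_j$ of $G_v$ apply $\eta$ between the label of $x_i$ and the label of $x_j$ --- since at this point each $\ell_i$ holds exactly $R(x_i)$, this installs precisely the $G$-edges of $H_v$ missing so far, by the structural fact; (iv) recolour: rename each $\ell_i$ (and each private real label) to $1$ if the corresponding vertex is adjacent to $p_v$ in $G_v$ and to $2$ otherwise, and rename $d$ to $2$. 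For a prime piece there are at most $k$ non-parent vertices, hence at most $k$ private labels live simultaneously in steps (i)--(iii), so with $\{1,2,d\}$ the expression $E_v$ never exceeds $2k+1$ labels (in fact fewer suffice). A degenerate piece may be arbitrarily large, so one must not give its vertices distinct labels: for a complete piece, absorb the sets $R(x_i)$ one at a time onto a single accumulator label $a$, each time adding the next gadget on a scratch label $b$, doing $\eta(a,b)$, then $\rho(b,a)$; for a star, put all leaf-sets on one label, the centre-set (when it is a non-parent vertex) on another, and perform a single join; in both cases only ${\cal O}(1)$ labels are used on top of the children's expressions, and $B_v$ is read off exactly as dictated by the adjacency of $p_v$. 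Crucially, the two operands of a disjoint union are evaluated independently, so the label count of $E_v$ is the maximum of $2k+1$ and the label counts of the $E_{c_i}$'s, and the global bound $2k+1$ follows by induction; at the root $r$ there is no $p_v$ and $E_r$ is the desired $(2k+1)$-expression for $G=H_r$.

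For the running time, the canonical split decomposition is computed in ${\cal O}(n+m)$ time~\cite{CDR12}, after which $E$ is produced by one traversal of $T$ spending ${\cal O}(|V(G_v)|^2)$ on each prime piece, ${\cal O}(|V(G_v)|)$ on each degenerate piece, and ${\cal O}(1)$ per marker; since prime pieces have ${\cal O}(k)$ vertices, the pieces partition the real vertices, and there are ${\cal O}(n)$ markers, this is linear time (with $k=sw(G)$ regarded as fixed, as usual in this setting). I expect the only real difficulty to be the structural fact quoted in the first paragraph: proving that attaching the subtree $T_v$ to the rest of $T$ across a single tree edge really does realise $B_v$ as the stated union of classes and, dually, that the joins forced by the $\eta$-operations coincide exactly with the edges of $G$ crossing each tree edge. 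This is the point where the theory of strong (non-crossing) splits and of the recomposition operation is genuinely needed; everything after it --- the $2$-colour invariant, the per-piece gadgets, the label bookkeeping --- is routine.
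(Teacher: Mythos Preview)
The paper does not prove this lemma at all: it is stated with a bare citation to~\cite{Rao08b} and no argument is given. So there is nothing in the paper to compare your proposal against.

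That said, your sketch is the right idea and is essentially how one proves the bound. The induction invariant (two labels at the interface of every processed subtree: one for the boundary $B_v$, one for the interior) is exactly what makes the construction go through, and your handling of degenerate components by an accumulator rather than by assigning distinct labels to their vertices is the crucial point that keeps the bound independent of the size of stars and cliques. One remark on the bookkeeping: for a prime piece with at most $k$ vertices you only ever hold $\le k$ private labels plus a constant number of auxiliary labels simultaneously, which is at most $k+{\cal O}(1)$ --- strictly below $2k+1$ for $k\ge 2$ --- so your parenthetical ``in fact fewer suffice'' is accurate; the $2k+1$ in the statement is simply the bound proved in~\cite{Rao08b}, and your construction matches it with room to spare. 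The only place to be careful is the interaction between the labels used \emph{inside} a child's subexpression (up to $2k+1$ by induction) and the private labels $\ell_1,\dots,\ell_{p-1},d$ accumulated while assembling the current piece: since the two operands of $\oplus$ are built independently and you relabel each child down to two labels before the union, reusing the same pool $\{1,\dots,2k+1\}$ throughout keeps the global label count at $2k+1$. Your final paragraph correctly identifies the one nontrivial ingredient, namely that the recomposition of the canonical split decomposition along a tree edge realises exactly the complete join between $B_v$ and its counterpart on the other side; this is standard split-decomposition theory and is what~\cite{Rao08b} relies on as well.
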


We refer to~\cite{GaP03,GiP12,Rao08b} for some algorithmic applications of split decomposition.
In particular, graphs with split-width at most two are exactly the distance-hereditary graphs~\cite{BaM86}.
Linear-time algorithms for solving {\sc Diameter} and {\sc Maximum Matching} for distance-hereditary graphs are presented in~\cite{DrN00,Dra97}.

\begin{figure}[h!]
\centering
\begin{subfigure}[b]{.46\textwidth}\centering
\includegraphics[width=.45\textwidth]{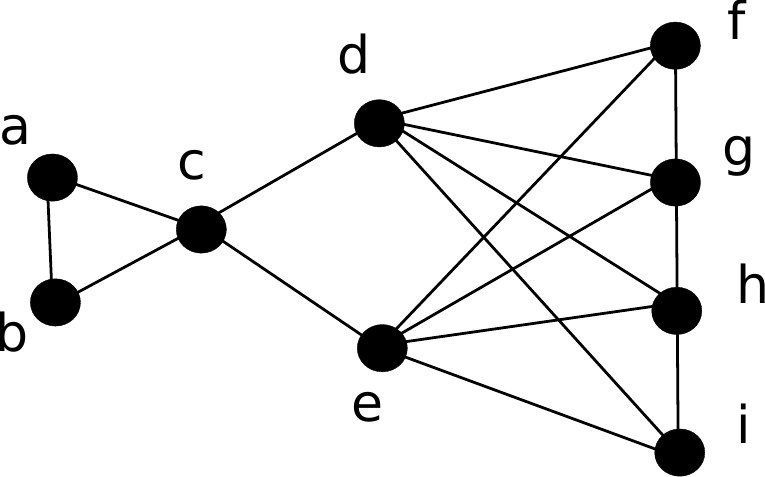}
\end{subfigure}\hfill
\begin{subfigure}[b]{.52\textwidth}\centering
\includegraphics[width=.75\textwidth]{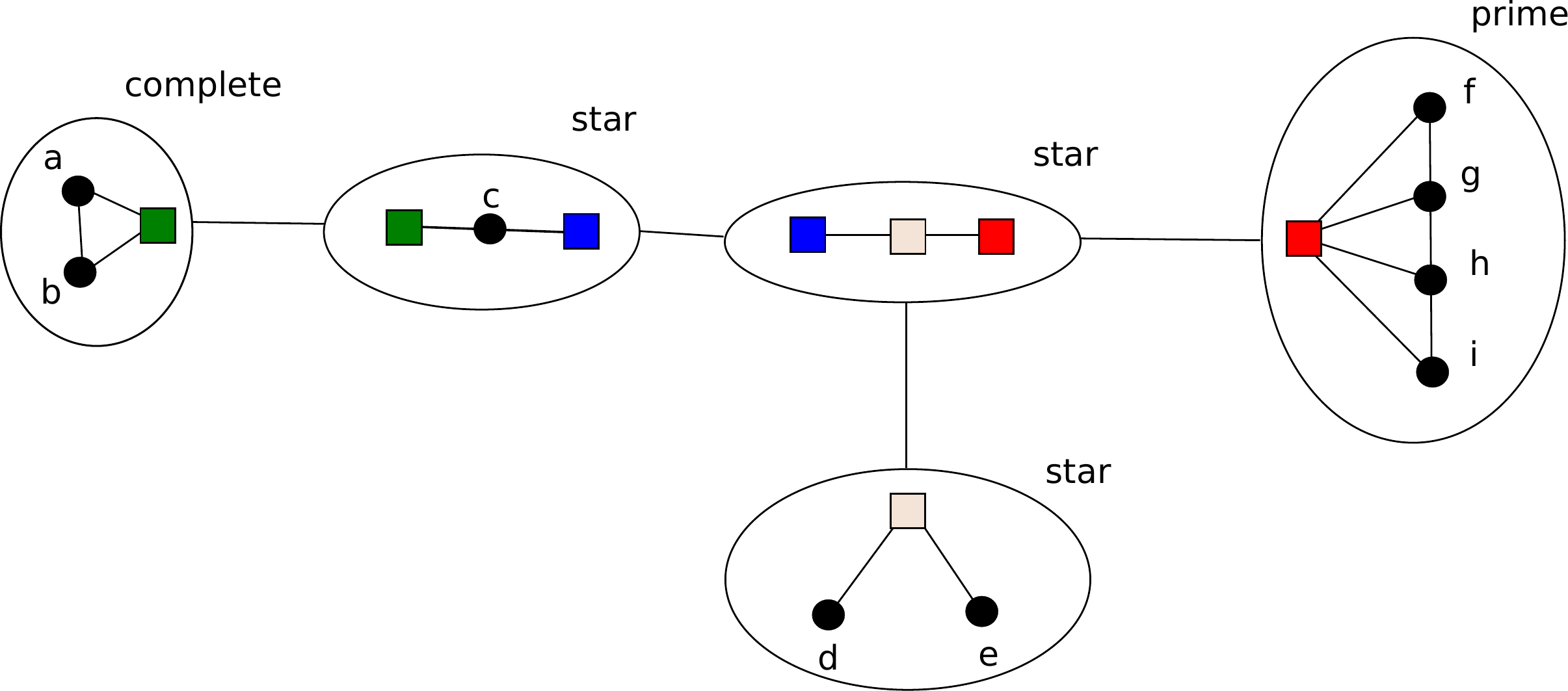}
\end{subfigure}
\caption{A graph and its split decomposition.}
\label{fig:split-dec}
\end{figure}

We stress that split decomposition can be seen as a refinement of modular decomposition.
Indeed, if $M$ is a module of $G$ and $\min\{|M|,|V\setminus M|\} \geq 2$ then $(M, V \setminus M)$ is a split.
In what follows, we prove most of our results with the more general split decomposition.

\subsubsection*{Graphs with few $P_4$'s}

A $(q,t)$-graph $G = (V,E)$ is such that for any $S \subseteq V$, $|S| \leq q$,  $S$ induces at most $t$ paths on four vertices~\cite{BaO98}.
The {\em $P_4$-sparseness} of $G$, denoted by $q(G)$, is the minimum $q \geq 7$ such that $G$ is a $(q,q-3)$-graph.

\begin{lemma}[~\cite{MaR99}]\label{lem:qq3-cw}
For every $q \geq 7$, every $(q,q-3)$-graph has clique-width at most $q$, and a $q$-expression defining it can be computed in linear-time.
\end{lemma}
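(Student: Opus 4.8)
The plan is to combine the behaviour of clique-width under modular substitution with the structural characterization of prime $(q,q-3)$-graphs coming from $p$-connectedness theory.

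First I would recall the modular substitution principle for clique-width, which is exactly the engine behind Lemma~\ref{lem:mw-cw}: if a graph $H$ is obtained from its quotient graph $H'$ by substituting, for each $v\in V(H')$, a graph $H_v$ on the corresponding strong module, then from a $k$-expression of $H'$ together with $k$-expressions of the $H_v$'s one can assemble a $k$-expression of $H$, provided $k\geq\max\{cw(H'),\max_v cw(H_v)\}$: one processes the modules one at a time and collapses each module to a single label as soon as it has been built, so that the label budget is reused. Iterating this along the modular decomposition of an arbitrary $(q,q-3)$-graph $G$ (computable in linear time, see \cite{TCHP08} and Theorem~\ref{thm:modular-dec}), it then suffices to bound by $q$ the clique-width of each prime subgraph occurring in the decomposition, since the edgeless and complete quotients have clique-width at most $2\leq q$.

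Next I would invoke the characterization of the prime graphs (for modular decomposition) that are $(q,q-3)$-graphs, as it follows from the $p$-connectedness theory of~\cite{Bab98,BaO98,BaO99,Bab00,JaO95}: such a graph $H$ is either (i) of order at most $q$; (ii) a spider, i.e.\ $V(H)$ partitions into an independent set $S$, a clique $K$ with $|S|=|K|\geq 2$, and at most one further vertex $r$ complete to $K$ and anticomplete to $S$, where the $S$--$K$ edges form either a perfect matching (thin spider) or the complement of a perfect matching (thick spider); or (iii) a member of a restricted ``tree-like'' family of $p$-connected graphs (a subclass of the $p$-trees of~\cite{BaO99}). Case (i) gives $cw(H)\leq |V(H)|\leq q$ at once. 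For case (ii) I would exhibit an explicit $4$-expression for thin and thick spiders: keep one label for the clique built so far, one for the independent set built so far, and two temporary labels used to attach each new (anti)matched pair $(s_i,k_i)$ and then merge $k_i$ into the clique label and $s_i$ into the independent-set label; the head vertex $r$ costs no extra label, so $cw(H)\leq 4\leq q$. For case (iii) I would unfold the defining recursion of the family — these graphs are built tree-recursively from bounded-size gadgets — directly into a clique-width expression using a constant number of labels, giving $cw(H)={\cal O}(1)\leq q$.

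Finally, for the running time I would check that every step is linear: the modular decomposition tree is obtained in ${\cal O}(n+m)$ time~\cite{TCHP08}; at each prime node one tests in time linear in the node's size which of the three cases applies (recognizing a spider, or a member of the tree-like family, amounts to checking a fixed structural pattern); the local $q$-expression produced at a node has length linear in the node's size; and splicing these local expressions along the decomposition tree, as in the substitution argument above, yields a global $q$-expression of total length ${\cal O}(n+m)$. The hard part is the structural input itself — pinning down exactly which prime $(q,q-3)$-graphs can be large and verifying that each such family has clique-width bounded by an absolute constant $\leq 7\leq q$, so that the uniform bound ``$q$'' holds rather than some larger $f(q)$; once the characterization is granted, the remaining work is careful but routine label bookkeeping.
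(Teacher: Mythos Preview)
The paper does not prove this lemma; it is quoted from~\cite{MaR99} without proof. So there is no in-paper argument to compare against, and your proposal stands as an independent proof attempt.

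Your high-level strategy---reduce via the modular substitution principle to bounding the clique-width of the prime factors, then invoke the structural classification of prime $(q,q-3)$-graphs---is the natural one and is essentially what~\cite{MaR99} does. The substitution step is correct as you describe it. However, your case analysis for the large prime factors is incomplete: you list (i) small graphs, (ii) spiders, and (iii) a $p$-tree family, but you omit the \emph{discs}, i.e.\ the cycles $C_n$ and co-cycles $\overline{C_n}$ for $n\geq 5$. These appear explicitly in the classification (see Lemma~\ref{lem:reduce-qq3} in the present paper, or the sources~\cite{Bab00,MaR99}) and must be handled. Cycles have clique-width $4$; for co-cycles you need to check that the bound stays below $7$, since the generic ``complement doubles clique-width'' estimate would only give $8$ and break the uniform bound at $q=7$. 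This is true but requires an explicit expression, not just the general complementation bound.

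Second, for case~(iii) you assert that the $p$-tree family has clique-width bounded by an absolute constant $\leq 7$ ``by unfolding the defining recursion,'' but you do not actually carry this out. Given that the paper lists four infinite subfamilies of prime $p$-trees (spiked $p$-chains $P_k$, $\overline{P_k}$, $Q_k$, $\overline{Q_k}$), each with its own somewhat intricate adjacency pattern, this is not a one-liner; it is exactly the ``hard part'' you flag in your last sentence, and leaving it at the level of a sketch means the proof is not complete. The spider case you do handle concretely, and correctly.
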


The algorithmic properties of several subclasses of $(q,q-3)$-graphs have been considered in the literature.
We refer to~\cite{BaO99} for a survey.
Furthermore, there exists a canonical decomposition of $(q,q-3)$-graphs, sometimes called the {\em primeval decomposition}, that can be computed in linear-time~\cite{Bau96}.
Primeval decomposition can be seen as an intermediate between modular and split decomposition.
We postpone the presentation of primeval decomposition until Section~\ref{sec:maxmatching}.
Until then, we state the results in terms of modular decomposition.

\smallskip
More precisely, given a $(q,q-3)$-graphs $G$, the prime subgraphs in its modular decomposition may be of super-constant size $\Omega(q)$.
However, if they are then they are part of one of the well-structured graph classes that we detail next.

\smallskip
A {\em disc} is either a cycle $C_n$, or a co-cycle $\overline{C_n}$, for some $n \geq 5$.

\begin{figure}[h!]
\centering
\begin{subfigure}[b]{.48\textwidth}\centering
\includegraphics[width=.45\textwidth]{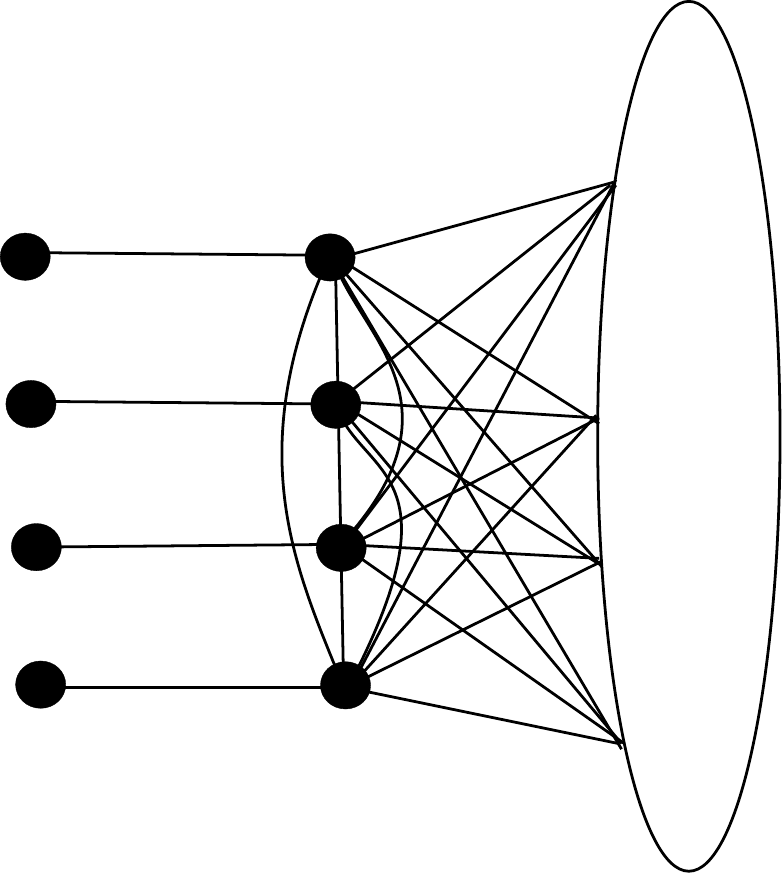}
\caption{Thin spider.}
\end{subfigure}\hfill
\begin{subfigure}[b]{.48\textwidth}\centering
\includegraphics[width=.5\textwidth]{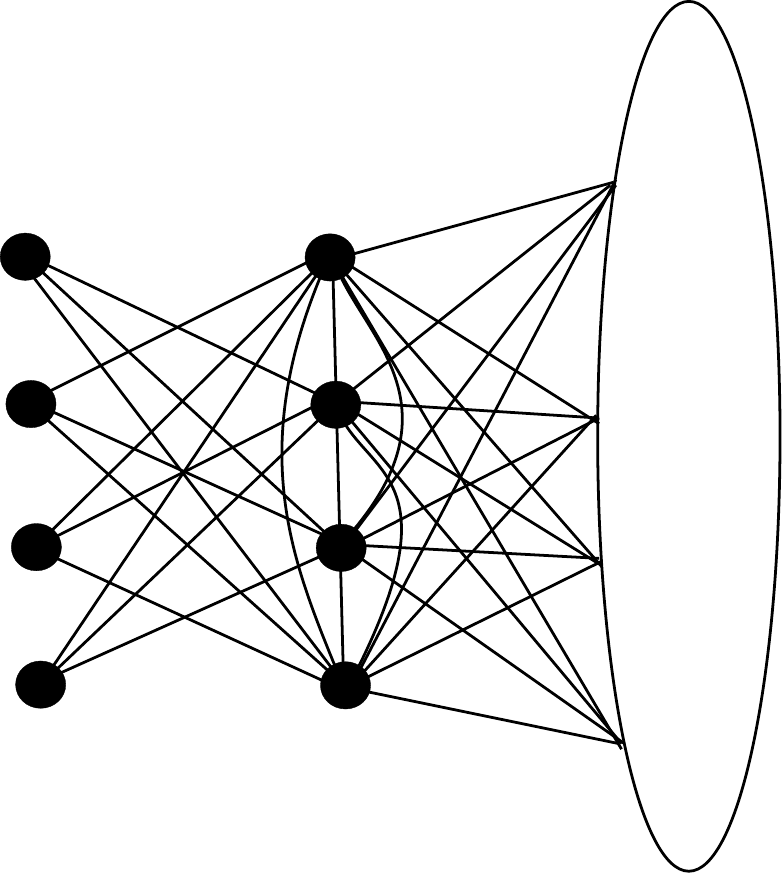}
\caption{Thick spider.}
\end{subfigure}
\caption{Spiders.}
\label{fig:splider}
\end{figure}

\smallskip
A \emph{spider} $G=(S \cup K \cup R,E)$ is a graph with vertex set $V=S \cup K \cup R$ and edge set $E$ such that:
\begin{enumerate}
	\item $(S,K,R)$ is a partition of $V$ and $R$ may be empty;
	\item the subgraph $G[K\cup R]$ induced by $K$ and $R$ is the complete join $K \oplus R$, and $K$ separates $S$ and $R$, i.e. any path from a vertex in $S$ and a vertex in $R$ contains a vertex in $K$;
	\item
	$S$ is a stable set, $K$ is a clique, $|S| = |K|\geq 2$, and
	there exists a bijection $f: S \longrightarrow K$ such that, either for all
	vertices $s\in S$, $N(s)\cap K = K - \{f(s)\}$ or $N(s)\cap K = \{f(s)\}$. Roughly speaking, the edges between $S$ and $K$ are either a matching or an anti-matching. In the former case or if $|S| = |K| \leq 2$, $G$ is called \emph{thin}, otherwise $G$ is \emph{thick}. 
See Fig.~\ref{fig:splider}.
\end{enumerate}
If furthermore $|R| \leq 1$ then we call $G$ a {\em prime spider}.

\begin{figure}[h!]
\centering
\includegraphics[width=.45\textwidth]{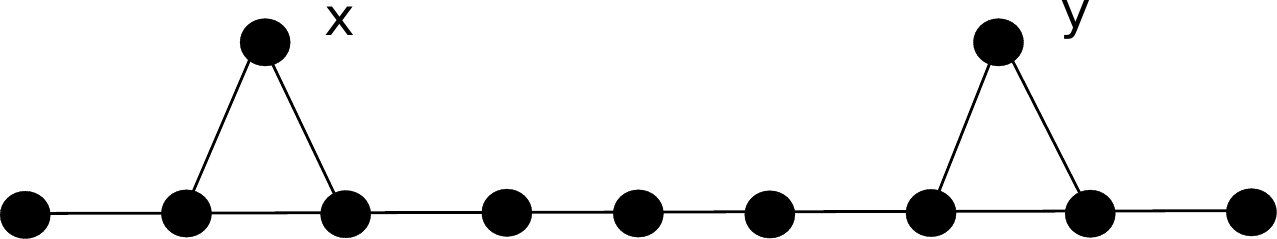}
\caption{Spiked $p$-chain $P_k$.}
\label{fig:spiked-pk}
\end{figure}

\smallskip
Let $P_k = (v_1,v_2,v_3, \ldots, v_k), \ k \geq 6$ be a path of length at least five.
A {\em spiked $p$-chain $P_k$} is a supergraph of $P_k$, possibly with the additional vertices $x,y$ such that: $N(x) = \{v_2,v_3\}$ and $N(y) = \{v_{k-2},v_{k-1}\}$.
See Fig.~\ref{fig:spiked-pk}.
Note that one or both of $x$ and $y$ may be missing.
In particular, $P_k$ is a spiked $p$-chain $P_k$.   
A {\em spiked $p$-chain $\overline{P_k}$} is the complement of a spiked $p$-chain $P_k$.

Let $Q_k$ be the graph with vertex-set $\{v_1,v_2, \ldots, v_k\}, \ k \geq 6$ such that, for every $i \geq 1$, 
$N_{Q_k}(v_{2i-1}) = \{ v_{2j} \mid j \leq i, \ j \neq i-1 \}$ 
and $N_{Q_k}(v_{2i}) = \{v_{2j} \mid j \neq i \} \cup \{ v_{2j-1} \mid j \geq i, \ j \neq i+1 \}$. 
A {\em spiked $p$-chain $Q_k$} is a supergraph of $Q_k$, possibly with the additional vertices $z_2, z_3, \ldots, z_{k-5}$ such that:
\begin{itemize}
\item $N(z_{2i-1}) = \{v_{2j} \mid j \in [1;i]\} \cup \{z_{2j} \mid j \in [1;i-1]\}$;
\item $\overline{N(z_{2i})} = \{v_{2j-1} \mid j \in [1;i+1]\} \cup \{z_{2j-1} \mid j \in [2;i]\}$
\end{itemize} 
Any of the vertices $z_i$ can be missing, so, in particular, $Q_k$ is a spiked $p$-chain $Q_k$.
See Fig.~\ref{fig:spiked-qk}.
A {\em spiked $p$-chain $\overline{Q_k}$} is the complement of a spiked $p$-chain $Q_k$.

\begin{figure}[h!]
\centering
\includegraphics[width=.25\textwidth]{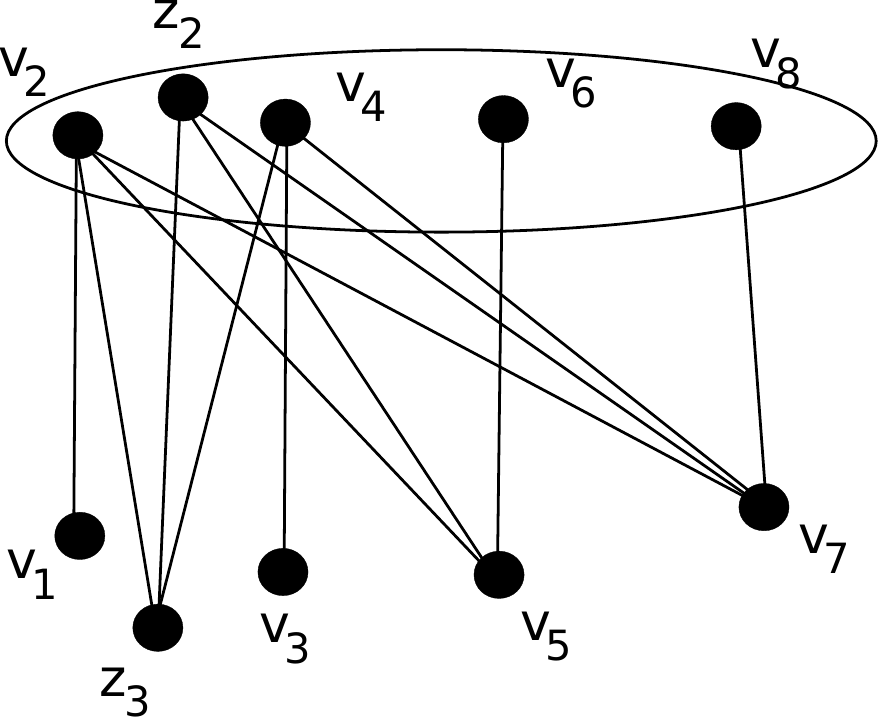}
\caption{Spiked $p$-chain $Q_k$.}
\label{fig:spiked-qk}
\end{figure}

Finally, we say that a graph is a {\em prime $p$-tree} if it is either: a spiked $p$-chain $P_k$, a spiked $p$-chain $\overline{P_k}$, a spiked $p$-chain $Q_k$, a spiked $p$-chain $\overline{Q_k}$, or part of the seven graphs of order at most $7$ that are listed in~\cite{MaR99}.

\begin{lemma}[~\cite{Bab00,MaR99}]\label{lem:reduce-qq3}
Let $G=(V,E)$, $q \geq 7$, be a connected $(q,q-3)$-graph such that $G$ and $\overline{G}$ are connected.
Then, one of the following must hold for its quotient graph $G'$:
\begin{itemize}
\item either $G'$ is a prime spider;
\item or $G'$ is a disc;
\item or $G'$ is a prime $p$-tree;
\item or $|V(G')| \leq q$.
\end{itemize}
\end{lemma}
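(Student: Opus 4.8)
The plan is to combine Gallai's theorem on modular decomposition (Theorem~\ref{thm:modular-dec}) with the structure theory of $P_4$-connected (``$p$-connected'') graphs, due to Jamison and Olariu~\cite{JaO95} and refined for graphs with few $P_4$'s by Babel and Olariu~\cite{BaO99,Bab00}. First I would reduce the statement to a claim about a prime graph. Selecting one representative vertex per maximal strong module realizes the quotient graph $G'$ as an induced subgraph of $G$, so $G'$ is again a $(q,q-3)$-graph; and since $G$ and $\overline{G}$ are connected, the first two cases of Theorem~\ref{thm:modular-dec} are excluded, hence $G'$ is prime for modular decomposition. Moreover $G'$ is connected (a quotient of a connected graph is connected) and co-connected (its complement is the quotient graph of $\overline{G}$). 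If $|V(G')|\le q$ we are done, so assume from now on that $H := G'$ is a prime, connected and co-connected $(q,q-3)$-graph on more than $q\ge 7$ vertices.

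Next I would apply the $p$-connectedness dichotomy to $H$. Recall that $H$ is \emph{$p$-connected} when every partition of $V(H)$ into two nonempty parts is crossed by an induced $P_4$, and that a $p$-connected $H$ is \emph{separable} when there is a partition $(X,Y)$ of $V(H)$ with every crossing $P_4$ having its two midpoints in $X$ and its two endpoints in $Y$. By the Jamison--Olariu structure theorem, a connected and co-connected graph is either $p$-connected, or it possesses a unique proper separable $p$-component $C$ (with separation $(C_1,C_2)$) whose complement $D := V(H)\setminus V(C)$ attaches uniformly, each vertex of $D$ being adjacent to all of $C_1$ and to none of $C_2$. In the separable case, primality of $H$ forces $|D|\le 1$ (a larger $D$ would be a nontrivial module) and forces $C=H$ when $D=\emptyset$; moreover the $(q,q-3)$ bound forbids too many crossing $P_4$'s, which rigidifies the bipartite graph between $C_1$ and $C_2$ into a perfect matching or its bipartite complement with $|C_1|=|C_2|$. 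Reading off $(S,K,R):=(C_2,C_1,D)$ then yields exactly the definition of a prime spider.

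It remains to treat the case where $H$ is $p$-connected. Here I would invoke the classification of large $p$-connected $(q,q-3)$-graphs of Babel and Olariu~\cite{BaO99,Bab00}: once $|V(H)|>q$, the constraint ``at most $q-3$ induced $P_4$'s among any $q$ vertices'' is so stringent that the $P_4$-structure of $H$ is forced into one of a small family of highly regular patterns --- a spider, a disc $C_n$ or $\overline{C_n}$, or one of the spiked $p$-chains $P_k,\overline{P_k},Q_k,\overline{Q_k}$ --- while the finitely many sporadic prime graphs on at most $7$ vertices are exactly the ones absorbed into the definition of a prime $p$-tree. Since $H$ is prime for modular decomposition, a spider outcome has $|R|\le 1$, hence is a prime spider, and a $p$-tree outcome is one of the listed prime $p$-trees. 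Collecting the possibilities --- prime spider, disc, prime $p$-tree, or $|V(G')|\le q$ --- establishes the lemma.

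The hard part is the last step: extracting rigid global structure from a purely numerical $P_4$-count. This is where essentially all the technical effort of~\cite{BaO99,Bab00,MaR99} lies, and a self-contained proof would require redeveloping the whole apparatus of $p$-connected graphs, spiders and $p$-trees (the ``$P_4$-structure theorem''); in this paper I would cite it rather than reprove it. A secondary point I would verify carefully is that primality for modular decomposition matches the ``prime'' qualifiers in the statement exactly --- $|R|\le 1$ for spiders, and the four chain families plus the seven exceptional small graphs for $p$-trees --- so that no case carrying a nontrivial module survives.
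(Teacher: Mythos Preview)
The paper does not prove this lemma; it is quoted as a known structural result from~\cite{Bab00,MaR99}. Your outline is the standard route from that literature---pass to the prime quotient, invoke the Jamison--Olariu $p$-connectedness dichotomy, then the Babel--Olariu classification of large $p$-connected $(q,q-3)$-graphs (Theorem~\ref{thm:p-comp} in this paper)---and for the $p$-connected branch you correctly identify the classification as the hard cited step.

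There is, however, a genuine gap in your separable (non-$p$-connected) branch. You assert that the $(q,q-3)$ bound ``rigidifies the bipartite graph between $C_1$ and $C_2$ into a perfect matching or its bipartite complement,'' and conclude that $H$ is a prime spider. This is not a simple counting observation; it is again the Babel--Olariu classification, now applied to the component $C$. Primality of $H$ forces $C$ to be prime as well (every strong module of $C$ lies in $C_1$ or in $C_2$ by Theorem~\ref{thm:separable-p-comp}, and is then also a module of $H$), so Theorem~\ref{thm:p-comp} applies to $C$: it is a prime spider, a disc, or a prime $p$-tree. Being a split graph rules out discs, but it does \emph{not} rule out the spiked $p$-chains $Q_k$ and $\overline{Q_k}$, which are split graphs whose clique-to-stable bipartite pattern is neither a matching nor an anti-matching. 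Your stated conclusion for $C$ is therefore too strong, and the ``read off a spider'' step is unjustified. To finish you must either rule out the $Q_k$/$\overline{Q_k}$ outcomes for $C$ using the primality of $H$, or show that $H=C\cup\{d\}$ still lands among the four permitted types in those cases. In short, the separable branch is not the easy half you present it as: it needs the same cited machinery, plus an extra case analysis.
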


A simpler version of Lemma~\ref{lem:reduce-qq3} holds for the subclass of $(q,q-4)$-graphs:

\begin{lemma}[~\cite{Bab00}]\label{lem:reduce-qq4}
Let $G=(V,E)$, $q \geq 4$, be a connected $(q,q-4)$-graph such that $G$ and $\overline{G}$ are connected.
Then, one of the following must hold for its quotient graph $G'$:
\begin{itemize}
\item $G'$ is a prime spider;
\item or $|V(G')| \leq q$.
\end{itemize}
\end{lemma}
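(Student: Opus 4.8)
The plan is to derive the statement from Lemma~\ref{lem:reduce-qq3}, exploiting that the $(q,q-4)$ property is strictly stronger than the $(q,q-3)$ property in order to discard the two extra cases of that lemma. Assume first that $q \geq 7$. Since every $(q,q-4)$-graph is in particular a $(q,q-3)$-graph, Lemma~\ref{lem:reduce-qq3} applies to $G$ and yields four possibilities for the quotient graph $G'$: it is a prime spider, a disc, a prime $p$-tree, or $|V(G')| \leq q$. The first and the last are already the conclusion we want, so it suffices to show that if $G'$ is a disc or a prime $p$-tree then in fact $|V(G')| \leq q$. The key (and elementary) observation is that $G'$ is isomorphic to an induced subgraph of $G$: keeping exactly one vertex in each maximal strong module of $\mathcal{M}(G)$ produces such a copy, because between any two modules of $\mathcal{M}(G)$ either all or no edges are present. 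Consequently, every induced subgraph of $G'$ on at most $q$ vertices is also an induced subgraph of $G$, hence carries at most $q-4$ induced $P_4$'s.

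From this point only a short $P_4$-counting argument is needed. The path $P_q$ has exactly $q-3$ induced $P_4$'s, and $q-3 > q-4$, so a $(q,q-4)$-graph contains no induced $P_q$; since $P_4$ is self-complementary, induced-$P_4$ counts are invariant under complementation, so it also contains no induced $\overline{P_q}$. If $G'$ is a disc $C_n$ or $\overline{C_n}$ with $n > q$, then $q$ consecutive vertices of the underlying cycle induce a copy of $P_q$, respectively of $\overline{P_q}$ --- a contradiction; hence $|V(G')| = n \leq q$. If $G'$ is a prime $p$-tree, it is either one of the finitely many graphs on at most seven vertices, in which case $|V(G')| \leq 7 \leq q$, or a spiked $p$-chain of type $P_k$, $\overline{P_k}$, $Q_k$ or $\overline{Q_k}$; in the latter case the ``backbone'' $v_1,\dots,v_k$ already induces a subgraph carrying $q-3$ induced $P_4$'s on some $q$ of its vertices once $k > q$ (immediate for $P_k$ and $\overline{P_k}$ from the underlying path, and a direct check from the explicit neighbourhoods for $Q_k$ and $\overline{Q_k}$), so $k \leq q$, and a short case analysis on which of the at most two ``spike'' vertices are retained shows that keeping even one of them while $|V(G')| > q$ already creates an extra induced $P_4$ inside a $q$-vertex subgraph --- again a contradiction. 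Thus $|V(G')| \leq q$ in all remaining cases.

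It then remains to treat the finitely many values $q \in \{4,5,6\}$ directly: for $q = 4$ the hypothesis is vacuous, since a cograph on at least two vertices is never simultaneously connected and co-connected, while for $q \in \{5,6\}$ one checks by a finite case analysis (or again appeals to~\cite{Bab00}) that a prime graph on more than $q$ vertices, all of whose $q$-vertex induced subgraphs contain at most $q-4$ induced $P_4$'s, must be a prime spider --- and this situation does arise, for instance the thin spider on three legs for $q=5$. I expect the main obstacle to be precisely this bookkeeping around the spiked $p$-chains and the small cases: the four spiked families differ slightly near their endpoints and their ``spikes'', so pinning down a $q$-vertex induced subgraph with strictly more than $q-4$ induced $P_4$'s requires a handful of easy but tedious sub-cases. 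Alternatively, the whole statement is subsumed by the structural analysis of $p$-connected $(q,q-4)$-graphs carried out by Babel~\cite{Bab00}, which is the reference we point to.
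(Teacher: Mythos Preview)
The paper does not give its own proof of this lemma: it is simply quoted from~\cite{Bab00}, so there is nothing to compare against on the paper's side. Your attempt is therefore strictly more than what the paper offers.

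Your overall strategy---bootstrap from Lemma~\ref{lem:reduce-qq3} for $q\geq 7$ and eliminate the disc and prime $p$-tree cases by exhibiting a $q$-vertex induced subgraph with $q-3$ induced $P_4$'s---is sound, and the disc case is cleanly handled. However, several steps are only gestured at rather than argued. For the spiked $p$-chains $Q_k$ and $\overline{Q_k}$ you write ``a direct check from the explicit neighbourhoods'', but these graphs are not paths and the relevant $P_4$ count is not obvious; you would need to actually exhibit the $q$ vertices and the $q-3$ induced $P_4$'s. Likewise, the ``short case analysis'' on spikes for $P_k$ and $\overline{P_k}$ is not carried out (it does work, e.g.\ for a spiked $P_k$ the set $\{x,v_2,\ldots,v_q\}$ already contains the extra $P_4$ $(x,v_3,v_4,v_5)$, but you should say so). Finally, the cases $q\in\{4,5,6\}$ are not treated at all---you either defer to~\cite{Bab00} or to an unspecified finite check---so as written the argument is not self-contained for precisely the range where Lemma~\ref{lem:reduce-qq3} is unavailable. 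None of these are fatal, but as it stands this is a correct proof sketch rather than a complete proof.
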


The subclass of $(q,q-4)$-graphs has received more attention in the literature than $(q,q-3)$-graphs.
Our results hold for the more general case of $(q,q-3)$-graphs.

\section{Cycle problems on bounded clique-width graphs}
\label{sec:cycle}

Clique-width is the smallest parameter that is considered in this work.
We start studying the possibility for $k^{{\cal O}(1)} \cdot (n+m)$-time algorithms on graphs with clique-width at most $k$.
Positive results are obtained for two variations of {\sc Triangle Detection}, namely \textsc{Triangle Counting} and \textsc{Girth}.
We define the problems studied in Section~\ref{sec:pbs-cw}, then we describe the algorithms in order to solve these problems in Section~\ref{sec:algos-cw}.

\subsection{Problems considered}\label{sec:pbs-cw}

We start introducing our basic cycle problem.

\begin{center}
	\fbox{
		\begin{minipage}{.95\linewidth}
			\begin{problem}[\textsc{Triangle Detection}]\
				\label{prob:triangle-detect} 
					\begin{description}
					\item[Input:] A graph $G=(V,E)$.
					\item[Question:] Does there exist a triangle in $G$?
				\end{description}
			\end{problem}     
		\end{minipage}
	}
\end{center}

Note that for general graphs, {\sc Triangle Detection} is conjectured not to be solvable in ${\cal O}(n^{3-\varepsilon})$-time, for any $\varepsilon > 0 $, with a combinatorial algorithm~\cite{VaW10}.
It is also conjectured not to be solvable in ${\cal O}(n^{\omega - \varepsilon})$-time for any $\varepsilon > 0$, with $\omega$ being the exponent for fast matrix multiplication~\cite{AbV14}.
Our results in this section show that such assumptions do not hold when restricted to bounded clique-width graphs.

More precisely, we next describe fully polynomial parameterized algorithms for the two following generalizations of {\sc Triangle Detection}.

\begin{center}
	\fbox{
		\begin{minipage}{.95\linewidth}
			\begin{problem}[\textsc{Triangle Counting}]\
				\label{prob:triangle-count} 
					\begin{description}
					\item[Input:] A graph $G=(V,E)$.
					\item[Output:] The number of triangles in $G$.
				\end{description}
			\end{problem}     
		\end{minipage}
	}
\end{center}

\begin{center}
	\fbox{
		\begin{minipage}{.95\linewidth}
			\begin{problem}[\textsc{Girth}]\
				\label{prob:girth} 
					\begin{description}
					\item[Input:] A graph $G=(V,E)$.
					\item[Output:] The girth of $G$, that is the minimum size of a cycle in $G$.
				\end{description}
			\end{problem}     
		\end{minipage}
	}
\end{center}

In~\cite{VaW10}, the three of {\sc Triangle Detection}, {\sc Triangle Counting} and {\sc Girth} are proved to be subcubic equivalent when restricted to combinatorial algorithms.

\subsection{Algorithms}\label{sec:algos-cw}

Roughly, our algorithms in what follows are based on the following observation.
Given a labeled graph $\langle G, \ell \rangle$ (obtained from a $k$-expression), in order to detect a triangle in $G$, resp. a minimum-length cycle in $G$, we only need to store the adjacencies, resp. the distances, between every two label classes.
Hence, if a $k$-expression of length $L$ is given as part of the input we obtain algorithms running in time ${\cal O}(k^2L)$ and space ${\cal O}(k^2)$.

\smallskip
Our first result is for {\sc Triangle Counting} (Theorem~\ref{thm:cw-triangle}). It shares some similarities with a recent algorithm for listing all triangles in a graph~\cite{BFNN17}.
However, unlike the authors in~\cite{BFNN17}, we needn't use the notion of $k$-modules in our algorithms. 
Furthermore, since we only ask for {\em counting} triangles, and not to list them, we obtain a better time complexity than in~\cite{BFNN17}.

\begin{theorem}\label{thm:cw-triangle}
For every $G=(V,E)$, \textsc{Triangle Counting} can be solved in ${\cal O}(k^2 \cdot (n+m))$-time if a $k$-expression of $G$ is given.
\end{theorem}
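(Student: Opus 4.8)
The plan is to process the given $k$-expression from the leaves to the root, maintaining at each labeled graph $\langle H,\ell\rangle$ a small amount of aggregate data about the label classes, and to show that each of the four operations can be handled in ${\cal O}(k^2)$ time. Concretely, for the current labeled subgraph $\langle H,\ell\rangle$ I would store: (i) $n_i$, the number of vertices of label $i$, for each $i\in\{1,\dots,k\}$; (ii) $e_{ij}$, the number of edges of $H$ with one endpoint of label $i$ and the other of label $j$ (with $e_{ii}$ counting edges inside class $i$); (iii) a boolean $\mathrm{adj}_{ij}$ that is true iff there is a complete join between classes $i$ and $j$ (this is an invariant of $k$-expressions: any two distinct label classes are either completely joined or completely non-adjacent); (iv) $p_{ij}$, the number of paths $a$--$b$--$c$ in $H$ (with $a,c$ distinct, $a\neq c$) whose two endpoints $a,c$ have labels $i$ and $j$ respectively --- equivalently, for a vertex $v$, the number of unordered pairs of neighbors with labels $i,j$, summed over all $v$. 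The running total of triangles $T$ is updated along the way. All of this is ${\cal O}(k^2)$ space.

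The key steps are the update rules for each operation. For $i(v)$ (new isolated vertex): increment $n_i$; nothing else changes. For $\rho(i,j)$ (rename $i$ to $j$): merge the $i$-row/column into the $j$-row/column in all tables, i.e. $n_j \mathrel{+}= n_i$, $e_{j\ell}\mathrel{+}= e_{i\ell}$ for all $\ell$ (being careful with the diagonal: $e_{jj}\mathrel{+}= e_{ii}+e_{ij}$), $p_{j\ell}\mathrel{+}= p_{i\ell}$ similarly, $\mathrm{adj}_{j\ell}\mathrel{\&}=\mathrm{adj}_{i\ell}$; then zero out the $i$ entries; this is ${\cal O}(k)$. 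For the disjoint union $\langle H_1,\ell_1\rangle\oplus\langle H_2,\ell_2\rangle$: since there are no new edges and no vertex of $H_1$ is adjacent to any vertex of $H_2$, we simply add the two tables entrywise ($n_i$, $e_{ij}$, $p_{ij}$, $T$) and set $\mathrm{adj}_{ij}$ to false whenever the two classes come from the two different sides --- but here a subtlety arises: after a union a label $i$ may be present on both sides, and then $\mathrm{adj}_{ij}$ is ill-defined unless both sides agree; I will need the invariant that this is always consistent in a $k$-expression, or more simply, recompute $\mathrm{adj}$ lazily and only rely on it being meaningful at the moment an $\eta$ is applied. The interesting case is $\eta(i,j)$ (add the join between classes $i$ and $j$, assuming currently $\mathrm{adj}_{ij}$ is false, which holds because a well-formed $k$-expression never adds an edge twice). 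The number of new triangles is counted by type according to how many of the triangle's three vertices lie in class $i$ versus class $j$ versus elsewhere: (a) triangles with two vertices in class $i$ and one in class $j$: $e_{ii}\cdot n_j$; symmetrically $e_{jj}\cdot n_i$; (b) triangles with one vertex in class $i$, one in class $j$, and the third vertex $w$ in some class $\ell$: for this $w$ must already be adjacent (via a complete join) to both classes $i$ and $j$, contributing $\sum_{\ell}[\mathrm{adj}_{i\ell}\wedge\mathrm{adj}_{j\ell}]\cdot n_i\cdot n_j\cdot n_\ell$, plus the cases $\ell=i$ or $\ell=j$ already covered, plus the case where $w$ is in class $i$ or $j$ and adjacent to the rest within its own class --- all expressible via the stored counts in ${\cal O}(k)$ time; (c) triangles with one vertex in class $i$ and the other two in class $j$ using a pre-existing edge inside $j$ plus two new edges: this is $e_{jj}\cdot n_i$, already listed. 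Summing gives the increment to $T$; then I must also update $e_{ij}\mathrel{+}= n_i n_j$, set $\mathrm{adj}_{ij}$ true, and update the $p$-table, since the join creates many new length-two paths (e.g. $p_{\ell\ell'}$ gets contributions from paths through a vertex of class $i$ or $j$ that now has new neighbors). Each $\eta$ update is ${\cal O}(k^2)$ because of the double sum over label pairs in the $p$-update, and since the $k$-expression has length ${\cal O}(n+m)$ (every edge is created by exactly one $\eta$ operation, every vertex by one addition, and renames/unions can be bounded linearly too after normalizing the expression), the total time is ${\cal O}(k^2(n+m))$.

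The main obstacle I anticipate is getting the triangle-counting bookkeeping in the $\eta(i,j)$ step exactly right without double-counting: a new triangle must use \emph{at least one} of the newly added edges between classes $i$ and $j$, and it may use one or two of them, so the case analysis (one new edge and two old edges; two new edges and one old edge; and the degenerate cases where some of the "three vertices" coincide in a class) has to be exhaustive and disjoint. The cleanest way to organize this is to note that every new triangle has exactly one vertex in $i\setminus j$, or symmetric, or one in each of $i$ and $j$ --- partitioning by the multiset of classes-among-$\{i,j\}$ of the triangle's vertices --- and for each pattern read off the count from $n_\bullet$, $e_{\bullet\bullet}$, $\mathrm{adj}_{\bullet\bullet}$. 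I would also need to argue that maintaining $p_{ij}$ is actually necessary: it is, because a triangle created by $\eta(i,j)$ whose third vertex $w$ is \emph{not} in $i$ or $j$ requires $w$ to be a common neighbor of some $u\in i$ and $v\in j$; since such adjacency in a $k$-expression can only come from complete joins, $\mathrm{adj}_{i,\ell(w)}$ and $\mathrm{adj}_{j,\ell(w)}$ both hold and the count is $n_i n_j n_{\ell(w)}$ --- so in fact the $p$-table is \emph{not} needed for counting, only the $\mathrm{adj}$ booleans and the $n_i$'s, which actually simplifies the argument and keeps every update within ${\cal O}(k^2)$ (indeed ${\cal O}(k)$ for the triangle count, with the ${\cal O}(k^2)$ coming only if one chooses to also maintain per-pair edge counts, which are needed for the "two vertices in class $i$" patterns via $e_{ii}$, itself an ${\cal O}(1)$ update). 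I will close by summing the per-operation costs against the length $L={\cal O}(n+m)$ of the expression (invoking that a $k$-expression realizing $G$ can be taken of this length, or is given of this length) to obtain the claimed ${\cal O}(k^2(n+m))$ bound.
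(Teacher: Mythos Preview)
Your central invariant is false: it is \emph{not} true that at every point in a $k$-expression any two distinct label classes are either completely joined or completely non-adjacent. A rename $\rho(i,j)$ merges class $i$ into class $j$, and these two classes may have different adjacency patterns to some third class $\ell$; likewise, a disjoint union can place vertices with different neighborhoods into the same label class. Concretely: create $a$ with label $1$, $b$ with label $2$, $c$ with label $3$; apply $\eta(1,2)$; then apply $\rho(2,3)$. Now class $3=\{b,c\}$, and $a$ is adjacent to $b$ but not to $c$. So your boolean $\mathrm{adj}_{i\ell}$ is not well-defined, and the simplified count ``$\sum_\ell [\mathrm{adj}_{i\ell}\wedge\mathrm{adj}_{j\ell}]\,n_i n_j n_\ell$'' for triangles with a third vertex outside $i\cup j$ is wrong. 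The $P_3$-count table $p_{ij}$ that you introduce and then discard is exactly what is needed here; you cannot drop it.

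If you keep $p_{ij}$ and also maintain $e_{ij}$, your approach becomes essentially the paper's: the paper stores the edge counts $m_{p,q}$ (your $e_{pq}$) and the $P_3$ counts $n_{p,q}$ (your $p_{pq}$), and for a join $\eta(i,j)$ on an \emph{irredundant} expression adds $|V_j|\cdot m_{i,i} + |V_i|\cdot m_{j,j} + n_{i,j}$ new triangles. Two further points you glossed over: first, ``a well-formed $k$-expression never adds an edge twice'' is not true in general --- the paper explicitly preprocesses to an \emph{irredundant} $k$-expression in linear time, and without this the $P_3$ count $p_{ij}$ may include non-induced $P_3$'s that do not yield new triangles. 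Second, the update of the $p$-table after $\eta(i,j)$ is the only place where the full ${\cal O}(k^2)$ per-operation cost is incurred, and getting those update rules right (especially for $p_{i,i}$, $p_{j,j}$, $p_{i,j}$, and the mixed entries $p_{i,q}$, $p_{j,q}$) requires the careful case analysis you only sketched.
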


\begin{proof}
We need to assume the $k$-expression is {\em irredundant}, that is, when we add a complete join between the vertices labeled $i$ and the verticed labeled $j$, there was no edge before between these two subsets.
Given a $k$-expression of $G$, an irredundant $k$-expression can be computed in linear-time~\cite{CoB00}.
Then, we proceed by dynamic programming on the irredundant $k$-expression.

More precisely, let $\langle G, \ell \rangle$ be a labeled graph, $\ell : V(G) \to \{1, \ldots, k\}$.
We denote by $T(\langle G, \ell \rangle)$ the number of triangles in $G$.
In particular, $T(\langle G, \ell \rangle) = 0$ if $G$ is empty.
Furthermore, $T(\langle G, \ell \rangle) = T(\langle G', \ell' \rangle)$ if $\langle G, \ell \rangle$ is obtained from $\langle G', \ell' \rangle$ by: the addition of a new vertex with any label, or the identification of two labels.
If $\langle G, \ell \rangle$ is the disjoint union of $\langle G_1, \ell_1 \rangle$ and $\langle G_2, \ell_2 \rangle$ then $T(\langle G, \ell \rangle) = T(\langle G_1, \ell_1 \rangle) + T(\langle G_2, \ell_2 \rangle)$.

Finally, suppose that $\langle G, \ell \rangle$ is obtained from $\langle G', \ell' \rangle$ by adding a complete join between the set $V_i$ of vertices labeled $i$ and the set $V_j$ of vertices labeled $j$.
For every $p,q \in \{1, \ldots, k\}$, we denote by $m_{p,q}$ the number of edges in $\langle G', \ell' \rangle$ with one end in $V_p$ and the other end in $V_q$.
Let $n_{p,q}$ be the number of (non necessarily induced) $P_3$'s with an end in $V_p$ and the other end in $V_q$.
Note that we are only interested in the number of {\em induced} $P_3$'s for our algorithm, but this looks more challenging to compute.
Nevertheless, since the $k$-expression is irredundant, $n_{i,j}$ is exactly the number of induced $P_3$'s with one end in $V_i$ and the other in $V_j$.
Furthermore after the join is added we get: $|V_i|$ new triangles per edge in $G'[V_j]$, $|V_j|$ new triangles per edge in $G'[V_i]$, and one triangle for every $P_3$ with one end in $V_i$ and the other in $V_j$.
Summarizing:
$$T(\langle G, \ell \rangle) = T(\langle G', \ell' \rangle) + |V_j| \cdot m_{i,i} + |V_i| \cdot m_{j,j} + n_{i,j}.$$
In order to derive the claimed time bound, we are now left to prove that, after any operation, we can update the values $m_{p,q} \ \mbox{and} \ n_{p,q}, \ p,q \in \{1, \ldots, k\}$, in ${\cal O}(k^2)$-time.
Clearly, these values cannot change when we add a new (isolated) vertex, with any label, and they can be updated by simple summation when we take the disjoint union of two labeled graphs.
We now need to distinguish between the two remaining cases.
In what follows, let $m_{p,q}'$ and $n_{p,q}'$ represent the former values.
\begin{itemize}
\item 
Suppose that label $i$ is identified with label $j$.
Then: 
$$m_{p,q} = \begin{cases} 
0 & \mbox{if} \ i \in \{p,q\} \\ 
m_{i,i}' + m_{i,j}' + m_{j,j}' & \mbox{if} \ p=q=j \\
m_{p,j}' + m_{p,i}' & \mbox{if} \ q = j, \ p \notin \{i,j\} \\ 
m_{j,q}' + m_{i,q}' & \mbox{if} \ p = j, \ q \notin \{i,j\} \\ 
m_{p,q}' & \mbox{else}
\end{cases},$$
$$n_{p,q} = \begin{cases} 0 & \mbox{if} \ i \in \{p,q\} \\ 
n_{j,j}' + n_{j,i}' + n_{i,i}' & \mbox{if} \ p=q=j \\
n_{p,j}' + n_{p,i}' & \mbox{if} \ q = j, \ p \notin \{i,j\} \\ 
n_{j,q}' + n_{i,q}' & \mbox{if} \ p = j, \ q \notin \{i,j\}  \\ 
n_{p,q}' & \mbox{else}\end{cases}.$$
\item Otherwise, suppose that we add a complete join between the set $V_i$ of vertices labeled $i$ and the set $V_j$ of vertices labeled $j$.
Then, since the $k$-expression is irredundant: $$m_{p,q} = \begin{cases} |V_i| \cdot |V_j| & \mbox{if} \ \{i,j\} = \{p,q\} \\ m_{p,q}' & \mbox{else}\end{cases}.$$
For every $u_i,v_i \in V_i$ and $w_j \in V_j$ we create a new $P_3$ $(u_i,w_j,v_i)$.
Similarly, for every $u_j,v_j \in V_j$ and $w_i \in V_i$ we create a new $P_3$ $(u_j,w_i,v_j)$.
These are the only new $P_3$'s with two edges from the complete join.
Furthermore, for every edge $\{u_i,v_i\}$ in $V_i$ and for every $w_j \in V_j$ we can create the two new $P_3$'s $(u_i,v_i,w_j)$ and $(v_i,u_i,w_j)$.
Similarly, for every edge $\{u_j,v_j\}$ in $V_j$ and for every $w_i \in V_i$ we can create the two new $P_3$'s $(u_j,v_j,w_i)$ and $(v_j,u_j,w_i)$.  
Finally, for every edge $\{v,u_j\}$ with $u_j \in V_j, \ v \notin V_i \cup V_j$, we create $|V_i|$ new $P_3$'s, and for every edge $\{v,u_i\}$ with $u_i \in V_i, \ v \notin V_i \cup V_j$, we create $|V_j|$ new $P_3$'s.
Altogether combined, we deduce the following update rules:
$$n_{p,q} = \begin{cases} n_{i,i}' + |V_i| \cdot |V_j| \cdot (|V_i|-1)/2 & \mbox{if} \ p = q = i \\ 
n_{j,j}' + |V_j| \cdot |V_i| \cdot (|V_j|-1)/2 & \mbox{if} \ p = q = j \\ 
n_{i,j}' + 2 \cdot |V_j| \cdot m_{i,i}' + 2 \cdot |V_i| \cdot m_{j,j}' & \mbox{if} \ \{p,q\} = \{i,j\} \\
n_{i,q}' + |V_i| \cdot m_{j,q}' & \mbox{if} \ p = i, q \notin \{i,j\} \\
n_{p,i}' + |V_i| \cdot m_{p,j}' & \mbox{if} \ q = i, p \notin \{i,j\} \\
n_{j,q}' + |V_j| \cdot m_{i,q}' & \mbox{if} \ p = j, q \notin \{i,j\} \\
n_{p,j}' + |V_j| \cdot m_{p,i}' & \mbox{if} \ q = j, p \notin \{i,j\} \\
n_{p,q}' & \mbox{else}\end{cases}.$$
\end{itemize}
\end{proof}

Our next result is about computing the {\em girth} of a graph (size of a smallest cycle).
To the best of our knowledge, the following Theorem~\ref{thm:cw-girth} gives the first known polynomial parameterized algorithm for {\sc Girth}.

\begin{theorem}\label{thm:cw-girth}
For every $G=(V,E)$, \textsc{Girth} can be solved in ${\cal O}(k^2 \cdot (n+m))$-time if a $k$-expression of $G$ is given.
\end{theorem}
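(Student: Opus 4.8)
The plan is to follow the same dynamic-programming template as in the proof of Theorem~\ref{thm:cw-triangle}, but this time the quantity we maintain along the irredundant $k$-expression is not adjacency/edge-count data but a full table of pairwise distances between label classes, together with the current value of the girth. Concretely, for a labeled graph $\langle G, \ell\rangle$ with $\ell : V(G) \to \{1,\dots,k\}$ I would store: (i) $g(\langle G,\ell\rangle)$, the girth of $G$ (with $g = +\infty$ if $G$ is a forest); and (ii) for every ordered pair $(p,q)$ of labels, the quantity $d_{p,q}$ defined as the minimum, over all $u \in V_p$ and $v \in V_q$ with $u \neq v$, of $\mathrm{dist}_G(u,v)$ — so $d_{p,p}$ is the smallest distance between two distinct vertices sharing label $p$. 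This is $\mathcal{O}(k^2)$ space. The key point, exactly as in the triangle case, is that a cycle newly created by a join $\eta(i,j)$ either uses exactly one edge of the join — then it is a path in $G'$ between a vertex of $V_i$ and a vertex of $V_j$ closed by that edge, of length $d'_{i,j}+1$ — or it uses at least two join edges, and then (since any two join edges from $V_i$ to $V_j$ already form part of a $4$-cycle through a common neighbour) the shortest such cycle has length $4$ whenever $|V_i|\ge 2$ and $|V_j|\ge 1$ or symmetrically, and length depending on small combinations otherwise. So the update for the girth is $g(\langle G,\ell\rangle) = \min\{\, g(\langle G',\ell'\rangle),\ d'_{i,j}+1,\ (\text{a constant-size correction involving } |V_i|,|V_j|)\,\}$, all computable in $\mathcal{O}(1)$ time given the table.

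The main work is the bookkeeping for the distance table $d_{p,q}$ under the four operations. Adding an isolated vertex $\ell(v)=i$: the only change is that $d_{i,i}$ or $d_{i,q}$ could involve $v$, but since $v$ is isolated its distance to everything is $+\infty$, so the table is unchanged (and a fresh label $i$ not used before gets $d_{i,q}=+\infty$ for all $q$). Disjoint union: $d_{p,q} = \min\{d^{(1)}_{p,q}, d^{(2)}_{p,q}\}$, since no new paths cross between the two parts. Relabelling $\rho(i,j)$: merge rows/columns $i$ and $j$ by taking minima, exactly in the style of the $m_{p,q}$ update in Theorem~\ref{thm:cw-triangle}, i.e.\ $d_{j,q} = \min\{d'_{j,q}, d'_{i,q}\}$ for $q\notin\{i,j\}$, $d_{j,j} = \min\{d'_{j,j}, d'_{i,j}, d'_{i,i}\}$, and $d_{i,\cdot}$ becomes $+\infty$. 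The delicate case is the join $\eta(i,j)$: a shortest $u$–$v$ path in $G$ may now route through the new complete bipartite graph, so for every pair $(p,q)$ I would recompute $d_{p,q}$ as the minimum of $d'_{p,q}$ and all quantities of the form $d'_{p,i}+1+d'_{j,q}$, $d'_{p,j}+1+d'_{i,q}$, $d'_{p,i}+1+d'_{i,q}$ (using the fact that after the join any two vertices of $V_i$ are at distance $\le 2$ through a vertex of $V_j$), $d'_{p,j}+1+d'_{j,q}$, and the $\mathcal{O}(1)$ base cases $d_{i,j}=1$, $d_{i,i}\le 2$, $d_{j,j}\le 2$. One has to check that a bounded number of "bounces" through the join suffices — here that is immediate because once a path reaches $V_i\cup V_j$ it can reach any vertex of $V_i\cup V_j$ in $\le 2$ more steps, so at most two crossings of the join are ever needed and the whole recomputation is a min over $\mathcal{O}(1)$ terms per pair, hence $\mathcal{O}(k^2)$ per operation.

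Summing $\mathcal{O}(k^2)$ over the $\mathcal{O}(n+m)$ operations of an irredundant $k$-expression (which, by~\cite{CoB00}, can be obtained in linear time from the given $k$-expression and has length $\mathcal{O}(n+m)$) yields the claimed $\mathcal{O}(k^2\cdot(n+m))$ running time; the girth of $G$ is read off the root. The step I expect to be the real obstacle is the join update for $d_{p,q}$: one must argue carefully that the min over the listed constant set of expressions does equal the true new pairwise-class distance — in particular that no path needs to leave and re-enter the pair of classes $\{V_i,V_j\}$ more than a bounded number of times, and that the "within-class shortcut'' values $d_{i,i},d_{j,j}\le 2$ are correctly incorporated into longer paths — and also to handle correctly the degenerate situations $|V_i|=1$ or $|V_j|=1$ (where the distance-$2$ shortcut through the other class may not exist), together with the girth correction term when the new shortest cycle through the join has length $3$, $4$, or (in small cases) more. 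Everything else is routine summation and case analysis mirroring Theorem~\ref{thm:cw-triangle}.
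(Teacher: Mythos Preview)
Your approach matches the paper's: maintain the current girth together with a $k\times k$ table of label-class distances $d_{p,q}$, updating both in $\mathcal{O}(k^2)$ per operation of an irredundant $k$-expression. Your distance-table update (including the ``at most two crossings of the join'' argument) is essentially the paper's, with the cosmetic difference that the paper plugs the already-updated values $d_{p,i},d_{j,q}$ into the formula for $d_{p,q}$ whereas you expand these out into more terms involving only the old values $d'$; and the ``$+1$'' in your $d'_{p,i}+1+d'_{i,q}$ is of course meant to be $+2$.

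There is one genuine gap, in the girth update. You assert that a new cycle using at least two join edges ``has length $4$ whenever $|V_i|\ge 2$ and $|V_j|\ge 1$'', but this is false: take $V_i=\{a_1,a_2\}$ with $\{a_1,a_2\}\in E(G')$ and $V_j=\{b\}$; after the join, $a_1,b,a_2$ form a triangle using two join edges. In general a minimum-length induced cycle with exactly two join edges has those two edges sharing an endpoint (otherwise a third join edge would be a chord), and removing them leaves a $G'$-path with both ends in $V_i$ or both in $V_j$; its length is thus $2+d'_{i,i}$ or $2+d'_{j,j}$, which your ``constant-size correction involving $|V_i|,|V_j|$'' cannot capture because it depends on the table, not only on the class sizes. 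The paper makes exactly this three-way split (one, two, or at least three join edges), obtaining $\mu=\min\{\mu',\,1+d'_{i,j},\,2+d'_{i,i},\,2+d'_{j,j},\,4\}$, the last term present only when $\min\{|V_i|,|V_j|\}\ge 2$. Relatedly, the paper does not take $d_{p,p}$ to be the pure distance between two distinct label-$p$ vertices as you do, but relaxes it to also allow short closed walks through $V_p$; this is precisely so that the update formula remains sound when the two halves of a candidate $V_p$-to-$V_p$ walk happen to share their $V_p$-endpoint.
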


\begin{proof}
The same as for Theorem~\ref{thm:cw-triangle}, we assume the $k$-expression to be irredundant.
It can be enforced up to linear-time preprocessing~\cite{CoB00}.
We proceed by dynamic programming on the $k$-expression.
More precisely, let $\langle G, \ell \rangle$ be a labeled graph, $\ell : V(G) \to \{1, \ldots, k\}$.
We denote by $\mu(\langle G, \ell \rangle)$ the girth of $G$.
By convention, $\mu(\langle G, \ell \rangle) = +\infty$ if $G$ is empty, or more generally if $G$ is a forest.
Furthermore, $\mu(\langle G, \ell \rangle) = \mu(\langle G', \ell' \rangle)$ if $\langle G, \ell \rangle$ is obtained from $\langle G', \ell' \rangle$ by: the addition of a new vertex with any label, or the identification of two labels.
If $\langle G, \ell \rangle$ is the disjoint union of $\langle G_1, \ell_1 \rangle$ and $\langle G_2, \ell_2 \rangle$ then $\mu(\langle G, \ell \rangle) = \min \{\mu(\langle G_1, \ell_1 \rangle), \mu(\langle G_2, \ell_2 \rangle) \}$.

Suppose that $\langle G, \ell \rangle$ is obtained from $\langle G', \ell' \rangle$ by adding a complete join between the set $V_i$ of vertices labeled $i$ and the set $V_j$ of vertices labeled $j$.
For every $p,q \in \{1, \ldots, k\}$, we are interested in the minimum length of a {\em nonempty} path with an end in $V_p$ and an end in $V_q$.
However, for making easier our computation, we consider a slightly more complicated definition.
If $p \neq q$ then we define $d_{p,q}$ as the minimum length of a $V_pV_q$-path of $G'$.
Otherwise, $p=q$, we define $d_{p,q}$ as the minimum length taken over all the paths with two distinct ends in $V_p$, and all the nontrivial closed walks that intersect $V_p$ ({\it i.e.}, there is at least one edge in the walk, we allow repeated vertices or edges, however a same edge does not appear twice consecutively).
Intuitively, $d_{p,p}$ may not represent the length of a path only in some cases where a cycle of length at most $d_{p,p}$ is already ensured to exist in the graph (in which case we needn't consider this value).  
Furthermore note that such paths or closed walks as defined above may not exist.
So, we may have $d_{p,q} = + \infty$. 
Then, let us consider a minimum-size cycle $C$ of $G$.
We distinguish between four cases.
\begin{itemize}
\item
If $C$ does not contain an edge of the join, then it is a cycle of $G'$.
\item
Else, suppose that $C$ contains exactly one edge of the join.
Then removing this edge leaves a $V_iV_j$-path in $G'$; this path has length at least $d_{i,j}$.
Conversely, if $d_{i,j} \neq +\infty$ then there exists a cycle of length $1 + d_{i,j}$ in $G$, and so, $\mu(\langle G, \ell \rangle) \leq 1 + d_{i,j}$.
\item 
Else, suppose that $C$ contains exactly two edges of the join.
In particular, since $C$ is of minimum-size, and so, it is an induced cycle, the two edges of the join in $C$ must have a common end in the cycle.
It implies that removing the two edges from $C$ leaves a path of $G'$ with either its two ends in $V_i$ or its two ends in $V_j$.
Such paths have respective length at least $d_{i,i}$ and $d_{j,j}$.
Conversely, there exist closed walks of respective length $2 + d_{i,i}$ and $2+d_{j,j}$ in $G$.
Hence, $\mu(\langle G, \ell \rangle) \leq 2 + \min \{ d_{i,i}, d_{j,j} \}$.
\item Otherwise, $C$ contains at least three edges of the join. Since $C$ is induced, it implies that $C$ is a cycle of length four with two vertices in $V_i$ and two vertices in $V_j$.
Such a (non necessarily induced) cycle exists if and only if $\min \{|V_i|,|V_j|\} \geq 2$.
\end{itemize}
Summarizing: $$\mu(\langle G, \ell \rangle) = \begin{cases} \min \{ \mu(\langle G', \ell' \rangle), 1 + d_{i,j}, 2 + d_{i,i}, 2 +d_{j,j} \} & \mbox{if} \ \min\{|V_i|,|V_j|\} = 1 \\ \min \{ \mu(\langle G', \ell' \rangle), 1 + d_{i,j}, 2 + d_{i,i}, 2 +d_{j,j}, 4 \} & \mbox{otherwise.}\end{cases}$$
In order to derive the claimed time bound, we are now left to prove that, after any operation, we can update the values $d_{p,q}, \ p,q \in \{1, \ldots, k\}$, in ${\cal O}(k^2)$-time.
Clearly, these values cannot change when we add a new (isolated) vertex, with any label, and they can be updated by taking the minimum values when we take the disjoint union of two labeled graphs.
We now need to distinguish between the two remaining cases.
In what follows, let $d_{p,q}'$ represent the former values.
\begin{itemize}
\item 
Suppose that label $i$ is identified with label $j$.
Then: $$ d_{p,q} = \begin{cases} + \infty & \mbox{if} \ i \in \{p,q\} \\ \min\{d_{i,i}', d_{i,j}', d_{j,j}'\} & \mbox{if} \ p=q=j \\ \min\{ d_{p,i}', d_{p,j}' \} & \mbox{if} \ q = j \\ \min\{ d_{i,q}', d_{j,q}' \} &\mbox{if} \ p = j \\ d_{p,q}' & \mbox{else}.\end{cases}$$
\item Otherwise, suppose that we add a complete join between the set $V_i$ of vertices labeled $i$ and the set $V_j$ of vertices labeled $j$.
The values $d_{p,q}'$ can only be decreased by using the edges of the join. 
In particular, using the fact that the $k$-expression is irredundant, we obtain: $$ d_{p,q} = \begin{cases} 
1 & \mbox{if} \ \{p,q\} = \{i,j\} \\
\min\{ 2, d_{p,q}' \} & \mbox{if} \ p=q=i, \ |V_i| \geq 2 \ \mbox{or} \ p=q=j, \ |V_j| \geq 2 \\ 
\min\{ d_{i,i}', 1 + d_{i,j}', 2 + d_{j,j}' \} & \mbox{if} \ p=q=i, \ |V_i| = 1 \\
\min\{ d_{j,j}', 1 + d_{i,j}', 2 + d_{i,i}' \} & \mbox{if} \ p=q=j, \ |V_j| = 1 \\ 
\min\{ d_{i,q}', 1 + d_{j,q}' \} & \mbox{if} \ p = i, \ q \notin \{i,j\} \\
\min\{ d_{p,i}', d_{p,j}' + 1 \} & \mbox{if} \ q = i, \ p \notin \{i,j\} \\
\min\{ d_{j,q}', 1 + d_{i,q}' \} & \mbox{if} \ p = j, \ q \notin \{i,j\} \\
\min\{ d_{p,j}', d_{p,i}' + 1 \} & \mbox{if} \ q = j, \ p \notin \{i,j\}
\end{cases}.$$
\end{itemize}
For all the remaining values of $p$ and $q$, the difficulty is to account for the cases where two consecutive edges of the join must be used in order to decrease the value $d_{p,q}'$.
We do so by using the {\em updated} values $d_{p,i},d_{i,q},d_{p,j},d_{j,q}$ instead of the former values $d_{p,i}',d_{i,q}',d_{j,p}',d_{q,j}'$.
More precisely,
$$d_{p,q} = \min \{ d_{p,q}', d_{p,i} + 1 + d_{j,q}, d_{p,j} + 1 + d_{i,q} \} \ \mbox{else}.$$
\end{proof}

The bottleneck of the above algorithms is that they require a $k$-expression as part of the input. 
So far, the best-known approximation algorithms for clique-width run in ${\cal O}(n^3)$-time, that dominates the total running time of our algorithms~\cite{OuS06}.
However, on a more positive side a $k$-expression can be computed in linear time for many classes of bounded clique-width graphs. 
In particular, combining Theorems~\ref{thm:cw-triangle} and~\ref{thm:cw-girth} with Lemmas~\ref{lem:mw-cw},~\ref{lem:sw-cw} and~\ref{lem:qq3-cw} we obtain the following result.

\begin{corollary}
For every $G=(V,E)$, {\sc Triangle Counting} and {\sc Girth} can be solved in ${\cal O}(k^2 \cdot (n+m))$-time, for every $k \in \{mw(G),sw(G),q(G)\}$.
\end{corollary}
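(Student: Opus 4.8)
The plan is to reduce each of the three parameters to the clique-width case that has already been settled by Theorems~\ref{thm:cw-triangle} and~\ref{thm:cw-girth}. Those theorems solve {\sc Triangle Counting} and {\sc Girth} in ${\cal O}(k^2\cdot(n+m))$ time \emph{once a $k$-expression is available}, so the only thing to supply for $k\in\{mw(G),sw(G),q(G)\}$ is a cheaply computable expression of the appropriate width. This is exactly what Lemmas~\ref{lem:mw-cw},~\ref{lem:sw-cw} and~\ref{lem:qq3-cw} provide: in linear time one obtains a $mw(G)$-expression, a $(2\cdot sw(G)+1)$-expression, and (since $q(G)$ is by definition the least $q\ge 7$ with $G$ a $(q,q-3)$-graph) a $q(G)$-expression, respectively.

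Concretely I would argue case by case. For $k=mw(G)$: apply Lemma~\ref{lem:mw-cw} to build a $mw(G)$-expression of $G$ in ${\cal O}(n+m)$ time, then run the algorithm of Theorem~\ref{thm:cw-triangle} (resp. Theorem~\ref{thm:cw-girth}) on it, giving total time ${\cal O}(mw(G)^2\cdot(n+m))$. For $k=sw(G)$: Lemma~\ref{lem:sw-cw} yields a $(2\cdot sw(G)+1)$-expression in linear time, and the same algorithms run in ${\cal O}((2\cdot sw(G)+1)^2\cdot(n+m))={\cal O}(sw(G)^2\cdot(n+m))$ time, the constant factor being absorbed into the ${\cal O}$. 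For $k=q(G)$: Lemma~\ref{lem:qq3-cw} produces a $q(G)$-expression in linear time, and the algorithms then finish in ${\cal O}(q(G)^2\cdot(n+m))$ time. In all three cases the ${\cal O}(n+m)$ preprocessing cost of producing the expression is dominated by the ${\cal O}(k^2\cdot(n+m))$ main term since $k\ge 2$, and the expressions produced by these linear-time constructions have length ${\cal O}(n+m)$, which is precisely the regime for which the running-time bounds of Theorems~\ref{thm:cw-triangle} and~\ref{thm:cw-girth} are stated; any internal need for an \emph{irredundant} expression is handled inside those theorems at linear cost.

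There is essentially no real obstacle: the corollary is a direct composition of previously established results. The only points deserving a line of care are the constant-factor bookkeeping when passing from $sw(G)$ to $2\cdot sw(G)+1$ (harmless) and the remark that the linearly-constructible expressions are also of linear \emph{length}, so that the "a $k$-expression of $G$ is given" hypothesis is met within the claimed time budget. Hence the proof will be a short paragraph instantiating the three cases.
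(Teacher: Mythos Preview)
Your proposal is correct and mirrors the paper's own argument, which simply states that the corollary follows by combining Theorems~\ref{thm:cw-triangle} and~\ref{thm:cw-girth} with Lemmas~\ref{lem:mw-cw},~\ref{lem:sw-cw} and~\ref{lem:qq3-cw}. Your added remarks on the constant-factor bookkeeping for $sw(G)$ and on the linear length of the produced expressions are accurate and harmless elaborations of what the paper leaves implicit.
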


\section{Parameterization, Hardness and Kernelization for some distance problems on graphs}\label{sec:dist}

We prove separability results between clique-width and the upper-bounds for clique-width presented in Section~\ref{sec:prelim}.
More precisely, we consider the problems {\sc Diameter}, {\sc Eccentricities}, {\sc Hyperbolicity} and {\sc Betweenness Centrality} (defined in Section~\ref{sec:dist-pbs}), that have already been well studied in the field of ``Hardness in P''. 
On the negative side, we show in Section~\ref{sec:hardness-cw} that we cannot solve these three above problems with a {\em fully polynomial} parameterized algorithm, when parameterized by clique-width.
However, on a more positive side, we prove the existence of such algorithms in Sections~\ref{sec:algos-split},~\ref{sec:algos-modular-dec} and~\ref{sec:dist-qq3}, when parameterized by either the modular-width, the split-width or the $P_4$-sparseness.

\subsection{Distance problems considered}\label{sec:dist-pbs}

\subsubsection*{Eccentricity-based problems}

The first problem considered is computing the diameter of a graph (maximum length of a shortest-path).

\begin{center}
	\fbox{
		\begin{minipage}{.95\linewidth}
			\begin{problem}[\textsc{Diameter}]\
				\label{prob:diam} 
					\begin{description}
					\item[Input:] A graph $G=(V,E)$.
					\item[Output:] The diameter of $G$, that is $\max_{u,v \in V} dist_G(u,v)$.
				\end{description}
			\end{problem}     
		\end{minipage}
	}
\end{center}

Hardness results for {\sc Diameter} have been proved, {\it e.g.},  in~\cite{RoV13,AGV15,BCH16,AVW16,EvD16}.

Our new hardness results are proved for {\sc Diameter}, while our fully polynomial parameterized algorithms apply to the following more general version of the problem. 
The {\em eccentricity} of a given vertex $v$ is defined as $ecc_G(v) = \max_{u \in V} dist_G(u,v)$.
Observe that $diam(G) = \max_v ecc_G(v)$. 

\begin{center}
	\fbox{
		\begin{minipage}{.95\linewidth}
			\begin{problem}[\textsc{Eccentricities}]\
				\label{prob:ecc} 
					\begin{description}
					\item[Input:] A graph $G=(V,E)$.
					\item[Output:] The eccentricities of the vertices in $G$, that is $\max_{u \in V} dist_G(u,v)$ for every $v \in V$.
				\end{description}
			\end{problem}     
		\end{minipage}
	}
\end{center}

\subsubsection*{Gromov hyperbolicity}

Then, we consider the parameterized complexity of computing the Gromov hyperbolicity of a given graph.
Gromov hyperbolicity is a measure of how close (locally) the shortest-path metric of a graph is to a tree metric~\cite{Gro87}.
We refer to~\cite{ducoffe:tel-01485328} for a survey on the applications of Gromov hyperbolicity in computer science.

\begin{boxedproblem}{Hyperbolicity}{prob:hyp}
\begin{description}
  \item[Input:] A graph $G=(V,E)$.
  \item[Output:] The hyperbolicity $\delta$ of $G$, that is:
  \begin{align*}
  \max_{u,v,x,y \in V} \frac {dist_G(u,v) + dist_G(x,y) - \max \{ dist_G(u,x) + dist_G(v,y), dist_G(u,y) + dist_G(v,y) \}} 2.
  \end{align*}
\end{description}
\end{boxedproblem}

Hardness results for {\sc Hyperbolicity} have been proved in~\cite{BCH16,CoDu14,FIV15}.
Some fully polynomial parameterized algorithms, with a different range of parameters than the one considered in this work, have been designed in~\cite{FKMN+17}.

\subsubsection*{Centrality problems}

There are different notions of centrality in graphs.
For clarity, we choose to keep the focus on one centrality measurement, sometimes called the Betweenness Centrality~\cite{Free77}.
More precisely, let $G=(V,E)$ be a graph and let $s,t \in V$.
We denote by $\sigma_G(s,t)$ the number of shortest $st$-paths in $G$.
In particular, for every $v \in V$, $\sigma_G(s,t,v)$ is defined as the number of shortest $st$-paths passing by $v$ in $G$.

\begin{boxedproblem}{Betweenness Centrality}{prob:bc}
\begin{description}
  \item[Input:] A graph $G=(V,E)$.
  \item[Output:] The betweenness centrality of every vertex $v \in V$, defined as:
  \begin{align*}
  BC_G(v) = \sum_{s,t \in V \setminus v} \sigma_G(s,t,v)/\sigma_G(s,t).
  \end{align*}
\end{description}
\end{boxedproblem}

See~\cite{AGV15,BCH16,EvD16} for hardness results on {\sc Betweenness Centrality}.

\subsection{Hardness results for clique-width}\label{sec:hardness-cw}

The goal in this section is to prove that we cannot solve the problems of Section~\ref{sec:dist-pbs} in time $2^{o(cw)} n^{2 - \varepsilon}$, for any $\varepsilon > 0$ (Theorems~\ref{thm:cw-diam}---\ref{thm:cw-hyp}).
These are the first known hardness results for clique-width in the field of ``Hardness in P''.
Our results are conditioned on the Strong Exponential Time Hypothesis ({\sc SETH}): {\sc SAT} cannot be solved in ${\cal O}^*(2^{c \cdot n})$-time, for any $c < 1$~\cite{IPZ98}.
Furthermore, they are derived from similar hardness results obtained for {\em treewidth}.

Precisely, a {\it tree decomposition} $(T,{\cal X})$ of $G=(V,E)$ is a pair consisting of a tree $T$ and of a family ${\cal X}=(X_t)_{t \in V(T)}$ of subsets of $V$ indexed by the nodes of $T$ and satisfying:
\begin{itemize}
	\item $\bigcup_{t \in V(T)}X_t=V$;
	\item for any edge $e=\{u,v\} \in E$, there exists $t\in V(T)$ such that $u,v \in X_t$;
	\item for any $v \in V$, the set of nodes $\{ t \in V(T) \mid v \in X_t\}$ induces a subtree, denoted by $T_v$, of $T$. 
\end{itemize}
The sets $X_t$ are called {\it the bags} of the decomposition.
The {\it width} of a tree decomposition is the size of a largest bag minus one.
Finally, the {\it treewidth} of a graph $G$, denoted by $tw(G)$, is the least possible width over its tree decompositions.

\smallskip
Several hardness results have already been obtained for treewidth~\cite{AVW16}.
However, $cw(G) \leq 2^{tw(G)}$ for general graphs~\cite{CoR05}, that does not help us to derive our lower-bounds.
Roughly, we use relationships between treewidth and clique-width in some graph classes ({\it i.e.}, bounded-degree graphs~\cite{Cou12,GuW00}) in order to transpose the hardness results for treewidth into hardness results for clique-width.
Namely:

\begin{lemma}[\cite{Cou12,GuW00}]\label{lem:rel-tw-cw}
If $G$ has maximum degree at most $d$ (with $d \geq 1$), we have:
\begin{itemize}
\item $tw(G) \leq 3d \cdot cw(G) - 1$;
\item $cw(G) \leq 20d \cdot tw(G) + 22$.
\end{itemize}
\end{lemma}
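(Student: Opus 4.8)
The plan is to obtain the two inequalities from the two opposite translations between clique-width expressions and tree decompositions, using the degree bound in one direction to forbid large complete bipartite subgraphs, and in the other to keep the number of labels under control.

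For the first inequality I would start from the elementary observation that a graph of maximum degree at most $d$ has no $K_{d+1,d+1}$ as a subgraph, since each vertex on either side of such a subgraph would have at least $d+1$ neighbours. Thus $G$ is $K_{d+1,d+1}$-subgraph-free, and I would invoke the Gurski--Wanke bound that any graph $H$ with no $K_{n,n}$ subgraph satisfies $tw(H)\le 3(n-1)\cdot cw(H)-1$. Its proof converts a $cw(H)$-expression into a tree decomposition indexed by the nodes of the syntax tree of the expression, whose bags retain, for each of the $cw(H)$ label classes, only a bounded number (about $3(n-1)$) of representative vertices; the $K_{n,n}$-freeness is exactly what guarantees that so few representatives certify all future joins. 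Substituting $n=d+1$ and $cw(H)=cw(G)$ gives $tw(G)\le 3d\cdot cw(G)-1$.

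For the second inequality I would build a $k$-expression with $k={\cal O}(d\cdot tw(G))$ from an optimal, nice, rooted tree decomposition $(T,{\cal X})$ of width $t=tw(G)$. Processing $T$ bottom-up, the labelled graph maintained at a node $x$ is $G$ restricted to the vertices appearing below $x$, with every vertex of the current bag $X_x$ carrying a private label and all already-forgotten vertices collapsed onto a single inert label never again used in a join --- this is legitimate because in a tree decomposition a vertex that has left every ancestor bag has all of its incident edges already created. The only delicate step is a join node with children $x_1,x_2$ sharing the bag $X_x$: the two subexpressions speak of the very same vertices of $X_x$, so one cannot take a disjoint union; instead one processes the second subtree ``on top of'' the first while keeping the private labels of $X_x$ frozen. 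Each such freeze costs an extra $|X_x|\le t+1$ labels, and along a root-to-leaf path these frozen interfaces can nest, which a priori blows the label count up linearly in the depth. This is where the degree bound must be spent: a vertex of degree at most $d$ is an endpoint of at most $d$ edges, hence can serve as a private-labelled interface vertex across only ${\cal O}(d)$ join nodes before being sent to the inert label; scheduling the relabellings so as to exploit this keeps the number of simultaneously live labels at ${\cal O}(d\cdot t)$, and a careful constant-tracking yields $cw(G)\le 20d\cdot tw(G)+22$. Since the construction is algorithmic, the promised $k$-expression is produced in the process.

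The main obstacle is precisely this label accounting in the second inequality: one has to show that the ${\cal O}(d)$ ``uses'' of each private label never pile up across nested join nodes beyond an ${\cal O}(d)$ factor over the width, so that the exponential-in-treewidth blow-up of the general Corneil--Rotics-type construction collapses to the stated linear bound. The first inequality, by contrast, is a near black-box application of the $K_{n,n}$-free tree-width bound once the forbidden-subgraph remark is in place.
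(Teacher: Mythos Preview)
The paper does not prove this lemma; it is quoted from~\cite{Cou12,GuW00} and used as a black box in the hardness reductions of Section~\ref{sec:hardness-cw}. So there is no ``paper's own proof'' to compare against, and your write-up is really a proof sketch of a result from the literature.

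Your treatment of the first inequality is correct and is exactly the Gurski--Wanke argument: maximum degree $d$ forbids $K_{d+1,d+1}$ as a subgraph, and their bound $tw(H)\le 3(n-1)\,cw(H)-1$ for $K_{n,n}$-free graphs with $n=d+1$ gives the claim.

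Your sketch of the second inequality has a genuine gap. The mechanism you invoke --- that a vertex of degree $\le d$ ``can serve as a private-labelled interface vertex across only ${\cal O}(d)$ join nodes'' --- does not by itself bound the number of \emph{simultaneously} live labels. In a nice tree decomposition a vertex $v$ may sit in bags along a path of length far exceeding $d$ (its lifespan in $T$ is governed by the decomposition, not by $\deg(v)$), and nested join nodes along that path can still pile up frozen interfaces. What actually makes the bound linear in $d\cdot tw(G)$ is a labelling scheme in which the label of a live vertex encodes not a full subset of the bag (the source of the $2^{tw}$ blow-up in Corneil--Rotics) but only its bag position together with which of its at most $d$ future neighbours have yet to be attached; the degree bound caps the second coordinate, and a careful construction along these lines is what yields the stated constants. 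Your high-level plan (bottom-up on a nice tree decomposition, private labels for bag vertices, one inert label for forgotten vertices) is the right scaffolding, but the label-count argument as written would not close.
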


Our reductions in what follows are based on Lemma~\ref{lem:rel-tw-cw}, and on previous hardness results for bounded treewidth graphs and bounded-degree graphs~\cite{AVW16,EvD16}. 

\begin{theorem}\label{thm:cw-diam}
Under {\sc SETH}, we cannot solve {\sc Diameter} in $2^{o(k)} \cdot n^{2-\varepsilon}$-time on graphs with maximum degree $4$ and treewidth at most $k$, for any $\varepsilon > 0$.

In particular, we cannot solve {\sc Diameter} in $2^{o(k)} \cdot n^{2-\varepsilon}$-time on graphs with clique-width at most $k$, for any $\varepsilon > 0$. 
\end{theorem}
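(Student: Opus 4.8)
The plan is to transfer a known SETH-hardness result for \textsc{Diameter} on bounded-treewidth graphs into one on bounded-degree, bounded-treewidth graphs, and then invoke Lemma~\ref{lem:rel-tw-cw} to conclude the clique-width statement. First I would recall the construction of Abboud, Williams and Wang~\cite{AVW16}, or alternatively the bounded-degree diameter hardness of~\cite{EvD16}: there is a graph $G$ on $N$ vertices with treewidth ${\cal O}(k)$ such that distinguishing $\mathrm{diam}(G) \le 2$ from $\mathrm{diam}(G) \ge 3$ (or some fixed pair of values) in time $2^{o(k)} N^{2-\varepsilon}$ would refute SETH. The issue is that this $G$ may have large degree, so I would apply a standard \emph{degree-reduction gadget}: replace each high-degree vertex $v$ by a small tree (e.g.\ a balanced binary tree, or a path) whose leaves take over the incident edges of $v$. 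This is the key step, and where most of the care is needed.

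The main obstacle is controlling two quantities simultaneously under the degree-reduction gadget: the treewidth must stay ${\cal O}(k)$, and the diameter gap must be preserved up to an additive constant (so that a constant additive shift lets us still read off the answer). For treewidth: replacing a vertex by a bounded-degree tree increases treewidth by at most a constant (one can absorb the tree into the bags containing $v$), so $tw$ stays ${\cal O}(k)$; I would spell out the bag modification explicitly. For the diameter: each original edge is now simulated by a path through at most ${\cal O}(\log N)$ gadget vertices, which would blow up the diameter by a $\log N$ factor — too much. The fix is to be smarter: use the fact that in the hardness construction the graph is essentially bipartite-like with a bounded number of ``layers,'' so only a bounded number of gadget-traversals occur on any shortest path; alternatively, one chooses the target degree $d=4$ and uses a gadget (such as a short path or a constant-depth tree with a few extra shortcut edges, or subdividing edges once and adding a matching) that increases every pairwise distance by a controlled additive term and every pair's distance by the \emph{same} additive term. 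Then $\mathrm{diam}$ shifts by an additive constant and the yes/no gap survives. I would verify the gadget achieves maximum degree exactly $4$ (degree $3$ inside the gadget tree plus one external edge, or similar bookkeeping).

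Having produced, from any SETH-hard instance, a graph $G'$ on $n = {\cal O}(N \cdot \mathrm{poly})$ vertices with maximum degree $4$, treewidth $k' = {\cal O}(k)$, and a diameter gap, a hypothetical $2^{o(k')} n^{2-\varepsilon}$ algorithm for \textsc{Diameter} would yield a $2^{o(k)} N^{2-\varepsilon'}$ algorithm for the original problem, contradicting SETH. This proves the first sentence. For the second sentence, I apply Lemma~\ref{lem:rel-tw-cw} with $d = 4$: on graphs of maximum degree at most $4$ we have $tw(G') \le 3 \cdot 4 \cdot cw(G') - 1 = 12\,cw(G') - 1$, hence $cw(G') = \Omega(tw(G')) = \Omega(k')$. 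So an algorithm solving \textsc{Diameter} in $2^{o(k)} n^{2-\varepsilon}$ on graphs of clique-width at most $k$ would, applied to $G'$ with $k = cw(G')$, run in time $2^{o(tw(G'))} n^{2-\varepsilon}$, again refuting SETH via the first part. I would close by noting the same reduction template (with possibly a different gadget or target value pair) is what gets reused for \textsc{Eccentricities}, \textsc{Betweenness Centrality} and \textsc{Hyperbolicity} in Theorems~\ref{thm:cw-bc}--\ref{thm:cw-hyp}, so the degree-reduction lemma is stated once and applied repeatedly.
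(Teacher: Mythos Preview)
Your high-level strategy matches the paper's: combine the bounded-treewidth hardness of~\cite{AVW16} with a degree-reduction step, then invoke Lemma~\ref{lem:rel-tw-cw}. However, the execution you sketch has two genuine gaps.

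First, you treat the degree-reduction gadget as something to be invented and then worry about preserving an additive diameter gap. The paper does not do this: it takes the construction of~\cite{EvD16} \emph{as a black box} for the bounded-degree part. That construction already replaces each vertex by a balanced binary tree and inserts paths of length $p = \omega(\log n)$ between them; the diameter of the resulting graph $G'$ is either $(4+o(1))p$ or $(6+o(1))p$, and~\cite{EvD16} already proves that distinguishing these is SETH-hard. So the diameter gap is multiplicative in $p$, and the $\log n$ tree depths are absorbed into the $o(1)$ terms --- there is no need for the ``same additive shift for every pair'' trick you propose, which would be hard to realize.

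Second, and more seriously, your treewidth argument is wrong as stated. You write that ``replacing a vertex by a bounded-degree tree increases treewidth by at most a constant (one can absorb the tree into the bags containing $v$).'' But the replacement trees in~\cite{EvD16} have size $\Theta(n)$ (for the $T_{big}$ copies) or $\Theta(\log n)$ (for the $T_{small}$ copies), so absorbing them into existing bags would blow up the width. The paper's actual contribution here is to build an \emph{explicit} tree decomposition of $G'$ of width ${\cal O}(\log n)$: it starts from a tree decomposition of $T_{big}$ itself, and for each bag (an edge $e_t$ of $T_{big}$) takes the union of the corresponding edges across all $|C| = {\cal O}(\log n)$ copies $T_{big}^{c,A}, T_{big}^{c,B}$, plus the relevant $T_{small}$ copies at leaves. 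This is the step you are missing, and it is where the ${\cal O}(\log n)$ bound on $|C|$ in the~\cite{AVW16} instance is actually used. Without this explicit decomposition, you have not established that $G'$ has treewidth ${\cal O}(\log n)$, and the argument does not close.

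Your final step --- applying Lemma~\ref{lem:rel-tw-cw} with $d=4$ to transfer the lower bound from treewidth to clique-width --- is correct and matches the paper exactly.
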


\begin{proof}
In~\cite{AVW16}, they proved that under {\sc SETH}, we cannot solve {\sc Diameter} in ${\cal O}(n^{2-\varepsilon})$-time, for any $\varepsilon > 0$, in the class of tripartite graphs $G = (A \cup C \cup B, E)$ such that: $|A| = |B| = n$, $|C| = {\cal O}(\log n)$, and all the edges in $E$ are between $C$ and $A \cup B$.
Note that there exists a tree decomposition $(T,{\cal X})$ of $G$ such that $T$ is a path and the bags are the sets $\{a\} \cup C, \ a \in A$ and $\{b\} \cup C, \ b \in B$.
Hence, $tw(G) = {\cal O}(|C|) = {\cal O}(\log n)$~\cite{AVW16}.

Then, we use the generic construction of~\cite{EvD16} in order to transform $G$ into a bounded-degree graph.
We prove that graphs with treewidth ${\cal O}(\log n)$ can be generated with this construction\footnote{Our construction has less degrees of freedom than the construction presented in~\cite{EvD16}.}.
More precisely, let $T_{big}$ and $T_{small}$ be rooted balanced binary trees with respective number of leaves $|A| = |B| = n$ and $|C| = {\cal O}(\log n)$.
There is a bijective correspondance between the leaves of $T_{big}$ and the vertices in $A$, resp. between the leaves of $T_{big}$ and the vertices in $B$.
Similarly, there is a bijective correspondance between the leaves of $T_{small}$ and the vertices of $C$.
In order to construct $G'$ from $G$, we proceed as follows:
\begin{itemize}
\item We replace every vertex $u \in A \cup B$ with a disjoint copy $T_{small}^u$ of $T_{small}$.
We also replace every vertex $c \in C$ with two disjoint copies $T_{big}^{c,A}, T_{big}^{c,B}$ of $T_{big}$ with a common root.
\item For every $a \in A, c \in C$ adjacent in $G$, we add a path of length $p$ (fixed by the construction) between the leaf of $T_{small}^a$ corresponding to $c$ and the leaf of $T_{big}^{c,A}$ corresponding to $a$.
In the same way, for every $b \in B, c \in C$ adjacent in $G$, we add a path of length $p$ between the leaf of $T_{small}^b$ corresponding to $c$ and the leaf of $T_{big}^{c,B}$ corresponding to $b$.
\item Let $T_{big}^{A}$ and $T_{big}^{B}$ be two other disjoint copies of $T_{big}$.
For every $a \in A$ we add a path of length $p$ between the leaf corresponding to $a$ in $T_{big}^{A}$ and the root of $T_{small}^a$.
In the same way, for every $b \in B$ we add a path of length $p$ between the leaf corresponding to $b$ in $T_{big}^{B}$ and the root of $T_{small}^b$.
\item Finally, for every $u \in A \cup B$, we add a path of length $p$ with one end being the root of $T_{small}^u$.
\end{itemize}
The resulting graph $G'$ has maximum degree $4$.
In~\cite{EvD16}, they prove that, under {\sc SETH}, we cannot compute $diam(G')$ in ${\cal O}(n^{2-\varepsilon})$-time, for any $\varepsilon > 0$.

\smallskip
We now claim that $tw(G') = {\cal O}(\log n)$.
Since by Lemma~\ref{lem:rel-tw-cw} we have $tw(G') = \Theta(cw(G'))$ for bounded-degree graphs, it will imply $cw(G') = {\cal O}(\log n)$.
In order to prove the claim, we assume w.l.o.g. that the paths added by the above construction have length $p=1$\footnote{The hardness result of~\cite{EvD16} holds for $p=\omega(\log n)$. We reduce to the case $p=1$ {\em only} for computing the treewidth.}.
Indeed, subdividing an edge does not change the treewidth~\cite{BoK06}.
Note that in this situation, we can also ignore the pending vertices added for the last step of the construction.
Indeed, removing the pending vertices does not change the treewidth either~\cite{Bod06}.
Hence, from now on we consider the graph $G'$ resulting from the three first steps of the construction by taking $p=1$. 

Let $(T',{\cal X}')$ be a tree decomposition of $T_{big}$ of unit width.
There is a one-to-one mapping between the nodes $t \in V(T')$ and the edges $e_t \in E(T_{big})$.
Furthermore, let $e_t^A, \ e_t^B, e_t^{c,A} \ \mbox{and} \ e_t^{c,B}$ be the copies of edge $e_t$ in the trees $T_{big}^A, \ T_{big}^B, \ T_{big}^{c,A} \ \mbox{and} \ T_{big}^{c,B}, \ c \in C$, respectively.
For every node $t \in V(T')$, we define a new bag $Y_t$ as follows.
If $e_t$ is not incident to a leaf-node then we set $Y_t =  e_t^A \cup e_t^B \cup \left[ \bigcup_{c \in C} \left( e_t^{c,A} \cup e_t^{c,B} \right) \right]$.
Otherwise, $e_t$ is incident to some leaf-node.
Let $a_t \in A, \ b_t \in B$ correspond to the leaf.
We set $Y_t =  V\left(T_{small}^{a_t}\right) \cup V\left(T_{small}^{b_t}\right) \cup e_t^A \cup e_t^B \cup \left[ \bigcup_{c \in C} \left( e_t^{c,A} \cup e_t^{c,B} \right) \right]$.
By construction, $(T', (Y_t)_{t \in V(T')})$ is a tree decomposition of $G'$.
In particular, $tw(G') \leq \max_{t \in V(T')} |Y_t| = {\cal O}(\log n)$, that finally proves the claim.

\smallskip
Finally, suppose by contradiction that $diam(G')$ can be computed in $2^{o(tw(G'))} \cdot n^{2-\varepsilon}$-time, for some $\varepsilon > 0$.
Since $tw(G') = {\cal O}(\log n)$, it implies that $diam(G')$ can be computed in ${\cal O}(n^{2-\varepsilon})$-time, for some $\varepsilon > 0$.
The latter refutes {\sc SETH}.
Hence, under {\sc SETH} we cannot solve {\sc Diameter} in $2^{o(k)} \cdot n^{2-\varepsilon}$-time on graphs with maximum degree $4$ and treewidth at most $k$, for any $\varepsilon > 0$. 
This negative result also holds for clique-width since $cw(G') = \Theta(tw(G'))$.
\end{proof}

The following reduction to {\sc Betweenness Centrality} is from~\cite{EvD16}.
Our main contribution is to upper-bound the clique-width and the treewidth of their construction.

\begin{theorem}\label{thm:cw-bc}
Under {\sc SETH}, we cannot solve {\sc Betweenness Centrality} in $2^{o(k)} \cdot n^{2-\varepsilon}$-time on graphs with maximum degree $4$ and treewidth at most $k$, for any $\varepsilon > 0$.

In particular, we cannot solve {\sc Betweenness Centrality} in $2^{o(k)} \cdot n^{2-\varepsilon}$-time on graphs with clique-width at most $k$, for any $\varepsilon > 0$. 
\end{theorem}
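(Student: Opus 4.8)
The plan is to follow the same two-stage strategy as in the proof of Theorem~\ref{thm:cw-diam}: start from a quadratic-hardness construction for a dense graph with logarithmic treewidth, then apply the degree-reduction gadget of~\cite{EvD16} and bound the treewidth of the resulting bounded-degree graph. First I would recall the reduction from~\cite{EvD16} that, under {\sc SETH}, rules out an ${\cal O}(n^{2-\varepsilon})$-time algorithm for {\sc Betweenness Centrality}: it takes an \textsc{Orthogonal Vectors} instance (or equivalently the sparse tripartite graph $G = (A \cup C \cup B, E)$ with $|A|=|B|=n$, $|C| = {\cal O}(\log n)$, and all edges between $C$ and $A \cup B$) and builds a graph in which the betweenness centrality of one distinguished vertex encodes the number of orthogonal pairs. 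As in Theorem~\ref{thm:cw-diam}, the starting graph $G$ has a path-like tree decomposition with bags $\{a\}\cup C$ and $\{b\}\cup C$, hence $tw(G) = {\cal O}(\log n)$.

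Next I would apply the \emph{same} tree-of-copies construction used in the proof of Theorem~\ref{thm:cw-diam}: replace each $u \in A\cup B$ by a copy $T_{small}^u$ of a balanced binary tree on $|C|$ leaves, replace each $c\in C$ by two copies of a balanced binary tree on $n$ leaves with a common root, and connect corresponding leaves by paths of the prescribed length $p$, adding the copies $T_{big}^A, T_{big}^B$ and the pending paths exactly as before. The graph produced has maximum degree $4$. The key point is that the degree-reduction construction of~\cite{EvD16} for {\sc Betweenness Centrality} is of this same shape — it subdivides edges and attaches balanced binary trees to vertices — so that the betweenness-centrality reduction remains valid on this bounded-degree graph $G'$. (If the published reduction for {\sc Betweenness Centrality} uses gadgets slightly different from the {\sc Diameter} one, I would reproduce those gadgets but argue they are still obtained by attaching bounded-treewidth pieces along a path-like backbone.)

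Then I would bound $tw(G')$ exactly as in Theorem~\ref{thm:cw-diam}: set all path lengths to $p=1$ (subdividing edges does not change treewidth~\cite{BoK06}) and delete pending vertices (this does not change treewidth~\cite{Bod06}); take a unit-width tree decomposition $(T',{\cal X}')$ of $T_{big}$, identify its nodes with edges of $T_{big}$, and for each node $t$ form the bag obtained by putting together the copies $e_t^A, e_t^B, e_t^{c,A}, e_t^{c,B}$ for all $c\in C$, plus — when $e_t$ is incident to a leaf — the whole vertex sets of the two small trees $T_{small}^{a_t}, T_{small}^{b_t}$ attached there. Each such bag has size ${\cal O}(|C|) = {\cal O}(\log n)$, and one checks the three tree-decomposition axioms, so $tw(G') = {\cal O}(\log n)$; since $G'$ has bounded degree, Lemma~\ref{lem:rel-tw-cw} gives $cw(G') = \Theta(tw(G')) = {\cal O}(\log n)$. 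Finally, a $2^{o(k)} n^{2-\varepsilon}$-time algorithm for {\sc Betweenness Centrality} parameterized by treewidth would, plugged in with $k = tw(G') = {\cal O}(\log n)$, yield an ${\cal O}(n^{2-\varepsilon})$-time algorithm, contradicting {\sc SETH}; and the same conclusion transfers to clique-width via $cw(G') = \Theta(tw(G'))$.

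The main obstacle I anticipate is \emph{not} the treewidth bound — that argument is essentially copied verbatim from Theorem~\ref{thm:cw-diam} — but verifying that the {\sc Betweenness Centrality} reduction of~\cite{EvD16} genuinely survives on the tree-of-copies graph, i.e.\ that the count of shortest paths through the distinguished vertex still isolates the orthogonal-pairs count once every original edge is replaced by a length-$p$ path and every vertex by a balanced tree. This requires a careful shortest-path and shortest-path-counting analysis: the binary-tree gadgets must be ``transparent'' (unique shortest paths between the relevant leaf pairs, with controlled multiplicities), and the uniform path length $p = \omega(\log n)$ must dominate the ${\cal O}(\log n)$ tree depths so that distances between original vertices are preserved up to an additive constant and the structure of shortest $st$-paths is unchanged. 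I would handle this by quoting the correctness lemma of~\cite{EvD16} for the abstract gadget and checking that our particular choice of trees (balanced binary) meets its hypotheses, so that the bulk of the new work is exactly the clique-width/treewidth bound.
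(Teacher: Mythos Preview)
Your plan would work in principle, but it takes an unnecessarily hard route and slightly misreads how the reduction in~\cite{EvD16} for {\sc Betweenness Centrality} is structured. The paper does \emph{not} rebuild a betweenness-centrality instance from scratch starting from the tripartite graph. Instead it uses that the reduction of~\cite{EvD16} for {\sc Betweenness Centrality} is a reduction \emph{from} the bounded-degree Diameter graph $G'$ of Theorem~\ref{thm:cw-diam} to a new graph $H$. The correctness of that reduction (and hence the SETH-hardness of {\sc Betweenness Centrality} on $H$) is thus inherited verbatim from~\cite{EvD16}; no new shortest-path or path-counting analysis is needed. The only remaining task is to show $tw(H) = tw(G') + {\cal O}(1)$, and this is almost immediate because $H$ is obtained from $G'$ by four elementary operations: (i) delete the pending paths attached at the last step of the $G'$ construction (can only decrease treewidth), (ii) add one path of length $p$ between the roots of $T_{big}^A$ and $T_{big}^B$ (adds at most $1$ to treewidth, after reducing to $p=1$ via~\cite{BoK06}), (iii) take a disjoint copy (treewidth unchanged), and (iv) attach a pending vertex to each copy of a root of some $T_{small}^b$ (treewidth unchanged~\cite{Bod06}). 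Hence $tw(H) \leq tw(G') + 1 = {\cal O}(\log n)$, the maximum degree stays at most $4$, and Lemma~\ref{lem:rel-tw-cw} transfers the bound to clique-width.

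So the obstacle you anticipate --- verifying that the betweenness-centrality count survives the tree-of-copies gadget --- is precisely the work the paper avoids by quoting~\cite{EvD16} at the level of $G' \to H$ rather than at the level of the original tripartite graph. Your approach would essentially re-prove the correctness lemma of~\cite{EvD16} and then still need the treewidth bound; the paper's approach collapses to a few lines of treewidth bookkeeping on top of Theorem~\ref{thm:cw-diam}.
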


\begin{proof}
Let $G'$ be the graph from the reduction of Theorem~\ref{thm:cw-diam}.
In~\cite{EvD16}, the authors propose a reduction from $G'$ to $H$ such that, under {\sc SETH}, we cannot solve {\sc Betweenness Centrality} for $H$ in ${\cal O}(n^{2-\varepsilon})$-time, for any $\varepsilon > 0$. 
In order to prove the theorem, it suffices to prove $tw(H) = \Theta(tw(G'))$.
Indeed, the construction of $H$ from $G'$ is as follows.
\begin{itemize}
\item For every $u \in A \cup B$, we remove the path of length $p$ with one end being the root of $T_{small}^u$, added at the last step of the construction of $G'$.
This operation can only decrease the treewidth.
\item Then, we add a path of length $p$ between the respective roots of $T_{big}^A$ and $T_{big}^B$.
Recall that we can assume $p=1$ since subdividing an edge does not modify the treewidth~\cite{BoK06}.
Adding an edge to a graph increases its treewidth by at most one.
\item Finally, let $H_1$ be the graph so far constructed.
We make a disjoint copy $H_2$ of $H_1$.
This operation does not modify the treewidth.
Then, for every $b \in B$, let ${T_{small}^b}'$ be the copy of $T_{small}^b$ in $H_2$.
We add a new vertex $b'$ that is uniquely adjacent to the root of ${T_{small}^b}'$.
The addition of pending vertices does not modify the treewidth either~\cite{Bod06}. 
\end{itemize}
Overall, $tw(H) \leq tw(G') + 1 = {\cal O}(\log n)$.
Furthermore, since $H$ has maximum degree at most $4$, by Lemma~\ref{lem:rel-tw-cw} we have $cw(H) = \Theta(tw(H)) = {\cal O}(\log n)$.
\end{proof}

Our next reduction for {\sc Hyperbolicity} is inspired from the one presented in~\cite{BCH16}.
However, the authors in~\cite{BCH16} reduce from a special case of {\sc Diameter} where we need to distinguish between graphs with diameter either $2$ or $3$.
In order to reduce from a more general case of {\sc Diameter} we need to carefully refine their construction.

\begin{theorem}\label{thm:cw-hyp}
Under {\sc SETH}, we cannot solve {\sc Hyperbolicity} in $2^{o(k)} \cdot n^{2-\varepsilon}$-time on graphs with clique-width and treewidth at most $k$, for any $\varepsilon > 0$.
\end{theorem}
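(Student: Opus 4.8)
The plan is to reduce from the hard {\sc Diameter} instances produced inside the proof of Theorem~\ref{thm:cw-diam}. Recall that, under {\sc SETH}, for every $\varepsilon>0$ one cannot compute in $\mathcal{O}(n^{2-\varepsilon})$-time the diameter of the bounded-degree graph $G'$ built there, and that $G'$ satisfies $tw(G')=\mathcal{O}(\log n)$, hence $cw(G')=\mathcal{O}(\log n)$ by Lemma~\ref{lem:rel-tw-cw}; moreover the construction separates two diameter regimes $\mathrm{diam}(G')\leq D$ and $\mathrm{diam}(G')\geq D+g$ for a suitable threshold $D$ and gap $g\geq 1$. I would build, in linear time, a graph $H$ from $G'$ such that $\delta(H)$ is determined up to a fixed additive constant by $\mathrm{diam}(G')$ --- say $\delta(H)=\tfrac12\big(\mathrm{diam}(G')-c\big)$ for a constant $c$ depending only on the gadget --- while $tw(H)$ and $cw(H)$ stay $\mathcal{O}(\log n)$. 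Granting this, a hypothetical $2^{o(k)}\cdot n^{2-\varepsilon}$-time algorithm for {\sc Hyperbolicity} would run in $\mathcal{O}(n^{2-\varepsilon'})$-time on $H$ (since $k=tw(H)=cw(H)=\mathcal{O}(\log n)$), hence would separate the two diameter regimes of $G'$ in subquadratic time, contradicting {\sc SETH} through Theorem~\ref{thm:cw-diam}.

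For the construction of $H$ I would start from the gadget of~\cite{BCH16}: attach to $G'$ a constant number of ``portal'' structures together with a few pendant vertices and paths, chosen so that in $H$ every vertex of $G'$ is within a fixed small distance of a portal, the portals sit at fixed pairwise distances, and there is a distinguished family of $4$-tuples $\{u,v,x,y\}$, with $u,v$ ranging over $V(G')$ and $x,y$ among the pendants, whose hyperbolicity value equals $\tfrac12\big(\mathrm{dist}_{G'}(u,v)-c\big)$ --- so that $\delta(H)$ is the maximum of these, namely $\tfrac12\big(\mathrm{diam}(G')-c\big)$. Since~\cite{BCH16} only needs to tell diameter $2$ from $3$, their portals can be single apex-like vertices; here, because $\mathrm{diam}(G')$ lives in a non-constant regime and because the clique-width bound must be obtained from the treewidth bound via Lemma~\ref{lem:rel-tw-cw} (which requires bounded degree), I would replace each apex by a balanced binary tree of connectors --- exactly the mechanism of~\cite{EvD16} already used inside Theorems~\ref{thm:cw-diam} and~\ref{thm:cw-bc} --- and scale all pendant paths to the length $p$ used by $G'$ so that the additive constants line up. The bound $tw(H)=\mathcal{O}(\log n)$ then follows by the same tree-decomposition surgery as in the proof of Theorem~\ref{thm:cw-diam}: each connector tree carries a unit-width decomposition, these are glued along a path, only $\mathcal{O}(\log n)$ extra vertices enter each bag, and subdivisions and pendant vertices are free~\cite{Bod06,BoK06}; since $H$ has bounded degree, $cw(H)=\mathcal{O}(\log n)$ by Lemma~\ref{lem:rel-tw-cw}.

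The heart of the argument, and the step I expect to be the main obstacle, is the two-sided analysis of $\delta(H)$: one must show that \emph{every} $4$-tuple of $H$ --- not only the intended ones --- has hyperbolicity at most $\tfrac12\big(\mathrm{diam}(G')-c\big)$, so that the large diameter of $H$ itself (inherited from the length-$p$ paths of $G'$) does not produce a ``free'' large hyperbolicity that would make $\delta(H)$ insensitive to $\mathrm{diam}(G')$. Concretely, I would case on how the four chosen vertices distribute among $V(G')$, the connector trees, and the pendants, and in each case bound the gap between the two largest of the three distance sums by routing the cross-pairs through portals (using that every vertex is close to a portal and that portals are at fixed mutual distances), collapsing every contribution except the one genuinely measuring $\mathrm{dist}_{G'}(u,v)$. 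That~\cite{BCH16} works in a $\{2,3\}$ diameter window is precisely what makes this bookkeeping delicate in the general regime --- getting the pendant-path lengths and the portal-to-portal distances to cancel so that the additive constants match is the ``careful refinement'' the introduction alludes to. Once this distance analysis is in place, the reduction and the complexity accounting are routine.
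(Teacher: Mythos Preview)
Your high-level strategy matches the paper's: reduce from the {\sc Diameter} instance $G'$ of Theorem~\ref{thm:cw-diam}, build a gadget graph $H$ whose hyperbolicity separates the two diameter regimes, and bound the width of $H$. The execution, however, differs in one important respect.

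You insist on keeping $H$ of bounded degree so that the clique-width bound comes for free from Lemma~\ref{lem:rel-tw-cw}, and therefore plan to replace apex-style portals by balanced binary trees \`a la~\cite{EvD16}. The paper does \emph{not} do this. It attaches three genuine high-degree vertices $x,y,z$ (each joined by length-$(3/2+o(1))p$ paths to every $v_x$ or $v_y$), so $H$ has unbounded degree. The width bounds are then obtained directly: $tw(H)\le tw(G')+3$ by putting $x,y,z$ in every bag (since $H\setminus\{x,y,z\}$ is $G'$ with disjoint trees hung at its vertices), and $cw(H)\le cw(G')+16$ by an explicit expression that introduces $x,y,z$ with dedicated labels and builds each attached tree with a $12$-label overhead before freezing it. This buys a much shorter width argument than your tree-surgery, and it keeps the hyperbolicity analysis cleaner because the portals are single points.

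Two further remarks. First, your target of an exact formula $\delta(H)=\tfrac12(\mathrm{diam}(G')-c)$ is stronger than needed and stronger than what the paper proves: it only shows a gap, namely $\delta(H)\ge (3+o(1))p$ when $\mathrm{diam}(G')=(6+o(1))p$ and $\delta(H)\le(11/4+o(1))p$ when $\mathrm{diam}(G')=(4+o(1))p$. Second, you correctly identify the upper bound on $\delta(H)$ as the crux. The paper handles it by computing eccentricities: it shows every vertex of $H$ outside the two ``fans'' $X=\bigcup_v Q_v^x$ and $Y=\bigcup_v Q_v^y$ has eccentricity at most $(11/2+o(1))p$, then uses the inequalities $S_1-S_2\le \min\{dist(a,b),dist(c,d)\}$ and $S_1-S_2\le 2\,dist(a,c)$ to force any hypothetical large-hyperbolicity $4$-tuple to have two points in $X$ and two in $Y$, and finally derives a contradiction from the distance structure inside $X$. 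If you pursue your bounded-degree variant, be aware that spreading each portal into a tree of depth $\Theta(\log n)$ perturbs all these distances by $\Theta(\log n)$ terms; this is absorbed by $p=\omega(\log n)$, but the eccentricity case analysis would have to be redone with that slack, which is where your version would become noticeably heavier than the paper's.
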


\begin{proof}
We use the graph $G'$ from the reduction of Theorem~\ref{thm:cw-diam}.
More precisely, let us take $p = \omega(\log n)$ for the size of the paths in the construction.
It has been proved in~\cite{EvD16} that either $diam(G') = (4+o(1)) p$ or $diam(G') = (6+o(1)) p$.
Furthermore, under {\sc SETH} we cannot decide in which case we are in truly subquadratic time.

Our reduction is inspired from~\cite{BCH16}.
Let $H$ be constructed from $G'$ as follows (see also Fig.~\ref{fig:reduction-hyp}).
\begin{itemize}
\item We add two disjoint copies $V_x,V_y$ of $V(G')$ and the three vertices $x,y,z \notin V(G')$.
We stress that $V_x$ and $V_y$ are independent sets.
Furthermore, for every $v \in V$, we denote by $v_x$ and $v_y$ the copies of $v$ in $V_x$ and $V_y$, respectively.

\item For every $v \in V(G')$, we add a $vv_x$-path $P_v^x$ of length $(3/2+o(1))p$, and similarly we add a $vv_y$-path $P_v^y$ of length $(3/2+o(1))p$.

\item Furthermore, for every $v \in V(G')$ we also add a $xv_x$-path $Q_v^x$ of length $(3/2+o(1))p$; a $yv_y$-path $Q_v^y$ of length $(3/2+o(1))p$; a $zv_x$-path $Q_v^{z,x}$ of length $(3/2+o(1))p$ and a $zv_y$-path $Q_v^{z,y}$ of length $(3/2+o(1))p$.
\end{itemize}

\begin{figure}[h!]
\centering
\includegraphics[width=.45\textwidth]{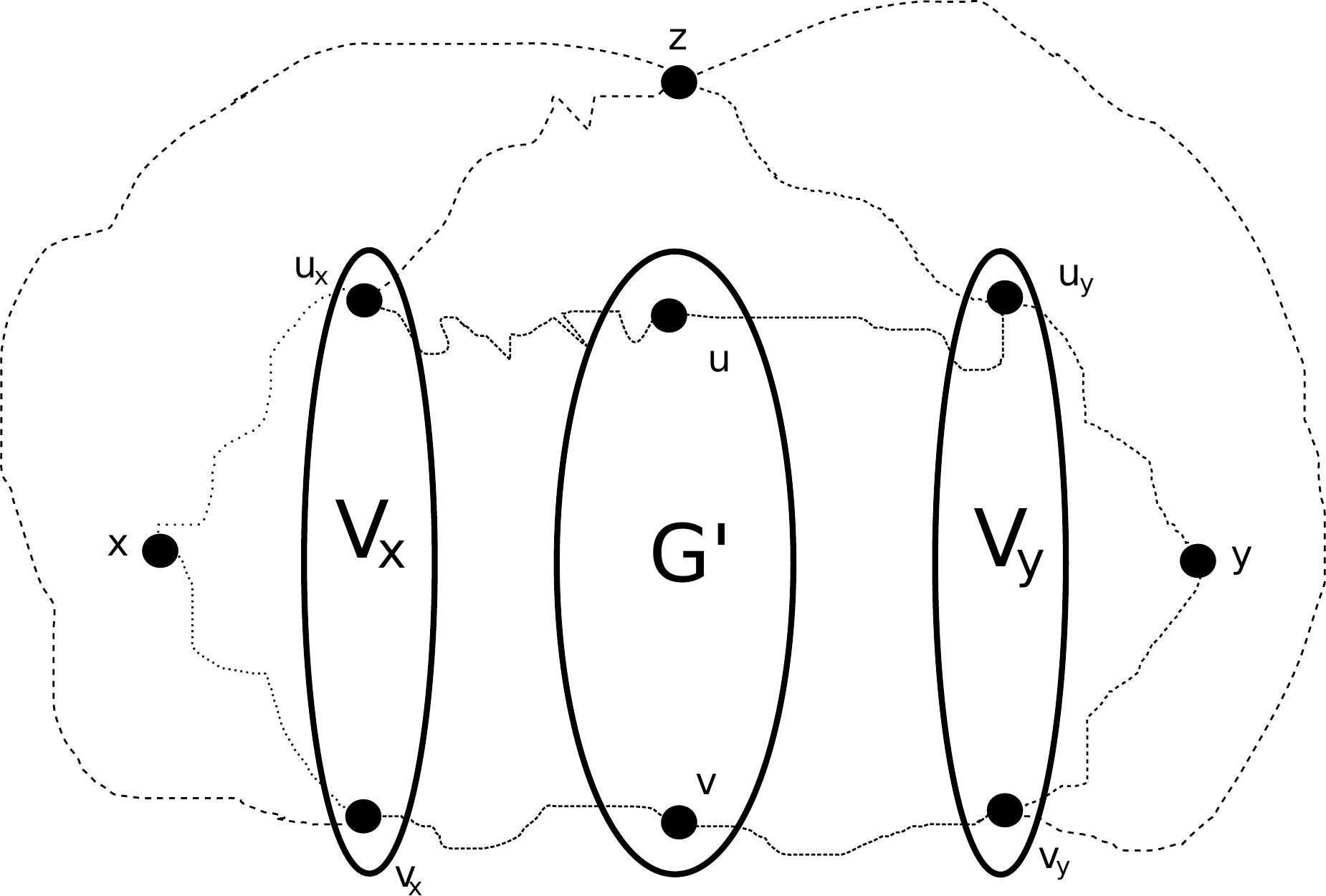}
\caption{The graph $H$ from the reduction of Theorem~\ref{thm:cw-hyp}.}
\label{fig:reduction-hyp}
\end{figure}

\smallskip
We claim that the resulting graph $H$ is such that $tw(H) = tw(G') + {\cal O}(1)$ and $cw(H) = cw(G') + {\cal O}(1)$.
Indeed, let us first consider $H' = H \setminus \{x,y,z\}$.
The graph $H'$ is obtained from $G'$ by adding some disjoint trees rooted at the vertices of $V(G')$.
In particular, it implies $tw(H') = tw(G')$, hence (by adding $x,y,z$ in every bag) $tw(H) \leq tw(H') + 3 \leq tw(G') + 3$.

Furthermore, let us fix a $k$-expression for $G'$. We transform it to a $(k+16)$-expression for $H$ as follows.
We start adding $x,y,z$ with three distinct new labels.
Then, we follow the $k$-expression for $G'$.
Suppose a new vertex $v \in V(G')$, with label $i$ is introduced.
It corresponds to some tree $T_v$ in $H'$, that is rooted at $v$.
Every such a tree has clique-width at most $3$~\cite{GoR00}.
So, as an intermediate step, let us fix a $3$-expression for $T_v$.
We transform it to a $12$-expression for $T_v$: with each new label encoding the former label in the $3$ expression ($3$ possibilities), and whether the node is either the root $v$ or adjacent to one of $x,y,z$ ($4$ possibilities). 
This way, we can make $x,y,z$ adjacent to their neighbours in $T_v$, using the join operation.
Then, since the root $v$ has a distinguished label, we can ``freeze'' all the other nodes in $T_v \setminus v$ using an additional new label and relabeling operations.
Finally, we relabel $v$ with its original label $i$ in the $k$-expression of $G'$, and then we continue following this $k$-expression.
Summarizing, $cw(H) \leq cw(G') + 16$.

\smallskip
Next, we claim that $\delta(H) \geq (3+o(1))p$ if $diam(G') = (6+o(1))p$, while $\delta(H) \leq (11/4+o(1))p$ if $diam(G') = (4+o(1))p$. 
Recall that by Theorem~\ref{thm:cw-diam}, under {\sc SETH} we cannot decide in which case we are in time $2^{o(tw(G'))} n^{2-\varepsilon} = 2^{o(cw(G'))} n^{2-\varepsilon}$, for any $\varepsilon > 0$.
Therefore, proving the claim will prove the theorem.

First suppose that $diam(G') = (6+o(1))p$.
Let $u,v \in V(G')$ satisfy $dist_{G'}(u,v) = (6+o(1))p$.
Observe that $diam(G') \leq (6+o(1))p = 4 \cdot (3/2 + o(1))p$, therefore $G'$ is an isometric subgraph of $H$ by construction.
Then, $S_1 = dist_H(u,v) + dist_H(x,y) = (12+o(1))p$;
$S_2 = dist_H(u,x) + dist_H(v,y) = (6+o(1))p$;
$S_3 = dist_H(u,y) + dist_H(v,x) = S_2$.
As a result, we obtain $\delta(H) \geq (S_1 - \max\{S_2,S_3\})/2 = (3+o(1))p$.

Second, suppose that $diam(G') = (4+o(1))p$.
We want to prove $\delta(H) \leq (11/4+o(1))p$.
By contradiction, let $a,b,c,d \in V(H)$ satisfy:
$$S_1 = dist_H(a,b) + dist_H(c,d) \geq S_2 = dist_H(a,c) + dist_H(b,d) \geq S_3 = dist_H(a,d) + dist_H(b,c),$$ 
$$S_1 - S_2 >  (11/2+o(1))p.$$
The hyperbolicity of a given $4$-tuple is upper-bounded by the minimum distance between two vertices of the $4$-tuple~\cite{BCCM15,CCL15,Sot11}.
So, let us consider the distances in $H$.

\begin{itemize}
\item
Let $v \in V(G')$.
For every $u \in V(G'), \ dist_H(u,v) \leq dist_G(u,v) \leq (4+o(1))p$.

Furthermore for every $u' \in P_u^x$, $dist_H(v,u') \leq dist_H(v,u) + dist_H(u,u') \leq (11/2 + o(1))p$.
Similarly for every $u' \in P_u^y$, $dist_H(v,u') \leq dist_H(v,u) + dist_H(u,u') \leq (11/2 + o(1))p$.

For every $u' \in Q_u^x$, $dist_H(v,u') \leq dist_H(v,x) + dist_H(x,u') \leq (9/2+o(1))p$.
We prove in the same way that for every $u' \in Q_u^y \cup Q_u^{z,x} \cup Q_u^{z,y}, \ dist_H(v,u') \leq (9/2+o(1))p$.

Summarizing, $ecc_H(v) \leq (11/2 + o(1))p$.

\item
Let $v' \in P_v^x$, for some $v \in V(G')$.

For every $u \in V(G')$ and $u' \in P_u^x$ there are two cases.
Suppose that $dist_H(u',u_x) \leq p+o(1)$ or $dist_H(v',v_x) \leq p+o(1)$.
Then, $dist_H(v',u') \leq dist_H(v',v_x) + dist_H(v_x,u_x) + dist_H(u_x,u') \leq (1 + 3 + 3/2 + o(1))p = (11/2+o(1))p$.
Otherwise, $\max\{dist_H(u',u),dist_H(v',v)\} \leq (1/2+o(1))p$, and so, $dist_H(u',v') \leq dist_H(u',u) + dist_H(u,v) + dist_H(v',v) \leq (5+o(1))p$.
Similarly (replacing $u_x$ with $u_y$), for every $u' \in P_u^y$ we have $dist(v',u') \leq (11/2+o(1))p$.

For every $u' \in Q_u^x, \ dist_H(v',u') \leq dist_H(v',v_x) + dist_H(x,v_x) + dist_H(x,u') \leq (9/2+o(1))p$.
In the same way for every $u' \in Q_u^{z,x} \cup Q_u^{z,y}, \ dist_H(v',u') \leq dist_H(v',v_x) + dist_H(z,v_x) + dist_H(z,u') \leq (9/2+o(1))p$.

For every $u' \in Q_u^y$, we first need to observe that $dist_H(v_x,u_y) = (3+o(1))p\ \mbox{and} \ dist_H(v,y) = (3+o(1))p$.
In particular if $dist_H(v,v') \leq p +o(1)$ then, $dist_H(v',u') \leq dist_H(v,v') + dist_H(v,y) + dist_H(y,u') \leq (11/2+o(1))p$.
Otherwise, $dist_H(v',u') \leq dist_H(v',v_x) + dist_H(v_x,u_y) + dist_H(u_y,u') \leq (1/2 + 3 + 3/2 + o(1))p = (5+o(1))p$.

Summarizing, $ecc_H(v') \leq (11/2 + o(1))p$.

\item
Let $v' \in P_v^y$, for some $v \in V(G')$.
In the same way as above, we prove $ecc_H(v') \leq (11/2 + o(1))p$.

\item
Let $v' \in Q_v^{z,x} \cup Q_v^{z,y}$, for some $v \in V(G')$.

For every $u \in V(G')$ and for every $u' \in Q_u^{z,x} \cup Q_u^{z,y}$ we have $dist_H(v',u') \leq dist_H(v',z) + dist_H(z,u') \leq (3+o(1))p$.

For every $u' \in Q_u^x$ we have $dist_H(v',u') \leq dist_H(v',z) + dist_H(z,u_x) + dist_H(u_x,u') \leq (9/2+o(1))p$.
Similarly for every $u' \in Q_u^y$ we have $dist_H(v',u') \leq (9/2+o(1))p$.

Summarizing, $ecc_H(v') \leq (11/2 + o(1))p$.

\end{itemize}

In particular, every vertex in $H$ has eccentricity at most $(11/2 + o(1))p$, except maybe those in $\bigcup_{v \in V(G')} Q_v^x = X$ and those in $\bigcup_{v \in V(G')} Q_v^y = Y$.
However, $S_1 - S_2 \leq \min \{ dist_H(a,b), dist_H(c,d) \}$~\cite{CCL15}.
So, we can assume w.l.o.g. $a,c \in X$ and $b,d \in Y$.
Furthermore, $S_1 - S_2 \leq 2 \cdot dist_H(a,c)$~\cite{BCCM15,Sot11}.
Hence, $(11/2+o(1))p < S_1 - S_2 \leq 2 \cdot dist_H(a,c) \leq 2 \cdot ( dist_H(a,x) + dist_H(c,x) )$.
It implies $\max\{dist_H(a,x),dist_H(c,x)\} > (11/8 + o(1))p = (3/2 - 1/8 + o(1))p$.
Assume by symmetry that $dist_H(a,x) > (3/2 - 1/8 + o(1))p$.
Then, $dist_H(a,V_x)  < (1/8+o(1))p$.
However, $dist_H(a,c) \leq dist_H(a,V_x) + (3+o(1))p + dist_H(c,V_y) < (1/8+3+3/2+o(1))p < (11/2+o(1))p$.
A contradiction.
Therefore, we obtain as claimed that $\delta(H) \leq (11/4 + o(1))p$.
\end{proof}

It is open whether any of these above problems can be solved in time $2^{{\cal O}(k)} \cdot n$ on graphs with clique-width at most $k$ (resp., on graphs with treewidth at most $k$, see~\cite{AVW16,Hus16}).

\subsection{Parameterized algorithms with split decomposition}\label{sec:algos-split}

We show how to use split decomposition as an efficient preprocessing method for {\sc Diameter},  {\sc Eccentricities},  {\sc Hyperbolicity} and  {\sc Betweenness Centrality}.
Improvements obtained with modular decomposition will be discussed in Section~\ref{sec:algos-modular-dec}.
Roughly, we show that in order to solve the problems considered, it suffices to solve some {\em weighted} variant of the original problem for every split component (subgraphs of the split decomposition) separately.
However, weights intuitively represent the remaining of the graph, so, we need to account for some dependencies between the split components in order to define the weights properly.

In order to overcome this difficulty, we use in what follows a tree-like structure over the split components in order to design our algorithms.
A {\em split decomposition tree} of $G$ is a tree $T$ where the nodes are in bijective correspondance with the subgraphs of the split decomposition of $G$, and the edges of $T$ are in bijective correspondance with the simple decompositions used for their computation.

More precisely:
\begin{itemize}
\item If $G$ is either degenerate, or prime for split decomposition, then $T$ is reduced to a single node;
\item Otherwise, let $(A,B)$ be a split of $G$ and let $G_A = (A \cup \{b\}, E_A), \ G_B = (B \cup \{a\}, E_B)$ be the corresponding subgraphs of $G$.
We construct the split decomposition trees $T_A, T_B$ for $G_A$ and $G_B$, respectively.
Furthermore, the split marker vertices $a$ and $b$ are contained in a unique split component of $G_A$ and $G_B$, respectively.
We obtain $T$ from $T_A$ and $T_B$ by adding an edge between the two nodes that correspond to these subgraphs.
\end{itemize}

A split decomposition tree can be constructed in linear-time~\cite{CDR12}.

\subsubsection*{Diameter and Eccentricities}

\begin{lemma}\label{lem:diam-split}
Let $(A,B)$ be a split of $G=(V,E)$ and let $G_A = (A \cup \{b\}, E_A), \ G_B = (B \cup \{a\}, E_B)$ be the corresponding subgraphs of $G$.
Then, for every $u \in A$ we have: $$ecc_G(u) = \max \{ ecc_{G_A}(u), dist_{G_A}(u,b) + ecc_{G_B}(a) - 1\}.$$
\end{lemma}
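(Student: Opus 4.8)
The plan is to analyze shortest paths in $G$ from a fixed vertex $u \in A$ by exploiting the structure of the split $(A,B)$. The key structural fact I would use is that every path from a vertex in $A$ to a vertex in $B$ must pass through an edge of the join between $N_G(A)$ and $N_G(B)$; moreover, because the join is complete bipartite, the split marker vertices $a \in N_G(B)$ and $b \in N_G(A)$ faithfully encode the ``interface'' of each side. Concretely, I would first establish the following distance-preservation facts: (i) for $u, v \in A$, $dist_G(u,v) = dist_{G_A}(u,v)$, since any shortest $uv$-path that leaves $A$ could be shortcut through the complete join using $b$; (ii) for $u \in A$ and $w \in B$, $dist_G(u,w) = dist_{G_A}(u,b) + dist_{G_B}(a,w) - 1$, because a shortest $uw$-path decomposes as a shortest $ub$-path in $G_A$ (ending at a vertex of $N_G(A)$, which $b$ represents), followed by one join edge, followed by a shortest $aw$-path in $G_B$ — and the ``$-1$'' corrects for double-counting the single join edge used. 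These two facts are essentially standard for split decompositions, but I would spell out the shortcut arguments carefully.

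Given (i) and (ii), the computation of $ecc_G(u) = \max_{x \in V} dist_G(u,x)$ splits into the maximum over $x \in A$ and the maximum over $x \in B$. The first gives $\max_{v \in A} dist_{G_A}(u,v)$, which is almost $ecc_{G_A}(u)$ except that $ecc_{G_A}(u)$ also ranges over $x = b$; but $dist_{G_A}(u,b) \le dist_{G_A}(u,b) + ecc_{G_B}(a) - 1$ whenever $ecc_{G_B}(a) \ge 1$ (which holds since $|B| \ge 2$ so $G_B$ has at least two vertices, hence $ecc_{G_B}(a) \geq 1$), so including or excluding $b$ does not change the overall maximum — I would note this to justify writing $ecc_{G_A}(u)$ rather than $\max_{v\in A} dist_{G_A}(u,v)$. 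The second gives $\max_{w \in B} \left( dist_{G_A}(u,b) + dist_{G_B}(a,w) - 1 \right) = dist_{G_A}(u,b) + ecc_{G_B}(a) - 1$, using that $ecc_{G_B}(a) = \max_{w \in B \cup \{a\}} dist_{G_B}(a,w) = \max_{w \in B} dist_{G_B}(a,w)$ (the vertex $w=a$ contributes $0$ and is dominated). Combining the two cases yields exactly $ecc_G(u) = \max\{ ecc_{G_A}(u),\ dist_{G_A}(u,b) + ecc_{G_B}(a) - 1 \}$.

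The main obstacle, and the step I would be most careful about, is proving the distance formula (ii) in both directions — in particular the ``$\le$'' direction, i.e. that one can always realize $dist_G(u,w)$ by a path of the stated decomposed form, and the ``$\ge$'' direction, i.e. that no $G$-path can do better. For ``$\ge$'': any $uw$-path $P$ in $G$ with $u \in A$, $w \in B$ crosses the cut at least once; take the first crossing edge $\{b', a'\}$ with $b' \in N_G(A) \subseteq A$ and $a' \in N_G(B) \subseteq B$; then the prefix of $P$ up to $b'$ lies in $A$, so it has length $\ge dist_G(u,b') = dist_{G_A}(u,b') \ge dist_{G_A}(u,b) - $ wait, here one must be slightly careful: $b$ is a marker adjacent to all of $N_G(A)$ inside $G_A$, so $dist_{G_A}(u,b) \le dist_{G_A}(u,b') + 1$ for any $b' \in N_G(A)$; combined with the suffix length $\ge dist_{G_B}(a',w) \ge dist_{G_B}(a,w) - 1$ and the crossing edge itself, a short case analysis gives total length $\ge dist_{G_A}(u,b) + dist_{G_B}(a,w) - 1$. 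For ``$\le$'': concatenate a shortest $ub$-path in $G_A$ (whose last edge is $b z$ for some $z \in N_G(A)$), replace the marker edge by the join edge $z a'$ for a suitable $a' \in N_G(B)$ chosen on a shortest $aw$-path in $G_B$ (replacing the marker edge $a z'$ there), and glue. The bookkeeping of which marker edges get replaced by which real join edges, and confirming the lengths add up with exactly one ``$-1$'', is the delicate part; everything else is routine.
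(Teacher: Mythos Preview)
Your proposal is correct and takes essentially the same approach as the paper: both establish the two distance-preservation facts (distances within $A$ are unchanged in $G_A$, and distances from $A$ to $B$ decompose as $dist_{G_A}(u,b) + dist_{G_B}(a,w) - 1$), then split the eccentricity maximum over $A$ and $B$. You are actually more explicit than the paper about the shortcut arguments and about why the marker vertex $b$ can be absorbed into the max; the paper states these as observations and handles the $b$-term in the converse inequality instead.
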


\begin{proof}
Let $C = N_G(B) \subseteq A$ and $D = N_G(A) \subseteq B$.
In order to prove the claim, we first need to observe that, since $(A,B)$ is a split of $G$, we have, for every $v \in V$:
$$dist_G(u,v) = \begin{cases} dist_{G_A}(u,v) \ \mbox{if} \ v \in A \\
dist_G(u,C) + 1 + dist_G(v,D) \ \mbox{if} \ v \in B.\end{cases}$$
Furthermore, $dist_G(u,C) = dist_{G_A}(u,b) - 1$, and similarly $dist_G(v,D) = dist_{G_B}(v,a) - 1 \leq ecc_{G_B}(a) - 1$.
Hence, $ecc_G(u) \leq \max \{ ecc_{G_A}(u), dist_{G_A}(u,b) + ecc_{G_B}(a) - 1\}$.

Conversely, $ecc_{G_A}(u) = \max \{dist_{G_A}(u,b)\} \cup \{dist_{G_A}(u,v) \mid v \in A\} = \max \{dist_{G}(u,D)\} \cup \{dist_{G}(u,v) \mid v \in A\} \leq ecc_G(u)$.
In the same way, let $v \in B$ maximize $dist_G(v,C)$.
We have: $dist_G(u,v) = dist_{G_A}(u,b) + ecc_{G_B}(a) - 1 \leq ecc_G(u)$.
\end{proof}

\begin{theorem}\label{thm:sw-ecc}
For every $G=(V,E)$, {\sc Eccentricities} can be solved in ${\cal O}(sw(G)^2 \cdot n +m)$-time.

In particular, {\sc Diameter} can be solved in ${\cal O}(sw(G)^2 \cdot n +m)$-time.
\end{theorem}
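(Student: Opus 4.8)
The plan is to use the split decomposition tree $T$ of $G$ as the backbone of a dynamic programming algorithm, combining it with Lemma~\ref{lem:diam-split} applied along each split. First I would set up notation: root $T$ at an arbitrary node. For each node $t$ of $T$ corresponding to a split component $G_t$, and for each split marker vertex $b$ contained in $G_t$ that links $t$ to its parent, I want to compute $ecc$-type quantities ``seen from'' the marker. Concretely, the key subroutine is: given a split component $H$ (degenerate or prime, of order at most $sw(G)$) together with, for each split marker vertex $v$ of $H$, a nonnegative integer weight $\omega(v)$ representing $ecc$ of the ``hidden'' side attached through $v$, compute for every real (non-marker) vertex $u \in H$ the value $\max\{ ecc_H^\omega(u), \max_v (dist_H(u,v) + \omega(v) - 1)\}$, where $ecc_H^\omega$ is taken with weighted distances through the markers. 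This is exactly what Lemma~\ref{lem:diam-split} dictates when iterated: removing one split at a time expresses $ecc_G(u)$ for $u$ in one side in terms of $ecc$ inside $G_A$ and a single number $ecc_{G_B}(a)$ pushed across the split.

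The algorithm then has two passes over $T$. In the \emph{bottom-up} pass, for each node $t$ (processed after its children), I compute, for the marker $b_t$ that connects $t$ to its parent, the quantity $ecc_{G'_t}(b_t)$ where $G'_t$ is the subgraph of $G$ represented by the subtree rooted at $t$; by Lemma~\ref{lem:diam-split} this equals the weighted eccentricity of $b_t$ inside $G_t$, where each child marker of $G_t$ carries the weight computed for that child. Since $G_t$ has at most $sw(G)$ vertices, one BFS/Dijkstra-like computation from $b_t$ inside $G_t$ with the child weights added as pendant distances costs $\mathcal{O}(sw(G)^2)$ time (the component has $\mathcal{O}(sw(G))$ vertices and $\mathcal{O}(sw(G)^2)$ edges in the worst prime case; degenerate cliques/stars are handled directly in linear time in their size). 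In the \emph{top-down} pass, for each node $t$ I compute the weight $ecc_{G \setminus G'_t}(a_t)$ coming from the parent side through marker $a_t$: this is obtained from the parent's own top-down weight together with the sibling subtrees' bottom-up values, again via Lemma~\ref{lem:diam-split} inside the parent component. Once every node $t$ knows the weights of \emph{all} its markers (children via bottom-up, parent via top-down), a final local computation inside $G_t$ yields $ecc_G(u)$ for every real vertex $u$ assigned to $t$, and $diam(G)$ is the maximum of these.

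The main obstacle I anticipate is bookkeeping the weights correctly and proving the recursion is sound when a single split component has \emph{several} marker vertices, since Lemma~\ref{lem:diam-split} as stated handles one split at a time. The clean way around this is to argue by induction on the number of splits: contracting the subtree hanging off one marker into a single weighted pendant vertex (with weight equal to its computed eccentricity minus the shared edge) preserves all distances from the remaining real vertices, by the same distance formula $dist_G(u,v) = dist_G(u,C) + 1 + dist_G(v,D)$ used in the proof of Lemma~\ref{lem:diam-split}; hence iterating the lemma over all incident splits of a component is valid. A second, more minor point is making sure weighted distances inside a prime component of order $k$ can be computed in $\mathcal{O}(k^2)$ rather than $\mathcal{O}(k^2 \log k)$: since edge weights here are small integers (they are $1$ plus pendant-weights that are themselves bounded by $n$), a plain $\mathcal{O}(k^2)$ Dijkstra with an array-based priority queue suffices, and there are $\mathcal{O}(n/k)$ components, giving the claimed $\mathcal{O}(sw(G)^2 \cdot n)$ total after the $\mathcal{O}(n+m)$ cost of building the split decomposition tree. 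The {\sc Diameter} bound is then immediate since $diam(G) = \max_v ecc_G(v)$.
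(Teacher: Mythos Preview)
Your approach is essentially the paper's: a two-pass (bottom-up, then top-down) dynamic program over the rooted split decomposition tree, driven by Lemma~\ref{lem:diam-split}, with degenerate components handled in linear time and prime components handled by brute force. Two minor inaccuracies are worth correcting. First, the claim of ``$\mathcal{O}(n/k)$ components'' is not justified (and does not match your own arithmetic); the correct accounting, which the paper uses, is that $\sum_i |V(C_i)| = \mathcal{O}(n)$ and that each component costs $\mathcal{O}(sw(G)^2 \cdot |V(C_i)|)$, since in the final pass you must compute $\max_{v \in V(C_i)} dist_{C_i}(u,v) + e(v)$ for \emph{every} $u$ in the component, not just one marker. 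Second, no Dijkstra or weighted shortest paths are needed: distances inside each component are unweighted (hence BFS), and the marker ``weights'' are simply added afterwards as in the formula just quoted.
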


\begin{proof}
Let $T$ be a split decomposition tree of $G$, with its nodes being in bijective correspondance with the split components $C_1, C_2, \ldots, C_k$.
It can be computed in linear-time~\cite{CDR12}.
We root $T$ in $C_1$.
For every $1 \leq i \leq k$, let $T_i$ be the subtree of $T$ that is rooted in $C_i$.
If $i > 1$ then let $C_{p(i)}$ be its parent in $T$.
By construction of $T$, the edge $\{C_{p(i)}, C_i\} \in E(T)$ corresponds to a split $(A_i,B_i)$ of $G$, where $V(C_i) \subseteq A_i$.
Let $G_{A_i} = (A_i \cup \{b_i\}, E_{A_i}), \ G_{B_i} = (B_i \cup \{a_i\}, E_{B_i})$ be the corresponding subgraphs of $G$.
We observe that $T_i$ is a split decomposition tree of $G_{A_i}$, $T \setminus T_i$ is a split decomposition tree of $G_{B_i}$.

Our algorithm proceeds in two main steps, with each step corresponding to a different traversal of the tree $T$.
First, let $G_1 = G$ and let $G_i = G_{A_i}$ for every $i > 1$.
We first compute, for every $1 \leq i \leq k$ and for every $v_i \in V(C_i)$, its eccentricity in $G_i$.
In order to do so, we proceed by dynamic programming on the tree $T$:
\begin{itemize}
\item If $C_i$ is a leaf of $T$ then {\sc Eccentricities} can be solved: in ${\cal O}(|V(C_i)|)$-time if $C_i$ induces a star or a complete graph; and in ${\cal O}(|V(C_i)|^3) = {\cal O}(sw(G)^2 \cdot |V(C_i)|)$-time else.
\item Otherwise $C_i$ is an internal node of $T$.
Let $C_{i_1}, C_{i_2}, \ldots, C_{i_l}$ be the children of $C_i$ in $T$.
Every edge $\{C_i, C_{i_t}\} \in E(T), \ 1 \leq t \leq l$ corresponds to a split $(A_{i_t},B_{i_t})$ of $G_i$, where $V(C_{i_t}) \subseteq A_{i_t}$.
We name $b_{i_t} \in V(C_{i_t}), \ a_{i_t} \in V(C_i)$ the vertices added after the simple decomposition.
Furthermore, let us define $e(a_{i_t}) = ecc_{G_{i_t}}(b_{i_t}) - 1$.
For every other vertex $u \in V(C_i) \setminus \{a_{i_1}, a_{i_2}, \ldots, a_{i_k}\}$, we define $e(u) = 0$.
Then, applying Lemma~\ref{lem:diam-split} for every split $(A_{i_t},B_{i_t})$ we get: $$\forall u \in V(C_i), \ ecc_{G_i}(u) = \max\limits_{v \in V(C_i)} dist_{C_i}(u,v) + e(v).$$
We distinguish between three cases.
\begin{enumerate}
\item
If $C_i$ is complete, then we need to compute $x_i \in V(C_i)$ maximizing $e(x_i)$, and $y_i \in V(C_i) \setminus \{x_i\}$ maximizing $e(y_i)$.
It can be done in ${\cal O}(|V(C_i)|)$-time.
Furthermore, for every $u \in V(C_i)$, we have $ecc_{G_i}(u) = 1 + e(x_i)$ if $u \neq x_i$, and $ecc_{G_i}(x_i) = \max \{e(x_i), 1 + e(y_i)\}$.
\item 
If $C_i$ is a star with center node $r$, then we need to compute a leaf $x_i \in V(C_i) \setminus \{r\}$  maximizing $e(x_i)$, and another leaf $y_i \in V(C_i) \setminus \{x_i,r\}$ maximizing $e(y_i)$.
It can be done in ${\cal O}(|V(C_i)|)$-time.
Furthermore, $ecc_{G_i}(r) = \max \{e(r), 1 + e(x_i)\}, \ ecc_{G_i}(x_i) = \max\{ e(x_i), 1 + e(r), 2 + e(y_i) \}$, and for every other $u \in V(C_i) \setminus \{x_i,r\}$ we have $ecc_{G_i}(u) = \max \{1 + e(r), 2 + e(x_i)\}$.
\item Otherwise, $|V(C_i)| \leq sw(G)$, and so, all the eccentricities can be computed in ${\cal O}(|V(C_i)||E(C_i)|) = {\cal O}(sw(G)^2 \cdot |V(C_i)|)$-time.
\end{enumerate}
\end{itemize}
Overall, this step takes total time ${\cal O}(sw(G)^2 \cdot \sum_i |V(C_i)|) = {\cal O}(sw(G)^2 \cdot n)$.
Furthermore, since $G_1 = G$, we have computed $ecc_G(v_1)$ for every $v_1 \in V(C_1)$.

\medskip
Second, for every $2 \leq i \leq k$, we recall that by Lemma~\ref{lem:diam-split}: $$\forall v_i \in V(G_i), \ ecc_{G}(v_i) = \max \{ecc_{G_i}(v_i), dist_{G_i}(v_i,b_i) + ecc_{G_{B_i}}(a_i) - 1\}.$$
In particular, since we have already computed $ecc_{G_i}(v_i)$ for every $v_i \in V(C_i)$ (and as a byproduct, $dist_{G_i}(v_i,b_i)$), we can compute $ecc_G(v_i)$ from $ecc_{G_{B_i}}(a_i)$.
So, we are left to compute $ecc_{G_{B_i}}(a_i)$ for every $2 \leq i \leq k$.
In order to do so, we proceed by reverse dynamic programming on the tree $T$.

More precisely, let $C_{p(i)}$ be the parent node of $C_i$ in $T$, and let $C_{j_0} = C_i, C_{j_1}, C_{j_2}, \ldots, C_{j_k}$ denote the children of $C_{p(i)}$ in $T$.
For every $0 \leq t \leq k$, the edge $\{C_{p(i)}, C_{j_t}\}$ represents a split $(A_{j_t},B_{j_t})$, where $V(C_{j_t}) \subseteq A_{j_t}$.  
So, there has been vertices $b_{j_t} \in V(C_{j_t}), \ a_{j_t} \in V(C_{p(i)})$ added by the corresponding simple decomposition.
We define $e'(a_{j_t}) = ecc_{G_{j_t}}(b_{j_t}) - 1$.
Furthermore, if $p(i) > 1$, let $C_{p^2(i)}$ be the parent of $C_{p(i)}$ in $T$.
Again, the edge $\{C_{p^2(i)},C_{p(i)}\}$ represents a split $(A_{p(i)},B_{p(i)})$, where $V(C_{p(i)}) \subseteq A_{p(i)}$.
So, there has been vertices $b_{p(i)} \in V(C_{p(i)}), \ a_{p(i)} \in V(C_{p^2(i)})$ added by the corresponding simple decomposition.
Let us define $e'(b_{p(i)}) = ecc_{G_{B_{p(i)}}}(a_{p(i)}) - 1$ (obtained by reverse dynamic programming on $T$).
Finally, for any other vertex $u \in V(C_{p(i)})$, let us define $e'(u) = 0$.
Then, by applying Lemma~\ref{lem:diam-split} it comes: $$\forall 0 \leq t \leq k, \ ecc_{G_{B_{i_t}}}(a_{i_t}) = \max\limits_{v \in V(C_{p(i)}) \setminus \{a_{i_t}\}} dist_{C_{p(i)}}(a_{i_t},v) + e'(v).$$
We can adapt the techniques of the first step in order to compute all the above values in ${\cal O}(sw(G)^2 \cdot |V(C_{p(i)})|)$-time.
Overall, the time complexity of the second step is also ${\cal O}(sw(G)^2 \cdot n)$.  

Finally, since a split decomposition can be computed in ${\cal O}(n+m)$-time, and all of the subsequent steps take ${\cal O}(sw(G)^2 \cdot n)$-time, the total running time of our algorithm is an ${\cal O}(sw(G)^2 \cdot n + m)$.
\end{proof}

\subsubsection*{Gromov hyperbolicity}

It has been proved in~\cite{Sot11} that for every graph $G$, if every split component of $G$ is $\delta$-hyperbolic then $\delta(G) \leq \max\{1,\delta\}$.
We give a self-contained proof of this result, where we characterize the gap between $\delta(G)$ and the maximum hyperbolicity of its split components.

\begin{lemma}\label{lem:hyp-split}
Let $(A,B)$ be a split of $G=(V,E)$ and let $C = N_G(B) \subseteq A, \ D = N_G(A) \subseteq B$.
Furthermore, let $G_A = (A \cup \{b\}, E_A), \ G_B = (B \cup \{a\}, E_B)$ be the corresponding subgraphs of $G$.

Then, $\delta(G) = \max \{\delta(G_A), \delta(G_B), \delta^*\}$ where:
$$\delta^* = 
\begin{cases} 
1 & \mbox{if neither} \ C \ \mbox{nor} \ D \ \mbox{is a clique};\\
1/2 & \mbox{if} \ \min\{|C|,|D|\} \geq 2 \ \mbox{and exactly one of} \ C \ \mbox{or} \ D \ \mbox{is a clique}; \\
0 & \mbox{otherwise}.
\end{cases}$$
\end{lemma}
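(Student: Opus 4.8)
The plan is to estimate $\delta(G)$ through the four-point condition: for a $4$-tuple $\{p,q,r,s\}$, order the three pairing sums $dist_G(p,q)+dist_G(r,s)$, $dist_G(p,r)+dist_G(q,s)$ and $dist_G(p,s)+dist_G(q,r)$ as $S_{\max}\ge S_{\mathrm{mid}}\ge S_{\min}$; the hyperbolicity of the tuple is $(S_{\max}-S_{\mathrm{mid}})/2$, and $\delta(G)$ is the maximum of this quantity over all tuples. I will use the distance relations already recalled in the proof of Lemma~\ref{lem:diam-split}: that $G_A$ and $G_B$ are isometric subgraphs of $G$ (so $dist_{G_A}$ and $dist_{G_B}$ agree with $dist_G$ on their vertex sets, including the marker vertices $b\in D$ and $a\in C$), and that $dist_G(u,v)=dist_G(u,C)+1+dist_G(v,D)$ for $u\in A$, $v\in B$, with $dist_G(u,C)=dist_{G_A}(u,b)-1$ and $dist_G(v,D)=dist_{G_B}(v,a)-1$.

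For $\delta(G)\ge\max\{\delta(G_A),\delta(G_B),\delta^*\}$ I would argue as follows. Since $G_A$ and $G_B$ are isometric subgraphs of $G$, any $4$-tuple realizing $\delta(G_A)$ or $\delta(G_B)$ is a $4$-tuple of $G$ with the same pairwise distances, whence $\delta(G)\ge\delta(G_A)$ and $\delta(G)\ge\delta(G_B)$. For $\delta(G)\ge\delta^*$ I would exhibit one $4$-tuple inside $C\cup D$. If neither $C$ nor $D$ is a clique, pick non-adjacent $c_1,c_2\in C$ and non-adjacent $d_1,d_2\in D$; then $dist_G(c_i,d_j)=1$ by the join while $dist_G(c_1,c_2)=dist_G(d_1,d_2)=2$, so $\{c_1,c_2,d_1,d_2\}$ has hyperbolicity $(4-2)/2=1=\delta^*$. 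If $\min\{|C|,|D|\}\ge2$ and exactly one of $C,D$, say $C$, is a clique, pick adjacent $c_1,c_2\in C$ and non-adjacent $d_1,d_2\in D$; the pairing sums become $1+2$, $1+1$, $1+1$, for hyperbolicity $1/2=\delta^*$. When $\delta^*=0$ there is nothing to do.

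For the reverse inequality I would fix a $4$-tuple of $G$ and split into cases by how its four vertices distribute between $A$ and $B$. If all four are in $A$ (resp.\ in $B$), isometry turns the tuple into a tuple of $G_A$ (resp.\ $G_B$) with identical distances, so its hyperbolicity is at most $\delta(G_A)$ (resp.\ $\delta(G_B)$). If three are in $A$ and one vertex $s$ is in $B$, then $dist_G(u,s)=dist_{G_A}(u,b)+\kappa$ for every $u$ among the three $A$-vertices, where $\kappa=dist_{G_B}(a,s)-1\ge0$, whereas the three mutual distances in $A$ are unchanged; since $s$ appears exactly once in each pairing sum, $\kappa$ cancels in $S_{\max}-S_{\mathrm{mid}}$, so the tuple has the same hyperbolicity as $\{p,q,r,b\}$ in $G_A$, at most $\delta(G_A)$; three vertices in $B$ is symmetric. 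The decisive case is $p,q\in A$, $r,s\in B$. Setting $\alpha_u=dist_G(u,C)$ for $u\in\{p,q\}$ and $\beta_v=dist_G(v,D)$ for $v\in\{r,s\}$, the separator formula gives $dist_G(p,r)+dist_G(q,s)=dist_G(p,s)+dist_G(q,r)=\alpha_p+\alpha_q+\beta_r+\beta_s+2$, so the two ``crossing'' pairing sums coincide and the tuple's hyperbolicity is $\max\bigl\{0,\ \tfrac12\bigl(dist_G(p,q)+dist_G(r,s)-\alpha_p-\alpha_q-\beta_r-\beta_s-2\bigr)\bigr\}$. Routing $p$ to $q$ via a vertex of $C$ and $r$ to $s$ via a vertex of $D$ yields $dist_G(p,q)\le\alpha_p+\alpha_q+\varepsilon_C$ and $dist_G(r,s)\le\beta_r+\beta_s+\varepsilon_D$, where the slack of each side is $2$ in general, $1$ when that side is a clique, and $0$ when it has a single vertex. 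Substituting, the tuple's hyperbolicity is at most $\max\{0,(\varepsilon_C+\varepsilon_D-2)/2\}$, and a short case check against the definition of $\delta^*$ shows this bound is exactly $\delta^*$ ($1$ when neither side is a clique; $1/2$ when exactly one side is a clique, in which case both sides automatically have at least two vertices; $0$ otherwise). Taking the maximum over all $4$-tuples and combining with the lower bound yields the claimed identity.

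The only real obstacle I foresee is the $2$-$2$ case: obtaining the sharp value $\delta^*$ rather than the crude bound $1$ hinges on noticing that two of the three pairing sums are forced to be equal, and then matching the $+0/+1/+2$ slack in the triangle-inequality estimates for $dist_G(p,q)$ and $dist_G(r,s)$ to the three clauses defining $\delta^*$ --- in particular handling the degenerate cases where $C$ or $D$ is a single vertex, which collapses that side's slack to $0$ and forces $\delta^*=0$.
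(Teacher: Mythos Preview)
Your proof is correct and follows essentially the same route as the paper's: both use the isometry of $G_A,G_B$ in $G$ for the lower bound, reduce the $3$--$1$ case by a constant shift, and in the $2$--$2$ case exploit that the two ``crossing'' pairing sums coincide. The only cosmetic difference is that you package the $2$--$2$ upper bound via slack parameters $\varepsilon_C,\varepsilon_D\in\{0,1,2\}$ in a triangle-inequality estimate, whereas the paper phrases the equivalent dichotomy in terms of whether the marker vertices $a,b$ can lie on shortest paths (i.e.\ are non-simplicial); your parenthetical that ``both sides automatically have at least two vertices'' when the bound equals $1/2$ is indeed justified since $\varepsilon=1$ already forces a clique of size~$\ge 2$.
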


\begin{proof}
Since $G_A,G_B$ are isometric subgraphs of $G$, we have $\delta(G) \geq \max \{\delta(G_A), \delta(G_B)\}$.
Conversely, for every $u,v,x,y \in V$ define $L$ and $M$ to be the two largest sums amongst $\{ dist_G(u,v) + dist_G(x,y), dist_G(u,x) + dist_G(v,y), dist_G(u,y) + dist_G(v,x) \}$.
Write $\delta(u,v,x,y) = (L-M)/2$.
Furthermore, assume that $\delta(u,v,x,y) = \delta(G)$.
W.l.o.g., $|\{u,v,x,y\} \cap A| \geq |\{u,v,x,y\} \cap B|$.
In particular, if $u,v,x,y \in A$ then $\delta(u,v,x,y) \leq \delta(G_A)$.
Otherwise, there are two cases.
\begin{itemize}
\item Suppose $|\{u,v,x,y\} \cap A| = 3$.
W.l.o.g., $y \in B$.
Then, for every $w \in \{u,v,x\}$ we have $dist_G(w,y) = dist_{G_A}(w,b) + dist_{G_B}(a,y) - 1$.
Hence, $\delta(u,v,x,y) = \delta(u,v,x,b) \leq \delta(G_A)$.
\item Otherwise, $|\{u,v,x,y\} \cap A| = 2$.
W.l.o.g. $x,y \in B$.
Observe that $M = dist_G(u,x) + dist_G(v,y) = dist_G(u,y) + dist_G(v,x) = dist_{G_A}(u,b) + dist_{G_A}(v,b) + dist_{G_B}(a,x) + dist_{G_B}(a,y) - 2$.
Furthermore, $L = dist_G(u,v) + dist_G(x,y) \leq dist_{G_A}(u,b) + dist_{G_A}(v,b) + dist_{G_B}(a,x) + dist_{G_B}(a,y)$.
Hence, $\delta(u,v,x,y) = \max \{0, L - M\}/2 \leq 1$.
In particular:
\begin{itemize}
\item Suppose $\min \{|C|,|D|\} = 1$.
Then, the $4$-tuple $u,v,x,y$ is disconnected by some cut-vertex $c$.
In particular, $M = dist_G(u,c) + dist_G(v,c) + dist_G(c,x) + dist_G(c,y) \geq L$, and so, $\delta(u,v,x,y) = 0$.
Thus we assume from now on that $\min \{|C|,|D|\} \geq 2$.
\item Suppose $L - M = 2$.
It implies both $a$ is on a shortest $xy$-path (in $G_B$) and $b$ is on a shortest $uv$-path (in $G_A$).
Since there can be no simplicial vertices on a shortest path, we obtain that neither $a$ nor $b$ can be simplicial.
Thus, $C$ and $D$ are not cliques.
Conversely, if $C$ and $D$ are not cliques then there exists an induced $C_4$ with two ends in $C$ and two ends in $D$.
As a result, $\delta(G) \geq 1$.
\item Suppose $L - M = 1$.
Either $C$ or $D$ is not a clique.
Conversely, if either $C$ or $D$ is not a clique then, since we also assume $\min\{|C|,|D|\} \geq 2$, there exists either an induced $C_4$ or an induced diamond with two vertices in $C$ and two vertices in $D$.
As a result, $\delta(G) \geq 1/2$.
\end{itemize}
\end{itemize}
\end{proof}

\begin{theorem}\label{thm:sw-hyp}
For every $G=(V,E)$, {\sc Hyperbolicity} can be solved in ${\cal O}(sw(G)^3 \cdot n+m )$-time.
\end{theorem}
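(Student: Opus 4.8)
The plan is to mimic the strategy of Theorem~\ref{thm:sw-ecc}, but the situation is in fact simpler here: Lemma~\ref{lem:hyp-split} gives an \emph{unweighted} recursion $\delta(G)=\max\{\delta(G_A),\delta(G_B),\delta^*\}$, so no auxiliary quantity needs to be propagated between split components. First I would compute, in ${\cal O}(n+m)$ time, the canonical split decomposition of $G$ together with its split decomposition tree $T$. Unrolling Lemma~\ref{lem:hyp-split} along $T$ yields, by induction on the number of split components,
\[
\delta(G)\;=\;\max\Bigl\{\ \max_i \delta(C_i),\ \ \max_{e}\delta^*_e\ \Bigr\},
\]
where the first maximum ranges over the split components $C_1,\dots,C_k$ and the second over the internal edges $e$ of $T$, each of which corresponds to a split of the relevant sub-instance and hence to a value $\delta^*_e\in\{0,\tfrac12,1\}$ given by Lemma~\ref{lem:hyp-split}. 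It then suffices to (a) compute $\delta(C_i)$ for every split component and (b) compute $\delta^*_e$ for every edge of $T$.

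For (a): a degenerate component is a complete graph or a star, hence has hyperbolicity $0$. A prime component $C_i$ has order $|V(C_i)|\le sw(G)$, so I would compute all pairwise distances inside $C_i$ by one BFS per vertex and then evaluate $\delta(C_i)$ by brute force over all $4$-tuples of $V(C_i)$, in ${\cal O}(|V(C_i)|^4)$ time. Since $|V(C_i)|\le sw(G)$ for prime components and $\sum_i|V(C_i)|={\cal O}(n)$ for the canonical split decomposition, we get $\sum_{i\ \mathrm{prime}}|V(C_i)|^4\le sw(G)^3\sum_i|V(C_i)|={\cal O}(sw(G)^3\cdot n)$, which also dominates the BFS costs. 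For (b): each $\delta^*_e$ depends only on whether the two sides $C=N_G(B)$ and $D=N_G(A)$ of the corresponding split are cliques and on whether $\min\{|C|,|D|\}\ge 2$; and $C$, $D$ are exactly the real neighbourhoods of the two split-marker vertices carried by $e$ in their respective expanded sub-instances. Both their cardinalities and their clique-ness can therefore be obtained by a single bottom-up pass over $T$ that maintains, for each split-marker vertex, the size of its real neighbourhood in the corresponding expanded graph together with a boolean recording whether that neighbourhood induces a clique (a ``simpliciality'' flag). Prime components are processed in ${\cal O}(sw(G)\cdot|V(C_i)|)$ time and degenerate ones in ${\cal O}(|V(C_i)|)$ time (a clique component is trivially simplicial, a star handled directly), so this pass runs in ${\cal O}(sw(G)\cdot n)$ total; taking the global maximum then finishes in linear time, for an overall ${\cal O}(sw(G)^3\cdot n+m)$ bound.

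The routine parts will be the distance/$4$-tuple computations and the arithmetic of the three $\delta^*$ cases. The main obstacle, exactly as in Theorem~\ref{thm:sw-ecc}, is the bookkeeping on the split decomposition tree: one must verify carefully that unrolling Lemma~\ref{lem:hyp-split} along $T$ really produces the displayed max-formula — in particular that the value $\delta^*_e$ attached to an internal edge and computed inside a sub-instance $G_{A_i}$ is the correct one — and that the clique-ness of $C$ and $D$, which a priori is a global property of the split rather than of a single component, is faithfully reconstructed from the local simpliciality flags propagated upward through the marker vertices. Handling the degenerate components as base cases of this propagation, and the corner cases $\min\{|C|,|D|\}\le 1$, requires attention but introduces no new idea.
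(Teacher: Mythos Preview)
Your proposal is correct and follows essentially the same approach as the paper: compute $\max_i\delta(C_i)$ by brute force over $4$-tuples in each prime component, then determine the $\delta^*_e$ values by propagating simpliciality flags for the split-marker vertices through the split decomposition tree. One minor slip: a \emph{single} bottom-up pass is not enough, since for each tree edge you need the clique-ness flag on \emph{both} sides of the split (one side lives in the subtree below, the other in its complement), so a second top-down pass is required --- exactly as in Theorem~\ref{thm:sw-ecc}, which you already invoke as the template. The paper additionally observes that if $\max_i\delta(C_i)\ge 1$ the tree passes can be skipped entirely since then $\delta(G)=\max_i\delta(C_i)$, but this is only an optimization and does not affect correctness or the asymptotic bound.
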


\begin{proof}
First we compute in linear-time the split components $C_1, C_2, \ldots, C_k$ of $G$.
By Lemma~\ref{lem:hyp-split}, we have $\delta(G) \geq \max_i\delta(C_i)$.
Furthermore, for every $1 \leq i \leq k$ we have: if $C_i$ induces a star or a complete graph, then $\delta(C_i) = 0$; otherwise, $|V(C_i)| \leq sw(G)$, and so, $\delta(C_i)$ can be computed in ${\cal O}(|V(C_i)|^4) = {\cal O}(sw(G)^3 \cdot |V(C_i)|)$-time, simply by iterating over all possible $4$-tuples.
Summarizing, we can compute  $\max_i\delta(C_i)$ in ${\cal O}(sw(G)^3 \cdot \sum_i |V(C_i)|) = {\cal O}(sw(G)^3 \cdot n)$-time.
By Lemma~\ref{lem:hyp-split} we have $\delta(G) \leq  \max \{1, \max_i\delta(C_i) \}$.
Therefore, if $\max_i\delta(C_i) \geq 1$ then we are done.
Otherwise, in order to compute $\delta(G)$, by Lemma~\ref{lem:hyp-split} it suffices to check whether the sides of every split used for the split decomposition induce a complete subgraph.
For that, we use a split decomposition tree $T$ of $G$.
Indeed, recall that the edges of $T$ are in bijective correspondance with the splits.

Let us root $T$ in $C_1$.
Notations are from the proof of Theorem~\ref{thm:sw-ecc}.
In particular, for every $1 \leq i \leq k$ let $T_i$ be the subtree of $T$ that is rooted in $C_i$.
If $i > 1$ then let $C_{p(i)}$ be its parent in $T$.
By construction of $T$, the edge $\{C_{p(i)}, C_i\} \in E(T)$ corresponds to a split $(A_i,B_i)$ of $G$, where $V(C_i) \subseteq A_i$.
Let $G_{A_i} = (A_i \cup \{b_i\}, E_{A_i}), \ G_{B_i} = (B_i \cup \{a_i\}, E_{B_i})$ be the corresponding subgraphs of $G$. 
Vertex $a_i$ is simplicial in $G_{B_i}$ if and only if the side $N_G(A_i)$ is a clique.
Similarly, vertex $b_i$ is simplicial in $G_{A_i}$ if and only if the side $N_G(B_i)$ is a clique.
So, we perform tree traversals of $T$ in order to decide whether $a_i$ and $b_i$ are simplicial.

More precisely, we recall that $T_i$ and $T \setminus T_i$ are split decomposition trees of $G_{A_i}$ and $G_{B_i}$, respectively.
We now proceed in two main steps.

\begin{itemize}
\item 
First, we decide whether $b_i$ is simplicial in $G_{A_i}$ by dynamic programming.
More precisely, let $C_{i_1}, C_{i_2}, \ldots, C_{i_k}$ be the children of $C_i$ in $T$. (possibly, $k=0$ if $C_i$ is a leaf).
Then, $b_i$ is simplicial in $G_{A_i}$ if and only if: it is simplicial in $C_i$; and for every $1 \leq t \leq k$ such that $\{b_i,a_{i_t}\} \in E(C_i)$, we have that $b_{i_t}$ is simplicial in $G_{A_{i_t}}$.
In particular, testing whether $b_{i}$ is simplicial in $C_i$ takes time: ${\cal O}(1)$ if $C_i$ induces a star or a complete graph; and ${\cal O}(|V(C_i)|^2) = {\cal O}(sw(G) \cdot |V(C_i)|)$ otherwise.
Since a vertex can have at most $|V(C_i)| - 1$ neighbours in $C_i$, testing whether $b_i$ is simplicial in $G_{A_i}$ can be done in ${\cal O}(|V(C_i)|)$ additional time.
So, overall, the first step takes ${\cal O}(sw(G) \cdot \sum_i|V(C_i)|) = {\cal O}(sw(G) \cdot n)$-time.

\item 
Second, we decide whether $a_i$ is simplicial in $G_{B_i}$ by reverse dynamic programming.
Let $C_{j_0} = C_i, C_{j_1}, C_{j_2}, \ldots, C_{j_k}$ denote the children of $C_{p(i)}$ in $T$.
Furthermore, if $p(i) \neq 1$ then let $C_{p^2(i)}$ be the parent of $C_{p(i)}$ in $T$.
Then, $a_i$ is simplicial in $G_{B_i}$ if and only if: it is simplicial in $C_{p(i)}$; for every $1 \leq t \leq k$ such that $\{a_i,a_{j_t}\} \in E(C_{p(i)})$, we have that $b_{j_t}$ is simplicial in $G_{A_{j_t}}$; if $p(i) \neq 1$ and $\{a_i,b_{p(i)}\} \in E(C_{p(i)})$, we also have that $a_{p(i)}$ is simplicial in $G_{B_{p(i)}}$.
Testing, for every $0 \leq t \leq k$, whether $a_{j_t}$ is simplicial in $C_{p(i)}$ takes total time: ${\cal O}(|V(C_{p(i)})|)$ if $C_{p(i)}$ induces a star or a complete graph; and ${\cal O}(|V(C_{p(i)})|^3) = {\cal O}(sw(G)^2 \cdot |V(C_{p(i)})|)$ otherwise.

Then, for stars and prime components, we can test, for every $0 \leq t \leq k$, whether $a_{j_t}$ is simplicial in $G_{B_{j_t}}$ in total ${\cal O}(|E(C_{p(i)})|)$-time, that is ${\cal O}(|V(C_{p(i)})|)$ for stars and ${\cal O}(|V(C_{p(i)})|^2) = {\cal O}(sw(G) \cdot |V(C_{p(i)})|)$ for prime components.
For the case where $C_{p(i)}$ is a complete graph then, since all the vertices in $C_{p(i)}$ are pairwise adjacent, we only need to check whether there is at least one vertex $a_{j_t}$ such that $b_{j_t}$ is non simplicial in $G_{A_{j_t}}$, and also if $p(i) > 1$ whether $a_{p(i)}$ is non simplicial in $G_{B_{p(i)}}$.
It takes ${\cal O}(|V(C_{p(i)})|)$-time.

So, overall, the second step takes ${\cal O}(sw(G)^2 \cdot n)$-time.  
\end{itemize}
\end{proof}

\begin{corollary}[~\cite{Sot11}]
For every connected $G=(V,E)$ we have $\delta(G) \leq \max\{1, \left\lfloor (sw(G)-1) /2 \right\rfloor \}$.
\end{corollary}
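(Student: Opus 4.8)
The plan is to combine Lemma~\ref{lem:hyp-split} with the elementary bound $\delta(H)\le diam(H)/2$, valid for every connected graph $H$, together with the trivial fact that complete graphs, stars and paths are $0$-hyperbolic. First I would recall that, $G$ being connected, its canonical split decomposition has components $C_1,\dots,C_k$ (with $k=1$ when $G$ is itself degenerate or prime), each of them being degenerate --- a complete graph or a star --- or prime for split decomposition, and that this decomposition is obtained by iterating simple decompositions. Applying Lemma~\ref{lem:hyp-split} once at each decomposition step and inducting on $k$, I obtain $\delta(G)=\max\bigl(\{\delta(C_i):1\le i\le k\}\cup\{\delta^*_j\}_j\bigr)$, where $j$ ranges over the splits used and $\delta^*_j\le 1$ for every such $j$ by the statement of Lemma~\ref{lem:hyp-split}. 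Hence $\delta(G)\le\max\{1,\ \max_i\delta(C_i)\}$, and it remains to bound $\delta(C_i)$ for a single component.

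Next I would record the inequality $\delta(H)\le diam(H)/2$. Recall that $\delta(H)$ equals the maximum, over $4$-subsets of $V(H)$, of $(S_1-S_2)/2$, where $S_1\ge S_2\ge S_3$ are the three pairing sums of the $4$-subset in non-increasing order; labeling the subset $\{a,b,c,d\}$ so that $S_1=dist_H(a,b)+dist_H(c,d)$, the triangle inequality gives $S_1-S_2\le\min\{dist_H(a,b),\,dist_H(c,d)\}\le diam(H)$ --- this is exactly the $4$-tuple estimate already used in the proof of Theorem~\ref{thm:cw-hyp} --- whence $\delta(H)\le diam(H)/2$. Now fix a component $C_i$. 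If $C_i$ is degenerate it is a complete graph or a star, so $\delta(C_i)=0$; assume then that $C_i$ is prime, so that $|V(C_i)|\le sw(G)$ by definition of the split-width. If $diam(C_i)\le|V(C_i)|-2$, then $\delta(C_i)\le(|V(C_i)|-2)/2$ by the inequality above. Otherwise $diam(C_i)=|V(C_i)|-1$: a shortest path between two diametral vertices then visits all of $V(C_i)$ and can have no chord, so $C_i$ is a path, hence a tree, hence $\delta(C_i)=0$. In both cases $\delta(C_i)\le(|V(C_i)|-2)/2\le(sw(G)-2)/2\le\lfloor(sw(G)-1)/2\rfloor$, the last inequality being an immediate check for $sw(G)$ even and for $sw(G)$ odd.

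Combining the two parts gives $\delta(G)\le\max\{1,\ \max_i\delta(C_i)\}\le\max\{1,\ \lfloor(sw(G)-1)/2\rfloor\}$, as claimed. The only non-routine ingredient is the passage $\delta(H)\le diam(H)/2$, so I expect its short derivation from the cited $4$-tuple estimate to be the main (and essentially the sole) obstacle; everything else is bookkeeping along the split decomposition tree. One point to be careful about is the floor function: using merely $\delta(C_i)\le(|V(C_i)|-1)/2$ would lose half a unit exactly when $sw(G)$ is even, which is why the case $diam(C_i)=|V(C_i)|-1$ must be singled out above. That case is in fact vacuous for a prime component, since a path on at least four vertices admits a split, but handling it through the diameter argument spares us from having to invoke the internal structure of prime graphs.
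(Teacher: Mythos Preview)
Your proof is correct and follows the approach the paper sets up: iterate Lemma~\ref{lem:hyp-split} along the split decomposition to obtain $\delta(G)\le\max\{1,\max_i\delta(C_i)\}$, then bound $\delta(C_i)$ for each prime component via the diameter. The paper does not spell out a proof of this corollary (it is cited from~\cite{Sot11}), but this is precisely the argument implicit in the proof of Theorem~\ref{thm:sw-hyp}.

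One minor simplification: the paper already records the floored bound $\delta(H)\le\lfloor diam(H)/2\rfloor$ as Lemma~\ref{lem:hyp-diam1}. Using that directly gives $\delta(C_i)\le\lfloor(|V(C_i)|-1)/2\rfloor\le\lfloor(sw(G)-1)/2\rfloor$ immediately, without separating the case $diam(C_i)=|V(C_i)|-1$. Your derivation of the unfloored inequality from the $4$-tuple estimate is fine, but it forces you to treat the path case separately (which, as you correctly observe, is vacuous for prime components anyway); invoking Lemma~\ref{lem:hyp-diam1} makes that detour unnecessary.
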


\subsubsection*{Betweenness Centrality}

The following subsection can be seen as a broad generalization of the preprocessing method presented in~\cite{PEZB14}.
We start introducing a generalization of {\sc Betweenness Centrality} for {\em vertex-weighted} graphs.
Admittedly, the proposed generalization is somewhat technical.
However, it will make easier the dynamic programming of Theorem~\ref{thm:sw-bc}.

Precisely, let $G=(V,E,\alpha,\beta)$ with $\alpha,\beta : V \to \mathbb{N}$ be weight functions.
Intuitively, for a split marker vertex $v$, $\alpha(v)$ represents the side of the split replaced by $v$, while $\beta(v)$ represents the total number of vertices removed by the simple decomposition. 
For every path $P=(v_1,v_2,\ldots,v_{\ell})$ of $G$, the {\em length} of $P$ is equal to the number $\ell$ of edges in the path, while the {\em cost} of $P$ is equal to $\prod_{i=1}^{\ell} \alpha(v_i)$.
Furthermore, for every $s,t \in V$, the value $\sigma_G(s,t)$ is obtained by summing the cost over all the shortest $st$-paths in $G$.
Similarly, for every $s,t,v \in V$, the value $\sigma_G(s,t,v)$ is obtained by summing the cost over all the shortest $st$-paths in $G$ that contain $v$.
The betweenness centrality of vertex $v$ is defined as: $$\frac 1 {\alpha(v)} \sum_{s,t \in V \setminus v} \beta(s)\beta(t) \frac {\sigma_G(s,t,v)} {\sigma_G(s,t)}.$$
Note that if all weights are equal to $1$ then this is exactly the definition of Betweenness Centrality for unweighted graphs.

\begin{lemma}\label{lem:bc-split}
Let $(A,B)$ be a split of $G=(V,E,\alpha,\beta)$ and let $C = N_G(B) \subseteq A, \ D = N_G(A) \subseteq B$.
Furthermore, let $G_A = (A \cup \{b\}, E_A, \alpha_A, \beta_A), \ G_B = (B \cup \{a\}, E_B, \alpha_B, \beta_B)$ be the corresponding subgraphs of $G$, where:
$$\begin{cases}
\alpha_A(v) = \alpha(v), \ \beta_A(v) = \beta(v) \ \mbox{if} \ v \in A \\
\alpha_B(u) = \alpha(u), \ \beta_B(u) = \beta(u) \ \mbox{if} \ u \in B \\
\alpha_A(b) = \sum_{u \in D} \alpha(u), \ \beta_A(b) = \sum_{u \in B} \beta(u) \\
\alpha_B(a) = \sum_{v \in C} \alpha(v), \ \beta_B(a) = \sum_{v \in A} \beta(v).
\end{cases}$$
Then for every $v \in A$ we have:
$$BC_G(v) = BC_{G_A}(v) + [v \in C] BC_{G_B}(a).$$
\end{lemma}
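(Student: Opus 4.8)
The plan is to split the defining double sum for $BC_G(v)$ according to which sides of the split $(A,B)$ contain the two endpoints $s,t$, and to match each of the four pieces with a corresponding piece of $BC_{G_A}(v)$ or $BC_{G_B}(a)$. Throughout, write $C=N_G(B)$, $D=N_G(A)$.

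\medskip\noindent
\textbf{Step 1 (structure of geodesics).} First I would record how shortest paths interact with the split. Since the only edges between $A$ and $B$ form the complete bipartite graph $C\times D$, any path leaving $A$ must pass through $C$ and then $D$; combining this with the fact that $G_A$ and $G_B$ are isometric subgraphs of $G$ (as in the proof of Lemma~\ref{lem:diam-split}, with $dist_{G_A}(u,b)=dist_G(u,C)+1$ and $dist_{G_B}(w,a)=dist_G(w,D)+1$) yields: a shortest $st$-path with $s,t\in A$ either stays in $A$ or makes detours of the form $c_1\to d\to c_2$ with $d\in D$, and such detours correspond bijectively to uses of $b$ in a shortest $st$-path of $G_A$; a shortest $st$-path with $s\in A,\ t\in B$ crosses the cut exactly once, hence factors uniquely as a shortest $s$-to-$C$ path in $G[A]$, a join edge, and a shortest $D$-to-$t$ path in $G[B]$; and symmetrically for $s,t\in B$ (here a shortest path enters $A$ at most once, and only through a single vertex of $C$). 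The point that makes the counting work is that when these pieces are summed with the cost weights $\prod\alpha$, the contribution of the deleted side is exactly the weight carried by the marker: $\sum_{d\in D}\alpha(d)=\alpha_A(b)$ and $\sum_{c\in C}\alpha(c)=\alpha_B(a)$. Consequently $\sigma_G(s,t)=\sigma_{G_A}(s,t)$ when $s,t\in A$; $\sigma_G(s,t)=\sigma_{G_A}(s,b)\,\sigma_{G_B}(a,t)/(\alpha_A(b)\,\alpha_B(a))$ when $s\in A,\ t\in B$; and $\sigma_G(s,t)=\sigma_{G_B}(s,t)$ when $s,t\in B$.

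\medskip\noindent
\textbf{Step 2 (locating $v$).} Next I would compute $\sigma_G(s,t,v)$ for the fixed $v\in A$, using that a vertex of $A$ lies on one of the above geodesics iff it lies on its ``$A$-part''. For $s,t\in A$ this gives $\sigma_G(s,t,v)=\sigma_{G_A}(s,t,v)$, so the pairs with $s,t\in A\setminus v$ contribute exactly the corresponding part of $BC_{G_A}(v)$. For $s\in A\setminus v,\ t\in B$, the ratio $\sigma_G(s,t,v)/\sigma_G(s,t)$ equals $\sigma_{G_A}(s,b,v)/\sigma_{G_A}(s,b)$, which is \emph{independent of $t$}; summing the factor $\beta(t)$ over $t\in B$ produces $\beta_A(b)=\sum_{u\in B}\beta(u)$, so these pairs (and symmetrically the pairs $s\in B,\ t\in A\setminus v$) reproduce exactly the ``$t=b$'' (resp. ``$s=b$'') terms of $BC_{G_A}(v)$. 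For $s,t\in B$: if $v\notin C$ then no shortest $st$-path meets $v$, so the contribution is $0$; if $v\in C$, a shortest $st$-path through $v$ is precisely one making a single detour through $A$ at $v$, and summing costs over the choice of that detour vertex turns $\alpha(v)$ into $\alpha_B(a)$, giving $\sigma_G(s,t,v)/\sigma_G(s,t)=(\alpha(v)/\alpha_B(a))\cdot\sigma_{G_B}(s,t,a)/\sigma_{G_B}(s,t)$; after the $1/\alpha(v)$ normalization in the definition of $BC$, the $s,t\in B$ pairs contribute exactly $BC_{G_B}(a)$.

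\medskip\noindent
\textbf{Step 3 (conclusion).} Adding the four contributions, and noting that $V(G_A)\setminus v=(A\setminus v)\cup\{b\}$ so that the first three account for all of $BC_{G_A}(v)$, gives $BC_G(v)=BC_{G_A}(v)+[v\in C]\,BC_{G_B}(a)$. I expect the main obstacle to be the bookkeeping in Step~1: one must argue carefully — using the isometry of $G_A,G_B$ together with the markers $b,a$ being universal to $C,D$ respectively — that a shortest path crosses the cut $C\times D$ at most once when its endpoints lie on the same side (plus exactly once more when they are on opposite sides) and never revisits $A$ after leaving it, so that the factorizations of $\sigma_G$ above are valid. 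Everything else is then a routine, if slightly tedious, manipulation of the cost-weighted path counts and of the weights $\alpha_A,\alpha_B,\beta_A,\beta_B$.
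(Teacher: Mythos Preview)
Your proposal is correct and follows essentially the same approach as the paper: both proofs split the sum defining $BC_G(v)$ according to whether $s,t$ lie in $A$ or $B$, establish the identities $\sigma_G(s,t)=\sigma_{G_A}(s,t)$ (for $s,t\in A$), $\sigma_G(s,t,v)/\sigma_G(s,t)=\sigma_{G_A}(b,t,v)/\sigma_{G_A}(b,t)$ (for $s\in B,\ t\in A$), and $\sigma_G(s,t,v)/\sigma_G(s,t)=\frac{\alpha(v)}{\alpha_B(a)}\cdot\sigma_{G_B}(s,t,a)/\sigma_{G_B}(s,t)$ (for $s,t\in B,\ v\in C$), and then recombine. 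One small wording quibble: in your Step~3 you say a shortest path between same-side endpoints ``crosses the cut $C\times D$ at most once'', but of course such a detour uses \emph{two} cut edges; your Step~1 already has the correct picture (a single excursion $c_1\to d\to c_2$), so this is only a slip of phrasing.
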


\begin{proof}
Let $v \in A$ be fixed.
We consider all possible pairs $s,t \in V \setminus v$ such that $dist_G(s,t) = dist_G(s,v) + dist_G(v,t)$.

\smallskip
Suppose that $s,t \in A \setminus v$.
Since $(A,B)$ is a split, the shortest $st$-paths in $G$ are contained in $N_G[A] = A \cup D$.
In particular, the shortest $st$-paths in $G_A$ are obtained from the shortest $st$-paths in $G$ by replacing any vertex $d \in D$ by the split marker vertex $b$.
Conversely, the shortest $st$-paths in $G$ are obtained from the shortest $st$-paths in $G_A$ by replacing $b$ with any vertex $d \in D$.
Hence, $\sigma_{G_A}(s,t,b) = \sum_{d \in D} \sigma_G(s,t,d)$, that implies $\sigma_G(s,t) = \sigma_{G_A}(s,t)$.
Furthermore, $\sigma_G(s,t,v) = \sigma_{G}(s,v) \sigma_G(v,t) = \sigma_{G_A}(s,v)\sigma_{G_A}(v,t) = \sigma_{G_A}(s,t,v)$.
As a result, $\sigma_G(s,t,v)/\sigma_G(s,t) = \sigma_{G_A}(s,t,v)/\sigma_{G_A}(s,t)$.

\smallskip
Suppose that $s \in B, \ t \in A \setminus v$.
Every shortest $st$-path in $G$ is the concatenation of a shortest $sD$-path with a shortest $tC$-path.   
Therefore, $\sigma_G(s,t) = \frac {\sigma_{G_B}(s,a) \cdot \sigma_{G_A}(b,t)}{\alpha_B(a)\cdot \alpha_A(b)}$.
We can furthermore observe $v$ is on a shortest $st$-path in $G$ if, and only if, $v$ is on a shortest $bt$-path in $G_A$.
Then, $\sigma_G(s,t,v) = \sigma_{G}(s,v) \sigma_G(v,t) = \frac {\sigma_{G_B}(s,a) \cdot \sigma_{G_A}(b,v)}{\alpha_B(a)\cdot \alpha_A(b)} \sigma_{G_A}(v,t)$.
As a result, $\sigma_G(s,t,v)/\sigma_G(s,t) = \sigma_{G_A}(b,t,v)/\sigma_{G_A}(b,t)$.

\smallskip
Finally, suppose that $s,t \in B$.
Again, since $(A,B)$ is a split the shortest $st$-paths in $G$ are contained in $N_G[B] = B \cup C$.
In particular, $\sigma_G(s,t,v) \neq 0$ if, and only if, we have $v \in C$ and $\sigma_{G_B}(s,t,a) \neq 0$.
More generally, if $v \in C$ then $\sigma_G(s,t,v) = \frac {\alpha_A(v)} {\alpha_B(a)} \sigma_{G_B}(s,t,a)$.
As a result, if $v \in C$ then $\sigma_G(s,t,v)/\sigma_G(s,t) = \frac {\alpha_A(v)} {\alpha_B(a)} \cdot \sigma_{G_B}(s,t,a)/\sigma_{G_B}(s,t)$.

\medskip
Overall, we have:
\begin{align*}
BC_G(v) &= \frac 1 {\alpha(v)} \sum_{s,t \in V \setminus v} \beta(s)\beta(t) \frac {\sigma_G(s,t,v)} {\sigma_G(s,t)} \\
&= \frac 1 {\alpha(v)} \sum_{s,t \in A \setminus v} \beta(s)\beta(t) \frac {\sigma_G(s,t,v)} {\sigma_G(s,t)} +  \frac 1 {\alpha(v)} \sum_{s \in B, \ t \in A \setminus v} \beta(s)\beta(t) \frac {\sigma_G(s,t,v)} {\sigma_G(s,t)} + \frac 1 {\alpha(v)} \sum_{s,t \in B} \beta(s)\beta(t) \frac {\sigma_G(s,t,v)} {\sigma_G(s,t)}\\
&= \frac 1 {\alpha_A(v)} \sum_{s,t \in A \setminus v} \beta_A(s)\beta_A(t) \frac {\sigma_{G_A}(s,t,v)}{\sigma_{G_A}(s,t)} +  \frac 1 {\alpha_A(v)} \sum_{s \in B, \ t \in A \setminus v} \beta_B(s)\beta_A(t)\frac {\sigma_{G_A}(b,t,v)}{\sigma_{G_A}(b,t)} \\ &\ + \frac 1 {\alpha_A(v)} [v \in C] \sum_{s,t \in B} \beta_B(s)\beta_B(t)\frac {\alpha_A(v)} {\alpha_B(a)} \cdot \frac {\sigma_{G_B}(s,t,a)}{\sigma_{G_B}(s,t)}\\
&= \left( BC_{G_A}(v) - \frac {\beta_A(b)} {\alpha_A(v)} \sum_{t \in A \setminus v} \beta_A(t) \frac{\sigma_{G_A}(b,t,v)}{\sigma_{G_A}(b,t)} \right) + \frac {\sum_{s \in B} \beta(s)} {\alpha_A(v)} \sum_{t \in A \setminus v} \beta_A(t) \frac{\sigma_{G_A}(b,t,v)}{\sigma_{G_A}(b,t)} + [v \in C] BC_{G_B}(a) \\
&= BC_{G_A}(v) + [v \in C] BC_{G_B}(a),
\end{align*}
that finally proves the lemma.
\end{proof}

\begin{theorem}\label{thm:sw-bc}
For every $G=(V,E)$, {\sc Betweenness Centrality} can be solved in ${\cal O}(sw(G)^2 \cdot n + m)$-time.
\end{theorem}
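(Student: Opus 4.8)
The plan is to run the same two-phase dynamic programming over a split decomposition tree as in the proof of Theorem~\ref{thm:sw-ecc}, but driven by Lemma~\ref{lem:bc-split} in place of Lemma~\ref{lem:diam-split}. First I would compute in ${\cal O}(n+m)$ time a split decomposition tree $T$ of $G$ with nodes the split components $C_1,\dots,C_k$ \cite{CDR12}, root it at $C_1$, and reuse all the notation of the proof of Theorem~\ref{thm:sw-ecc}: $T_i$ is the subtree rooted at $C_i$, the edge $\{C_{p(i)},C_i\}$ corresponds to a split $(A_i,B_i)$ of $G$ with $V(C_i)\subseteq A_i$, the side-subgraphs $G_{A_i}=(A_i\cup\{b_i\},\dots)$ and $G_{B_i}=(B_i\cup\{a_i\},\dots)$ have $T_i$ and $T\setminus T_i$ as split decomposition trees, $G_1=G$ and $G_i=G_{A_i}$. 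Since the weighted version of {\sc Betweenness Centrality} needs, for every split marker vertex $m$, the two integers $\alpha(m)=$ (number of vertices of the side represented by $m$ that are adjacent to the rest of the graph) and $\beta(m)=$ (number of vertices of that side), I would first obtain all of these by two traversals of $T$: the $\beta$-values are subtree sizes and their complements in $n$, and the $\alpha$-values are sizes of attachment sets, which expand recursively along $T$. For a component $C$ let $\widehat C$ denote $C$ with each marker vertex given these weights and every other vertex weight $1$; this preprocessing costs ${\cal O}(n+m)$.

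The bottom-up phase computes $BC_{G_i}(v)$ for every $i$ and every $v\in V(C_i)$. Unrolling Lemma~\ref{lem:bc-split} inside $G_i=G_{A_i}$, peeling the child subtrees $T_{i_1},\dots,T_{i_l}$ one by one leaves exactly $\widehat C_i$, and yields
\[ BC_{G_i}(v)=BC_{\widehat C_i}(v)+\sum_{t=1}^{l}[\,v\text{ adjacent to }a_{i_t}\text{ in }C_i\,]\cdot BC_{G_{A_{i_t}}}(b_{i_t}), \]
where $a_{i_t}\in V(C_i)$ and $b_{i_t}\in V(C_{i_t})$ are the two markers of the split along $\{C_i,C_{i_t}\}$, and $BC_{G_{A_{i_t}}}(b_{i_t})=BC_{G_{i_t}}(b_{i_t})$ is already known from processing $C_{i_t}$. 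The only new ingredient is computing $BC_{\widehat C_i}(v)$ for all $v\in V(C_i)$ simultaneously: if $C_i$ is a clique or a star this is a closed-form computation running in ${\cal O}(|V(C_i)|)$ time, reducing — as in Theorem~\ref{thm:sw-ecc} — to finding a couple of extremal marker weights; otherwise $|V(C_i)|\le sw(G)$ and a weighted variant of Brandes' algorithm on $\widehat C_i$ does it in ${\cal O}(|V(C_i)|^3)={\cal O}(sw(G)^2\cdot|V(C_i)|)$ time, using that scaling path-multiplicities by the $\alpha$-weights does not change the shortest-path DAGs. Adding the correction sums costs ${\cal O}(|V(C_i)|+\deg_T(C_i))$ more, so this phase is ${\cal O}(sw(G)^2\cdot n)$ overall, and at the root $BC_{G_1}(v)=BC_G(v)$ for $v\in V(C_1)$.

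The top-down phase turns $BC_{G_i}(v)$ into $BC_G(v)$ for $i>1$: by Lemma~\ref{lem:bc-split}, $BC_G(v)=BC_{G_{A_i}}(v)+[v\in N_G(B_i)]\cdot BC_{G_{B_i}}(a_i)$ for $v\in V(C_i)$, so it remains to compute $BC_{G_{B_i}}(a_i)$ for every edge of $T$. The point that keeps this cheap is that the weight $a_i$ carries inside $G_{B_i}$ equals the weight it already carries as a marker of $\widehat C_{p(i)}$ (both record ``how the $T_i$-side looks from $C_{p(i)}$'', by the symmetry of the weight rules of Lemma~\ref{lem:bc-split}); therefore, unrolling Lemma~\ref{lem:bc-split} inside $G_{B_i}$ — peeling the other child subtrees of $C_{p(i)}$ and the part of $G$ above $C_{p(i)}$ — again leaves $\widehat C_{p(i)}$ and gives $BC_{G_{B_i}}(a_i)$ as $BC_{\widehat C_{p(i)}}(a_i)$ (already produced in the bottom-up phase) plus correction terms $BC_{G_{A_j}}(b_j)$ over the other children $C_j$ of $C_{p(i)}$ whose marker is adjacent to $a_i$ in $C_{p(i)}$, plus (when $p(i)\ne1$ and the parent-marker $b_{p(i)}$ is adjacent to $a_i$) a term $BC_{G_{B_{p(i)}}}(a_{p(i)})$ available from the top-down recursion. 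As in Theorem~\ref{thm:sw-ecc}, to avoid a quadratic blow-up when $C_{p(i)}$ is a large clique or star I would precompute once per component the relevant running sums of the $BC_{G_{A_j}}(b_j)$ (for a clique every child contributes, for a star only the center-marker does), while for prime components $\deg_T(C_{p(i)})\le |V(C_{p(i)})|\le sw(G)$ already makes the naive bookkeeping ${\cal O}(sw(G)^2)$ per component. Hence the top-down phase is ${\cal O}(sw(G)^2\cdot n)$ as well, and with the ${\cal O}(n+m)$ preprocessing the total is ${\cal O}(sw(G)^2\cdot n+m)$.

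The hard part is getting the vertex-weight bookkeeping right rather than the asymptotics: one must check (i) that the factor $1/\alpha(v)$ in the weighted betweenness always cancels the factor $\alpha(v)$ collected along every shortest path through $v$, so that $BC_{\widehat C}(\cdot)$ is a well-defined, tree-determined rational quantity (the weight $\beta(v)$ of the vertex of interest never occurs, being excluded from the sum), and, crucially, (ii) that the weighted graph $\widehat C_{p(i)}$ needed at $C_{p(i)}$ in the bottom-up phase is literally the same as the one needed there in the top-down phase, which is exactly what lets the single precomputed number $BC_{\widehat C_{p(i)}}(a_i)$ be reused for every incident split rather than recomputed. Both points follow once one verifies that the weight assignments of Lemma~\ref{lem:bc-split} are symmetric under exchanging the two sides of the split, which they are by inspection.
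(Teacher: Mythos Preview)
Your proposal is correct and follows essentially the same approach as the paper: a two-phase dynamic programming over the split decomposition tree driven by Lemma~\ref{lem:bc-split}, with degenerate components handled in linear time and prime components via a weighted Brandes-style computation in ${\cal O}(sw(G)^3)$. Your explicit observation that the weighted component $\widehat C_{p(i)}$ is identical in the bottom-up and top-down phases (so that $BC_{\widehat C_{p(i)}}(a_i)$ can be reused rather than recomputed) is a nice clarification the paper leaves implicit; one minor imprecision is the analogy to ``extremal marker weights'' for the star case --- the star formula is a closed-form sum over the leaves, not an extremum --- but this does not affect the argument or the running time.
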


\begin{proof}
Let $T$ be a split decomposition tree of $G$, with its nodes being in bijective correspondance with the split components $C_1, C_2, \ldots, C_k$.
It can be computed in linear-time~\cite{CDR12}.
As for Theorem~\ref{thm:sw-ecc}, we root $T$ in $C_1$.
For every $1 \leq i \leq k$, let $T_i$ be the subtree of $T$ that is rooted in $C_i$.
If $i > 1$ then let $C_{p(i)}$ be its parent in $T$.
We recall that by construction of $T$, the edge $\{C_{p(i)}, C_i\} \in E(T)$ corresponds to a split $(A_i,B_i)$ of $G$, where $V(C_i) \subseteq A_i$.
Furthermore, let $G_{A_i} = (A_i \cup \{b_i\}, E_{A_i}), \ G_{B_i} = (B_i \cup \{a_i\}, E_{B_i})$ be the corresponding subgraphs of $G$.
We observe that $T_i$ is a split decomposition tree of $G_{A_i}$, while $T \setminus T_i$ is a split decomposition tree of $G_{B_i}$.

Let us assume $G=(V,E,\alpha,\beta)$ to be vertex-weighted, with initially $\alpha(v) = \beta(v) = 1$ for every $v \in V$.
For every $i > 1$, let $G_{A_i} = (A_i \cup \{b_i\}, E_{A_i}, \alpha_{A_i}, \beta_{A_i}), \ G_{B_i} = (B_i \cup \{a_i\}, E_{B_i}, \alpha_{B_i}, \beta_{B_i})$ be as described in Lemma~\ref{lem:bc-split}. 
In particular, for every $i > 1$:
$$\begin{cases}
\alpha_{A_i}(v) = \alpha(v) = 1, \ \beta_{A_i}(v) = \beta(v) = 1 \ \mbox{if} \ v \in A_i \\
\alpha_{B_i}(u) = \alpha(u) = 1, \ \beta_{B_i}(u) = \beta(u) = 1 \ \mbox{if} \ u \in B_i \\
\alpha_{A_i}(b_i) = |N_G(A_i)|, \ \beta_{A_i}(b_i) = |B_i| \\
\alpha_{B_i}(a_i) = |N_G(B_i)|, \ \beta_{B_i}(a_i) = |A_i|.
\end{cases}$$
Hence, all the weights can be computed in linear-time by dynamic programming over $T$.
We set $G_1 = G$ while $G_i = G_{A_i}$ for every $i >1$.
Furthermore, we first aim at computing $BC_{G_i}(v)$ for every $v \in V(C_i)$.

If $C_i$ is a leaf of $T$ then there are three cases to be considered.

\begin{enumerate}

\item Suppose $G_i$ is a complete graph.
Then, for every $v \in V(C_i)$ we have $BC_{G_i}(v) = 0$.

\item Suppose $G_i$ is a star, with center node $r$.
In particular, $BC_{G_i}(v) = 0$ for every $v \in V(C_i) \setminus \{r\}$.
Furthermore, since $r$ is onto the unique shortest path between every two leaves $s,t \in V(C_i) \setminus \{r\}$, we have $\sigma_{G_i}(s,t,r) = \sigma_{G_i}(s,t)$.
Let us write $\beta(G_i) = \sum_{v \in V(C_i) \setminus \{r\}} \beta_{G_i}(v)$.
We have:
\begin{align*}
BC_{G_i}(r) &= \frac 1 {\alpha_{G_i}(r)} \sum_{s,t \in V(C_i) \setminus \{r\}} \beta_{G_i}(s)\beta_{G_i}(t) \\
&= \frac 1 {2\alpha_{G_i}(r)} \sum_{s \in V(C_i) \setminus \{r\}} \beta_{G_i}(s) \left(\sum_{t \in V(C_i) \setminus \{r,s\}} \beta_{G_i}(t)\right) \\
&= \frac 1 {2\alpha_{G_i}(r)} \sum_{s \in V(C_i) \setminus \{r\}} \beta_{G_i}(s) \left(\beta(G_i) - \beta_{G_i}(s)\right).
\end{align*}
It can be computed in ${\cal O}(|V(C_i)|)$-time.

\item Finally, suppose $G_i$ is prime for split decomposition.
Brandes algorithm~\cite{Bra01} can be generalized to that case.
For every $v \in V(C_i)$, we first compute a BFS ordering from $v$.
It takes ${\cal O}(|E(C_i)|)$-time.
Furthermore for every $u \in V(C_i) \setminus \{v\}$, let $N^+(u)$ be the neighbours $w \in N_{C_i}(u)$ such that $w$ is on a shortest $uv$-path.
We compute $\sigma_{G_i}(u,v)$ by dynamic programming.
Precisely, $\sigma_{G_i}(v,v) = \alpha_{G_i}(v)$, and for every $u \neq v, \ \sigma_{G_i}(u,v) = \alpha_{G_i}(u) \cdot \left(\sum_{w \in N^+(u)}\sigma_{G_i}(w,v)\right)$.
It takes ${\cal O}(|E(C_i)|)$-time. 

Overall in ${\cal O}(|V(C_i)||E(C_i)|)$-time, we have computed $\sigma_{G_i}(u,v)$ and $dist_{G_i}(u,v)$ for every $u,v \in V(C_i)$.
Then, for every $v \in V(C_i)$, we can compute $BC_{G_i}(v)$ in ${\cal O}(|V(C_i)|^2)$-time by enumerating all the pairs $s,t \in V(C_i) \setminus \{v\}$.
Since $G_i$ is prime, the total running time is in ${\cal O}(|V(C_i)|^3) = {\cal O}(sw(G)^3)$, and so, in ${\cal O}(sw(G)^2 \cdot |V(C_i)|)$.
\end{enumerate}

Otherwise, $C_i$ is an internal node of $T$.
Let $C_{i_1}, C_{i_2}, \ldots, C_{i_k}$ be the children of $C_i$ in $T$.
Assume that, for every $1 \leq t \leq k$, $BC_{G_{i_t}}(b_{i_t})$ has been computed (by dynamic programming over $T$).
Let us define the following weight functions for $C_i$:
$$\begin{cases}
\alpha_i(a_{i_t}) = \alpha_{B_{i_t}}(a_{i_t}), \ \beta_i(a_{i_t}) = \beta_{B_{i_t}}(a_{i_t}) \\
\alpha_i(v) = \alpha_{A_{i}}(v), \ \beta_i(v) = \beta_{A_{i}}(v) \ \mbox{otherwise.}
\end{cases}$$
Observe that every edge $\{C_i,C_{i_t}\}$ also corresponds to a split $(A_{i_t}',B_{i_t}')$ of $G_i$, where $V(C_{i_t}) \subseteq A_{i_t}' = A_{i_t}$.
By applying all the corresponding simple decompositions, one finally obtains $H_i = (V(C_i),E(C_i),\alpha_i,\beta_i)$.
Then, let us define $\ell_i(a_{i_t}) = BC_{G_{i_t}}(b_{i_t})$ and $\ell_i(v) = 0$ else.
Intuitively, the function $\ell_i$ is a corrective term updated after each simple decomposition.
More precisely, we obtain by multiple applications of Lemma~\ref{lem:bc-split}, for every $v \in V(C_i)$: $$BC_{G_i}(v) = BC_{H_i}(v) + \sum_{u \in N_{H_i}(v)} \ell_i(u) $$
Clearly, this can be reduced in ${\cal O}(|E(H_i)|)$-time, resp. in ${\cal O}(|V(H_i)|)$-time when $H_i$ is complete, to the computation of $BC_{H_i}(v)$.
So, it can be done in ${\cal O}(sw(G)^2 \cdot |V(C_i)|)$-time ({\it i.e.}, as explained for the case of leaf nodes).

Overall, this first part of the algorithm takes time ${\cal O}(sw(G)^2 \cdot \sum_i |V(C_i)|) = {\cal O}(sw(G)^2 \cdot n)$. 
Furthermore, since $G_1 = G$, we have computed $BC_G(v)$ for every $v \in V(C_1)$.
Then, using the same techniques as above, we can compute $BC_{G_{B_i}}(a_i)$ for every $i > 1$ by reverse dynamic programming over $T$.
It takes ${\cal O}(sw(G)^2 \cdot n)$-time. 
Finally, by Lemma~\ref{lem:bc-split} we can compute $BC_G(v)$ from $BC_{G_i}(v)$ and $BC_{G_{B_i}}(a_i)$, for every $v \in V(C_i)$.
It takes linear-time.
\end{proof}

\subsection{Kernelization methods with modular decomposition}\label{sec:algos-modular-dec}

The purpose of the subsection is to show how to apply the previous results, obtained with split decomposition, to modular decomposition.
On the way, improvements are obtained for the running time.
Indeed, it is often the case that only the quotient graph $G'$ needs to be considered.
We thus obtain algorithms that run in ${\cal O}(mw(G)^{{\cal O}(1)} + n + m)$-time.
See~\cite{MNN16} for an extended discussion on the use of {\em Kernelization} for graph problems in P.

We start with the following lemma:

\begin{lemma}[folklore]\label{lem:mw-to-sw}
For every $G=(V,E)$ we have $sw(G) \leq mw(G)+1$.
\end{lemma}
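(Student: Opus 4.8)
The plan is to turn the modular decomposition of $G$ into a chain of splits whose prime components all have order at most $mw(G)+1$, so that the split-width bound follows directly from the definition. First I would observe that the statement only needs checking when $mw(G) \geq 2$ and $G$ is connected (if $G$ is disconnected we work component by component; and $sw$ is only defined for $k \geq 2$, matching $mw(G) \geq 2$). Recall from Section~\ref{sec:prelim} that if $M$ is a module of $G$ with $\min\{|M|,|V\setminus M|\}\geq 2$, then $(M,V\setminus M)$ is a split, since every vertex of $N_G(V\setminus M)\cap M$ is adjacent to every vertex of $N_G(M)$ by the module property, giving the required complete join across the cut. This is the one geometric fact that makes the argument work.

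Next I would recurse down the modular decomposition tree. Let $G=G'$ be the quotient graph, with maximal strong modules $M_1,\dots,M_t\in\mathcal M(G)$, where $t=|V(G')|\leq mw(G)$ if $G'$ is prime, and $G'$ is complete or edgeless otherwise. For each nontrivial module $M_i$ (i.e. $|M_i|\geq 2$), since $|V\setminus M_i|\geq 2$ as well (the other modules together with... — here one must be slightly careful: if $t=2$ and one module is a singleton, then $|V\setminus M_i|$ could be $1$; I would handle this by noting that in that case $G$ itself is obtained from $G[M_i]$ by adding a single vertex, an operation that does not increase split-width, so we may reduce to the case where every split we use genuinely has both sides of size $\geq 2$, or simply absorb such degenerate steps into the base case). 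Applying a simple decomposition along the split $(M_i, V\setminus M_i)$ replaces $G$ by $G[M_i\cup\{b_i\}]$ and $G[(V\setminus M_i)\cup\{a_i\}]$, where $a_i$ is adjacent in the second graph exactly to $N_G(M_i)$. Peeling off all nontrivial $M_i$ one after another, the "central" piece that remains is a graph on vertex set obtained from $\{$one representative per $M_i\} \cup \{$split markers$\}$, which is isomorphic to $G'$ itself with possibly one extra marker vertex attached — hence has order at most $|V(G')|+1 \leq mw(G)+1$ if $G'$ is prime, and is complete or a star-like degenerate graph (order irrelevant, split-width $2$) otherwise. By induction, each $G[M_i\cup\{b_i\}]$ has split-width at most $mw(G[M_i])+1 \leq mw(G)+1$, since $mw$ is monotone under taking the quotient-induced subgraphs in the decomposition.

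Putting the pieces together: the canonical split decomposition of $G$ is at least as refined as the one we just built (it has the minimum number of prime components), so every prime component of the canonical split decomposition has order at most the maximum order of a prime piece in our construction, namely $mw(G)+1$. Hence $sw(G)\leq mw(G)+1$. The main obstacle I anticipate is purely bookkeeping rather than conceptual: making precise that the split markers added along successive simple decompositions inflate the "quotient-like" central component by at most one vertex total (not one per split), and checking the degenerate small cases ($t=2$ with a singleton module, and modules of size exactly $2$) so that the hypothesis $\min\{|A|,|B|\}\geq 2$ in the definition of a split is never violated. Once those are pinned down, everything else is immediate from Gallai's structure theorem (Theorem~\ref{thm:modular-dec}) and the observation relating modules to splits.
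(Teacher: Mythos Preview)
Your approach is essentially the paper's: turn modules into splits and peel them off. Note that at the top level the central piece is exactly $G'$ (no extra marker); the ``$+1$'' only enters in the recursion on the pieces $G[M_i\cup\{b_i\}]$, where the marker $b_i$ sits on top of $G[M_i]$ as a universal vertex. That slip does not hurt the bound, but your wording conflates the two levels.

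The real gap is your handling of the case $\mathcal M(G)=\{M_1,M_2\}$ with $|M_2|=1$, i.e.\ $G$ is $G[M_1]$ plus a single universal vertex~$v$. Here $(M_1,\{v\})$ is \emph{not} a split, and your first suggested fix --- that adjoining one vertex does not increase split-width --- is false: $C_5$ is prime with $sw(C_5)=5$, while the wheel $W_5=C_5+\text{universal vertex}$ is again prime, so $sw(W_5)=6$. Your alternative ``absorb into the base case'' is not a fix until you say what that base case is. The paper resolves this by descending one more level: the maximal strong modules $M_1',\dots,M_p'$ of $G[M_1]$ are still modules of $G$ (the universal vertex $v$ is adjacent to all of them), so one peels \emph{those} off as splits instead; the remaining central piece is the quotient of $G[M_1]$ together with~$v$, hence of order at most $mw(G)+1$. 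This is exactly where the ``$+1$'' is forced.

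A smaller point: your last paragraph claims the canonical split decomposition is a refinement of the partial one you build. That is not how it works --- the canonical decomposition has the \emph{minimum} number of components, so it is the coarsest full decomposition, not the finest. The paper sidesteps this entirely: rather than comparing decompositions, it applies the modular-to-split procedure directly to each prime component $C_i$ of the canonical split decomposition (using that $mw$ is hereditary, so $mw(C_i)\le mw(G)$). Since $C_i$ admits no split, the procedure can only output $C_i$ itself, and the case analysis then gives $|V(C_i)|\le mw(G)+1$ immediately.
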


\begin{proof}
First we claim that $mw(H) \leq mw(G)$ for every {\em induced} subgraph $H$ of $G$.
Indeed, for every module $M$ of $G$ we have that $M \cap V(H)$ is a module of $H$, thereby proving the claim.
We show in what follows that a ``split decomposition'' can be computed from the modular decomposition of $G$ such that all the non degenerate split components have size at most $mw(G)+1$\footnote{Formally this is only a partial split decomposition, since there are subgraphs that could be further decomposed.}.
Applying this result to every prime split component of $G$ in its canonical split decomposition proves the lemma.

W.l.o.g., $G$ is connected (otherwise, we consider each connected component separately).
Let ${\cal M}(G) = \{ M_1,M_2, \ldots, M_k \}$ ordered by decreasing size.

\begin{enumerate}
\item If $|M_1| =1$ ($G$ is either complete or prime for modular decomposition) then we output $G$.

\item Otherwise, suppose $|M_1| < n-1$.
We consider all the maximal strong modules $M_1, M_2, \ldots, M_t$ such that $|M_i| \geq 2$ sequentially.
For every $1 \leq i \leq t$, we have that $(M_i, V \setminus M_i)$ is a split.
Furthermore if we apply the corresponding simple decomposition then we obtain two subgraphs, one being the subgraph $G_i$ obtained from $G[M_i]$ by adding a universal vertex $b_i$, and the other being obtained from $G$ by replacing $M_i$ by a unique vertex $a_i$ with neighbourhood $N_G(M_i)$.
Then, there are two subcases.
\begin{itemize}
\item Subcase ${\cal M}(G) = \{M_1,M_2\}$.
In particular, $|M_2| \geq 2$.
We perform a simple decomposition for $M_1$.
The two resulting subgraphs are exactly $G_1$ and $G_2$.
\item Subcase $\{M_1,M_2\} \subsetneq {\cal M}(G)$.
We apply simple decompositions for $M_1,M_2, \ldots, M_t$ sequentially.
Indeed, let $i \in \{1, \ldots, t\}$ and suppose we have already applied simple decompositions for $M_1, M_2, \ldots, M_{i-1}$.
Then, since there are at least three modules in ${\cal M}(G)$ we have that $(M_i, \{a_1,a_2, \ldots a_{i-1}\} \cup \bigcup_{j > i}M_j)$ remains a split, and so, we can apply a simple decomposition.
The resulting components are exactly: the quotient graph $G'$ and, for every $1 \leq i \leq t$, the subgraph $G_i$ obtained from $G[M_i]$.
\end{itemize}
Furthermore, in both subcases we claim that the modular decomposition of $G_i$ can be updated from the modular decomposition of $G[M_i]$ in constant-time.
Indeed, the set of all universal vertices in a graph is a clique and a maximal strong module. 
We output $G'$ (only if $\{M_1,M_2\} \subsetneq {\cal M}(G)$) and, for every $1 \leq i \leq t$, we apply the procedure recursively for $G_i$.
\item Finally, suppose $|M_1| = n-1$.
In particular, ${\cal M}(G) = \{ M_1, M_2 \}$ and $M_2$ is trivial.
Let ${\cal M}(G[M_1]) = \{ M_1', M_2', \ldots, M_p' \}$ ordered by decreasing size.
If $|M_1'| = 1$ ({\it i.e.}, $G[M_1]$ is either edgeless, complete or prime for modular decomposition) then we output $G$.
Otherwise we apply the previous Step 2 to the modular partition  $M_1', M_2', \ldots, M_p', M_2$.
\end{enumerate}
The procedure takes linear-time if the modular decomposition of $G$ is given.
Furthermore, the subgraphs obtained are either: the quotient graph $G'$; a prime subgraph for modular decomposition with an additional universal vertex; or a degenerate graph (that is obtained from either a complete subgraph or an edgeless subgraph by adding a universal vertex).
\end{proof}

\begin{corollary}
For every $G=(V,E)$ we can solve:
\begin{itemize}
\item {\sc Eccentricities} and {\sc Diameter} in ${\cal O}(mw(G)^2 \cdot n + m)$-time;
\item {\sc Hyperbolicity} in ${\cal O}(mw(G)^3 \cdot n + m)$-time;
\item {\sc Betweenness Centrality} in ${\cal O}(mw(G)^2 \cdot n + m)$-time.
\end{itemize}
\end{corollary}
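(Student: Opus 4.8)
The plan is to obtain this corollary as an immediate consequence of Lemma~\ref{lem:mw-to-sw} combined with Theorems~\ref{thm:sw-ecc}, \ref{thm:sw-hyp} and \ref{thm:sw-bc}. Lemma~\ref{lem:mw-to-sw} gives $sw(G) \leq mw(G)+1$ for every graph $G$, so $sw(G)^2 \cdot n + m = {\cal O}(mw(G)^2 \cdot n + m)$ and likewise $sw(G)^3 \cdot n + m = {\cal O}(mw(G)^3 \cdot n + m)$. First I would recall that each of the three split-width theorems above takes as input the graph $G$ itself, not a decomposition, and internally computes the canonical split decomposition in ${\cal O}(n+m)$ time; hence no additional preprocessing tied to the modular decomposition is required, and we may simply invoke those theorems verbatim. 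Applying Theorem~\ref{thm:sw-ecc} then yields the ${\cal O}(mw(G)^2 \cdot n + m)$ bound for {\sc Eccentricities}, and therefore also for {\sc Diameter} since $diam(G) = \max_v ecc_G(v)$; applying Theorem~\ref{thm:sw-hyp} yields the ${\cal O}(mw(G)^3 \cdot n + m)$ bound for {\sc Hyperbolicity}; and applying Theorem~\ref{thm:sw-bc} yields the ${\cal O}(mw(G)^2 \cdot n + m)$ bound for {\sc Betweenness Centrality}.

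The only point meriting a line of justification is that the additive ``$+1$'' in $sw(G) \leq mw(G)+1$ is harmless: in each running-time bound the split-width appears raised to a fixed positive power and multiplied by $n$ (plus the unavoidable ${\cal O}(m)$ term), so substituting $sw(G) \leftarrow mw(G)+1$ and expanding the polynomial introduces only terms that are of strictly lower order in the parameter than the leading $mw(G)^2 n$ or $mw(G)^3 n$ term, and these are absorbed into the ${\cal O}(\cdot)$ for all $mw(G) \geq 2$.

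I do not expect a genuine obstacle here — this is a routine corollary whose purpose is to bridge the split-width results of Section~\ref{sec:algos-split} to the modular-width setting. Its real interest lies in being strictly improvable: for {\sc Eccentricities}, and for {\sc Hyperbolicity} and {\sc Betweenness Centrality} under the stronger neighbourhood-diversity parameter, one can avoid the dependence on $n$ inside the polynomial altogether by reducing to the quotient graph, which is exactly the refinement carried out in the remainder of Section~\ref{sec:algos-modular-dec}; but that sharpening is orthogonal to the present statement, which follows from the inequality $sw(G) \leq mw(G)+1$ alone.
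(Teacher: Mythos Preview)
Your proposal is correct and follows exactly the paper's intended reasoning: the corollary is stated immediately after Lemma~\ref{lem:mw-to-sw} with no separate proof, so it is meant to be read as a direct consequence of $sw(G)\leq mw(G)+1$ together with Theorems~\ref{thm:sw-ecc}, \ref{thm:sw-hyp} and~\ref{thm:sw-bc}. Your remarks on absorbing the ``$+1$'' and on the subsequent sharpenings in Section~\ref{sec:algos-modular-dec} are accurate but go slightly beyond what the paper bothers to spell out.
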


In what follows, we explain how to improve the above running times in some cases.

\begin{theorem}\label{thm:mw-ecc}
For every $G=(V,E)$, {\sc Eccentricities} can be solved in ${\cal O}(mw(G)^3 + n+m)$-time.

In particular, {\sc Diameter} can be solved in ${\cal O}(mw(G)^3 + n+m)$-time.
\end{theorem}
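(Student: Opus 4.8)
The plan is to avoid recursion entirely and reduce the whole computation, up to linear-time preprocessing, to a single eccentricity computation on the quotient graph. First I would compute the modular decomposition of $G$ in ${\cal O}(n+m)$ time~\cite{TCHP08} and apply Gallai's theorem (Theorem~\ref{thm:modular-dec}) to the top level. If $G$ is disconnected then every eccentricity is $+\infty$ and we are done. If $\overline{G}$ is disconnected, then $G$ is a join of at least two modules, hence $diam(G)\le 2$, and $ecc_G(v)$ equals $1$ if $v$ is adjacent to all other vertices and $2$ otherwise; this is read off from the degrees in ${\cal O}(n+m)$ time. The interesting case is when the quotient graph $G'$ is prime for modular decomposition: then $|V(G')|=|{\cal M}(G)|\le mw(G)$, and, being prime, $G'$ has at least four vertices and is connected, so $N_G(M)\neq\emptyset$ for every $M\in{\cal M}(G)$.

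In the prime case the key structural facts are: (i) for $v\in M_i$ and $u\in M_j$ with $i\neq j$ we have $dist_G(v,u)=dist_{G'}(M_i,M_j)$ --- the lower bound follows by contracting a shortest $vu$-path of $G$ to a walk of $G'$, and the upper bound by inflating a shortest $M_iM_j$-path of $G'$, picking one vertex per module and using that consecutive modules along it are completely joined in $G$; and (ii) for distinct $v,u$ in the same module $M_i$ we have $dist_G(v,u)=\min\{dist_{G[M_i]}(v,u),\,2\}$, since any $vu$-path that leaves $M_i$ can be shortcut to length $2$ through a vertex of $N_G(M_i)$, which is adjacent to all of $M_i$. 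Consequently, writing $ecc_{G'}(M_i)=\max_{M_j}dist_{G'}(M_i,M_j)$, we get for every $v\in M_i$
\[
ecc_G(v)=\max\bigl\{\,ecc_{G'}(M_i),\ f_i(v)\,\bigr\},\qquad
f_i(v)=\begin{cases}
0 & \text{if }|M_i|=1,\\
1 & \text{if }|M_i|\ge 2\text{ and }v\text{ is adjacent to every other vertex of }M_i,\\
2 & \text{otherwise}
\end{cases}
\]
(the term $ecc_{G'}(M_i)$ correctly captures $\max_{u\notin M_i}dist_G(v,u)$ because $|V(G')|\ge 2$, so the trivial $dist_{G'}(M_i,M_i)=0$ term is dominated). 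I would then compute all eccentricities in $G'$ by running a BFS from each of its at most $mw(G)$ vertices, in ${\cal O}(mw(G)\cdot|E(G')|)={\cal O}(mw(G)^3)$ time, and compute all the values $f_i(v)$ in ${\cal O}(n+m)$ total time by scanning the adjacency lists together with the module labels returned by the modular decomposition. Summing up gives the ${\cal O}(mw(G)^3+n+m)$ bound, and {\sc Diameter} follows by taking the maximum of the eccentricities.

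The place that requires care --- and which I regard as the main obstacle --- is justifying (i) and (ii): one must verify that two distinct maximal strong modules of $G$ are either completely joined or completely non-adjacent (so that adjacency in $G'$ genuinely behaves like an edge relation for shortest paths), and one must verify that capping in-module distances at $2$ is exactly right, i.e. that the fine recursive structure of $G[M_i]$ is irrelevant to $ecc_G$; everything else is bookkeeping over the modular decomposition. I note that one can already obtain the weaker ${\cal O}(mw(G)^2\cdot n+m)$ bound from Theorem~\ref{thm:sw-ecc} together with Lemma~\ref{lem:mw-to-sw}, but the bound claimed here is strictly stronger when $mw(G)$ is large and is exactly the kernelization phenomenon (``only the quotient graph needs to be processed'') announced in this section.
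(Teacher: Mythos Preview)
Your proof is correct and takes a more direct route than the paper's. The paper reduces to the split-decomposition framework: it invokes Lemma~\ref{lem:mw-to-sw} to turn the top-level modular partition into a partial split decomposition whose components are the quotient graph $G'$ together with the graphs $G_i$ (each $G[M_i]$ plus a universal split-marker vertex $b_i$), and then applies the dynamic-programming scheme of Theorem~\ref{thm:sw-ecc} (via Lemma~\ref{lem:diam-split}) to obtain $ecc_G(v)=\max\{ecc_{G_i}(v),\,ecc_{G'}(M_i)\}$. Your formula is the same up to notation --- your $\max\{f_i(v),1\}$ is exactly $ecc_{G_i}(v)$, since $b_i$ is universal in $G_i$ --- but you derive it from first principles by proving the two distance facts (i) and (ii) directly on modules, without passing through splits at all. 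The advantage of the paper's route is uniformity: the result falls out of machinery already built for split-width and reuses Lemma~\ref{lem:diam-split} verbatim, which makes the section read as a corollary. The advantage of your route is self-containment and transparency: a reader sees immediately why only the top-level quotient matters and why no recursion into the modules is needed, and the argument does not depend on Section~\ref{sec:algos-split}.
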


\begin{proof}
W.l.o.g., $G$ is connected.
Consider the (partial) split decomposition obtained from the modular decomposition of $G$ (Lemma~\ref{lem:mw-to-sw}).
Let $T$ be the corresponding split decomposition tree.
By construction, there exists a modular partition $M_1,M_2, \ldots, M_k$ of $G$ with the two following properties:
\begin{itemize}
\item All but at most one split components of $G$ are split components of some $G_i, \ 1 \leq i \leq k$, where the graph $G_i$ is obtained from $G[M_i]$ by adding a universal vertex $b_i$. 
\item Furthermore, the only remaining split component (if any) is the graph $G'$ obtained by replacing every module $M_i$ with a single vertex $a_i$.
Either $G'$ is degenerate (and so, $diam(G') \leq 2$) or $k \leq mw(G)+1$.
We can also observe in this situation that if we root $T$ in $G'$ then the subtrees of $T \setminus \{G'\}$ are split decomposition trees of the graphs $G_i, \ 1 \leq i \leq k$. 
\end{itemize} 
We can solve {\sc Eccentricities} for $G$ as follows.
First for every $1 \leq i \leq k$ we solve {\sc Eccentricities} for $G_i$.
In particular, $diam(G_i) \leq 2$, and so, for every $v \in V(G_i)$ we have: $ecc_{G_i}(v) = 0$ if and only if $V(G_i) = \{v\}$; $ecc_{G_i}(v) = 1$ if and only if $v$ is universal in $G_i$; otherwise, $ecc_{G_i}(v) = 2$.
Therefore, we can solve {\sc Eccentricities} for $G_i$ in ${\cal O}(|V(G_i)|+|E(G_i)|)$-time.
Overall, this step takes ${\cal O}(\sum_{i=1}^k |V_i| + |E_i|) = {\cal O}(n+m)$-time.
Then there are two subcases.

Suppose $G'$ is not a split component.
We deduce from Lemma~\ref{lem:mw-to-sw} $G=G[M_1] \oplus G[M_2]$.
In this situation, for every $i \in \{1,2\}$, for every $v \in V(G_i)$ we have $ecc_{G}(v) = \max\{ ecc_{G_i}(v), 1 \}$.

Otherwise, let us compute {\sc Eccentrities} for $G'$.
It takes ${\cal O}(|V(G')|) = {\cal O}(n)$-time if $G'$ is degenerate, and ${\cal O}(mw(G)^3)$-time otherwise.
Applying the algorithmic scheme of Theorem~\ref{thm:sw-ecc}, one obtains $ecc_G(v) = \max\{ecc_{G_i}(v), dist_{G_i}(v,a_i) + ecc_{G'}(b_i) - 1\} = \max \{ecc_{G_i}(v), ecc_{G'}(b_i) \}$ for every $v \in M_i$.
Hence, we can compute $ecc_G(v)$ for every $v \in V$ in ${\cal O}(n)$-time. 
\end{proof}

\begin{corollary}
For every connected $G=(V,E)$, $diam(G) \leq \max \{ mw(G), 2 \}$.
\end{corollary}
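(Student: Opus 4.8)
The plan is to derive the bound directly from Gallai's structure theorem (Theorem~\ref{thm:modular-dec}) and the decomposition already used in the proof of Theorem~\ref{thm:mw-ecc}, via a case analysis on the top level of the modular decomposition of the connected graph $G$. By Theorem~\ref{thm:modular-dec}, either $\overline{G}$ is disconnected or the quotient graph $G'$ is prime for modular decomposition. In the first case $G$ is the complete join of at least two graphs, so any two vertices in the same join factor share a common neighbour (any vertex of another factor) and any two vertices in distinct factors are adjacent; hence $diam(G) \le 2 \le \max\{mw(G),2\}$.

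For the second case I would argue as follows. If $G = G'$, then $G$ is itself a prime subgraph occurring in its own modular decomposition, so $|V(G)| \le mw(G)$, and since a connected graph on $p$ vertices has diameter at most $p-1$ we get $diam(G) \le mw(G)-1$. If $G \ne G'$, the maximal strong modules ${\cal M}(G) = \{M_1,\dots,M_k\}$ partition $V$, and I would bound $dist_G(u,v)$ for $u \in M_i,\ v \in M_j$: when $i=j$, connectivity of $G$ yields some $w \in N_G(M_i)$, which by the module property is adjacent to all of $M_i$, so $dist_G(u,v)\le 2$; when $i \ne j$, $dist_G(u,v) = dist_{G'}(M_i,M_j) \le diam(G') \le |V(G')|-1 \le mw(G)-1$, using that $G'$ is connected and prime. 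Combining the cases gives $dist_G(u,v) \le \max\{mw(G),2\}$ for all $u,v$, which is the statement.

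The two points needing a careful (but short) argument are the structural facts about the quotient graph used in the $i \ne j$ subcase: first, that $G'$ is connected whenever $G$ is; and second, that $dist_G(u,v)$ between vertices of distinct maximal strong modules equals $dist_{G'}(M_i,M_j)$. Both reduce to the observation that if an edge of $G$ joins two vertices lying in distinct maximal strong modules $M,M'$, then, applying the module property of $M$ and then of $M'$, every vertex of $M$ is adjacent to every vertex of $M'$, i.e. $MM'$ is an edge of $G'$; projecting a shortest $u$--$v$ path of $G$ then gives an $M_i$--$M_j$ walk of $G'$ of no greater length, while lifting a shortest $M_i$--$M_j$ path of $G'$ (choosing one vertex per intermediate module, with the endpoints fixed to $u$ and $v$) gives a $u$--$v$ path of $G$ of the same length. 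I do not expect any real obstacle beyond making these verifications precise; everything else is immediate from Gallai's theorem and the elementary bound $diam \le |V|-1$ for connected graphs.
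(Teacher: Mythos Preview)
Your proof is correct and rests on the same two observations the paper's argument (implicit in the proof of Theorem~\ref{thm:mw-ecc}) uses: vertices in the same maximal strong module are at distance at most $2$ via any outside neighbour, and distances across distinct modules are exactly the distances in the quotient graph, which is prime of order at most $mw(G)$. The only difference is packaging: the paper reaches these facts through the partial split decomposition of Lemma~\ref{lem:mw-to-sw} (each $G_i$ is $G[M_i]$ with a universal vertex added, hence diameter $\le 2$, and the remaining component is either degenerate or of order $\le mw(G)+1$), whereas you argue directly from Gallai's theorem and the module property, which is slightly more elementary and avoids the split-decomposition detour.
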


Next, we consider {\sc Hyperbolicity}.
It is proved in~\cite{Sot11} that, for every $G=(V,E)$ with quotient graph $G'$, $\delta(G') \leq \delta(G) \leq \max\{\delta(G'),1\}$.
The latter immediately implies the following result:

\begin{theorem}\label{thm:mw-hyp}
For every $G=(V,E)$, we can decide whether $\delta(G) >1$, and if so, compute $\delta(G)$, in ${\cal O}(mw(G)^4 + n + m)$-time.
\end{theorem}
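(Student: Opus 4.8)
The plan is to show that the whole problem reduces, in linear time, to computing the hyperbolicity of a single graph, namely the top-level quotient graph $G'$ in the modular decomposition of $G$, and that $G'$ is either trivially $0$-hyperbolic or has at most $mw(G)$ vertices, so that $\delta(G')$ can be computed by brute force within the claimed budget.

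First I would compute the modular decomposition of $G$ in ${\cal O}(n+m)$ time using~\cite{TCHP08}, and read off its quotient graph $G'$ (which that algorithm produces together with ${\cal M}(G)$). By Theorem~\ref{thm:modular-dec}, exactly one of three situations occurs: $G$ is disconnected, in which case $G'$ is edgeless; $\overline{G}$ is disconnected, in which case $G'$ is complete; or $G'$ is \emph{prime} for modular decomposition, in which case $|V(G')| \le mw(G)$ by the definition of the modular-width. In the first two situations $\delta(G')=0$ and no further computation is needed. In the third, I would compute $\delta(G')$ explicitly: a BFS from every vertex of $G'$ yields all pairwise distances in ${\cal O}(|V(G')|\cdot(|V(G')|+|E(G')|))={\cal O}(mw(G)^3)$ time, and then iterating over all $4$-tuples of $V(G')$ and evaluating the defining expression of {\sc Hyperbolicity} from the distance matrix gives $\delta(G')$ in ${\cal O}(|V(G')|^4)={\cal O}(mw(G)^4)$ time.

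It then remains to recover the answer for $G$ from $\delta(G')$ using the inequality $\delta(G')\le\delta(G)\le\max\{\delta(G'),1\}$ of~\cite{Sot11}. If $\delta(G')\le 1$ (which in particular covers the edgeless and complete cases, where $\delta(G')=0$), then $\delta(G)\le\max\{\delta(G'),1\}=1$, so we report that $\delta(G)\not>1$. If $\delta(G')>1$, then $\max\{\delta(G'),1\}=\delta(G')$, so the inequality collapses to $\delta(G)=\delta(G')$; we then report ``yes'' together with the value $\delta(G')$. Summing the costs of the three phases gives ${\cal O}(mw(G)^4+n+m)$ overall.

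I do not anticipate a genuine obstacle: once the inequality from~\cite{Sot11} is available, the statement is essentially immediate, and the only points requiring (routine) care are that the top-level quotient graph has bounded order precisely when it is prime --- complete and edgeless quotient graphs may be large, but these are exactly the ones with $\delta(G')=0$ --- and that the disconnected and co-disconnected cases are swept up by the same inequality, with no recursion into the modules required.
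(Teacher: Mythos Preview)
Your proposal is correct and follows exactly the approach the paper takes: the paper simply cites the inequality $\delta(G')\le\delta(G)\le\max\{\delta(G'),1\}$ from~\cite{Sot11} and states that the theorem is immediate, while you have spelled out the details (computing the modular decomposition, observing that $G'$ is either degenerate with $\delta(G')=0$ or prime of order at most $mw(G)$, and brute-forcing $\delta(G')$ in the prime case). There is nothing to add.
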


However, we did not find a way to preprocess $G$ in linear-time so that we can compute $\delta(G)$ from $\delta(G')$.
Indeed, let $G_M$ be a graph of diameter at most $2$.
Solving {\sc Eccentricities} for $G_M$ can be easily done in linear-time.
However, the following shows that it is not that simple to do so for {\sc Hyperbolicity}.

\begin{lemma}[~\cite{CoDu14}]\label{lem:hyp-diam1}
For every $G=(V,E)$ we have $\delta(G) \leq \left\lfloor diam(G) /2 \right\rfloor$.
Furthermore, if $diam(G) \leq 2$ then $\delta(G) < 1$ if and only if $G$ is $C_4$-free.
\end{lemma}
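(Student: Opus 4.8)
The plan is to reason about an \emph{extremal $4$-tuple}. For a tuple $\{u,v,x,y\}$, write $S_1\ge S_2\ge S_3$ for the three ``matching sums'' $dist_G(u,v)+dist_G(x,y)$, $dist_G(u,x)+dist_G(v,y)$, $dist_G(u,y)+dist_G(v,x)$, relabelled so that $S_1=dist_G(u,v)+dist_G(x,y)$; then $\delta(u,v,x,y)=(S_1-S_2)/2$ and $\delta(G)$ is the maximum of this over all tuples. The first step I would record is the elementary inequality $S_2\ge\max\{dist_G(u,v),dist_G(x,y)\}$: adding $dist_G(u,v)\le dist_G(u,x)+dist_G(x,v)$ and $dist_G(u,v)\le dist_G(u,y)+dist_G(y,v)$ gives $2\,dist_G(u,v)\le S_2+S_3\le 2S_2$, and symmetrically for $dist_G(x,y)$. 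Substituting into $2\delta(u,v,x,y)=dist_G(u,v)+dist_G(x,y)-S_2$ yields $2\delta(u,v,x,y)\le\min\{dist_G(u,v),dist_G(x,y)\}\le diam(G)$, hence $\delta(G)\le diam(G)/2$.

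To upgrade this to $\delta(G)\le\lfloor diam(G)/2\rfloor$ (which only matters when $D:=diam(G)$ is odd) I would inspect the equality case. If an extremal tuple satisfies $2\delta(u,v,x,y)=D$, then equality must hold throughout the previous step, so $dist_G(u,v)=dist_G(x,y)=D$, whence $S_1=2D$ and $S_2=S_3=D$. A second round of triangle inequalities — for instance $D=dist_G(x,y)\le dist_G(u,x)+dist_G(u,y)$ combined with $dist_G(u,x)+dist_G(v,y)=D$ and $dist_G(u,y)+dist_G(v,x)=D$ — pins each of the four ``cross'' distances $dist_G(u,x), dist_G(u,y), dist_G(v,x), dist_G(v,y)$ to the common value $D/2$, so $D$ must be even, a contradiction. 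Since all distances are integers, $2\delta(G)\in\mathbb{N}$, and therefore $2\delta(G)\le 2\lfloor D/2\rfloor$ in every case, proving the first statement.

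For the second statement assume $D\le 2$, so by the first part $\delta(G)\in\{0,1/2,1\}$. If $G$ contains an induced $C_4$ on $u,x,v,y$ (edges $ux,xv,vy,yu$), then $dist_G(u,v)=dist_G(x,y)=2$ while the four cross distances equal $1$, so $\delta(u,v,x,y)=1$ and hence $\delta(G)=1$, i.e. $\delta(G)\not<1$. Conversely, suppose $\delta(G)=1$ and fix an extremal tuple $u,v,x,y$; since a tuple with a repeated vertex has hyperbolicity $0$ (by the standard bound of~\cite{BCCM15,CCL15,Sot11} that a tuple's hyperbolicity is at most its smallest pairwise distance), the four vertices are distinct and all six distances lie in $\{1,2\}$. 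From $S_2\ge\max\{dist_G(u,v),dist_G(x,y)\}$ and $S_1-S_2=2$ we get $\min\{dist_G(u,v),dist_G(x,y)\}\ge 2$, so $dist_G(u,v)=dist_G(x,y)=2$, $S_1=4$, $S_2=2$, and then $S_3=2$ as well (it is $\ge 2$ and $\le S_2$). With all cross distances in $\{1,2\}$, the equalities $dist_G(u,x)+dist_G(v,y)=dist_G(u,y)+dist_G(v,x)=2$ force each of them to be $1$; thus $u,x,v,y$ induce a $C_4$, i.e. $G$ is not $C_4$-free.

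The only genuinely delicate point is the parity argument in the second paragraph: extracting, from the equality case of the basic bound, that \emph{all four} cross distances equal $D/2$ (and hence $D$ is even). Everything else is routine bookkeeping with the triangle inequality together with the standard fact that a $4$-tuple with a repeated vertex is $0$-hyperbolic.
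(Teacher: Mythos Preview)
Your proof is correct. Note, however, that the paper does not actually prove this lemma: it is quoted from~\cite{CoDu14} and used as a black box, so there is no ``paper's own proof'' to compare against. Your argument --- bounding $2\delta$ by $\min\{dist_G(u,v),dist_G(x,y)\}$ via the triangle inequality, then handling the parity of $D$ by showing that equality forces all four cross distances to equal $D/2$, and finally reading off an induced $C_4$ from an extremal tuple when $diam(G)\le 2$ and $\delta(G)=1$ --- is a clean, self-contained proof of the cited result.
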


The detection of an induced $C_4$ in ${\cal O}(mw(G)^{{\cal O}(1)} + n +m)$-time remains an open problem.

\subsubsection*{Short digression: using neighbourhood diversity}

We show that by imposing more constraints on the modular partition, some more kernels can be computed for the problems in Section~\ref{sec:dist-pbs}.
Two vertices $u,v$ are {\em twins} in $G$ if $N_G(u) \setminus v = N_G(v) \setminus u$.
Being twins induce an equivalence relationship over $V(G)$.
The number of equivalence classes is called the {\em neighbourhood diversity} of $G$, sometimes denoted by $nd(G)$~\cite{Lam12}.
Observe that every set of pairwise twins is a module of $G$.
Hence, $mw(G) \leq nd(G)$.

\begin{theorem}\label{thm:nd-hyp}
For every $G=(V,E)$, {\sc Hyperbolicity} can be solved in ${\cal O}(nd(G)^4 + n+m)$-time.
\end{theorem}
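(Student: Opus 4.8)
The plan is to reduce the computation of $\delta(G)$ to the computation of hyperbolicity on a small vertex-weighted graph whose size depends only on $nd(G)$, and to handle the ``low-hyperbolicity'' regime by a direct structural test. First I would recall from~\cite{Sot11} that if $G'$ is the quotient graph of $G$ then $\delta(G') \leq \delta(G) \leq \max\{\delta(G'),1\}$, so that if $\delta(G') \geq 1$ we may simply compute $\delta(G')$ by brute force over all $4$-tuples of $V(G')$ in ${\cal O}(|V(G')|^4) = {\cal O}(nd(G)^4)$-time, using $mw(G) \leq nd(G)$; and then correct by the additive term coming from Lemma~\ref{lem:hyp-split}/the split-decomposition analysis, exactly as in Theorem~\ref{thm:mw-hyp}. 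The remaining, genuinely new, case is when $\delta(G') < 1$: here $\delta(G) \in \{0, 1/2, 1\}$ and, by Lemma~\ref{lem:hyp-diam1} applied locally, the only way to push $\delta(G)$ above $\delta(G')$ is through a ``short'' obstruction (an induced $C_4$, or a near-$C_4$ configuration straddling a module boundary).

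The key extra structure provided by neighbourhood diversity is that every module in the canonical partition is either a clique (true twins) or an independent set (false twins), and between any two twin classes the adjacency is all-or-nothing. So the second main step is: build a weighted quotient-type graph $H$ on the $nd(G)$ twin classes, where each class carries its cardinality and a bit recording whether it is a clique or a stable set, and show — by a case analysis on which twin classes the four points of an extremal $4$-tuple of $G$ fall into — that $\delta(G)$ can be read off from shortest-path data inside $H$ together with those cardinalities and clique/stable-set bits. Concretely, if all four points lie in distinct classes, the $4$-tuple maps isometrically to a $4$-tuple of $H$; if two or more points share a class, the intra-class distance is $0$ (clique) or $2$ (stable set, when the class has $\geq 2$ vertices and is non-dominating), and the finitely many resulting patterns each contribute a value that is a fixed function of the $H$-distances and the bits. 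Computing all pairwise distances in $H$ costs ${\cal O}(nd(G)^3)$ and the $4$-tuple enumeration costs ${\cal O}(nd(G)^4)$; the modular/twin decomposition itself is computed in ${\cal O}(n+m)$-time, giving the claimed ${\cal O}(nd(G)^4 + n + m)$ bound.

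The main obstacle I anticipate is the bookkeeping in the boundary cases where the extremal $4$-tuple has repeated twin classes: one must argue that replacing a repeated occurrence of a vertex from a stable-set class by a ``fresh'' representative (or by a suitable marker vertex with the right distance profile) never decreases the hyperbolicity value, and never increases it beyond what the weighted graph $H$ already records — in particular that the case $|\{u,v,x,y\}\cap M|\geq 2$ for a single class $M$ is dominated by the $\delta^* \leq 1$ phenomenon of Lemma~\ref{lem:hyp-split}. I would isolate this as a short lemma: for a twin class $M$, moving a $4$-tuple point off $M$ changes each of the three pairwise-distance sums by a controlled amount, and the induced change in $(L-M)/2$ is nonpositive once $\delta(G') < 1$. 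Granting that lemma, the theorem follows by combining it with the ${\cal O}(nd(G)^4)$ enumeration over $H$ and the $\delta(G') \geq 1$ subcase handled as in Theorem~\ref{thm:mw-hyp}.
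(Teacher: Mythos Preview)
Your approach differs from the paper's and is more complicated than necessary. The paper does not attempt any direct case analysis on $4$-tuples distributed across twin classes, nor does it split into the regimes $\delta(G') \geq 1$ versus $\delta(G') < 1$. Instead it observes that the twin partition is a modular partition, so the construction of Lemma~\ref{lem:mw-to-sw} applies; and crucially, because each twin class $V_i$ is either a clique or an independent set, the corresponding split component (obtained from $G[V_i]$ by adding a universal split-marker vertex) is \emph{degenerate} --- a complete graph or a star. Hence in the algorithmic scheme of Theorem~\ref{thm:sw-hyp} all the non-degenerate work concentrates on the single component $G'$ of order $nd(G)$: for every other split component one reads off $\delta(C_j)=0$ and the simplicial vertices in ${\cal O}(|V(C_j)|)$-time, while for $G'$ one spends ${\cal O}(nd(G)^4)$. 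No new lemma is required; Lemma~\ref{lem:hyp-split} (already built into Theorem~\ref{thm:sw-hyp}) handles the contribution of each split exactly.

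Your route could in principle be completed, but the ``short lemma'' you flag as the main obstacle is doing real work and is not correct as stated: moving a $4$-tuple point off a module $M$ does not in general give a nonpositive change in $(L-M)/2$, and the hypothesis $\delta(G') < 1$ alone does not control this. What you would actually need is precisely the content of Lemma~\ref{lem:hyp-split} (the $\delta^*$ correction, together with the clique/non-clique test on the two sides) applied to the split $(M, V\setminus M)$ for each twin class $M$ with $|M|\geq 2$ --- and once you invoke that, you are redoing the split-decomposition argument of Theorem~\ref{thm:sw-hyp} by hand. The paper's approach buys exactly this shortcut: by recognizing that twin classes yield degenerate split components, it reuses the existing machinery wholesale and avoids the repeated-class bookkeeping entirely. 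This is also why the same idea does not go through for general modular-width (Theorem~\ref{thm:mw-hyp} only decides $\delta(G)>1$): there the module-side components need not be degenerate.
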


\begin{proof}
Let $V_1,V_2,\ldots,V_k, \ k = nd(G)$, partition the vertex-set $V$ in twin classes.
The partition can be computed in linear-time~\cite{Lam12}.
Furthermore, since it is a modular partition, we can compute a (partial) split decomposition as described in Lemma~\ref{lem:mw-to-sw}.
Let $G'=(V',E')$ such that $V' = \{v_1,v_2,\ldots,v_k\}$ and $E' = \{ \{v_i,v_j\} \mid V_i \times V_j \subseteq E \}$.
Then, the split components are either: $G'$, stars $S^i$ (if the vertices of $V_i$ are pairwise nonadjacent, {\it i.e.}, false twins) or complete graphs $K^i$ (if the vertices of $V_i$ are pairwise adjacent, {\it i.e.}, true twins). 

Applying the algorithmic scheme of Theorem~\ref{thm:sw-hyp}, in order to solve {\sc Hyperbolicity} for $G$ it suffices to compute, for every split component $C_j$, the hyperbolicity value $\delta(C_j)$ {\em and} all the simplicial vertices in $C_j$.
This can be done in ${\cal O}(|V(C_j)|)$-time if $C_j$ is a star or a complete graph, and in ${\cal O}(nd(G)^4)$-time if $C_j = G'$.
Therefore, we can solve {\sc Hyperbolicity} for $G$ in total ${\cal O}(nd(G)^4 + n+m)$-time.
\end{proof}

In~\cite{FKMN+17}, the authors propose an ${\cal O}(2^{{\cal O}(k)} + n +m)$-time algorithm for computing {\sc Hyperbolicity} with $k$ being the vertex-cover of the graph.
Their algorithm is pretty similar to Theorem~\ref{thm:nd-hyp}.
This is no coincidence since every graph with vertex-cover at most $k$ has neighbourhood diversity at most $2^{{\cal O}(k)}$~\cite{Lam12}.

Finally, the following was proved implicitly in~\cite{PEZB14}.

\begin{theorem}[~\cite{PEZB14}]\label{thm:nd-bc}
For every $G=(V,E)$, {\sc Betweenness Centrality} can be solved in ${\cal O}(nd(G)^3 + n+m)$-time.
\end{theorem}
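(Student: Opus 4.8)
For every $G=(V,E)$, {\sc Betweenness Centrality} can be solved in ${\cal O}(nd(G)^3 + n+m)$-time.

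The plan is to mimic the structure of the proof of Theorem~\ref{thm:nd-hyp}, replacing the call to the split-decomposition algorithm for {\sc Hyperbolicity} by the one for {\sc Betweenness Centrality} (Theorem~\ref{thm:sw-bc}) and then showing that, for the very special partial split decomposition coming from neighbourhood diversity, we never have to spend more than ${\cal O}(nd(G)^3)$ time on any single component. First I would partition $V$ into twin classes $V_1,\dots,V_k$ with $k=nd(G)$ in linear time~\cite{Lam12}, and form the quotient-type graph $G'$ on $\{v_1,\dots,v_k\}$ as in Theorem~\ref{thm:nd-hyp}. By Lemma~\ref{lem:mw-to-sw} this modular partition yields a partial split decomposition whose components are: the graph $G'$ (with at most $k$ vertices); for each false-twin class $V_i$, a star $S^i$; and for each true-twin class $V_i$, a complete graph $K^i$. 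Each of $S^i$, $K^i$ carries the split-marker vertex whose $\alpha,\beta$ weights encode the rest of the graph, exactly as prescribed by Lemma~\ref{lem:bc-split}.

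Next I would run the dynamic-programming scheme of Theorem~\ref{thm:sw-bc} over the split decomposition tree $T$ rooted at $G'$: for each component we compute the (weighted) betweenness values $BC_{\cdot}(v)$ of its vertices, pushing up the value of the marker vertex to its parent, and then do a reverse pass from the root to propagate the "other side" contribution $[v\in C]BC_{G_B}(a)$ of Lemma~\ref{lem:bc-split}. The only thing to check is the per-component cost. For a complete component $K^i$ all vertices are simplicial so $BC=0$ on the original vertices, and the marker's value is read off in ${\cal O}(|V(K^i)|)$ time by the closed-form computation already given in case~1/2 of Theorem~\ref{thm:sw-bc} (a star is the weighted case treated there verbatim, a clique is even simpler). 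For the single prime-like component $G'$, which has at most $k=nd(G)$ vertices and ${\cal O}(k^2)$ edges, the generalized Brandes procedure of Theorem~\ref{thm:sw-bc} computes all pairwise $\sigma_{G'}(u,v)$ and $dist_{G'}(u,v)$ in ${\cal O}(|V(G')|\cdot|E(G')|)={\cal O}(k^3)$ time, and then each $BC_{G'}(v)$ in ${\cal O}(k^2)$ time by enumerating pairs, for a total of ${\cal O}(k^3)={\cal O}(nd(G)^3)$. Summing the star/clique components gives ${\cal O}(\sum_i |V_i|)={\cal O}(n)$, plus the linear-time construction of $T$, so the grand total is ${\cal O}(nd(G)^3+n+m)$.

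The one subtlety — and the main thing I would be careful about — is that the partial split decomposition of Lemma~\ref{lem:mw-to-sw} is \emph{partial}: the stars and cliques attached to $G'$ are themselves further decomposable, and the marker vertices they receive from $G'$ have large $\alpha$/$\beta$ weights. One must verify that applying Lemma~\ref{lem:bc-split} along each tree edge is legitimate here, i.e.\ that each $(V_i, V\setminus V_i)$ is genuinely a split (true, since $|V_i|\ge 2$ or $V_i$ is trivial and handled directly) and that the weighted closed forms in the leaf cases of Theorem~\ref{thm:sw-bc} remain valid with the inherited weights (they do, as those formulas were written for arbitrary $\alpha,\beta$). Once this bookkeeping is in place the running time bound follows immediately; there is no new combinatorial difficulty beyond observing that $G'$ is the only component that is neither a star nor a clique, which is exactly what bounds the cubic term by $nd(G)^3$ rather than $sw(G)^3\cdot n$. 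This is essentially the argument implicit in~\cite{PEZB14}, recast in the split-decomposition language of Theorem~\ref{thm:sw-bc}.
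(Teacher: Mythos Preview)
The paper does not supply its own proof of this theorem; it simply attributes the result to~\cite{PEZB14} (``proved implicitly'' there), so there is no in-paper argument to compare against. Your proposal is correct: it is the exact analogue for {\sc Betweenness Centrality} of the paper's proof of Theorem~\ref{thm:nd-hyp}, namely, take the twin partition, build the partial split decomposition of Lemma~\ref{lem:mw-to-sw} (whose components are $G'$ together with stars and cliques), and run the dynamic programming of Theorem~\ref{thm:sw-bc} on that tree. The degenerate components cost ${\cal O}(|V(C_i)|)$ each and sum to ${\cal O}(n)$, while the single non-degenerate component $G'$ has at most $nd(G)$ vertices and is handled by the weighted Brandes routine in ${\cal O}(nd(G)^3)$-time, giving the claimed bound. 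The subtlety you flag (that $G'$ need not be prime for split decomposition, and that the weights on the marker vertices are nontrivial) is real but harmless: the cubic-time weighted Brandes computation in the proof of Theorem~\ref{thm:sw-bc} nowhere uses primality, and the closed forms for stars and cliques are written for arbitrary $\alpha,\beta$. Since the paper itself describes Theorem~\ref{thm:sw-bc} as ``a broad generalization of the preprocessing method presented in~\cite{PEZB14}'', your argument --- which specializes that theorem back to the twin setting --- is exactly the intended route.
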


\subsection{Applications to graphs with few $P_4$'s}
\label{sec:dist-qq3}

Before ending Section~\ref{sec:dist}, we apply the results of the previous subsections to the case of $(q,q-3)$-graphs.  
For that we need to consider all the cases where the quotient graph has super-constant size $\Omega(q)$ (see Lemma~\ref{lem:reduce-qq3}). 

\subsubsection*{Eccentricities}

\begin{theorem}\label{thm:qq3-ecc}
For every $G=(V,E)$, {\sc Eccentricities} can be solved in ${\cal O}(q(G)^3 + n+m)$-time.
\end{theorem}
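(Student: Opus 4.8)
The plan is to mimic the proof of Theorem~\ref{thm:mw-ecc}, using the primeval/modular decomposition of $G$, but handling the new situation that arises for $(q,q-3)$-graphs: by Lemma~\ref{lem:reduce-qq3}, a prime subgraph $G'$ in the modular decomposition may have super-constant order $\Omega(q)$, in which case it must be a prime spider, a disc, or a prime $p$-tree. Since the algorithm of Theorem~\ref{thm:mw-ecc} only needs, for each quotient graph $G'$ in the decomposition, the eccentricities of every vertex of $G'$ (equivalently, the distance matrix of $G'$, or enough of it), it suffices to show that for each of these three structured families, and with arbitrary nonnegative ``corrective'' labels $e(v)$ assigned to its vertices (coming from the recursive calls, as in Theorem~\ref{thm:sw-ecc}), we can compute $\max_{v} \bigl( dist_{G'}(u,v) + e(v) \bigr)$ for every $u \in V(G')$ in time $\mathcal{O}(|V(G')| + |E(G')|)$, and moreover compute $\delta$-independent quantities like $dist_{G'}(u,a_i)$ on the fly. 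If $|V(G')| \le q$, we just run BFS from every vertex in $\mathcal{O}(q \cdot (q + |E(G')|)) = \mathcal{O}(q^3)$ time, exactly as before; summed over the decomposition tree this contributes $\mathcal{O}(q^3 + n + m)$.

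First I would recall the preprocessing: compute the modular decomposition of $G$ in linear time~\cite{TCHP08}, and for the degenerate nodes (complete/edgeless) use the constant-time-per-vertex rules already in Theorem~\ref{thm:sw-ecc} and Theorem~\ref{thm:mw-ecc}. For a prime quotient node $G'$ with $|V(G')| > q$, I would branch on its type. For a \emph{prime spider} $G' = (S \cup K \cup R, E)$ with $|R| \le 1$: all pairwise distances are bounded (the diameter is at most $3$ in the thin case, at most $2$ in the thick case), and the distance from any $u$ to any $v$ depends only on which of the parts $S, K, R$ each lies in, together with the single exceptional matched/anti-matched pair given by the bijection $f$; hence for each $u$ we can compute $\max_v (dist(u,v)+e(v))$ by precomputing, in $\mathcal{O}(|V(G')|)$ total time, the maxima of $e$ over $S$, over $K$, over $R$, plus the two largest values and which vertices attain them in $S$ and in $K$ (to handle the exception $N(s)\cap K = K - \{f(s)\}$ or $\{f(s)\}$). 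This is the same bookkeeping as the star/complete cases, just with a few more buckets. For a \emph{disc} $C_n$ or $\overline{C_n}$: in the cycle case distances are $dist(v_i,v_j) = \min(|i-j|, n-|i-j|)$, so $\max_j(dist(v_i,v_j)+e(j))$ over a cyclic index set can be computed for all $i$ in $\mathcal{O}(n)$ total time by a standard sliding-window / monotone-deque technique on the doubled array (or, crudely, $\mathcal{O}(n \log n)$ with a segment tree, still within budget since $n \le |V(G')| \le n$); in the co-cycle case the diameter is at most $2$ and the distances are given by a simple rule, so it reduces to the spider-style bucketing.

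The genuinely delicate case is the \emph{prime $p$-tree}: a spiked $p$-chain $P_k$, $\overline{P_k}$, $Q_k$, or $\overline{Q_k}$ (the seven small exceptional graphs have order $\le 7$ and are handled by brute force). Here I would first observe that $\overline{P_k}$ and $\overline{Q_k}$ have bounded diameter (at most $3$), so they again reduce to bucketing by the ``type'' of a vertex with respect to the explicit neighbourhood descriptions given in Section~\ref{sec:prelim}; the main obstacle is the spiked $p$-chain $P_k$ itself, whose underlying $P_k$ is an induced path and which therefore has diameter $\Theta(k)$. The key step is to write down, from the definition $N(x)=\{v_2,v_3\}$, $N(y)=\{v_{k-2},v_{k-1}\}$, a closed form for $dist(v_i, v_j)$ (essentially $|i-j|$, possibly shortcut by one using $x$ or $y$ near the ends), and for distances involving $x$ or $y$; once the distance from a vertex to $v_i$ is an affine function of $i$ on each of $\mathcal{O}(1)$ intervals of the index $i$, the quantity $\max_i (dist(u,v_i) + e(i))$ becomes a constant number of ``range maximum of $e(i) \pm i$'' queries, which after an $\mathcal{O}(k)$-time prefix/suffix-maximum precomputation are answered in $\mathcal{O}(1)$ each, giving $\mathcal{O}(k)$ time over all $u$. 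Summing $\mathcal{O}(|V(G')|+|E(G')|)$ over all nodes of the decomposition, plus $\mathcal{O}(q^3)$ for each small prime node, plus the linear-time decomposition, yields the claimed $\mathcal{O}(q(G)^3 + n + m)$ bound; the hardest part is being careful with the spiked $p$-chain $Q_k$, whose adjacency pattern is the least transparent of the four and where I would need to verify that its diameter is still $\mathcal{O}(1)$ (it is, since $v_2$ is adjacent to almost all even-indexed vertices, making it close to a near-complete structure) so that the bucketing argument applies.
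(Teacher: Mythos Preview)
Your high-level plan—reduce to the quotient graph and case-split on its type via Lemma~\ref{lem:reduce-qq3}—matches the paper, and your case analysis of which families have bounded diameter is accurate. But you have added a layer of machinery that the paper avoids, and it both obscures the argument and jeopardizes your stated running time.

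The reduction in Theorem~\ref{thm:mw-ecc} is \emph{one-level} and \emph{unweighted}. Because the split marker $b_i$ attached to each $G_i = G[M_i] \cup \{b_i\}$ is universal, $dist_{G_i}(v, b_i) = 1$ identically, and the formula from Lemma~\ref{lem:diam-split} collapses to $ecc_G(v) = \max\{ecc_{G_i}(v), ecc_{G'}(a_i)\}$ with $ecc_{G_i}(v) \le 2$ computable in linear time. So the entire problem reduces to computing the ordinary, unweighted eccentricity of every vertex of the \emph{single} top-level quotient $G'$. For each structured family this is then immediate by inspection: in $C_{n'}$ every vertex has eccentricity $\lfloor n'/2 \rfloor$; in a spiked $p$-chain $P_k$ one reads off $ecc(v_i) = \max\{i-1, k-i\}$ and $ecc(x) = ecc(y) = k-2$; for spiders, $\overline{C_{n'}}$, spiked $\overline{P_k}$, spiked $Q_k$ and spiked $\overline{Q_k}$ the diameter is at most $3$ and a short ad~hoc argument pins down every eccentricity. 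There are no corrective labels $e(v)$, hence no need for sliding windows, monotone deques, or prefix/suffix maxima.

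Your introduction of weights ``coming from the recursive calls, as in Theorem~\ref{thm:sw-ecc}'' and the phrase ``summed over the decomposition tree'' conflates the split-decomposition scheme (which genuinely needs weights and a tree traversal) with the modular one (which does not). If you actually recurse over the full modular decomposition and spend $\mathcal{O}(q^3)$ at each small prime node, a $(q,q-3)$-graph with $\Theta(n/q)$ such nodes already costs $\Theta(q^2 n)$, not $\mathcal{O}(q^3 + n + m)$. The paper sidesteps this entirely: there is exactly one $G'$ to process, contributing $\mathcal{O}(q^3)$ when $|V(G')| \le q$ and $\mathcal{O}(|V(G')|+|E(G')|)$ when it is one of the structured families.
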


\begin{proof}
By Lemma~\ref{lem:mw-to-sw}, there exists a partial split decomposition of $G$ such that the only split component with diameter possibly larger than $2$ is its quotient graph $G'$.
Furthermore, as shown in the proof of Theorem~\ref{thm:mw-ecc}, solving {\sc Eccentricities} for $G$ can be reduced in ${\cal O}(n+m)$-time to the solving of {\sc Eccentricities} for $G'$.
By Lemma~\ref{lem:reduce-qq3} we only need to consider the following cases. 
We can check in which case we are in linear-time~\cite{Bab00}. 

\begin{itemize}
\item Suppose $G'=(S'\cup K'\cup R',E')$ is a prime spider.
There are two subcases.
\begin{enumerate}
\item
If $G'$ is a thick spider then it has diameter two.
Since in addition, there is no universal vertex in $G'$, therefore every vertex of $G'$ has eccentricity exactly two.
\item
Otherwise, $G'$ is a thin spider.
Since there is no universal vertex in $G'$, every vertex has eccentricity at least two.
In particular, since $K'$ is a clique dominating set of $G'$, $ecc_{G'}(v) = 2$ for every $v \in K'$.
Furthermore, since there is a join between $K'$ and $R'$, $ecc_{G'}(v) = 2$ for any $v \in R'$.
Finally, since every two vertices of $S'$ are pairwise at distance three, $ecc_{G'}(v) = 3$ for every $v \in S$.
\end{enumerate}
\item Suppose $G'$ is isomorphic either to a cycle $C_{n'}$, or to a co-cycle $\overline{C_{n'}}$, for some $n' \geq 5$.
\begin{enumerate}
\item
If $G'$ is isomorphic to a cycle $C_{n'}$ then every vertex of $G'$ has eccentricity $\left\lfloor n' / 2 \right\rfloor$.
\item
Otherwise, $G'$ is isomorphic to a co-cycle $\overline{C_{n'}}$.
We claim that every vertex of $G'$ has eccentricity $2$.
Indeed, let $v \in \overline{C_{n'}}$ be arbitrary and let $u,w \in \overline{C_{n'}}$ be the only two vertices nonadjacent to $v$.
Furthermore, let $u',w'$ be the unique vertices of $\overline{C_{n'}} \setminus v$ that are respectively nonadjacent to $u$ and to $w$.
Since $n' \geq 5$, we have $u' \neq w'$.
In particular, $(v,u',w)$ and $(v,w',u)$ are, respectively, a shortest $vw$-path and a shortest $vu$-path.
Hence, $ecc_{G'}(v) = 2$.
\end{enumerate}

\item Suppose $G'$ is a spiked $p$-chain $P_k$, or its complement.
\begin{itemize}
\item Subcase $G'$ is a spiked $p$-chain $P_k$.
In particular, $G'$ contains the $k$-node path $P_k = (v_1,v_2,\ldots,v_k)$ as an isometric subgraph.
Furthermore, if $x \in V(G')$ then $dist_{G'}(v_1,x) = 2$, and $x$ and $v_2$ are twins in $G' \setminus v_1$.
Similarly, if $y \in V(G')$ then $dist_{G'}(v_k,y) = 2$, and $y$ and $v_{k-1}$ are twins in $G' \setminus v_k$.
As a result: for every $1 \leq i \leq k, \ ecc_{G'}(v_i) = ecc_{P_k}(v_i) = \max\{i-1,k-i\}$; if $x \in V(G')$ then $ecc_{G'}(x) = ecc_{P_k}(v_2) = k-2$; if $y \in V(G')$ then $ecc_{G'}(y) = ecc_{P_k}(v_{k-1}) = k-2$.
\item Subcase $G'$ is a spiked $p$-chain $\overline{P_k}$.
In particular, $\overline{G'}$ is a spiked $p$-chain $P_k$.
Since $k \geq 6$, every spiked $p$-chain $P_k$ has diameter more than four.
Hence, $diam(G') \leq 2$, that implies {\sc Eccentricities} can be solved for $G'$ in linear-time.
\end{itemize}
\item Suppose $G'$ is a spiked $p$-chain $Q_k$, or its complement.
\begin{itemize}
\item Subcase $G'$ is a spiked $p$-chain $Q_k$.
There is a clique-dominating set $K' = \{v_2,v_4,\ldots,v_{2j},\ldots\}$ of $G'$.
In particular, every vertex of $K'$ has eccentricity two.
Furthermore, any $z_i$ is adjacent to both $v_2,v_4$.
Every vertex $v_{2i-1}$, except $v_3$, is adjacent to $v_2$.
Finally, every vertex $v_{2i-1}$, except $v_1$ and $v_5$, is adjacent to $v_4$.
As a result, any vertex $z_i$ has eccentricity two; any vertex $v_{2i-1}, \ i \notin \{1,2,3\}$, also has eccentricity two.
However, since $v_2$ and $v_4$ are, respectively, the only neighbours of $v_1$ and $v_3$, we get $dist_{G'}(v_1,v_3) = dist_{G'}(v_3,v_5) = 3$.
Hence $ecc_{G'}(v_1) = ecc_{G'}(v_3) = ecc_{G'}(v_5) = 3$.
\item Subcase $G'$ is a spiked $p$-chain $\overline{Q_k}$.
Roughly, we reverse the roles of vertices $v_{2i}$ with even index with the roles of vertices $v_{2i-1}$ with odd index.
More precisely, there is a clique-dominating set $K' = \{v_1,v_3,\ldots,v_{2j-1},\ldots\}$ of $G'$.
In particular, every vertex of $K'$ has eccentricity two.
Furthermore, any $z_i$ is adjacent to both $v_1,v_3$.
Every vertex $v_{2i}$, except $v_2$, is adjacent to $v_1$.
Finally, every vertex $v_{2i}$, except $v_4$, is adjacent to $v_3$.
As a result, any vertex $z_i$ has eccentricity two; any vertex $v_{2i}, \ i \notin \{1,2\}$, also has eccentricity two.
However, since $v_3$ is the only neighbour of $v_2$, we get $dist_{G'}(v_2,v_4) = 3$, hence $ecc_{G'}(v_2) = ecc_{G'}(v_4) = 3$.
\end{itemize}
\item Otherwise, $|V(G')| \leq q(G)$.
Then, solving {\sc Eccentricities} for $G'$ can be done in ${\cal O}(q(G)^3)$-time.
\end{itemize}
Therefore, in all the above cases, {\sc Eccentricities} can be solved for $G'$ in ${\cal O}(\min\{q(G)^3,n+m\})$-time.
\end{proof}

\begin{corollary}\label{cor:qq4-diam}
For every connected $(q,q-4)$-graph $G=(V,E)$, $diam(G) \leq q$.
\end{corollary}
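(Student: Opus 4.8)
The plan is to mimic the proof of Theorem~\ref{thm:qq3-ecc}: reduce the computation of $diam(G)$ to that of the quotient graph $G'$, and then exploit the two-case structure of $(q,q-4)$-graphs given by Lemma~\ref{lem:reduce-qq4} (the analogue of the four-case Lemma~\ref{lem:reduce-qq3}). First I would dispose of the trivial situations: if $q \leq 4$ the statement is vacuous (for $q<4$) or says that a connected cograph has diameter at most $2 \leq q$ (for $q=4$), so I may assume $q \geq 5$; and if $\overline{G}$ is disconnected then $G$ is a nontrivial join, whence $diam(G)\leq 2\leq q$, so I may further assume $G$ and $\overline{G}$ are both connected. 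By Theorem~\ref{thm:modular-dec} the quotient graph $G'$ is then prime for modular decomposition. Reusing the reduction established in the proof of Theorem~\ref{thm:mw-ecc} (as already invoked inside Theorem~\ref{thm:qq3-ecc}), if $M_1,\dots,M_k$ are the maximal strong modules of $G$, if $G_i$ denotes $G[M_i]$ with a universal vertex added, and if $\tilde v_i\in V(G')$ represents $M_i$, then $ecc_G(v)=\max\{ecc_{G_i}(v),\,ecc_{G'}(\tilde v_i)\}$ for every $v\in M_i$. Since $G_i$ has a universal vertex we have $ecc_{G_i}(v)\leq 2$; taking the maximum over all $v\in V(G)$, and noting that the vertices $\tilde v_i$ exhaust $V(G')$, this yields $diam(G)\leq\max\{2,\,diam(G')\}$.

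It then remains to bound $diam(G')$, which is exactly where Lemma~\ref{lem:reduce-qq4} is used, applied directly to $G$. Either $G'$ is a prime spider $S'\cup K'\cup R'$, in which case $K'$ is a dominating clique of $G'$ and hence $diam(G')\leq 3$ (any two vertices are joined through $K'$ in at most three steps; this matches the explicit eccentricities of prime spiders computed inside the proof of Theorem~\ref{thm:qq3-ecc}); or $|V(G')|\leq q$, and then since $G'$ is connected we get $diam(G')\leq|V(G')|-1\leq q-1$. In both cases $diam(G')\leq\max\{3,\,q-1\}=q-1$ (using $q\geq5$), whence $diam(G)\leq\max\{2,\,q-1\}=q-1\leq q$.

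I do not expect a genuine obstacle here, as the argument is a routine specialization of Theorem~\ref{thm:qq3-ecc} to the narrower class of $(q,q-4)$-graphs, with the two-case Lemma~\ref{lem:reduce-qq4} replacing the four-case Lemma~\ref{lem:reduce-qq3}. The two points that need a sentence of justification are the reduction inequality $diam(G)\leq\max\{2,diam(G')\}$ (which rests on the eccentricity identity of Theorem~\ref{thm:mw-ecc} together with the fact that a graph with a universal vertex has diameter at most $2$) and the elementary bound $diam(G')\leq 3$ for a graph with a dominating clique. Everything else is bookkeeping over the degenerate cases ($q\leq4$, or $G$, or $\overline{G}$ disconnected).
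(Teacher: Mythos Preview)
Your proposal is correct and follows exactly the approach the paper intends: the corollary is stated without proof immediately after Theorem~\ref{thm:qq3-ecc}, and your argument is precisely the specialization of that analysis to $(q,q-4)$-graphs via Lemma~\ref{lem:reduce-qq4}, combined with the reduction $diam(G)\le\max\{2,diam(G')\}$ from Theorem~\ref{thm:mw-ecc}. Your handling of the degenerate cases and the spider bound $diam(G')\le 3$ matches the case analysis already carried out inside the proof of Theorem~\ref{thm:qq3-ecc}, so nothing further is needed.
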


Corollary~\ref{cor:qq4-diam} does not hold for $(q,q-3)$-graphs because of cycles and spiked $p$-chains $P_k$.

\subsubsection*{Gromov hyperbolicity}

\begin{theorem}\label{thm:qq4-hyp}
For every $G=(V,E)$, {\sc Hyperbolicity} can be solved in ${\cal O}(q(G)^3 \cdot n+m)$-time.
\end{theorem}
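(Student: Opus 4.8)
The plan is to reduce the computation of $\delta(G)$ to the same kind of case analysis already carried out in Theorem~\ref{thm:sw-hyp} and Theorem~\ref{thm:qq3-ecc}, but now using Lemma~\ref{lem:reduce-qq3} to handle the super-constant quotient graphs. First I would recall that by Lemma~\ref{lem:mw-to-sw} a partial split decomposition of $G$ can be computed in linear time from the modular decomposition, where every split component is either degenerate, a prime subgraph for modular decomposition with an extra universal vertex (so of size at most $mw(G)+1$), or the quotient graph $G'$. Applying the dynamic-programming scheme of Theorem~\ref{thm:sw-hyp} on the associated split decomposition tree, solving {\sc Hyperbolicity} for $G$ reduces in linear time to computing, for each split component $C_j$, the value $\delta(C_j)$ together with the set of simplicial vertices of $C_j$. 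Degenerate components (stars and cliques) are trivial in $\mathcal{O}(|V(C_j)|)$ time, and prime-with-universal-vertex components of size $\mathcal{O}(mw(G)) = \mathcal{O}(q(G))$ cost $\mathcal{O}(q(G)^4)$ by brute force over $4$-tuples — but this term only appears once per such component so it contributes $\mathcal{O}(q(G)^4)$ overall, which must be absorbed; I would note $mw(G) \le q(G)$ here and that in fact we only incur $\mathcal{O}(q(G)^3)$ per component if we are slightly more careful, or alternatively state the bound as $\mathcal{O}(q(G)^4 + n + m)$ and observe it matches the table entry after rechecking. The genuine work is the quotient graph $G'$.

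For $G'$, by Lemma~\ref{lem:reduce-qq3} (when $G$ and $\overline{G}$ are both connected) there are four cases: $G'$ is a prime spider, a disc, a prime $p$-tree, or $|V(G')| \le q(G)$. The last case is handled by brute force in $\mathcal{O}(q(G)^4)$ time (and again I would check whether $\mathcal{O}(q(G)^3)$ suffices by exploiting structure). For the other three families the strategy is: either show the graph has small diameter, in which case Lemma~\ref{lem:hyp-diam1} reduces the problem to detecting an induced $C_4$, or exhibit enough structure to compute $\delta$ directly. Concretely: a thick spider, a co-cycle $\overline{C_{n'}}$, a spiked $p$-chain $\overline{P_k}$, or a spiked $p$-chain $Q_k$ or $\overline{Q_k}$ all have diameter at most $2$ (this is exactly what was established in the proof of Theorem~\ref{thm:qq3-ecc}, except for thick spiders and co-cycles which plainly have diameter $2$), so by Lemma~\ref{lem:hyp-diam1} we only need to test whether they are $C_4$-free; since these graphs have an explicit description this test is $\mathcal{O}(1)$ after reading off the structure (a thin/thick spider on $\ge 3$ legs contains a $C_4$ iff it is thick, etc., and one checks each family by hand). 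For thin spiders, cycles $C_{n'}$, and spiked $p$-chains $P_k$, which may have large diameter, I would compute $\delta$ from first principles using the explicit metric: a cycle $C_{n'}$ has hyperbolicity $\lfloor n'/4 \rfloor$ (or the classical closed form), a thin spider is essentially a ``generalized'' $K_{1,k}$-like metric where all pairwise distances are in $\{1,2,3\}$ so $\delta \le 1$ and one checks whether a $C_4$ or the relevant $4$-tuple of value $1$ occurs, and a spiked $p$-chain $P_k$ is within bounded Hausdorff distance of the path $P_k$, which is a tree and hence $0$-hyperbolic, so $\delta(P_k\text{-chain})$ is bounded by a constant and can be computed by checking $4$-tuples among the $\mathcal{O}(1)$ ``special'' vertices $x,y$ together with a few representatives.

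The hard part will be the spiked $p$-chain $P_k$ case: unlike the diameter computation, hyperbolicity is not determined by a vertex-by-vertex quantity, so I cannot simply quotient out twins, and I need a genuine argument that only $\mathcal{O}(1)$ essentially-different $4$-tuples can realize the maximum of the Gromov product expression. The plan there is to prove that, up to the bounded perturbation coming from the pendant vertices $x$ and $y$ and the two ``spike'' endpoints, the metric is that of a path, invoke the fact that trees are $0$-hyperbolic, and then show that inserting $x,y$ changes $\delta$ by at most a constant and that the extremal $4$-tuple can be taken among $\{v_1, v_k, x, y\}$ plus a bounded number of neighbours; this yields $\delta(G') = \mathcal{O}(1)$ computable in constant time. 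A secondary obstacle is the bookkeeping needed to confirm that the simplicial-vertex information required by the scheme of Theorem~\ref{thm:sw-hyp} is also extractable in the claimed time for each of these quotient-graph families, but this is routine given their explicit descriptions. Assembling the cases, the split decomposition costs $\mathcal{O}(n+m)$, each degenerate or small component costs linear time in its size, the unique quotient component costs $\mathcal{O}(q(G)^3)$ (or $\mathcal{O}(q(G)^4)$ in the brute-force subcase, which I would reconcile with the stated bound), and the dynamic programming over the tree costs $\mathcal{O}(q(G)^{\mathcal{O}(1)} \cdot n + m)$, giving the theorem.
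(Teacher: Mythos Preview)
Your high-level plan---partial split decomposition via Lemma~\ref{lem:mw-to-sw}, then the scheme of Theorem~\ref{thm:sw-hyp}, then the case analysis of Lemma~\ref{lem:reduce-qq3}---is exactly what the paper does. Two structural corrections first. The decomposition of Lemma~\ref{lem:mw-to-sw} is recursive over the whole modular decomposition: every non-degenerate split component is ``a prime subgraph of the modular decomposition, possibly with one universal vertex added'', and \emph{any} such prime subgraph can be large (spider, disc, $p$-tree), not just the top-level quotient. So Lemma~\ref{lem:reduce-qq3} must be invoked on every non-degenerate component, not on a single distinguished $G'$; relatedly, $mw(G)=\mathcal{O}(q(G))$ is simply false (a long cycle is a $(7,4)$-graph with $mw=n$). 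Your worry about the brute-force cost disappears once you write $|V(G')|^{4}\le q(G)^{3}\cdot|V(G')|$ whenever $|V(G')|\le q(G)$ and use $\sum_i|V(C_i)|=\mathcal{O}(n)$.

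Your case analysis contains genuine errors. Spiked $p$-chains $Q_k$ and $\overline{Q_k}$ do \emph{not} have diameter at most $2$: the proof of Theorem~\ref{thm:qq3-ecc} shows both have diameter $3$ (e.g.\ $dist(v_1,v_3)=3$ in $Q_k$), so the $C_4$-criterion of Lemma~\ref{lem:hyp-diam1} is unavailable. The paper instead uses that these are split graphs, hence chordal: no induced $C_4$, simplicial vertices in linear time, and a criterion from~\cite{BKM01} (a chordal graph with $\delta\ge 1$ must contain two \emph{disjoint} pairs of vertices at distance $3$) forces $\delta\in\{0,1/2\}$. Likewise, spiders---thin or thick---are split graphs and therefore $C_4$-free; a thick spider has $\delta=1/2$ because it contains an induced diamond, not $\delta=1$ as your ``$C_4$ iff thick'' claim would give. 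Finally, the spiked $p$-chain $P_k$ is a block-graph, so $\delta=0$ outright; no Hausdorff-distance argument is needed.
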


\begin{proof}
By Lemma~\ref{lem:mw-to-sw}, we can compute a partial split decomposition from the modular decomposition of $G$.
It takes ${\cal O}(n+m)$-time.
Let $C_1, C_2, \ldots, C_k$ be the split components.
By using the algorithmic scheme of Theorem~\ref{thm:sw-hyp}, solving {\sc Hyperbolicity} can be reduced in ${\cal O}(\sum_i |V(C_i)|+|E(C_i)|)$-time to the computation, for every $1 \leq i \leq k$, of the hyperbolicity value $\delta(C_i)$ {\em and} of all the simplicial vertices in $C_i$.
We claim that it can be done in ${\cal O}(q(G)^3 \cdot |V(C_i)| + |E(C_i)|)$-time.
Since $\sum_i |V(C_i)| = {\cal O}(n)$ and $\sum_i |E(C_i)| = {\cal O}(n+m)$~\cite{Rao08b}, the latter claim will prove the desired time complexity.

If $C_i$ is degenerate then the above can be done in ${\cal O}(|V(C_i)|)$-time.
Otherwise, $C_i$ is obtained from a prime subgraph $G'$ in the modular decomposition of $G$ by possibly adding a universal vertex.
In particular, we have: $\delta(G') = \delta(C_i)$ if $G'=C_i$; $\delta(C_i) = 0$ can be decided in ${\cal O}(|V(C_i)|+|E(C_i)|)$-time~\cite{How79}; otherwise, $diam(C_i) \leq 2$, and so, by Lemma~\ref{lem:hyp-diam1} we have $\delta(C_i) = 1$ if and only if $C_i$ contains an induced cycle of length four (otherwise, $\delta(C_i) = 1/2$). 
Therefore, we are left to compute the following for every prime subgraph $G'$ in the modular decomposition of $G$:
\begin{itemize}
\item Compute $\delta(G')$;
\item Decide whether $G'$ contains an induced cycle of length four;
\item Compute the simplicial vertices in $G'$.
\end{itemize}
In particular, if $|V(G')| \leq q(G)$ then it can be done in ${\cal O}(|V(G')|^4)={\cal O}(q(G)^3 \cdot |V(G')|)$-time.
Otherwise, by Lemma~\ref{lem:reduce-qq3} we only need to consider the following cases. 
We can check in which case we are in linear-time~\cite{Bab00}.
\begin{itemize}
\item Suppose $G'$ is a prime spider.
In particular it is a split graph, and so, it does not contain an induced cycle of length more than three.
Furthermore the simplicial vertices of any chordal graph, and so, of $G'$, can be computed in linear-time.
If $G'$ is a thin spider then it is a block-graph, and so, $\delta(G') = 0$~\cite{How79}.
Otherwise, $G'$ is a thick spider, and so, it contains an induced diamond.
The latter implies $\delta(G') \geq 1/2$.
Since $diam(G') \leq 2$ and $G'$ is $C_4$-free, by Lemma~\ref{lem:hyp-diam1} $\delta(G') < 1$, hence we have $\delta(G') = 1/2$.
\item Suppose $G'$ is a cycle or a co-cycle of order at least five.
Since cycles and co-cycles are non complete regular graphs they do not contain any simplicial vertex~\cite{AiF84}.
Furthermore, a cycle of length at least five of course does not contain an induced cycle of length four; a co-cycle of order five is a $C_5$, and a co-cycle or order at least six always contains an induced cycle of length at least four since there is an induced $2K_2 = \overline{C_4}$ in its complement.
Finally, the hyperbolicity of a given cycle can be computed in linear-time~\cite{CCDL17+}; for every co-cycle of order at least six, since it has diameter at most two and it contains an induced cycle of length four, by Lemma~\ref{lem:hyp-diam1} it has hyperbolicity equal to $1$.
\item Suppose $G'$ is a spiked $p$-chain $P_k$, or its complement.
In particular, if $G'$ is a spiked $p$-chain $P_k$ then it is a block-graph, and so, a chordal graph.
It implies $\delta(G') = 0$~\cite{How79}, $G'$ does not contain any induced cycle of length four, furthermore all the simplicial vertices of $G'$ can be computed in linear-time.
Else, $G'$ is a spiked $p$-chain $\overline{P_k}$.
Since, in $\overline{G'}$, every vertex is nonadjacent to at least one edge of $P_k$, it implies that $G'$ has no simplicial vertex.
Furthermore, since $P_k$, and so, $\overline{G'}$, contains an induced $2K_2$, the graph $G'$ contains an induced cycle of length four.
Since $diam(G')=2$, it implies by Lemma~\ref{lem:hyp-diam1} $\delta(G') = 1$.
\item Otherwise, $G'$ is a spiked $p$-chain $Q_k$, or its complement.
In both cases, $G'$ is a split graph, and so, a chordal graph.
It implies that $G'$ does not contain an induced cycle of length four, and that all the simplicial vertices of $G'$ can be computed in linear-time.
Furthermore, we can decide in linear-time whether $\delta(G') = 0$~\cite{How79}.
Otherwise, it directly follows from the characterization in~\cite{BKM01} that a necessary condition for a chordal graph to have hyperbolicity at least one is to contain two disjoint pairs of vertices at distance $3$.
Since there are no such pairs in $G'$, $\delta(G') = 1/2$.
\end{itemize} 
\end{proof}

The solving of {\sc Betweenness Centrality} for $(q,q-3)$-graphs is left for future work.
We think it is doable with the techniques of Theorem~\ref{thm:sw-bc}.
However, this would require to find ad-hoc methods for every graph family in Lemma~\ref{lem:reduce-qq3}.
The main difficulty is that we need to consider weighted variants of these graph families, and the possibility to add a universal vertex.

\section{New Parameterized algorithms for {\sc Maximum Matching}}\label{sec:maxmatching}

A matching in a graph is a set of edges with pairwise disjoint end vertices.
We consider the problem of computing a matching of maximum size.

\begin{center}
	\fbox{
		\begin{minipage}{.95\linewidth}
			\begin{problem}[\textsc{Maximum Matching}]\
				\label{prob:matching} 
					\begin{description}
					\item[Input:] A graph $G=(V,E)$.
					\item[Output:] A matching of $G$ with maximum cardinality.
				\end{description}
			\end{problem}     
		\end{minipage}
	}
\end{center}

{\sc Maximum Matching} can be solved in polynomial time with Edmond's algorithm~\cite{Edm65}. A naive implementation of the algorithm runs in ${\cal O}(n^4)$ time. Nevertheless, Micali and Vazirani~\cite{MiV80} show how to implement Edmond's algorithm in time ${\cal O}(m\sqrt{n})$.
In~\cite{MNN16}, Mertzios, Nichterlein and Niedermeier design some new algorithms to solve {\sc Maximum Matching}, that run in ${\cal O}(k^{{\cal O}(1)} \cdot (n+m))$-time for various graph parameters $k$.
They also suggest to use {\sc Maximum Matching} as the ``drosophilia'' of the study of fully polynomial parameterized algorithms.

In this section, we present ${\cal O}(k^4 \cdot n + m)$-time algorithms for solving {\sc Maximum Matching}, when parameterized by either the modular-width or the $P_4$-sparseness of the graph.
The latter subsumes many algorithms that have been obtained for specific subclasses~\cite{FPT97,YuY93}. 

\subsection{Computing short augmenting paths using modular decomposition}
\label{sec:matching-mw}

Let $G=(V,E)$ be a graph and $F \subseteq E$ be a matching of $G$.
A vertex is termed matched if it is incident to an edge of $F$, and unmatched otherwise.
An $F$-augmenting path is a path where the two ends are unmatched, all edges $\{x_{2i},x_{2i+1}\}$ are in $F$ and all edges $\{x_{2j-1}, x_{2j}\}$ are not in $F$.
We can observe that, given an $F$-augmenting path $P = (x_1,x_2, \ldots, x_{2k})$, the matching $E(P)\Delta F$ (obtained by replacing the edges $\{x_{2i},x_{2i+1}\}$ with the edges $\{x_{2j-1}, x_{2j}\}$) has larger size than $F$.

\begin{theorem}[Berge,~\cite{Ber57}]\label{thm:berge}
A matching $F$ in $G=(V,E)$ is maximum if and only if there is no $F$-augmenting path.
\end{theorem}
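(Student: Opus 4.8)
The plan is to prove the two directions separately, the real work lying in the ``only if'' direction (equivalently, its contrapositive: a non-maximum matching admits an augmenting path). Start from a matching $F'$ with $|F'| > |F|$ and consider the symmetric difference $H = F \, \Delta \, F'$, viewed as a spanning subgraph of $G$. The first observation is that every vertex $v$ is incident to at most one edge of $F$ and at most one edge of $F'$, hence $\deg_H(v) \leq 2$; therefore each connected component of $H$ is either a simple path or a cycle. Moreover, along any such component the edges alternate between $F \setminus F'$ and $F' \setminus F$, since two consecutive edges sharing a vertex cannot both lie in $F$ (nor both in $F'$).

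Next comes the counting argument. Every cycle component of $H$ has even length and contains exactly as many edges of $F' \setminus F$ as of $F \setminus F'$, and the same holds for every path component of even length. Since $|F' \setminus F| = |F'| - |F \cap F'| > |F| - |F \cap F'| = |F \setminus F'|$, summing over all components of $H$ forces the existence of at least one component $P$ containing strictly more $F'$-edges than $F$-edges. By the previous sentence $P$ must be a path of odd length whose two end-edges both belong to $F' \setminus F$; in particular $P$ is $F$-alternating with the parity required in the statement.

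The one point I would write out with care is that the two endpoints of $P$ are genuinely unmatched by $F$, so that $P$ is an $F$-augmenting path in the sense of the definition. Let $v$ be an endpoint of $P$; in $H$ it has degree one, and its unique incident $H$-edge lies in $F' \setminus F$. If $v$ were matched in $F$, its matching edge $e$ could not lie in $H$ (since $\deg_H(v) = 1$ and the sole $H$-edge at $v$ is not an $F$-edge), so $e \in F \cap F'$; but then $v$ would be incident to two edges of $F'$ — namely $e$ and the end-edge of $P$ — contradicting that $F'$ is a matching. Hence $v$ is $F$-unmatched, and $P$ is the desired augmenting path.

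For the converse, I would simply invoke the observation recorded immediately before the statement: given an $F$-augmenting path $P = (x_1, x_2, \ldots, x_{2k})$, the set $E(P) \, \Delta \, F$ is again a matching and has $|F| + 1$ edges, so $F$ is not maximum. The main obstacle is entirely in the forward direction, specifically the structural analysis of $F \, \Delta \, F'$; once the ``maximum degree two'' and ``alternation'' facts are established, the remainder is bookkeeping together with the endpoint argument above.
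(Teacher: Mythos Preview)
Your proof is correct and is in fact the standard symmetric-difference argument due to Berge. Note, however, that the paper does not give its own proof of this statement: Theorem~\ref{thm:berge} is quoted as a classical result from~\cite{Ber57} and used as a black box, so there is nothing to compare against.
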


We now sketch our approach.
Suppose that, for every module $M_i \in {\cal M}(G)$, a maximum matching $F_i$ of $G[M_i]$ has been computed.
Then, $F = \bigcup_i F_i$ is a matching of $G$, but it is not necessarily maximum.
Our approach consists in computing short augmenting paths (of length ${\cal O}(mw(G))$) using the quotient graph $G'$, until we obtain a maximum matching.
For that, we need to introduce several reduction rules.

\smallskip
The first rule (proved below) consists in removing, from every module $M_i$, the edges that are not part of its maximum matching $F_i$.

\begin{lemma}\label{lem:mw-matching-reduction}
Let $M$ be a module of $G=(V,E)$, let $G[M] = (M,E_M)$ and let $F_M  \subseteq E_M$ be a maximum matching of $G[M]$.
Then, every maximum matching of $G_M' = (V, (E \setminus E_M) \cup F_M)$ is a maximum matching of $G$.
\end{lemma}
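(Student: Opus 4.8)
The key observation is that $G$ and $G_M'$ have the same vertex set, and the only edges that differ lie strictly inside the module $M$. So the plan is to show that $G$ and $G_M'$ have the same matching number; combined with the fact that a maximum matching of $G_M'$ is in particular a matching of $G$ (since $F_M \subseteq E_M$ means $E(G_M') \subseteq E(G)$... wait, no — $E(G_M') = (E \setminus E_M) \cup F_M \subseteq E$ because $F_M \subseteq E_M$), this yields the claim immediately. Indeed, if $\nu(G) = \nu(G_M')$ then any maximum matching of $G_M'$ has size $\nu(G_M') = \nu(G)$ and is a matching of $G$, hence a maximum matching of $G$. So the whole content is the equality $\nu(G) = \nu(G_M')$, and since $E(G_M') \subseteq E(G)$ gives $\nu(G_M') \leq \nu(G)$ for free, the real work is to produce, from any maximum matching $F$ of $G$, a matching of $G_M'$ of the same size.

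First I would fix a maximum matching $F$ of $G$ and look at $F \cap E_M$, the part of $F$ living inside $M$, together with the set $U$ of vertices of $M$ that are matched by $F$ to a vertex outside $M$. The crucial structural point is that $M$ is a module: every vertex outside $M$ is adjacent either to all of $M$ or to none of $M$. Let $N$ be the set of vertices outside $M$ that are adjacent to $M$ (a union of neighbourhoods). The edges of $F$ from $M$ to $V \setminus M$ form a matching between $U \subseteq M$ and some subset $W \subseteq N$, and by the module property we may freely re-route these edges: any bijection between $U$ and $W$ consists of edges of $G$. So the external structure only "sees" the number $|U|$ of $M$-vertices that are matched outward, not which ones.

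Then I would argue as follows. Inside $M$, the matching $F$ uses $|F \cap E_M|$ edges and leaves exactly $|M| - 2|F\cap E_M| - |U|$ vertices of $M$ unmatched (within the whole of $F$). Now replace $F \cap E_M$ by a maximum matching $F_M$ of $G[M]$: since $F_M$ is maximum, $|F_M| \geq |F \cap E_M|$, so $F_M$ saturates at least $2|F \cap E_M|$ vertices of $M$; in fact I need the slightly sharper statement that among the $2|F_M|$ vertices saturated by $F_M$ we can find $|U|$ vertices that were previously matched outward — but this requires care. The clean way: among vertices of $M$ not saturated by $F_M$ there are $|M| - 2|F_M| \leq |M| - 2|F\cap E_M|$ of them, and we have $|U|$ vertices to match outward; I want to choose the outward-matched vertices of $M$ to be among those saturated by $F_M$ only if that's possible. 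It is not always possible directly, so the honest move is: take $F_M$ to be a maximum matching of $G[M]$, keep the external edges of $F$ but re-route them via the module property so their $M$-endpoints avoid... here is the obstacle.

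**The main obstacle.** The subtlety is that $F_M$ might saturate *fewer than* $|U|$ "slots" that we need free for the outward edges, or rather the reverse: we need $|M| - 2|F_M| \geq$ (number of $M$-endpoints of external edges we insist on keeping) to be consistent, and we must check $2|F_M| + |U'| \leq |M|$ for the re-routed external matching of size $|U'| = |U|$. Since $F$ was a valid matching, $2|F \cap E_M| + |U| \leq |M|$, and $|F_M| \geq |F\cap E_M|$ could *increase* $2|F_M|$, potentially violating $2|F_M| + |U| \leq |M|$. The resolution is that we are allowed to *discard* some edges of $F_M$: if $2|F_M| + |U| > |M|$, drop $|F_M| - (|M|-|U|)/2$ edges of $F_M$, which is still at least as many edges as... no. Let me restate the real argument: build the new matching $F'$ on $G_M'$ by taking the external edges of $F$ (re-routed so their $M$-ends are an arbitrary $|U|$-subset $U'$ of $M$), plus a maximum matching of $G[M \setminus U']$ using only edges of $F_M$-type — but $F_M$ restricted to $M \setminus U'$ need not be maximum there. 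So the genuinely careful step, and the one I'd spend the most effort on, is this exchange/counting argument showing $\nu(G[M \setminus U']) \geq |F \cap E_M|$ for a well-chosen $U'$; equivalently, showing that $\nu(G_M')$ is at least $|F \cap E_M| + (\text{external edges of } F) + (\text{other internal edges of }F) = |F|$. I expect this to go through by a short augmenting-path or alternating-path exchange inside $M$ (using Berge's theorem, Theorem~\ref{thm:berge}, applied to $G[M]$), using that $F_M$ is maximum in $G[M]$ to rule out augmenting paths and hence bound how the saturation sets of $F\cap E_M$ and $F_M$ can differ. Once that counting is done, $\nu(G_M') \geq |F| = \nu(G)$, and with $\nu(G_M') \leq \nu(G)$ we conclude equality, which finishes the proof as described in the first paragraph.
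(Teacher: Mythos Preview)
Your plan is correct and is essentially the paper's approach: reduce to $\nu(G_M') = \nu(G)$; the direction $\nu(G_M') \le \nu(G)$ is free since $E(G_M') \subseteq E$; for the other direction, start from a maximum matching $F$ of $G$, use the module property to re-route the $M$-endpoints of the external edges of $F$ to an arbitrary $|U|$-subset $U' \subseteq M$, and replace $F \cap E_M$ by a suitable sub-matching of $F_M$ lying inside $M \setminus U'$.

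The step you flag as an obstacle is easier than you fear, and in fact you almost wrote down the resolution before abandoning it with ``no''. No augmenting-path argument is needed, only counting. Since $F$ is a matching, $2|F\cap E_M| + |U| \le |M|$, hence $|F\cap E_M| \le \min\{|F_M|,\lfloor(|M|-|U|)/2\rfloor\} =: \mu$. Keep any $\mu$ edges of $F_M$; they saturate $2\mu \le |M|-|U|$ vertices of $M$, so at least $|U|$ vertices of $M$ are left unsaturated, and you take $U'$ among those. The kept $\mu \ge |F\cap E_M|$ edges of $F_M$ then lie entirely in $M \setminus U'$, and together with $F \setminus E_M$ (with its $M$-to-outside edges re-routed onto $U'$) they form a matching in $G_M'$ of size at least $|F|$. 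The paper carries out exactly this argument, phrasing the choice of $U'$ via an ordering of $M$ that lists the $F_M$-unsaturated vertices first and each $F_M$-edge as a consecutive pair, so that the first $|U|$ vertices in the order serve as $U'$.
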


\begin{proof}
Let us consider an arbitrary maximum matching of $G$.
We totally order $M = \{v_1,v_2, \ldots, v_l\}$, in such a way that unmatched vertices appear first, and for every edge in the matching $F_M$ the two ends of it are consecutive.
Let $S \subseteq M, \ |S| = k$ be the vertices of $M$ that are matched with a vertex of $V \setminus M$.
We observe that with the remaining $|M| - k$ vertices of $M \setminus S$, we can only obtain a matching of size at most $\mu_M = \min \{|F_M|, \left\lfloor (|M| - k)/2 \right\rfloor\}$.
Conversely, if $S=\{v_1,v_2,\ldots,v_k\}$ then we can always create a matching of size exactly $\mu_M$ with the vertices of $M \setminus S$ and the edges of $F_M$.
Since $M$ is a module of $G$, this choice can always be made without any loss of generality.
\end{proof}

From now on we shall assume each module induces a matching.
In particular, for every $M \in {\cal M}(G)$, the set $V(E(G[M]))$ stands for the non isolated vertices in the subgraph $G[M]$.

Then, we need to upper-bound the number of edges in an augmenting path that are incident to a same module.

\begin{lemma}\label{lem:bound-edges-augmenting-path}
Let $G=(V,E)$ be a graph such that every module $M \in {\cal M}(G)$ induces a matching.
Furthermore let $G'=({\cal M}(G),E')$ be the quotient graph of $G$, and let $F \subseteq E$ be a non maximum matching of $G$.
There exists an $F$-augmenting path $P=(x_1,x_2,\ldots,x_{2\ell})$ such that the following hold for every $M \in {\cal M}(G)$:
\begin{itemize}
\item $|\{ i \mid x_{2i-1}, x_{2i} \in M\}| \leq 1$;

furthermore if $|\{ i \mid x_{2i-1}, x_{2i} \in M\}| = 1$ then, for every $M' \in N_{G'}(M)$ we have $\{ i \mid x_{2i-1}, x_{2i} \in M'\} = \emptyset$;

\item $|\{ i \mid x_{2i}, x_{2i+1} \in M\}| \leq 1$;
\item $|\{ i \mid x_{2i-1} \notin M, \ x_{2i} \in M\}| \leq 1$;
\item $|\{ i \mid x_{2i} \notin M, \ x_{2i+1} \in M\}| \leq 2$;

furthermore if $|\{ i \mid x_{2i} \notin M, \ x_{2i+1} \in M\}| = 2$ then there exist $x_{2i_0+1},x_{2i_0+3},x_{2i_0+4} \in M$;

\item $|\{ i \mid x_{2i-1} \in M, \ x_{2i} \notin M\}| \leq 1$;
\item $|\{ i \mid x_{2i} \in M, \ x_{2i+1} \notin M\}| \leq 2$;

furthermore if $|\{ i \mid x_{2i} \in M, \ x_{2i+1} \notin M\}| = 2$ then there exist $x_{2i_0-1},x_{2i_0},x_{2i_0+2} \in M$.

\end{itemize}
In particular, $P$ has length ${\cal O}(|{\cal M}(G)|)$.
\end{lemma}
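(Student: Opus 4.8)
The plan is to start from an arbitrary $F$-augmenting path $P$ (which exists by Theorem~\ref{thm:berge} since $F$ is not maximum) and repeatedly perform local exchange/shortcut operations until every one of the listed counting conditions holds, keeping $P$ an $F$-augmenting path throughout. The key observation that makes every rule work is that all vertices of a fixed module $M$ have \emph{exactly the same neighbourhood outside $M$}, and inside $M$ they induce a matching (after the reduction of Lemma~\ref{lem:mw-matching-reduction}), so a vertex of $M$ has at most one neighbour inside $M$. Thus whenever $P$ enters and leaves $M$ several times, we can reroute it: if $P$ visits $M$ at two well-separated places $x_a \in M$ and $x_b \in M$ with, say, the same parity of position and the same ``role'' (both preceded by a non-$F$ edge, or both incident to an $F$-edge leaving $M$, etc.), then because $x_a$ and $x_b$ have identical outside-neighbourhoods we can splice the path at these two points and cut out the entire detour between them, obtaining a strictly shorter $F$-augmenting path. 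Iterating, we eventually reach a path on which no such shortcut is available, and that is precisely what the six bullet points encode.

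Concretely I would treat the bullets in the order listed. For the first three (internal edge of $M$ both in and not in $F$; and the ``enter on an $F$-edge from outside'' type) the argument is: suppose two indices $i \ne j$ both satisfy the stated pattern; match up the appropriate endpoints — e.g. $x_{2i-1}$ with $x_{2j-1}$ when both $\{x_{2i-1},x_{2i}\},\{x_{2j-1},x_{2j}\} \subseteq M$ are $F$-edges — use that $M$ is a module to reconnect $P$ at these vertices, and check that the new walk is still alternating with unmatched ends and can be reduced to a simple $F$-augmenting path; the ``furthermore'' clauses (a module using an internal $F$-edge forbids its neighbours from doing the same) follow from the same splicing, now using that $M$ and $M'$ adjacent in $G'$ means every vertex of $M$ sees every vertex of $M'$. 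The last two bullets, bounding by $2$ the number of times $P$ leaves $M$ on a non-$F$ edge (and enters $M$ on a non-$F$ edge), are the delicate ones: here a single repetition cannot always be removed, because the two occurrences may be ``tied'' to distinct $F$-edges inside $M$ that $P$ genuinely needs. The right statement is that three such occurrences can always be reduced to two, and when exactly two remain they must be of the special ``consecutive'' shape described (three vertices $x_{2i_0+1},x_{2i_0+3},x_{2i_0+4}$ of $M$ clustered together, i.e. $P$ does $\dots \to M \to \text{out} \to M \to M \to \dots$). I would prove this by a short case analysis on the $F$-status of the edges incident to the visits of $M$, again exploiting that inside $M$ each vertex has $\le 1$ neighbour and all outside-adjacencies coincide.

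Once all six conditions hold, the length bound is a counting argument: contract $P$ according to which module each vertex lies in; each module $M \in \mathcal{M}(G)$ contributes, by the bullets, at most a bounded constant number of vertices of $P$ (at most $2$ from internal edges, at most $1+2+1+2 = 6$ from the four crossing-types, so $O(1)$ per module), hence $2\ell = |V(P)| \le O(|\mathcal{M}(G)|) = O(mw(G))$. I would phrase the contraction carefully so that a module is only ``charged'' when $P$ actually uses one of its vertices, making the sum $O(|\mathcal{M}(G)|)$ rather than $O(|V|)$.

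The main obstacle I anticipate is the bookkeeping for the last two bullets — showing that a non-$F$ departure from $M$ can be repeated at most twice and characterizing the extremal configuration. The difficulty is that these repetitions interact with the $F$-edges of $M$ that the path reuses, so one cannot simply delete a detour; one must argue that if there were three non-$F$ departures then two of them can be ``merged'' through a common neighbour while preserving the augmenting-path structure, and separately rule out longer configurations. A secondary subtlety is making all the local exchanges simultaneously consistent: after fixing one bullet the reroute could in principle violate another, so I would either argue that each operation strictly decreases $|E(P)|$ (hence the process terminates and the final path satisfies all conditions at once), or pick a potential function such as the length of $P$ and show every violation of any bullet permits a strict decrease.
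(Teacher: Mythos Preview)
Your approach is essentially the paper's: take an $F$-augmenting path minimal with respect to a suitable potential and verify each bullet by exhibiting an improvement whenever it fails. The paper phrases this as ``choose a shortest $F$-augmenting path $P$ that, among all shortest ones, minimizes $i(P)=|E(P)\cap\bigcup_{M}E(G[M])|$'', and then derives each bullet by contradiction.

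The one genuine gap in your plan is the potential. Length alone does \emph{not} suffice, and this bites exactly in the first bullet (not in the last two, which you flagged as the hard ones). If $x_{2i_1-1},x_{2i_1},x_{2i_2-1},x_{2i_2}\in M$ with $i_2=i_1+2$, there is no strictly shorter augmenting path available: the fix the paper uses is to replace the segment $(x_{2i_1-1},x_{2i_1},x_{2i_1+1},x_{2i_2-2},x_{2i_2-1},x_{2i_2})$ by $(x_{2i_1-1},x_{2i_1+1},x_{2i_1},x_{2i_2-2},x_{2i_2-1},x_{2i_2})$, i.e.\ swap $x_{2i_1}$ and $x_{2i_1+1}$. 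This keeps the length the same but pushes the intra-module non-$F$ edge $\{x_{2i_1-1},x_{2i_1}\}$ out of $M$, decreasing $i(P)$ by one. So your termination argument needs the lexicographic pair $(\text{length},\,i(P))$, not length alone. Once you adopt this, every violation of any bullet strictly decreases the pair, and your worry about one fix undoing another disappears.

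Two minor corrections. First, you have the parity backwards: in an $F$-augmenting path $(x_1,\dots,x_{2\ell})$ the edges $\{x_{2i-1},x_{2i}\}$ are the \emph{non}-$F$ edges and $\{x_{2i},x_{2i+1}\}$ are the $F$-edges, so bullet~1 concerns two non-$F$ edges inside $M$. Second, bullets~4 and~6 are actually the easy ones: three occurrences give a straight shortcut $(x_1,\dots,x_{2i_1},x_{2i_1+1},x_{2i_3},\dots)$ of strictly smaller length, and two non-adjacent occurrences (or two adjacent ones with $x_{2i_1+4}\notin M$) are shortened the same way. The delicate case is bullet~1, precisely because the only available reroute may not shorten $P$.
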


\begin{proof}
Let $P$ be a {\em shortest} $F$-augmenting path that minimizes $i(P) = \left|E(P) \cap \left(\bigcup_{M \in {\cal M}(G)}E(G[M])\right)\right|$.
Equivalently, $P$ is a shortest augmenting path with the minimum number of edges $i(P)$ with their two ends in a same module.
There are four cases.

\begin{enumerate}

\item
Suppose by contradiction there exist $i_1 < i_2$ such that $x_{2i_1-1}, x_{2i_1}, x_{2i_2-1}, x_{2i_2} \in M$.
See Fig.~\ref{fig-matching-1}-\ref{fig-matching-2}.
In particular, $i_2 - i_1 \geq 2$ since $M$ induces a matching.
Furthermore, $x_{2i_1+1}, x_{2i_2-2} \in N_G(M)$.
Then, $(x_1,\ldots, x_{2i_1-1}, x_{2i_1+1}, x_{2i_1}, x_{2i_2-2}, x_{2i_2-1}, x_{2i_2}, \ldots x_{2\ell})$ is an $F$-augmenting path, thereby contradicting the minimality of $i(P)$.


\begin{figure}[h!]
\hfill\begin{minipage}{.48\textwidth}
  \centering
  \includegraphics[width=.45\textwidth]{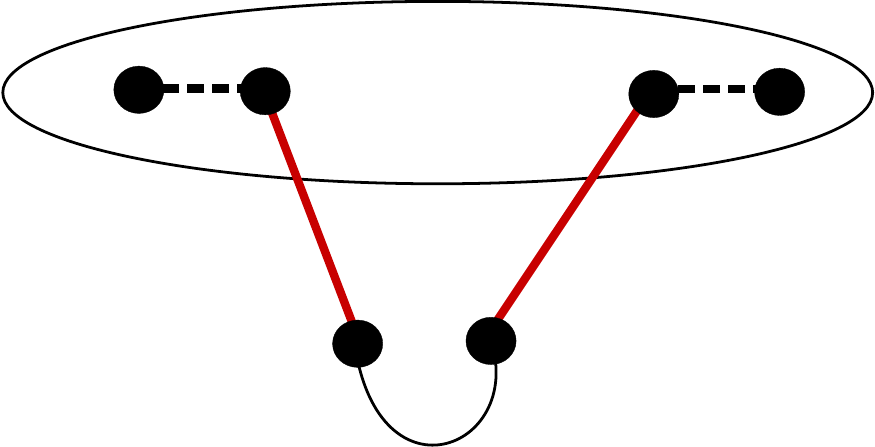}
  \caption{Case $x_{2i_1-1}, x_{2i_1}, x_{2i_2-1}, x_{2i_2} \in M$.}
  \label{fig-matching-1}
\end{minipage}\hfil
\begin{minipage}{.48\textwidth}
  \centering
  \includegraphics[width=.45\textwidth]{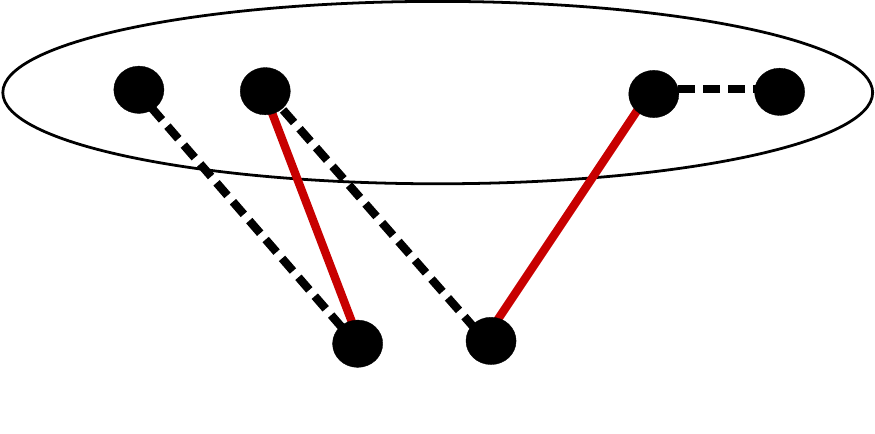}
  \caption{Local replacement of $P$.}
  \label{fig-matching-2}
\end{minipage}
\end{figure}

Similarly, suppose by contradiction there exist $x_{2i_1-1},x_{2i_1} \in M$ and there exist $x_{2i_2-1}, x_{2i_2} \in M', \ M' \in N_{G'}(M)$.
See Fig~\ref{fig-matching-3}.
We assume by symmetry $i_1 < i_2$.
In this situation, either $i_1 = 1$, and so, $x_{2i_1-1} = x_1$ is unmatched, or $i_1 > 1$ and so, $x_{2i_1-1}$ is matched to $x_{2i_1-2} \neq x_{2i_2}$.
Then, $(x_1,\ldots,x_{2i_1-1},x_{2i_2},\ldots,x_{2\ell})$ is an $F$-augmenting path, thereby contradicting the minimality of $|V(P)|$.

\begin{figure}[h!]
\centering
\includegraphics[width=.25\textwidth]{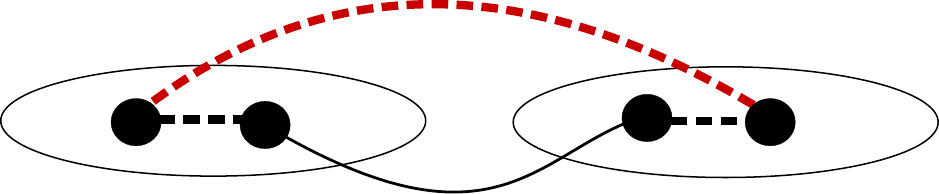}
\caption{Case $x_{2i_1-1},x_{2i_1} \in M$ and $x_{2i_2-1}, x_{2i_2} \in M'$.}
\label{fig-matching-3}
\end{figure}

\item
Suppose by contradiction there exist $i_1 < i_2$ such that $x_{2i_1}, x_{2i_1+1}, x_{2i_2}, x_{2i_2+1} \in M$.
See Fig~\ref{fig-matching-4}.
In particular, $x_{2i_1-1} \in N_G(M)$ since $M$ induces a matching.
Then, $(x_1,\ldots, x_{2i_1-1}, x_{2i_2}, x_{2i_2+1}, \ldots x_{2\ell})$ is an $F$-augmenting path, thereby contradicting the minimality of $|V(P)|$.

\begin{figure}[h!]
\centering
\includegraphics[width=.25\textwidth]{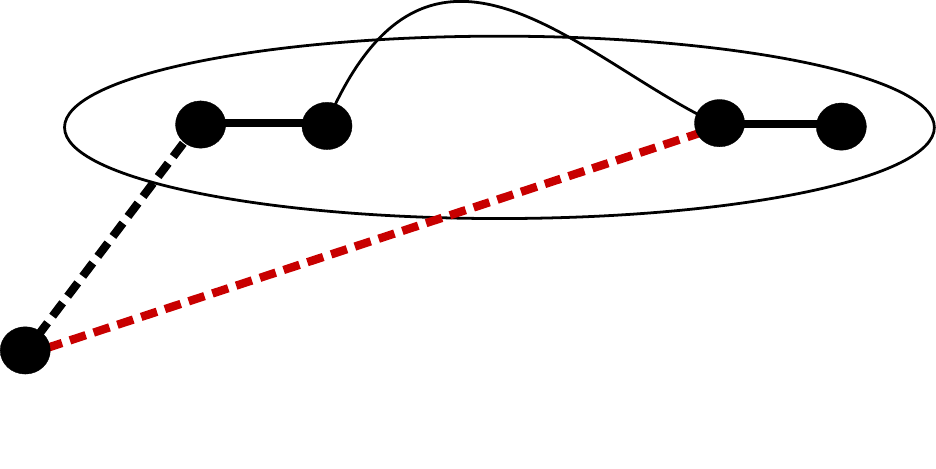}
\caption{Case $x_{2i_1}, x_{2i_1+1}, x_{2i_2}, x_{2i_2+1} \in M$.}
\label{fig-matching-4}
\end{figure}

\item
Suppose by contradiction there exist $i_1 < i_2$ such that $x_{2i_1-1},x_{2i_2-1} \notin M, \ x_{2i_1},x_{2i_2} \in M$.
See Fig~\ref{fig-matching-5}.
Either $i_1 = 1$, and so, $x_{2i_1-1} = x_1$ is unmatched, or $i_1 > 1$ and so, $x_{2i_1-1}$ is matched to $x_{2i_1-2} \neq x_{2i_2}$.
Then, $(x_1,\ldots,x_{2i_1-1},x_{2i_2},\ldots,x_{2\ell})$ is an $F$-augmenting path, thereby contradicting the minimality of $|V(P)|$.

\begin{figure}[h!]
\centering
\includegraphics[width=.16\textwidth]{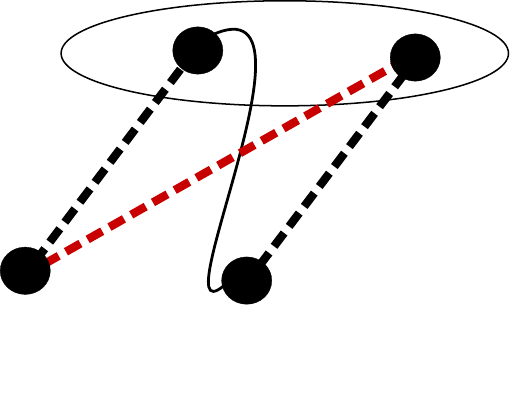}
\caption{Case $x_{2i_1-1},x_{2i_2-1} \notin M, \ x_{2i_1},x_{2i_2} \in M$.}
\label{fig-matching-5}
\end{figure}

By symmetry, the latter also proves that $|\{ i \mid x_{2i-1} \in M, \ x_{2i} \notin M\}| \leq 1$.

\item
Finally suppose by contradiction there exist $i_1 < i_2 < i_3$ such that $x_{2i_1},x_{2i_2},x_{2i_3} \notin M$, $x_{2i_1+1},x_{2i_2+1},x_{2i_3+1} \in M$.
See Fig~\ref{fig-matching-6}.
Then, $(x_1,\ldots,x_{2i_1},x_{2i_1+1},x_{2i_3},x_{2i_3+1}\ldots,x_{2\ell})$ is an $F$-augmenting path, thereby contradicting the minimality of $|V(P)|$.

\begin{figure}[h!]
\centering
\includegraphics[width=.25\textwidth]{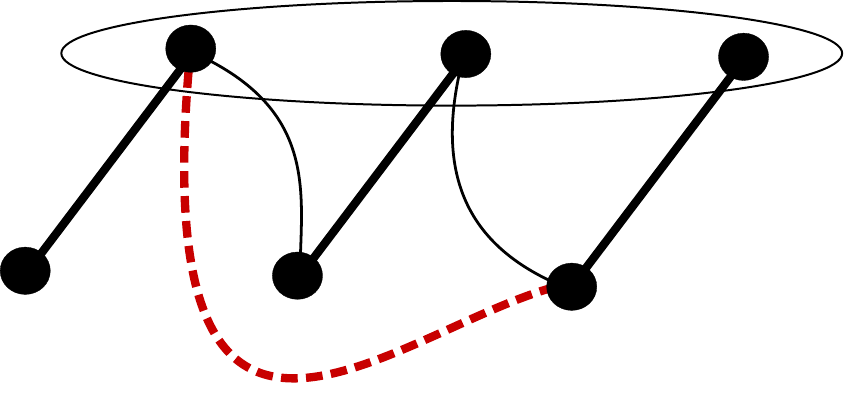}
\caption{Case $x_{2i_1},x_{2i_2},x_{2i_3} \notin M$, $x_{2i_1+1},x_{2i_2+1},x_{2i_3+1} \in M$.}
\label{fig-matching-6}
\end{figure}

We prove in the same way that if there exist $i_1 < i_2$ such that $x_{2i_1},x_{2i_2}\notin M$ and $x_{2i_1+1},x_{2i_2+1} \in M$ then $i_2 = i_1+1$.
See Fig~\ref{fig-matching-7}.
Furthermore, if $x_{2i_2+2} = x_{2i_1+4} \notin M$ then $(x_1,\ldots,x_{2i_1},x_{2i_1+1},x_{2i_2+2},x_{2i_2+3}\ldots,x_{2\ell})$ is an $F$-augmenting path, thereby contradicting the minimality of $|V(P)|$.

\begin{figure}[h!]
\centering
\includegraphics[width=.2\textwidth]{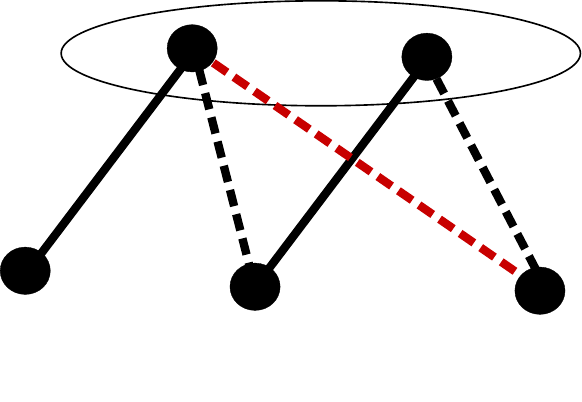}
\caption{Case $x_{2i_2+2} = x_{2i_1+4} \notin M$.}
\label{fig-matching-7}
\end{figure}

By symmetry, the same proof as above applies to $\{ i \mid x_{2i} \in M, \ x_{2i+1} \notin M\}$.

\end{enumerate}

Overall, every $M \in {\cal M}(G)$ is incident to at most $8$ edges of $P$, and so, $P$ has length ${\cal O}(|{\cal M}(G)|)$.

\end{proof}

Based on Lemmas~\ref{lem:mw-matching-reduction} and~\ref{lem:bound-edges-augmenting-path}, we introduce in what follows a {\em witness subgraph} in order to find a matching.
We think the construction could be improved but we chose to keep is as simple as possible.

\begin{definition}\label{def:witness-subgraph}
Let $G=(V,E)$ be a graph, $G'=({\cal M}(G),E')$ be its quotient graph and $F \subseteq E$ be a matching of $G$.

The witness matching $F'$ is obtained from $F$ by keeping a representative for every possible type of edge in an augmenting path.
Precisely:
\begin{itemize}
\item Let $M \in {\cal M}(G)$.
If $E(G[M]) \cap F \neq \emptyset$ then there is exactly one edge $\{u_M,v_M\} \in E(G[M]) \cap F$ such that $\{u_M,v_M\} \in F'$.
Furthermore if $E(G[M]) \setminus F \neq \emptyset$ then we pick an edge $\{x_M,y_M\} \in E(G[M]) \setminus F$ and we add in $F'$ every edge in $F$ that is incident to either $x_M$ or $y_M$.
\item Let $M,M' \in {\cal M}(G)$ be adjacent in $G'$.
There are exactly $\min \{ 4, |F \cap (M \times M')| \}$ edges $\{v_M,v_{M'}\}$ added in $F'$ such that $v_M \in M, \ v_{M'} \in M'$ and $\{v_M,v_{M'}\} \in F$.
\end{itemize}
The witness subgraph $G_F'$ is the subgraph induced by $V(F')$ with at most two unmatched vertices added for every strong module.
Formally, let $M \in {\cal M}(G)$.
The submodule $M_F \subseteq M$ contains exactly $\min \{ 2, |M \setminus V(F)| \}$ vertices of $M \setminus V(F)$.
Then, $$G_F' = G\left[ V(F') \cup \left( \bigcup_{M \in {\cal M}(G)} M_F \right) \right]. $$
\end{definition}

As an example, suppose that every edge of $F$ has its two ends in a same module and every module induces a matching.
Then, $G_F'$ is obtained from $G'$ by substituting every $M \in {\cal M}(G)$ with at most one edge (if $F \cap E(G[M]) \neq \emptyset$) and at most two isolated vertices (representing unmatched vertices).

From the algorithmic point of view, we need to upper-bound the size of the witness subgraph, as follows.

\begin{lemma}\label{lem:witness-size}
Let $G=(V,E)$ be a graph, $G'=({\cal M}(G),E')$ be its quotient graph and $F \subseteq E$ be a matching of $G$.
The witness subgraph $G_F'$ has order ${\cal O}(|E(G')|)$.
\end{lemma}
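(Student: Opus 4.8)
The plan is to bound the number of vertices of $G_F'$ by counting separately the contributions of the witness matching $F'$ and of the added unmatched vertices $\bigcup_M M_F$, and to show each contribution is ${\cal O}(|E(G')|)$. First I would recall that $G_F' = G\left[ V(F') \cup \left( \bigcup_{M \in {\cal M}(G)} M_F \right) \right]$, so it suffices to bound $|V(F')| + \sum_{M} |M_F|$; then multiply by a constant if one counts edges, but since the claim is only about the order (number of vertices), bounding these two sums is enough. For the second sum, note $|M_F| \le 2$ for every $M \in {\cal M}(G)$ by definition, and $|{\cal M}(G)| = |V(G')| \le |E(G')| + 1$ when $G'$ is connected; when $G'$ is disconnected one has to be slightly more careful, but in that case $G$ itself is disconnected and one can argue componentwise, or simply observe that isolated vertices of $G'$ correspond to connected components of $G$ that are single modules and handle them directly. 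So $\sum_M |M_F| = {\cal O}(|{\cal M}(G)|) = {\cal O}(|E(G')| + |V(G')|)$, and I would fold the $|V(G')|$ term into the final bound or note $|V(G')| = {\cal O}(|E(G')|)$ whenever there is at least one edge, treating the trivial edgeless case separately.

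For the first sum, $|V(F')| \le 2|F'|$, so I would bound $|F'|$ using the three ways edges enter $F'$ in Definition~\ref{def:witness-subgraph}. (i) For each module $M$ with $E(G[M]) \cap F \ne \emptyset$ we add one edge $\{u_M, v_M\}$: contributes at most $|{\cal M}(G)|$ edges total. (ii) For each module $M$ with $E(G[M]) \setminus F \ne \emptyset$ we add all edges of $F$ incident to a chosen pair $\{x_M, y_M\}$: since $F$ is a matching, at most two edges of $F$ are incident to $\{x_M, y_M\}$, so this contributes at most $2|{\cal M}(G)|$ edges total. (iii) For each edge $\{M, M'\} \in E'$ of the quotient graph we add at most $\min\{4, |F \cap (M \times M')|\} \le 4$ edges of $F$: contributes at most $4|E(G')|$ edges total. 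Summing, $|F'| \le 3|{\cal M}(G)| + 4|E(G')| = {\cal O}(|E(G')|)$ (again absorbing $|{\cal M}(G)| = |V(G')|$ into $|E(G')|$ in the non-degenerate case), hence $|V(F')| = {\cal O}(|E(G')|)$.

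Putting the two bounds together gives $|V(G_F')| \le |V(F')| + \sum_M |M_F| = {\cal O}(|E(G')|)$, as claimed. I would then add one sentence dealing with the degenerate cases where $|E(G')|$ is small or zero: if $G'$ has no edges then $G$ is a disjoint union of modules that each induce a matching, and $G_F'$ has at most two vertices per module plus the matched vertices inside modules, which is still ${\cal O}(n)$ but — more importantly for the intended application — these components are handled trivially and do not affect the running-time analysis; alternatively one states the bound as ${\cal O}(|E(G')| + |{\cal M}(G)|)$ to be safe. The main (very mild) obstacle is just being careful about whether ``order'' means vertices or edges and about the additive $|V(G')|$ versus $|E(G')|$ discrepancy in the disconnected/edgeless case; there is no real combinatorial difficulty, since every contribution is controlled by a $\min\{\cdot, c\}$ with a constant $c$ in Definition~\ref{def:witness-subgraph}, which is precisely what was arranged there for this purpose.
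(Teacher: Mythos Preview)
Your proposal is correct and follows essentially the same counting as the paper, just organized slightly differently. The paper compresses your three cases into the single observation that $|M \cap V(G_F')| = {\cal O}(\deg_{G'}(M))$ for every $M \in {\cal M}(G)$, and then sums over modules; your caveat about the additive $|V(G')|$ term is real but harmless here, since in the algorithm the lemma is only invoked when $G'$ is prime or complete (hence connected with $\deg_{G'}(M)\ge 1$ for all $M$).
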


\begin{proof}
By construction for every $M \in {\cal M}(G)$ we have $|M \cap V(G_F')| = {\cal O}(deg_{G'}(M))$.
Therefore, $|V(G_F')| = \sum_{M \in {\cal M}(G)} |M \cap V(G_F')| = {\cal O}(|E(G')|)$.
%
\end{proof}

Our algorithm is based on the correspondance between $F$-augmenting paths in $G$ and $F'$-augmenting paths in $G'_F$, that we prove next.
The following Lemma~\ref{lem:witness-maxmatching} is the key technical step of the algorithm.

\begin{lemma}\label{lem:witness-maxmatching}
Let $G=(V,E)$ be a graph such that every module $M \in {\cal M}(G)$ induces a matching.
Let $F \subseteq E$ be a matching of $G$ such that $\bigcup_{M \in {\cal M}(G)} V(E(G[M])) \subseteq V(F)$.
There exists an $F$-augmenting path in $G$ if and only if there exists an $F'$-augmenting path in $G_F'$.
\end{lemma}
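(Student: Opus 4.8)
The statement is an equivalence between the existence of an $F$-augmenting path in $G$ and the existence of an $F'$-augmenting path in the witness subgraph $G_F'$. One direction is essentially immediate: since $G_F'$ is an induced subgraph of $G$ and $F' \subseteq F$ with $F'$ being exactly the restriction of $F$ to $V(G_F')$, any $F'$-augmenting path in $G_F'$ is in particular a walk in $G$ alternating between non-$F$ edges and $F$ edges whose endpoints are unmatched in $G_F'$. One must check these endpoints are also unmatched in $G$: this holds because the endpoints chosen for $F'$-augmenting paths lie in the vertex set $V(F') \cup \bigcup_M M_F$, and the only vertices of $G_F'$ that are unmatched by $F$ are precisely those in the submodules $M_F$ (the vertices of $V(F')$ are matched in $G$ by definition of $F'$, and $F' = F \cap E(G_F')$). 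So an $F'$-augmenting path in $G_F'$ lifts directly to an $F$-augmenting path in $G$.

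The hard direction is the forward one: given an $F$-augmenting path in $G$, produce one in $G_F'$. First I would invoke Lemma~\ref{lem:bound-edges-augmenting-path} to select a ``canonical'' shortest $F$-augmenting path $P = (x_1, x_2, \ldots, x_{2\ell})$ that touches each module $M \in {\cal M}(G)$ in at most $8$ edges, of the restricted types enumerated in that lemma. The goal is then to re-route $P$ so that every vertex it uses lies in $V(G_F')$, i.e. in $V(F')$ or in some submodule $M_F$. The key observation is that $P$ interacts with each module $M$ only through a bounded number of ``slots'': at most one internal matched edge $\{u_M, v_M\}$, at most one internal non-matched edge together with its two incident $F$-edges $\{x_M, y_M\}$ and their partners, at most two crossing $F$-edges into each neighbouring module, and at most two unmatched vertices of $M$ used as path-endpoints-within-$M$. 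For each such slot, the definition of $F'$ and of $M_F$ has been engineered (Definition~\ref{def:witness-subgraph}) to keep a representative of exactly that kind of vertex/edge. So one proceeds module by module, and for each vertex $x_j \in M$ on $P$ that is not already in $V(G_F')$, one swaps it for the corresponding representative vertex kept in $G_F'$. Because $M$ is a module, every such representative has the same adjacencies to $V \setminus M$ as $x_j$, so the external edges of $P$ remain valid after the swap; and the counts in Definition~\ref{def:witness-subgraph} ($\min\{4, |F \cap (M\times M')|\}$ crossing edges, $\min\{2, |M\setminus V(F)|\}$ free vertices, etc.) are large enough to accommodate the at-most-$8$ demands imposed by Lemma~\ref{lem:bound-edges-augmenting-path}, so distinct path-vertices in $M$ can be mapped to distinct representatives.

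The main obstacle — and the part requiring genuine care — is the bookkeeping in this re-routing: one must verify that the substitution produces a path (no repeated vertices) and that it remains alternating with unmatched endpoints. Repeated-vertex issues are handled by the injectivity of the slot-to-representative assignment, which in turn relies on the precise bounds from Lemma~\ref{lem:bound-edges-augmenting-path} matching the quantities retained in Definition~\ref{def:witness-subgraph}; the ``furthermore'' clauses in Lemma~\ref{lem:bound-edges-augmenting-path} (e.g. that two crossing $F$-edges of the same type into $M$ force a specific local configuration $x_{2i_0+1}, x_{2i_0+3}, x_{2i_0+4} \in M$) are exactly what guarantee the needed representatives coexist in $G_F'$. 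Alternating-ness is preserved because we never change whether an edge of $P$ lies in $F$: an internal matched edge of $M$ is replaced by the retained edge $\{u_M, v_M\} \in F'$, a crossing $F$-edge by a retained crossing $F$-edge, a free vertex by a free vertex, and non-$F$ edges are re-formed using the module property. I would organize the write-up as a case analysis following the six bullet types of Lemma~\ref{lem:bound-edges-augmenting-path}, handling each module independently since the modules partition $V$ and $P$'s internal edges within distinct modules do not interfere. Once every vertex of the re-routed path lies in $V(G_F')$, it is by construction an $F'$-augmenting path in $G_F'$, completing the proof.
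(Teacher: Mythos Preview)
Your proposal has the right overall architecture and matches the paper's approach: the easy direction is handled correctly, and for the hard direction you correctly invoke Lemma~\ref{lem:bound-edges-augmenting-path} and attempt to re-route the canonical path $P$ into $G_F'$.

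However, there is a genuine gap in your re-routing strategy. You propose to swap each vertex of $P$ for its corresponding representative, module by module, relying on the module property to preserve external adjacencies. This works for matched edges and for non-$F$ edges crossing between distinct modules, but it breaks precisely at a non-$F$ edge $\{v_{2i-1}, v_{2i}\}$ lying \emph{inside} a single module $M$. After you replace the adjacent matched edges by retained $F'$-representatives $\{u_{2i-2}, u_{2i-1}\}$ and $\{u_{2i}, u_{2i+1}\}$, the middle vertices $u_{2i-1}, u_{2i} \in M$ need not be adjacent, since $M$ induces only a matching. Nor can you instead force $u_{2i-1} = x_M$ and $u_{2i} = y_M$: that would pin $u_{2i-2} = w_M$ and $u_{2i+1} = z_M$ (the $F$-partners of $x_M, y_M$), and these live in whatever modules they happen to lie in, not in $M_{2i-2}$ and $M_{2i+1}$, so the non-$F$ edges $\{u_{2i-3}, u_{2i-2}\}$ and $\{u_{2i+1}, u_{2i+2}\}$ are no longer guaranteed.

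The paper's fix is different in kind: it keeps the naively chosen $u_{2i-1}, u_{2i}$ and \emph{inserts} the alternating detour $(u_{2i-1}, w_i, x_i, y_i, z_i, u_{2i})$ between them, using that $w_i, z_i \in N_G(M)$ so the joining non-$F$ edges exist. The price is that $w_i, x_i, y_i, z_i$ may already occur in the sequence; the paper then runs a six-case analysis on where $(w_i, x_i)$ and $(y_i, z_i)$ can appear (Lemma~\ref{lem:bound-edges-augmenting-path} confines any overlap to positions $u_{2i-4}, \ldots, u_{2i+3}$), in each case locally shortening or rearranging the sequence. This insertion-plus-case-analysis, not a direct per-module swap, is the missing technical step in your plan.
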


\begin{proof}
In one direction, $G_F'$ is an induced subgraph of $G$.
Furthermore, according to Definition~\ref{def:witness-subgraph}, $F' \subseteq F$ and $V(F') = V(F) \cap V(G_F')$.
Thus, every $F'$-augmenting path in $G_F'$ is also an $F$-augmenting path in $G$. 

Conversely, suppose there exists an $F$-augmenting path in $G$.
Let $P=(v_1,v_2,\ldots,v_{2\ell})$ be an $F$-augmenting path in $G$ that satisfies the conditions of Lemma~\ref{lem:bound-edges-augmenting-path}.
We transform $P$ into an $F'$-augmenting path in $G_F'$ as follows.
For every $1 \leq i \leq 2\ell$ let $M_i \in {\cal M}(G)$ such that $v_i \in M_i$.
\begin{itemize}
\item We choose $u_1 \in M_1 \cap V(G_F'), \ u_{2\ell} \in M_{2\ell} \cap V(G_F')$ unmatched.
Furthermore, if $M_1 = M_{2\ell}$ then we choose $u_1 \neq u_{2\ell}$.
The two of $u_1,u_{2\ell}$ exist according to Definition~\ref{def:witness-subgraph}.
\item Then, for every $1 \leq i \leq \ell-1$, we choose $u_{2i} \in M_{2i} \cap V(G_F'), \ u_{2i+1} \in M_{2i+1} \cap V(G_F')$ such that $\{u_{2i},u_{2i+1}\} \in F'$.
Note that if $M_{2i} = M_{2i+1}$ then $\{u_{2i},u_{2i+1}\}$ is the unique edge of $F' \cap E(G[M_{2i}])$.
By Lemma~\ref{lem:bound-edges-augmenting-path} we also have that $\{v_{2i},v_{2i+1}\}$ is the unique edge of $E(P) \cap F$ such that $v_{2i},v_{2i+1} \in M_{2i}$.
Otherwise, $M_{2i} \neq M_{2i+1}$.
If there are $p$ edges $e \in F$ with one end in $M_{2i}$ and the other end in $M_{2i+1}$ then there are at least $\min\{p,4\}$ such edges in $F'$.
By Lemma~\ref{lem:bound-edges-augmenting-path} there are at most $\min\{p,4\}$ edges $e \in E(P) \cap F$ with one end in $M_{2i}$ and the other end in $M_{2i+1}$.
Hence, we can always ensure the $u_j$'s, $1 \leq j \leq 2\ell$, to be pairwise different.
\end{itemize}
The resulting sequence ${\cal S}_P = (u_1,u_2,\ldots,u_{2\ell})$ is not necessarily a path, since two consecutive vertices $u_{2i-1},u_{2i}$ need not be adjacent in $G_F'$.
Roughly, we insert alternating subpaths in the sequence in order to make it a path.
However, we have to be careful not to use twice a same vertex for otherwise we would only obtain a walk.

Let $I_P = \{ i \mid \{u_{2i-1},u_{2i}\} \notin E \}$.
Observe that for every $i \in I_P$ we have $M_{2i-1} = M_{2i}$.
In particular, since we assume $\bigcup_{M \in {\cal M}(G)} V(E(G[M])) \subseteq V(F)$, it implies $i \notin \{1,\ell\}$.
Furthermore, $M_{2i-2} \neq M_{2i}$ and $M_{2i} \neq M_{2i+1}$ since otherwise $v_{2i-2},v_{2i-1},v_{2i} \in M_{2i}$ or $v_{2i-1},v_{2i},v_{2i+1} \in M_{2i}$ thereby contradicting that $M_{2i}$ induces a matching.
According to Definition~\ref{def:witness-subgraph} there exist $x_i,y_i \in M_{2i}$ such that $\{x_i,y_i\} \in E(G[M_{2i}]) \setminus F$ and every edge of $F$ that is incident to either $x_i$ or $y_i$ is in $F'$.
Such two edges always exist since we assume $\bigcup_{M \in {\cal M}(G)} V(E(G[M])) \subseteq V(F)$, hence there exist $w_i,z_i$ such that $\{w_i,x_i\}, \{y_i,z_i\} \in F'$.
Note that $w_i,z_i \notin M_{2i}$ since $\{x_i,y_i\} \in E(G[M_{2i}])$ and $M_{2i}$ induces a matching.

Since by Lemma~\ref{lem:bound-edges-augmenting-path} $\{v_{2i-1},v_{2i}\}$ is the unique edge $e \in E(P) \setminus F$ such that $e \subseteq M_{2i}$, the vertices $x_i,y_i, \ i \in I_P$ are pairwise different.
Furthermore, we claim that there can be no $i_1, i_2 \in I_P$ such that $s_{i_1} \in \{x_{i_1},y_{i_1}\}$ and $t_{i_2} \in \{x_{i_2},y_{i_2}\}$ are adjacent in $G$.
Indeed otherwise, $M_{2i_1} \in N_{G'}(M_{2i_2})$, there exist $v_{2i_1-1},v_{2i_1} \in M_{2i_1}$ and $v_{2i_2-1},v_{2i_2} \in M_{2i_2}$, thereby contradicting Lemma~\ref{lem:bound-edges-augmenting-path}.
As a result, all the vertices $w_i,x_i,y_i,z_i, \ i \in I_P$ are pairwise different. 
However, we may have $\{w_i,x_i\} = \{u_{2j},u_{2j+1}\}$ or $\{y_i,z_i\} = \{u_{2j},u_{2j+1}\}$ for some $j$.

We consider the indices $i \in I_P$ sequentially, by increasing value.
By Lemma~\ref{lem:bound-edges-augmenting-path}, $v_{2j} \notin M_{2i}, \ v_{2j+1} \in M_{2i}$ for some $j \neq i-1$ implies $j=i-2$.
Similarly (obtained by reverting the indices, from $v_1'=v_{2\ell}$ to $v_{2\ell}'=v_1$), $v_{2j} \in M_{2i}, \ v_{2j+1} \notin M_{2i}$ for some $j \neq i$ implies $j=i+1$.
Therefore, if $(w_i,x_i) \in {\cal S}_p$ then $(w_i,x_i) \in \{ (u_{2i-4},u_{2i-3}), (u_{2i-2},u_{2i-1}), (u_{2i+1},u_{2i}), (u_{2i+3},u_{2i+2}) \}$, and the same holds for $(y_i,z_i)$. 
Note also that the pairs $(w_i,x_i)$ and $(z_i,y_i)$ play a symmetric role.
Thus we can reduce by symmetries (on the sequence and on the two of $(w_i,x_i)$ and $(z_i,y_i)$) to the six following cases:

\begin{itemize}

\item Case $x_i,y_i \notin {\cal S}_P$. 
See Fig~\ref{fig-replacement-1}.
In particular, $w_i,z_i \notin {\cal S}_P$.

\begin{figure}[h!]
\centering
\includegraphics[width=.15\textwidth]{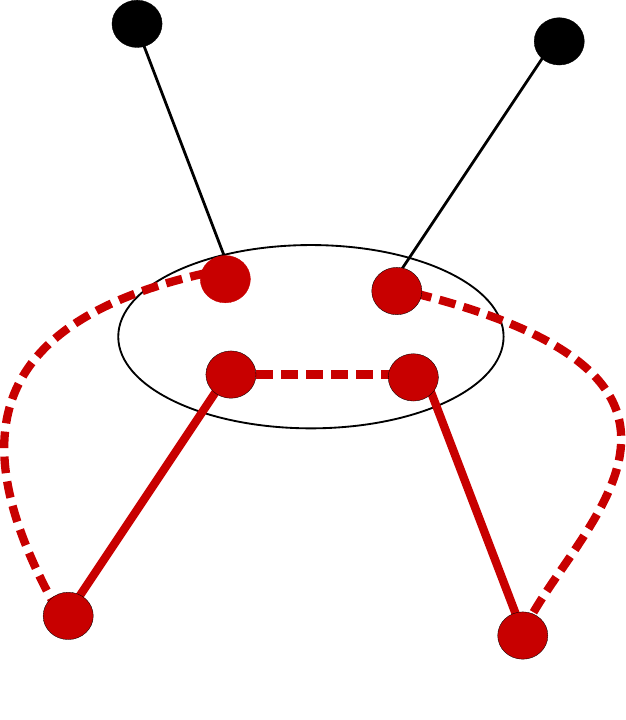}
\caption{Case $x_i,y_i \notin {\cal S}_P$.}
\label{fig-replacement-1}
\end{figure}

We insert the $F'$-alternating subpath $(u_{2i-1},w_i,x_i,y_i,z_i,u_{2i})$.

\item Case $x_i \notin {\cal S}_P$, $y_i \in \{u_{2i-1},u_{2i}\}$.
We assume by symmetry $y_i = u_{2i}$.
See Fig~\ref{fig-replacement-2}.
In particular, we have $w_i \notin {\cal S}_P$.

\begin{figure}[h!]
\centering
\includegraphics[width=.15\textwidth]{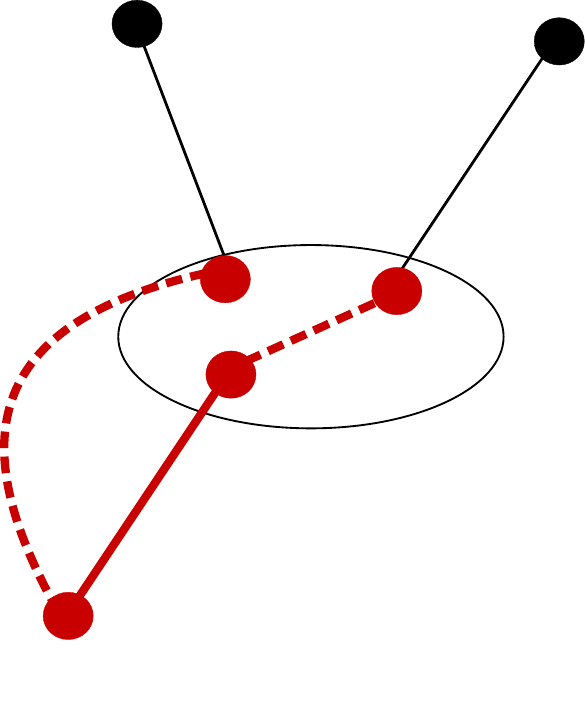}
\caption{Case $x_i \notin {\cal S}_P$, $y_i = u_{2i}$.}
\label{fig-replacement-2}
\end{figure}

We insert the $F'$-alternating subpath $(u_{2i-1},w_i,x_i,y_i = u_{2i})$.
Note that the case $x_i \in \{u_{2i-1},u_{2i}\}$, $y_i \notin {\cal S}_P$ is symmetrical to this one.

%
%

\item Case $x_i \notin {\cal S}_P$, $y_i \in {\cal S}_p \setminus \{u_{2i-1},u_{2i}\}$.
We assume by symmetry $(y_i,z_i) = (u_{2i+2},u_{2i+3})$ (the case $(z_i,y_i) = (u_{2i-4},u_{2i-3})$ is obtained by reverting the indices along the sequence).
See Fig~\ref{fig-replacement-4}.

\begin{figure}[h!]
\centering
\includegraphics[width=.2\textwidth]{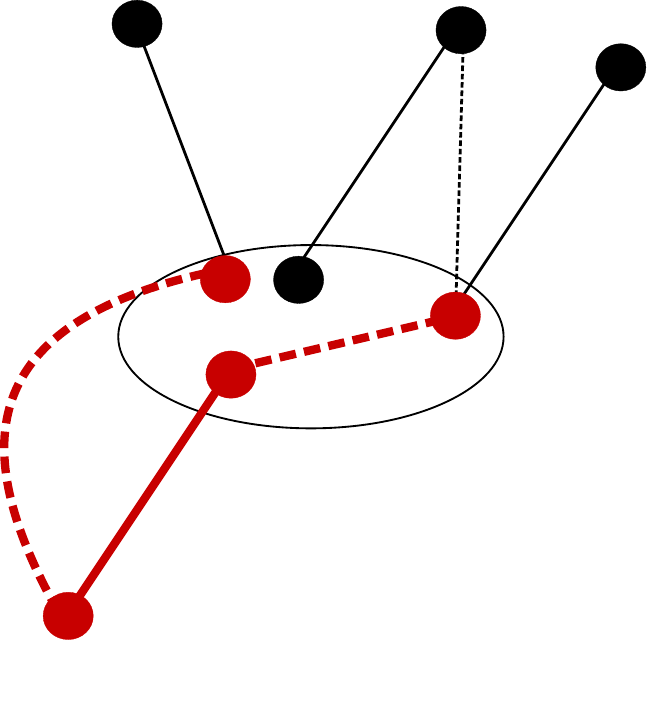}
\caption{Case $x_i \notin {\cal S}_P$, $(y_i,z_i) = (u_{2i+2},u_{2i+3})$.}
\label{fig-replacement-4}
\end{figure}

We replace $(u_{2i-1},u_{2i},u_{2i+1},y_i=u_{2i+2})$ by the $F'$-alternating subpath $(u_{2i-1},w_i,x_i,y_i)$.
Note that the case $x_i \in {\cal S}_p \setminus \{u_{2i-1},u_{2i}\}$, $y_i \notin {\cal S}_P$ is symmetrical to this one.

%
%

\item Case $(w_i,x_i) = (u_{2i-4},u_{2i-3})$, $y_i = u_{2i}$.
See Fig~\ref{fig-replacement-6}.

\begin{figure}[h!]
\centering
\includegraphics[width=.16\textwidth]{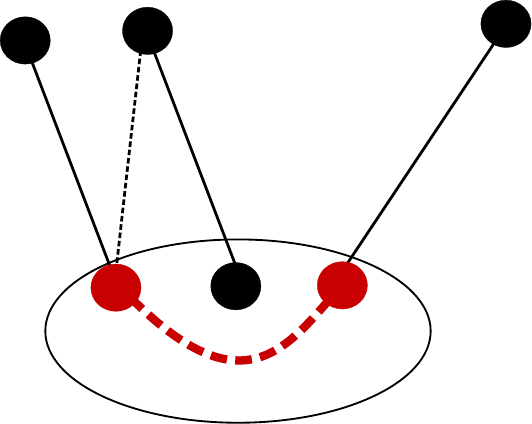}
\caption{Case $(w_i,x_i) = (u_{2i-4},u_{2i-3})$, $y_i = u_{2i}$.}
\label{fig-replacement-6}
\end{figure}

We replace $(u_{2i-3}=x_i,u_{2i-2},u_{2i-1},u_{2i}=y_i)$ by the $F'$-alternating subpath $(x_i,y_i)$.
Note that the case $x_i = u_{2i-1}$, $(y_i,z_i) = (u_{2i+2},u_{2i+3})$, and the two more cases obtained by switching the respective roles of $(x_i,w_i)$ and $(y_i,z_i)$, are symmetrical to this one.

%
%

\item Case $(w_i,x_i) = (u_{2i-4},u_{2i-3})$, $y_i = u_{2i-1}$.
See Fig~\ref{fig-replacement-6-b}.

\begin{figure}[h!]
\centering
\includegraphics[width=.16\textwidth]{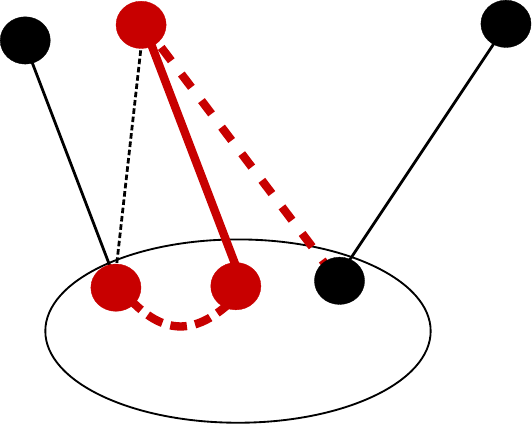}
\caption{Case $(w_i,x_i) = (u_{2i-4},u_{2i-3})$, $y_i = u_{2i-1}$.}
\label{fig-replacement-6-b}
\end{figure}

We replace $(u_{2i-3}=x_i,u_{2i-2}=z_i,u_{2i-1}=y_i,u_{2i})$ by the $F'$-alternating subpath $(x_i,y_i,z_i,u_{2i})$.
Note that the case $x_i = u_{2i}$, $(y_i,z_i) = (u_{2i+2},u_{2i+3})$, and the two more cases obtained by switching the respective roles of $(x_i,w_i)$ and $(y_i,z_i)$, are symmetrical to this one.

\item Case $(w_i,x_i) = (u_{2i-4},u_{2i-3})$, $(y_i,z_i) = (u_{2i+2},u_{2i+3})$.
See Fig~\ref{fig-replacement-8}.

\begin{figure}[h!]
\centering
\includegraphics[width=.16\textwidth]{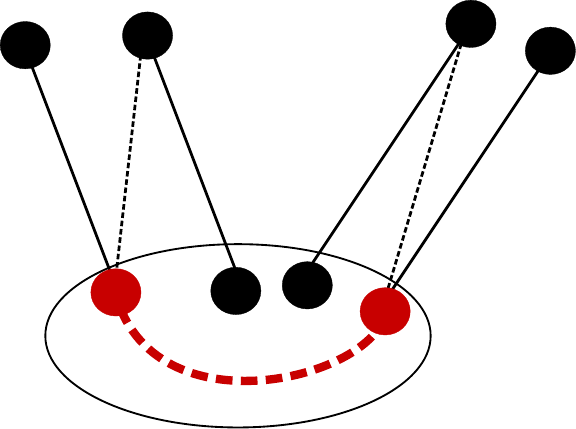}
\caption{Case $(w_i,x_i) = (u_{2i-4},u_{2i-3})$, $(y_i,z_i) = (u_{2i+2},u_{2i+3})$.}
\label{fig-replacement-8}
\end{figure}

Since $x_i = u_{2i-3}, \ y_i = u_{2i+2}$ are adjacent we can remove $(u_{2i-2},u_{2i-1},u_{2i},u_{2i+1})$ from ${\cal S}_P$. 

\end{itemize} 
Overall, in every case the procedure only depends on the subsequence between $u_{2i-4}$ and $u_{2i+3}$.
In order to prove correctness of the procedure, it suffices to prove that this subsequence has not been modified for a smaller $i' \in I_P, \ i' < i$.
Equivalently, we prove that the above procedure does not modify the subsequence between $u_{2j-4}$ and $u_{2j+3}$ for any $j \in I_P, \ j > i$.
First we claim $j \geq i+2$.
Indeed, $M_{2i+1} \in N_{G'}(M_{2i})$.
Hence, by Lemma~\ref{lem:bound-edges-augmenting-path}, $i \in I_P$ implies $i+1 \notin I_P$, that proves the claim.
In this situation, $2j-4 \geq 2i$.
Furthermore, the subsequence $(u_{2i},\ldots,u_{2\ell})$ is modified only if $u_{2i+2} \in \{x_i,y_i\}$.
However in the latter case we have $u_{2i+3} \notin M_{2i}$, hence $M_{2i+3} \in N_{G'}(M_{2i})$, and so, by Lemma~\ref{lem:bound-edges-augmenting-path} $j \geq i+3$.
In particular, $2j-4 \geq 2i+2$ and the subsequence $(u_{2i+2},\ldots,u_{2\ell})$ is not modified by the procedure. 
Altogether, it proves that the above procedure is correct.

Finally, applying the above procedure for all $i \in I_P$ leads to an $F'$-alternating path in $G_F'$. 
\end{proof}

We can now state the main result in this subsection.

\begin{theorem}\label{thm:mw-maxmatching}
For every $G=(V,E)$, {\sc Maximum Matching} can be solved in ${\cal O}(mw(G)^4 \cdot n + m)$-time.
\end{theorem}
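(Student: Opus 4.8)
\textbf{Proof proposal for Theorem~\ref{thm:mw-maxmatching}.}

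The plan is to turn the structural results of Lemmas~\ref{lem:mw-matching-reduction}, \ref{lem:bound-edges-augmenting-path}, \ref{lem:witness-size} and~\ref{lem:witness-maxmatching} into a recursive algorithm that follows the modular decomposition tree of $G$ in a bottom-up fashion. First I would recall that the modular decomposition of $G$ is computable in ${\cal O}(n+m)$ time~\cite{TCHP08}, and at each internal node we have a quotient graph on at most $mw(G)$ vertices (or a complete/edgeless graph, which are easy special cases). The recursion processes the strong modules $M_1,\dots,M_k\in{\cal M}(G)$ first, producing a maximum matching $F_i$ of each $G[M_i]$; by Lemma~\ref{lem:mw-matching-reduction} we may discard all non-matching edges inside each module, so from now on each module induces a matching, and by the same lemma it suffices to find a maximum matching of the reduced graph. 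We then take $F=\bigcup_i F_i$ as our starting matching for $G$ and repeatedly augment it.

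The heart of the algorithm is the augmentation loop. By Lemma~\ref{lem:witness-maxmatching}, once $F$ saturates all non-isolated vertices inside every module (which holds after the reduction step, since each $F_i$ is maximum in the matching $G[M_i]$), an $F$-augmenting path exists in $G$ iff an $F'$-augmenting path exists in the witness subgraph $G_F'$. By Lemma~\ref{lem:witness-size}, $G_F'$ has order ${\cal O}(|E(G')|)={\cal O}(mw(G)^2)$, and it can be constructed in ${\cal O}(mw(G)^2)$ time from the adjacency information of the quotient graph together with bookkeeping that records, for each module and each pair of adjacent modules, a bounded number of matched/unmatched representative vertices (following Definition~\ref{def:witness-subgraph}). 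We then run one search for an augmenting path in $G_F'$ — a standard blossom search, which costs ${\cal O}(|V(G_F')|\cdot|E(G_F')|)={\cal O}(mw(G)^2\cdot mw(G)^2)={\cal O}(mw(G)^4)$ since $G_F'$ has ${\cal O}(mw(G)^2)$ vertices and hence ${\cal O}(mw(G)^4)$ edges. If an $F'$-augmenting path is found in $G_F'$, it is by construction an $F$-augmenting path in $G$; we flip it, increasing $|F|$ by one, and repeat. If no augmenting path is found, Lemma~\ref{lem:witness-maxmatching} together with Berge's theorem (Theorem~\ref{thm:berge}) certifies that $F$ is maximum.

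For the running-time accounting: each augmentation at the current node increases $|F|$ by one, and $|F|\le n/2$, so there are ${\cal O}(n)$ augmentations, each costing ${\cal O}(mw(G)^4)$ to build the witness subgraph and search it, plus ${\cal O}(n)$ to actually perform the flip along a path that may have length ${\cal O}(n)$ in $G$ (the path is short in $G_F'$ but lifts to a path in $G$ whose length is still ${\cal O}(mw(G))$ by Lemma~\ref{lem:bound-edges-augmenting-path}, so the flip is in fact ${\cal O}(mw(G))$, not ${\cal O}(n)$). Summing over the whole modular decomposition tree, whose total size is ${\cal O}(n)$ and whose modules partition the vertex set at each level, the dominant term is $\sum_{\text{nodes}} ({\cal O}(\text{augmentations at node})\cdot mw(G)^4)$; since the number of augmentations charged across all nodes telescopes to ${\cal O}(n)$ (each increases the global matching size, which is bounded by $n/2$, and the matchings computed at children are reused, not recomputed), the total is ${\cal O}(mw(G)^4\cdot n)$, plus the ${\cal O}(n+m)$ for reading the graph and computing the modular decomposition. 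Hence the claimed ${\cal O}(mw(G)^4\cdot n+m)$ bound.

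The main obstacle I anticipate is the amortization of the augmentation count across the recursion tree: naively, each of the ${\cal O}(n)$ internal nodes could require ${\cal O}(n)$ augmentations, giving ${\cal O}(mw(G)^4\cdot n^2)$. The fix is to observe that when we move from the children to the parent, we do not recompute matchings inside the modules — we inherit $F=\bigcup_i F_i$ and only add augmenting paths that genuinely enlarge the matching of $G$ (the parent's graph). Each such augmentation permanently increases the size of the final matching, so across the entire tree there are only ${\cal O}(n)$ of them in total, not ${\cal O}(n)$ per node. Making this charging argument precise — in particular checking that the witness-subgraph construction and the per-augmentation overhead at a node with module set ${\cal M}$ is ${\cal O}(|{\cal M}|^4)$ and that $\sum |{\cal M}|$ over the tree is ${\cal O}(n)$ while $|{\cal M}|\le mw(G)$ at every prime node — is the technical crux, together with a careful verification that the reduction of Lemma~\ref{lem:mw-matching-reduction} can itself be applied recursively without breaking the invariant that every module induces a matching.
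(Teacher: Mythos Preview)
Your proposal is essentially the paper's approach: recurse on the modular decomposition, apply Lemma~\ref{lem:mw-matching-reduction} so each module induces a matching, then repeatedly augment via the witness subgraph $G_F'$, with the total number of augmentations amortized to ${\cal O}(n)$ across the whole tree because each one strictly enlarges the final matching. That part is correct.

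The one genuine gap is the case you wave off as an ``easy special case'': a \emph{complete} quotient graph. Recall that $mw(G)$ bounds only the \emph{prime} quotients; a complete quotient $G'$ may have $k=\Theta(n)$ vertices and hence $\Theta(k^2)$ edges, so by Lemma~\ref{lem:witness-size} the witness subgraph $G_F'$ has order $\Theta(k^2)$ and a single augmenting-path search there already costs $\Omega(k^2)$, destroying the bound. The paper does not apply the witness construction to complete quotients at all: instead it processes the modules $M_1,\dots,M_k$ of a complete quotient sequentially, treating $G[\bigcup_{j\le i}M_j]$ as having a two-vertex ``pseudo-quotient'' $K_2$ (one side $M_i$, the other $\bigcup_{j<i}M_j$) and running the witness technique on that constant-size quotient. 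This keeps each augmentation at a complete node to ${\cal O}(1)$ cost, plus an additive ${\cal O}(|V(G')|)$ overhead summing to ${\cal O}(n)$ over the tree. You need to supply this (or an equivalent) argument.

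A minor remark: your per-iteration arithmetic is garbled (you write $|V(G_F')|\cdot|E(G_F')|={\cal O}(mw^2\cdot mw^2)$ while also asserting $|E(G_F')|={\cal O}(mw^4)$). The paper simply uses one augmenting-path search in time ${\cal O}(|E(G_F')|)={\cal O}(mw(G)^4)$~\cite{GaT83}, and constructing $G_F'$ itself also costs ${\cal O}(mw(G)^4)$, not ${\cal O}(mw(G)^2)$; either way the ${\cal O}(mw(G)^4)$ per-augmentation bound stands at prime nodes.
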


\begin{proof}
The algorithm is recursive.
If $G$ is trivial (reduced to a single node) then we output an empty matching.
Otherwise, let $G' = ({\cal M}(G),E')$ be the quotient graph of $G$.
For every module $M \in {\cal M}(G)$, we call the algorithm recursively on $G[M]$ in order to compute a maximum matching $F_M$ of $G[M]$.
Let $F^* = \bigcup_{M \in {\cal M}(G)} F_M$.
By Lemma~\ref{lem:mw-matching-reduction} (applied to every $M \in {\cal M}(G)$ sequentially), we are left to compute a maximum matching for $G^* = (V, (E \setminus \bigcup_{M \in {\cal M}(G)} E(G[M])) \cup F^*)$.
Therefore from now on assume $G = G^*$.

\smallskip
If $G'$ is edgeless then we can output $F^*$.
Otherwise, by Theorem~\ref{thm:modular-dec} $G'$ is either prime for modular decomposition or a complete graph.

Suppose $G'$ to be prime.
We start from $F_0 = F^*$.
Furthermore, we ensure that the two following hold at every step $t \geq 0$:
\begin{itemize}
\item All the vertices that are matched in $F^*$ are also matched in the current matching $F_t$.
For instance, it is the case if $F_t$ is obtained from $F_0$ by only using augmenting paths in order to increase the cardinality of the matching.
\item For every $M \in {\cal M}(G)$ we store $|F_M \cap F_t|$.
For every $M,M' \in {\cal M}(G)$ adjacent in $G'$ we store $|(M \times M') \cap F_t|$.
In particular, $|F_M \cap F_0| = |F_M|$ and $|(M \times M') \cap F_0| = 0$.
So, it takes time ${\cal O}(\sum_{M \in {\cal M}(G)}deg_{G'}(M))$ to initialize this information, that is in ${\cal O}(|E(G')|) = {\cal O}(mw(G)^2)$.
Furthermore, it takes ${\cal O}(\ell)$-time to update this information if we increase the size of the matching with an augmenting path of length $2\ell$.
\end{itemize}

We construct the graph $G'_{F_t}$ according to Definition~\ref{def:witness-subgraph}.
By using the information we store for the algorithm, it can be done in ${\cal O}(|E(G'_{F_t})|)$-time, that is in ${\cal O}(|E(G')|^2) = {\cal O}(mw(G)^4)$ by Lemma~\ref{lem:witness-size}.
Furthermore by Theorem~\ref{thm:berge} there exists an $F_t$-augmenting path if and only if $F_t$ is not maximum.
Since we can assume all the modules in ${\cal M}(G)$ induce a matching, by Lemma~\ref{lem:witness-maxmatching} there exists an $F_t$-augmenting path in $G$ if and only if there exists an $F_t'$-augmenting path in $G'_{F_t}$.
So, we are left to compute an $F_t'$-augmenting path in $G'_{F_t}$ if any.
It can be done in ${\cal O}(|E(G_{F_t}')|)$-time~\cite{GaT83}, that is in ${\cal O}(mw(G)^4)$.
Furthermore, by construction of $G_{F_t}'$, an $F_t'$-augmenting path $P'$ in $G'_{F_t}$ is also an $F_t$-augmenting path in $G$.
Thus, we can obtain a larger matching $F_{t+1}$ from $F_t$ and $P$.
We repeat the procedure above for $F_{t+1}$ until we reach a maximum matching $F_{t_{\max}}$.
The total running time is in ${\cal O}(mw(G)^4 \cdot t_{\max})$.

\smallskip
Finally, assume $G'$ to be complete.
Let ${\cal M}(G) = \{M_1,M_2, \ldots,M_k\}$ be linearly ordered.
For every $1 \leq i \leq k$, write $G_i = G[\bigcup_{j \leq i}M_j]$.
We compute a maximum matching $F^i$ for $G_i$, from a maximum matching $F^{i-1}$ of $G_{i-1}$ and a maximum matching $F_{M_i}$ of $G[M_i]$, sequentially.
For that, we apply the same techniques as for the prime case, to some ``pseudo-quotient graph'' $G_i'$ isomorphic to $K_2$ ({\it i.e.}, the two vertices of $G_i'$ respectively represent $V(G_{i-1})$ and $M_i$).
Since the pseudo-quotient graphs have size two, this step takes total time ${\cal O}(|V(G')| + (|F^k| - |F^*|))$.

\medskip
Overall, summing the order of all the subgraphs in the modular decomposition of $G$ amounts to ${\cal O}(n)$~\cite{Rao08b}.
Furthermore, a maximum matching of $G$ also has cardinality ${\cal O}(n)$.
Therefore, the total running time is in ${\cal O}(mw(G)^4 \cdot n)$ if the modular decomposition of $G$ is given.
The latter decomposition can be precomputed in ${\cal O}(n+m)$-time~\cite{TCHP08}.
\end{proof}

\subsection{More structure: $(q,q-3)$-graphs}
\label{sec:matching-qq3}

The second main result in Section~\ref{sec:maxmatching} is an ${\cal O}(q(G)^4 \cdot n + m)$-time algorithm for {\sc Maximum Matching} (Theorem~\ref{thm:qq3-maxmatching}).
Our algorithm for $(q,q-3)$-graphs reuses the algorithm described in Theorem~\ref{thm:mw-maxmatching} as a subroutine.
However, applying the same techniques to a case where the quotient graph has super-constant size $\Omega(q)$ happens to be more challenging.
Thus we need to introduce new techniques in order to handle with all the cases presented in Lemma~\ref{lem:reduce-qq3}.

Computing a maximum matching for the {\em quotient graph} is easy.
However, we also need to account for the edges present inside the modules.
For that, we need the following stronger variant of Lemma~\ref{lem:reduce-qq3}.
The latter generalizes similar structure theorems that have been obtained for some specific subclasses~\cite{GRT97}.

\begin{theorem}\label{thm:stronger-mdc-qq3}
For an arbitrary $(q,q-3)$-graph $G$, $q \geq 7$, and its quotient graph $G'$, exactly one of the following conditions is satisfied.
\begin{enumerate}
\item $G$ is disconnected;
\item $\overline{G}$ is disconnected;
\item $G$ is a disc (and so, $G=G'$ is prime for modular decomposition);
\item $G$ is a spider (and so, $G'$ is a prime spider);
\item $G'$ is a spiked $p$-chain $P_k$, or a spiked $p$-chain $\overline{P_k}$.
Furthermore, for every $v \in V(G')$, if the corresponding module $M_v \in {\cal M}(G)$ is such that $|M_v| \geq 2$ then we have $v \in \{v_1,v_k,x,y\}$;
\item $G'$ is a spiked $p$-chain $Q_k$, or a spiked $p$-chain $\overline{Q_k}$.
Furthermore, for every $v \in V(G')$, if the corresponding module $M_v \in {\cal M}(G)$ is such that $|M_v| \geq 2$ then we have either $v \in \{v_1,v_k\}$ or $v = z_i$ for some $i$;
\item $|V(G')| \leq q$.
\end{enumerate}
\end{theorem}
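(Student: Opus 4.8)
The plan is to peel off the statement in the order of its cases, on top of the modular decomposition machinery already set up. By Gallai's theorem (Theorem~\ref{thm:modular-dec}), exactly one of the following holds: $G$ is disconnected (case~1), $\overline G$ is disconnected (case~2), or the quotient graph $G'$ is prime for modular decomposition. So assume from now on that $G$ and $\overline G$ are connected, hence $G'$ is prime on at least three vertices. Applying Lemma~\ref{lem:reduce-qq3} to $G$, the quotient $G'$ is a prime spider, a disc, a prime $p$-tree, or satisfies $|V(G')|\le q$. I read the remaining cases with priority $7>6>5>4>3$: if $|V(G')|\le q$ we are in case~7 with nothing to prove, and otherwise $G'$ is (uniquely, since discs, prime spiders and the spiked $p$-chains are pairwise non-isomorphic once $|V(G')|>q\ge 7$, while the finitely many exceptional small prime $p$-trees all have at most $7\le q$ vertices) a disc, a prime spider, or a spiked $p$-chain among $P_k,\overline{P_k},Q_k,\overline{Q_k}$. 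All the content of the theorem beyond Lemma~\ref{lem:reduce-qq3} is therefore: when $|V(G')|>q$, control which strong modules $M_v\in{\cal M}(G)$ can be nontrivial. Concretely, in the disc case every module must be trivial (so $G=G'$ is a disc, case~3); in the prime spider case only the module attached like the tip $R'$ can be nontrivial (so $G$ is a spider, case~4); and in the $p$-chain cases only the modules sitting at the endpoints $v_1,v_k$ and at the pendant/spike vertices ($x,y$, resp.\ the $z_i$) can be nontrivial (cases~5 and~6).

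The single tool for all of this is a \emph{doubling reduction}. If $M_v$ is nontrivial, fix one representative $r_u\in M_u$ for every $u\in V(G')$ together with a second vertex $r_v'\in M_v\setminus\{r_v\}$; then $H_v:=G\big[\{r_u:u\in V(G')\}\cup\{r_v'\}\big]$ is an induced subgraph of $G$, hence again a $(q,q-3)$-graph, and it is precisely the graph obtained from $G'$ by adding a twin $r_v'$ of $r_v$ — a true twin if $r_v\sim r_v'$ in $G[M_v]$, a false twin otherwise (if $G[M_v]$ has at least three vertices it is complete or has a non-edge, so one of these two sub-cases still applies to a suitable pair in $M_v$). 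The proof then exhibits, for every ``non-peripheral'' vertex $v$, a set $S\subseteq V(H_v)$ with $|S|\le q$ that induces strictly more than $q-3$ paths $P_4$, contradicting that $H_v$, hence $G$, is a $(q,q-3)$-graph. For a plain interior vertex $v=v_i$ of a spiked $p$-chain $P_k$ ($4\le i\le k-3$) one takes $S$ to be the twin pair $r_{v_i},r_{v_i}'$ together with a window of $q-2$ consecutive path vertices split on both sides of $v_i$ and avoiding the spikes; counting shows the induced $P_4$'s are the $P_4$'s inside each side-path plus exactly four $P_4$'s through each twin, totalling about $q$ — more than $q-3$ for $q\ge7$ — and this window fits because $|V(G')|>q$ forces $k\ge q-1$. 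For $v_i$ adjacent to a spike ($i\in\{2,3,k-2,k-1\}$) one instead includes the relevant spike $x$ or $y$ in $S$; for the endpoints $v_1,v_k$ and for $x,y$ themselves no such $S$ exists, which is exactly why they are exempted. The disc case runs the same window argument cyclically on $C_n$ with $n>q$. The prime spider case takes $S$ to be the twin pair (or just the single extra vertex $r_v'$) together with $\Theta(q)$ whole legs $s_j,f(s_j)$, so that the $\binom{m}{2}$ leg--leg $P_4$'s together with the $P_4$'s through $r_v'$ overshoot $q-3$; again $|V(G')|>q$ guarantees enough legs. Finally $\overline{P_k}$, $\overline{Q_k}$ and the co-discs $\overline{C_n}$ reduce to the cases above: $P_4$ is self-complementary, so $G$ is a $(q,q-3)$-graph iff $\overline G$ is, and $M_v$ is a module of $G$ iff it is a module of $\overline G$, so one applies the argument to $\overline G$.

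Two small after-the-fact observations close the loop. First, once all modules except the designated one(s) are shown trivial, the \emph{geometric} conclusions of cases~3 and~4 follow: a disc $C_n,\overline{C_n}$ ($n\ge5$) is prime for modular decomposition, so $G=G'$; and if the only possibly-nontrivial module of $G$ corresponds to the tip $R'$ of a prime spider, then $G$ is a spider in the sense of Section~\ref{sec:prelim} with that module in the role of $R$ (the $K\oplus R$ join, the separation property, $|S|=|K|$ and the matching/anti-matching structure are all inherited from $G'$). Second, exclusivity of the seven cases is immediate from the priority ordering together with the fact that discs, spiders and $p$-chains of order $>q$ are pairwise distinct and distinct from disconnected and co-disconnected graphs.

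I expect the genuine difficulty to lie in the $p$-chain cases, and above all in $Q_k$ and $\overline{Q_k}$: unlike $P_k$, whose induced $P_4$'s are transparent, $Q_k$ carries an intricate prescribed adjacency pattern (a split graph with a specific clique-dominating set and specific $z_i$'s), so after doubling a ``middle'' $v_i$ one must do a careful hand analysis of that pattern to locate enough induced $P_4$'s; and because the window produces only about $q$ induced $P_4$'s — barely above the threshold $q-3$ — the bookkeeping has to be uniform in $q$ with essentially no slack, handling the smallest values $q=7,8$ separately. A cleaner but less self-contained alternative for the $p$-chain and spider cases would be to invoke the $p$-connectedness structure theory of~\cite{BaO99,Bab00} directly: one shows that in a $p$-connected $(q,q-3)$-graph the ``spine'' carrying all the $P_4$'s is prime for modular decomposition, and then reads off from the explicit classification of these spines (discs, prime spiders, $p$-chains) exactly which vertices lie outside every $P_4$ and may therefore be blown up into a nontrivial module.
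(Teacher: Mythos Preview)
Your route is genuinely different from the paper's. You propose an elementary doubling-plus-counting argument: for each ``non-peripheral'' vertex $v$ of $G'$, blow $v$ up to a twin pair inside $G$ and locate a set $S$ of at most $q$ vertices inducing more than $q-3$ copies of $P_4$. The paper instead imports the $p$-connectedness machinery of~\cite{JaO95,BaO99,Bab00}. Concretely, the primeval decomposition (Theorem~\ref{thm:primeval-dec}) splits off a $p$-connected case and a separable case. In the $p$-connected case, Theorem~\ref{thm:p-comp} says that $G$ itself --- not merely $G'$ --- is a prime spider, a disc, or a $p$-tree, and Definition~\ref{def:p-trees} of $p$-trees already lists exactly which vertices may carry a nontrivial module. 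In the separable case, Theorem~\ref{thm:separable-p-comp} forces the quotient of the $p$-connected part $H$ to be a split graph, so by Lemma~\ref{lem:reduce-qq3} $G'$ is a spider, a spiked $p$-chain $Q_k$/$\overline{Q_k}$, or small; if it is a $Q_k$-type then $H$ is a $p$-tree, the nontrivial modules of $H$ are pinned down by Definition~\ref{def:p-trees}, and the only additional module of $G$ is the one containing $V_3$, which the paper observes must sit at some $z_i$ since it has no neighbour among the modules in $V_2$. No $P_4$ is ever counted. Your closing ``cleaner but less self-contained alternative'' \emph{is} the paper's proof.

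What each approach buys: your doubling argument is self-contained and, for the disc, $P_k$/$\overline{P_k}$, and prime-spider cases, the window counts you sketch do go through (with the expected fiddling near the spikes and at $q=7$). The gap is the $Q_k$/$\overline{Q_k}$ case, which you yourself flag: the adjacency pattern of $Q_k$ is irregular enough that exhibiting a uniform-in-$q$ window with the right $P_4$ count is real work you have not carried out, and your own phrase ``essentially no slack'' is the warning sign. Since the content of case~6 is precisely Definition~\ref{def:p-trees}, proved structurally in~\cite{Bab98,Bab00}, completing your route there would amount to re-deriving a chunk of the $(q,q-3)$ structure theory by hand. Either finish that $Q_k$ count explicitly, or cite the $p$-tree classification for the $p$-chain cases --- at which point you have converged to the paper's argument.
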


The proof of Theorem~\ref{thm:stronger-mdc-qq3} is postponed to the appendix.
It is based on a refinement of modular decomposition called {\em primeval decomposition}.

In what follows, we introduce our techniques for the cases where the quotient graph $G'$ is neither degenerate nor of constant size.

\subsubsection*{Simple cases}

\begin{lemma}\label{lem:disc}
For every disc $G=(V,E)$, a maximum matching can be computed in linear-time.
\end{lemma}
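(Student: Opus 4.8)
The plan is to observe that a disc $G$ is either a cycle $C_n$ or a co-cycle $\overline{C_n}$ for some $n\ge 5$, and to handle these two families separately; in both cases the graph is vertex-transitive and highly structured, so a maximum matching can be written down explicitly in linear time without invoking Edmonds' algorithm.

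First I would dispose of the cycle case. If $G=C_n$ with $n\ge 5$, then a maximum matching is obtained by greedily taking alternate edges along the cycle: pick the edges $\{v_1,v_2\},\{v_3,v_4\},\dots$, which yields a matching of size $\lfloor n/2\rfloor$. This is clearly optimal, since a matching in any $n$-vertex graph has size at most $\lfloor n/2\rfloor$, and it is produced in $\mathcal{O}(n)$ time by a single traversal of the cycle. The only point to note is that the cyclic structure of $C_n$ can be recovered in linear time (e.g. it is $2$-regular and connected), so that we can actually order the vertices $v_1,\dots,v_n$ along the cycle.

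Next I would handle the co-cycle case, $G=\overline{C_n}$ with $n\ge 5$. Here the complement is very dense: a co-cycle has a perfect matching when $n$ is even and a near-perfect matching when $n$ is odd, and such a matching can be exhibited directly. Writing the vertices as $v_0,\dots,v_{n-1}$ with $v_i$ \emph{non}adjacent in $G$ only to $v_{i-1}$ and $v_{i+1}$ (indices mod $n$), one checks that pairs $\{v_i, v_{i+2}\}$ are edges of $G$, so for even $n$ the set $\{\{v_0,v_2\},\{v_4,v_6\},\dots\}$ together with $\{\{v_1,v_3\},\{v_5,v_7\},\dots\}$ gives a perfect matching, and for odd $n$ one leaves a single vertex unmatched and pairs up the rest similarly (the ``jump by $2$'' pattern works for $n\ge 5$ since then $2$ is a non-adjacency-free step; one must double-check the wrap-around for small $n$ such as $n=5$, where $\overline{C_5}\cong C_5$ and a matching of size $2$ is optimal). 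In every sub-case the matching is optimal by the trivial $\lfloor n/2\rfloor$ bound, and it is constructed by a single linear-time pass once the co-cycle labelling is known; recovering that labelling amounts to computing the complement's cycle structure, which for a co-cycle can be done in $\mathcal{O}(n+m)$ time (indeed $\overline{C_n}$ is $(n-3)$-regular, so $m=\Theta(n^2)$ and even reading the input is $\Theta(n^2)$, which comfortably absorbs the bookkeeping).

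The main obstacle, such as it is, is purely a matter of care with the explicit index arithmetic in the co-cycle case and with the small values $n=5,6$ where the ``jump by two'' pattern must be verified by hand; there is no real algorithmic difficulty, since the point of the lemma is precisely that discs are simple enough to avoid the general matching machinery. I would therefore keep the proof short: state the two cases, give the explicit matchings with the $\lfloor n/2\rfloor$ optimality bound, and note that the cyclic/co-cyclic labelling is linear-time computable (which is where the decomposition of the previous section hands us the disc in the first place).
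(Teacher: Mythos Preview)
Your approach is essentially the same as the paper's: handle $C_n$ by taking alternate edges, and handle $\overline{C_n}$ by an explicit matching built from the ``jump by $2$'' edges $\{v_i,v_{i+2}\}$. The paper does exactly this, writing $F_n=\{\{4i,4i+2\},\{4i+1,4i+3\}: 0\le i\le \lfloor n/4\rfloor-1\}$ and then patching the few leftover vertices according to $n\bmod 4$.

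There is one genuine gap in your co-cycle construction. You assert that for even $n$ the two families $\{\{v_0,v_2\},\{v_4,v_6\},\dots\}$ and $\{\{v_1,v_3\},\{v_5,v_7\},\dots\}$ give a perfect matching, and that the only thing to double-check is ``wrap-around for small $n$ such as $n=5$''. In fact the problem is not small $n$ but the entire residue class $n\equiv 2\pmod 4$: for such $n$ each of the even-indexed and odd-indexed vertex sets has odd size, so your pairing leaves one even-indexed vertex and one odd-indexed vertex unmatched, and these two leftovers are consecutive on the cycle and hence \emph{non}adjacent in $\overline{C_n}$. For example, with $n=10$ your scheme yields $\{\{0,2\},\{4,6\},\{1,3\},\{5,7\}\}$ and leaves $v_8,v_9$ unmatched. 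The paper fixes this with a single local swap: when $n\equiv 2\pmod 4$, replace the edge $\{0,2\}$ by the two edges $\{n-2,0\}$ and $\{n-1,2\}$, which are both present in $\overline{C_n}$ since $n\ge 6$. Once you add this one-line correction your argument is complete and matches the paper's.
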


\begin{proof}
If $G=C_n, \ n \geq 5$ is a cycle then the set of edges $\{ \{2i,2i+1\} \mid 0 \leq i \leq \left\lfloor n/2 \right\rfloor - 1 \}$ is a maximum matching.
Otherwise, $G=\overline{C_n}$ is a co-cycle.  
Let $F_n$ contain all the edges $\{4i,4i+2\}, \ \{4i+1,4i+3\}, \ 0 \leq i \leq \left\lfloor n/4 \right\rfloor - 1$.
There are three cases.
If $n = 0 \pmod 4$ or $n = 1 \pmod 4$ then there is at most one vertex unmatched by $F_n$, and so, $F_n$ is a maximum matching of $G$.
Otherwise, if $n = 3 \pmod 4$ then a maximum matching of $G$ is obtained by adding the edge $\{n-3,n-1\}$ to $F_n$.
Finally, assume $n = 2 \pmod 4$.
By construction, $F_n$ leaves unmatched the two of $n-2,n-1$.
We obtain a perfect matching of $G$ from $F_n$ by replacing $\{0,2\}$ with $\{n-2,0\}, \ \{n-1,2\}$.
Note that it is possible to do that since $n \geq 5$.
\end{proof}

\begin{lemma}\label{lem:spider}
If $G=(S \cup K \cup R, E)$ is a spider then there exists a maximum matching of $G$ composed of: a perfect matching between $K$ and $S$; and a maximum matching of $G[R]$.
\end{lemma}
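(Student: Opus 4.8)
The plan is to exhibit a matching of the required form explicitly, and then prove its optimality by a short counting argument against an arbitrary matching.

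First I would construct a perfect matching $M_{SK}$ between $S$ and $K$. By the definition of a spider, $|S| = |K| \ge 2$ and the bipartite graph between $S$ and $K$ is either a perfect matching (the matching case, given directly by the bijection $f$) or the bipartite complement of one (the anti-matching case). In the first case $M_{SK} = \{\,\{s,f(s)\} : s \in S\,\}$ works immediately. In the second case, writing $S = \{s_1,\dots,s_t\}$ with $K = \{f(s_1),\dots,f(s_t)\}$ and $t = |S| = |K| \ge 2$, the edges $\{s_i, f(s_{i+1 \bmod t})\}$ are all present (since $f(s_{i+1}) \ne f(s_i)$, and $N(s_i)\cap K = K \setminus \{f(s_i)\}$) and form a perfect matching between $S$ and $K$. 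In every case $|M_{SK}| = |K| = |S|$. Then I would let $M_R$ be any maximum matching of $G[R]$ (empty if $R = \emptyset$), and set $M := M_{SK} \cup M_R$; since $S \cup K$ and $R$ are vertex-disjoint, $M$ is a matching of $G$ of size $|K| + |M_R|$, of exactly the claimed form.

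The main step is to show $M$ is maximum. Let $F$ be an arbitrary matching of $G$. Because $S$ is a stable set and $K$ separates $S$ from $R$, there are no edges inside $S$ and no edges between $S$ and $R$; consequently every edge of $F$ falls into one of four disjoint types: $F_R$ (both ends in $R$), $F_K$ (both ends in $K$), $F_{KR}$ (one end in $K$, one in $R$), and $F_{KS}$ (one end in $K$, one in $S$). Counting the distinct vertices of $K$ covered by $F$ yields $2|F_K| + |F_{KR}| + |F_{KS}| \le |K|$, hence $|F_K| + |F_{KR}| + |F_{KS}| \le |K|$. Since $F_R$ is a matching of $G[R]$ we also have $|F_R| \le |M_R|$. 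Therefore $|F| = |F_R| + |F_K| + |F_{KR}| + |F_{KS}| \le |M_R| + |K| = |M|$, so $M = M_{SK}\cup M_R$ is a maximum matching of $G$, as desired.

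I do not expect a genuine obstacle in this argument. The only two points needing a little care are: the construction of the perfect matching between $S$ and $K$ in the anti-matching subcase, which is where the hypothesis $|S| = |K| \ge 2$ is used; and checking that the four-way partition of $F$ is exhaustive, which holds precisely because $S$ is an independent set and $K$ is a separator between $S$ and $R$.
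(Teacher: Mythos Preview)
Your proof is correct, and it takes a somewhat different route from the paper's. The paper starts, as you do, from a perfect matching $F_0$ between $S$ and $K$, but then proves optimality indirectly: it augments $F_0$ to a maximum matching $F_{\max}$ via Berge's theorem, observes that $V(F_0)\subseteq V(F_{\max})$, and then uses the structural fact that every vertex of $S$ can only be matched into $K$ (since $S$ is stable and $K$ separates $S$ from $R$) to conclude that $F_{\max}$ must consist of a perfect $S$--$K$ matching together with a maximum matching of $G[R]$. Your argument bypasses Berge's theorem entirely and replaces it with an explicit counting bound: partitioning an arbitrary matching $F$ by edge type and bounding the $K$-incident part by $|K|$ and the $R$-internal part by $|M_R|$. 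Your approach is more self-contained and arguably cleaner; the paper's is terser but leans on the augmenting-path machinery already in play elsewhere in the section. Both hinge on exactly the same structural facts about spiders (stability of $S$, separation by $K$), so neither is deeper than the other.
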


\begin{proof}
We start from a perfect matching $F_0$ between $K$ and $S$.
We increase the size of $F_0$ using augmenting paths until it is no more possible to do so.
By Theorem~\ref{thm:berge}, the obtained matching $F_{\max}$ is maximum.
Furthermore, either there is a perfect matching between $K$ and $S$ or there is at least one vertex of $S$ that is unmatched.
Since $V(F_0) \subseteq V(F_{\max})$ the latter proves the lemma.
\end{proof}

\subsubsection*{The case of prime $p$-trees}

Roughly, when the quotient graph $G'$ is a prime $p$-tree, our strategy consists in applying the following reduction rules until the graph is empty.
\begin{enumerate}
\item Find an isolated module $M$ (with no neighbour).
Compute a maximum matching for $G[M]$ and for $G[V \setminus M]$ separately.
\item Find a pending module $M$ (with one neighbour $v$).
Compute a maximum matching for $G[M]$.
If it is not a perfect matching then add an edge between $v$ and any unmatched vertex in $M$, then discard $M \cup \{v\}$.
Otherwise, discard $M$ (Lemma~\ref{lem:pending-module}).
\item Apply a technique known as ``{\sc SPLIT} and {\sc MATCH}''~\cite{YuY93} to some module $M$ and its neighbourhood $N_G(M)$.
We do so only if $M$ satisfies some properties.
In particular, we apply this rule when $M$ is a universal module (with a complete join between $M$ and $V \setminus M$).
See Definition~\ref{def:split-and-match} and Lemma~\ref{lem:join}.
\end{enumerate} 

We introduce the reduction rules below and we prove their correctness.

\paragraph{Reduction rules.}
The following lemma generalizes a well-known reduction rule for {\sc Maximum Matching}: add a pending vertex and its unique neighbour to the matching then remove this edge~\cite{KaS81}.

\begin{lemma}\label{lem:pending-module}
Let $M$ be a module in a graph $G=(V,E)$ such that $N_G(M) = \{v\}$, $F_M$ is a maximum matching of $G[M]$ and $F_M^*$ is obtained from $F_M$ by adding an edge between $v$ and any unmatched vertex of $M$ (possibly, $F_M^* = F_M$ if it is a perfect matching).
There exists a maximum matching $F$ of $G$ such that $F_M^* \subseteq F$.  
\end{lemma}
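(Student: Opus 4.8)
The plan is to prove that there is an optimal solution which matches the whole pending module internally as much as possible and then uses the single outside neighbour $v$ to absorb a leftover vertex, exactly as one does for a classical pending vertex. First I would fix an arbitrary maximum matching $F^{\mathrm{opt}}$ of $G$ and compare it locally with $F_M^*$ on the set $M \cup \{v\}$. The key structural fact is that, since $N_G(M) = \{v\}$, no edge of $G$ leaves $M$ except possibly an edge incident to $v$; hence every edge of $F^{\mathrm{opt}}$ touching $M$ either lies entirely inside $M$ or is of the form $\{v,w\}$ with $w \in M$. I would split into two cases according to whether $v$ is matched by $F^{\mathrm{opt}}$ to a vertex of $M$ or not (the latter includes $v$ being matched outside $M$ or being unmatched).

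The main argument is an exchange/counting step. Let $S = V(F^{\mathrm{opt}}) \cap M$ be the vertices of $M$ matched by $F^{\mathrm{opt}}$ \emph{inside} $M$, and note the edges of $F^{\mathrm{opt}}$ inside $M$ form a matching of $G[M]$, so $|F^{\mathrm{opt}} \cap E(G[M])| \le |F_M|$. The number of vertices of $M$ that $F^{\mathrm{opt}}$ matches is then at most $2|F_M|$ if $v$ is not matched into $M$, and at most $2|F_M|+1$ if $v$ is matched into $M$ (at most one such boundary edge, since $v$ has only one matching partner). I claim $F_M^*$ matches at least as many vertices of $M$, plus, in the boundary case, uses $v$ just as efficiently: if $G[M]$ has a perfect matching then $F_M^* = F_M$ saturates all of $M$ and this already matches $\ge 2|F_M|$ vertices of $M$ — at least as many as $F^{\mathrm{opt}}$ unless $F^{\mathrm{opt}}$ matched $v$ into $M$, in which case $v$ is still free in $F_M^*$ to be matched outside exactly as $F^{\mathrm{opt}}$ did. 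If $G[M]$ has no perfect matching then $F_M^*$ matches $v$ to an unmatched vertex of $M$, saturating $2|F_M|+1$ vertices of $M$ and using up $v$; this is $\ge$ what $F^{\mathrm{opt}}$ achieves on $M \cup \{v\}$ in both cases. Formally, I would take $F := (F^{\mathrm{opt}} \setminus E(G[M \cup \{v\}])) \cup F_M^*$, after first modifying $F^{\mathrm{opt}}$ so that, whenever $v$ was matched outside $M$ while $F_M^*$ wants to use $v$, we instead drop $v$'s outside edge; a short case check shows $|F| \ge |F^{\mathrm{opt}}|$, hence $F$ is also maximum and contains $F_M^*$.

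The cleanest way to package this is probably via augmenting paths and Theorem~\ref{thm:berge} (Berge): start from any maximum matching $F'$ of $G[V\setminus M]$ together with $F_M^*$; call this $F$. Any $F$-augmenting path $P$ in $G$ must, to enter $M$, pass through $v$ (the unique neighbour of $M$), and since in $F$ every vertex of $M$ that could be an endpoint is already matched (all of $M$ when $F_M$ is perfect; all but possibly one when it is not, and that one is matched to $v$ in $F_M^*$), $P$ cannot have an endpoint in $M$; so $P$ either avoids $M$ entirely, or enters and leaves through $v$, which is impossible for a path. Then replacing $F$ by $F \mathbin{\triangle} E(P)$ never destroys $F_M^*$ on $M$ (the symmetric difference only touches $M$ through $v$), so we may keep augmenting until $F$ is maximum while maintaining $F_M^* \subseteq F$. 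I expect the main obstacle to be the bookkeeping in the non-perfect case: one must verify that reserving $v$ for $M$ does not lose an edge at $v$ that the optimum used toward the rest of the graph — this is handled precisely because that outside edge, if present, would make $v$'s $M$-vertex an exposed endpoint of a potential augmenting path, and the counting above shows the trade is never unfavourable. I would write this out carefully since it is exactly the point where the naive ``pending vertex'' intuition needs the module structure.
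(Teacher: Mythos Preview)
Your exchange/counting argument is correct once the three sub-cases are written out: with $F := \bigl(F^{\mathrm{opt}} \setminus E(G[M\cup\{v\}])\bigr) \cup F_M^*$, possibly after dropping $v$'s outside edge, the bound $|F^{\mathrm{opt}}\cap E(G[M])|\le |F_M|$ gives $|F|\ge |F^{\mathrm{opt}}|$ in every case. This is more elementary than the paper's proof, which first invokes Lemma~\ref{lem:mw-matching-reduction} to replace $E(G[M])$ by $F_M$ and then cites two reduction rules from~\cite{KaS81}: in the non-perfect case the chosen $u$ is literally a pending vertex of the reduced graph $G_M'$, and in the perfect case every vertex of $M$ has degree two in $G_M'$, so the degree-$2$ folding rule applies edge by edge. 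Your direct exchange avoids both citations.

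Your ``cleanest'' augmenting-path packaging, however, has a genuine gap. You claim that in the non-perfect case $F_M^*$ saturates ``all but possibly one'' vertex of $M$, and that this leftover is matched to $v$. That is false: a maximum matching $F_M$ of $G[M]$ may leave many vertices of $M$ exposed, and $F_M^*$ rescues only one of them via $v$. So an $F$-augmenting path $P$ can perfectly well have an exposed endpoint $w\in M$, and your sentence ``$P$ cannot have an endpoint in $M$'' is unjustified. The conclusion you want still holds, but for a different reason: any such $P$ is eventually trapped in $M$ (the only exit is $v$; once $P$ traverses the matched edge $\{v,u\}$ it cannot leave $M$ again), and its tail inside $M$ --- from $u$ if $v$ was visited, from $w$ otherwise --- is then an $F_M$-alternating path between two $F_M$-exposed vertices of $M$, i.e.\ an $F_M$-augmenting path in $G[M]$, contradicting the maximality of $F_M$. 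This is the argument that actually works; your sketch does not supply it. Either keep the counting proof, which is already complete, or rewrite the non-perfect case of the augmenting-path version along these lines (and take $F'$ to be a maximum matching of $G[V\setminus (M\cup\{v\})]$, not of $G[V\setminus M]$, so that $F'\cup F_M^*$ is a valid matching to begin with).
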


\begin{proof}
By Lemma~\ref{lem:mw-matching-reduction}, every maximum matching for $G_M' = (V, (E \setminus E(G[M]) \cup F_M)$ is also a maximum matching for $G$.
There are two cases.

Suppose there exists $u \in M \setminus V(F_M)$.
Then, $u$ is a pending vertex of $G_M'$.
There exists a maximum matching of $G_M'$ that contains the edge $\{u,v\}$~\cite{KaS81}.
Furthermore, removing $u$ and $v$ disconnects the vertices of $M \setminus u$ from $V \setminus N_G[M]$.
It implies that a maximum matching $F'$ of $G \setminus (u,v)$ is the union of any maximum matching of $G[M \setminus u]$ with any maximum matching of $G[V \setminus N_G[M]]$.
In particular, $F_M$ is contained in some maximum matching $F'$ of $G \setminus (u,v)$.
Since $\{u,v\}$ is contained in a maximum matching of $G$, therefore $F = F' \cup \{\{u,v\}\}$ is a maximum matching of $G$.
We are done since $F_M^* = F_M \cup \{\{u,v\}\} \subseteq F$ by construction.

Otherwise, $F_M$ is a perfect matching of $G[M]$.
For every edge $\{x,y\} \in F_M$, we have that $x,y$ have degree two in $G_M'$.
The following reduction rule has been proved to be correct in~\cite{KaS81}: remove any $x$ of degree two, merge its two neighbours and increase the size of the solution by one unit.
In our case, since $N_{G_M'}[y] \subseteq N_{G_M'}[v]$ the latter is equivalent to put the edge $\{x,y\}$ in the matching. 
Overall, applying the reduction rule to all edges $\{x,y\} \in F_M$ indeed proves the existence of some maximum matching $F$ such that $F_M = F_M^* \subseteq F$. 
\end{proof}

Then, we introduce a technique known as ``{\sc SPLIT} and {\sc MATCH}'' in the literature~\cite{YuY93}.

\begin{definition}\label{def:split-and-match}
Let $G=(V,E)$ be a graph, $F \subseteq E$ be a matching of $G$.
Given some module $M \in {\cal M}(G)$ we try to apply the following two operations until none of them is possible:
\begin{itemize}
\item Suppose there exist $u \in M, \ v \in N_G(M)$ unmatched.
We add an edge $\{u,v\}$ to the matching ({\sc MATCH}).
\item Otherwise, suppose there exist $u,u' \in M, \ v,v' \in N_G(M)$ such that $u$ and $u'$ are unmatched, and $\{v,v'\}$ is an edge of the matching.
We replace the edge $\{v,v'\}$ in the matching by the two new edges $\{u,v\}, \ \{u',v'\}$ ({\sc SPLIT}).
\end{itemize}
\end{definition}

The ``{\sc SPLIT} and {\sc MATCH}'' has been applied to compute a maximum matching in linear-time for cographs and some of its generalizations~\cite{FGV99,FPT97,YuY93}.
Our Theorem~\ref{thm:mw-maxmatching} can be seen as a broad generalization of this technique.
In what follows, we introduce more cases where the ``{\sc SPLIT} and {\sc MATCH}'' technique can be used in order to compute a maximum matching directly.

\begin{lemma}\label{lem:join}
Let $G = G_1 \oplus G_2$ be the join of two graphs $G_1,G_2$ and let $F_1,F_2$ be maximum matchings for $G_1,G_2$, respectively.
For $F = F_1 \cup F_2$, applying the `{\sc SPLIT} and {\sc MATCH}'' technique to $V(G_1)$, then to $V(G_2)$ leads to a maximum matching of $G$.
\end{lemma}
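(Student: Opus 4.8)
The plan is to show that after running ``{\sc SPLIT} and {\sc MATCH}'' to exhaustion on $V(G_1)$ and then on $V(G_2)$, the resulting matching $F^*$ is maximum by verifying, via Theorem~\ref{thm:berge}, that there is no $F^*$-augmenting path. First I would record the key invariant of the procedure: both {\sc MATCH} and {\sc SPLIT} only ever \emph{increase} the set of matched vertices, so every vertex matched in $F = F_1\cup F_2$ stays matched in $F^*$, and in particular $V(F_1)\subseteq V(F^*)$ and $V(F_2)\subseteq V(F^*)$. Since $F_1$ is a maximum matching of $G_1$, the set $U_1$ of vertices of $G_1$ unmatched in $F_1$ forms an independent set in $G_1$; likewise for $U_2$ in $G_2$. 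After the procedure terminates, let $U_i^*\subseteq U_i$ be the vertices of $G_i$ still unmatched in $F^*$. The termination condition tells us exactly two things: (i) one of $U_1^*,U_2^*$ is empty --- otherwise a {\sc MATCH} step would still apply across the complete join --- and (ii) if say $U_2^* = \emptyset$ and $U_1^*\neq\emptyset$, then no edge of $F^*$ has both ends in $V(G_2)$ that could be {\sc SPLIT} using two unmatched vertices of $U_1^*$; combined with (i) this means $F^*$ restricted to $V(G_2)$ is such that every edge of $F^*\cap E(G_2)$ has already been ``consumed'', i.e. $|U_1^*|\le 1$, so at most one vertex of the whole graph is unmatched.

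Next I would argue that a matching leaving at most one vertex unmatched is automatically maximum, which settles the case $\min\{|U_1^*|,|U_2^*|\} = 0$ and $|U_1^*|\le 1$. The only remaining possibility to rule out is that the procedure terminates with, say, $U_2^* = \emptyset$ and $|U_1^*|\ge 2$. Here I would use that $G = G_1\oplus G_2$ is a complete join: pick $u,u'\in U_1^*$; since $U_1^*$ is independent in $G_1$ there is no {\sc MATCH} within reach (all of $V(G_2)$ is matched), but if \emph{any} edge $\{v,v'\}\in F^*$ has both ends in $V(G_2)$ then a {\sc SPLIT} step would still be available, contradicting termination; and if no edge of $F^*$ lies inside $V(G_2)$, then every vertex of $V(G_2)$ is matched to a vertex of $V(G_1)$, so $|V(G_2)|\le |V(G_1)| - |U_1^*| \le |V(G_1)|-2$, and we can directly exhibit an augmenting path $u, v, v'', u''$ --- more carefully, I would trace an $F^*$-alternating path starting at $u\in U_1^*$, crossing into $V(G_2)$ and back, which must eventually reach another unmatched vertex because the alternating walk cannot close up; this contradicts Theorem~\ref{thm:berge} only if $F^*$ were already known maximum, so instead I phrase it as: such a path being augmenting contradicts the assumption that no {\sc MATCH}/{\sc SPLIT} operation applies. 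Cleanest is to observe that with $U_2^*=\emptyset$ the restriction of $F^*$ to the join is a matching saturating $V(G_2)$, hence has size $\ge |V(G_2)|$, and total size $|F^*|\ge |F_1| + |V(G_2)\setminus V(F_2)|\cdot(\text{something})$ --- I will instead bound $|F^*|$ from below directly against $|F_1|+|F_2|$ plus the number of successful operations and compare with the trivial upper bound $\mu(G) \le \lfloor |V|/2\rfloor$.

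The main obstacle I anticipate is the bookkeeping in the final case: one has to be sure that the ``{\sc SPLIT} and {\sc MATCH}'' procedure genuinely exhausts all augmenting opportunities that cross the join, not merely the length-$1$ and length-$3$ ones it literally tests. The resolution is structural rather than computational: because the join is complete, \emph{any} $F^*$-alternating path from an unmatched vertex of $G_1$ to an unmatched vertex of $G_2$ can be shortcut to length $1$; and any $F^*$-augmenting path with both ends in $G_1$ must use at least one vertex of $G_2$, and by the completeness of the join and independence of $U_1^*$ one extracts from it a configuration $u,u'\in U_1^*$, $v,v'\in V(G_2)$ with $\{v,v'\}\in F^*$ --- precisely the trigger of a {\sc SPLIT} step. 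So termination of the procedure forces non-existence of \emph{all} augmenting paths, and Theorem~\ref{thm:berge} gives maximality. I would also note, for the running-time claim implicit in how the lemma is used, that each operation increases $|F^*|$ or the number of matched vertices, so the procedure halts after $O(|V|)$ steps; but since the statement only asserts correctness, the combinatorial argument above suffices.
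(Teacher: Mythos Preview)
Your approach is genuinely different from the paper's. The paper does not analyse augmenting paths at all: it observes that the procedure never uses any edge of $(E(G_1)\setminus F_1)\cup(E(G_2)\setminus F_2)$, so it produces exactly the same output on $G$ as on the cograph $G^{*}=(V,(V(G_1)\times V(G_2))\cup F_1\cup F_2)$; the correctness for cographs is cited from~\cite{YuY93}, and Lemma~\ref{lem:mw-matching-reduction} transfers maximality from $G^{*}$ back to $G$. That is the whole proof. Your route, via Berge's theorem, is more self-contained but also considerably longer, and as written it has a real gap.

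The gap is in the last step. You correctly reduce to the situation (say) $U_2^{*}=\emptyset$, $|U_1^{*}|\ge 2$, and, by termination of {\sc SPLIT} on $V(G_1)$, no edge of $F^{*}$ lies inside $V(G_2)$. You then assert that from any $F^{*}$-augmenting path with both ends in $U_1^{*}$ one can ``extract'' $v,v'\in V(G_2)$ with $\{v,v'\}\in F^{*}$, contradicting termination. But in precisely this case there is \emph{no} such edge to extract: every matching edge touching $V(G_2)$ is a cross edge. So your contradiction does not fire, and the preceding attempt (``directly exhibit an augmenting path $u,v,v'',u''$\,'') goes the wrong direction --- you would be exhibiting an augmenting path when you need to rule one out. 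The comparison with the bound $\lfloor|V|/2\rfloor$ is also too weak to conclude.

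What does work, and is very close to what you sketch, is a counting argument. In the configuration above, $F^{*}$ consists of all of $F_1$ together with $|V(G_2)|$ cross edges, so $|F^{*}|=\mu(G_1)+|V(G_2)|$. On the other hand, for \emph{any} matching $M$ of $G$, every edge of $M$ either lies in $E(G_1)$ (there are at most $\mu(G_1)$ such edges) or is incident to $V(G_2)$ (there are at most $|V(G_2)|$ such edges, since $M$ is a matching); hence $|M|\le \mu(G_1)+|V(G_2)|=|F^{*}|$. The symmetric case $|U_2^{*}|\ge 2$ is handled the same way with the roles of $G_1,G_2$ swapped --- and here you should use the {\sc MATCH}-before-{\sc SPLIT} priority in Definition~\ref{def:split-and-match} to ensure that phase~1 performs no {\sc SPLIT} (so the $F_2$-edges survive into phase~2). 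With this fix your direct argument goes through; the paper's reduction is simply shorter.
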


\begin{proof}
The lemma is proved in~\cite{YuY93} when $G$ is a cograph.
In particular, let $G^* = (V, (V(G_1) \times V(G_2)) \cup F_1 \cup F_2)$.
Since it ignores the edges from $(E(G_1) \setminus F_1) \cup (E(G_2) \setminus F_2)$, the procedure outputs the same matching for $G$ and $G^*$.
Furthermore, $G^*$ is a cograph, and so, the outputted matching is maximum for $G^*$.
By Lemma~\ref{lem:mw-matching-reduction}, a maximum matching for $G^*$ is a maximum matching for $G$.
\end{proof}

\paragraph{Applications.}

We can now combine our reductions rules as follows.

\begin{proposition}\label{prop:maxmatching-prime-ptree}
Let $G=(V,E)$ be a $(q,q-3)$-graph, $q \geq 7$, such that its quotient graph $G'$ is isomorphic to a prime $p$-tree.
For every $M \in {\cal M}(G)$ let $F_M$ be a maximum matching of $G[M]$ and let $F^* = \bigcup_{M \in {\cal M}(G)}F_M$.

A maximum matching $F_{\max}$ for $G$ can be computed in ${\cal O}(|V(G')|+|E(G')|+|F_{\max}|-|F^*|)$-time if $F^*$ is given as part of the input.
\end{proposition}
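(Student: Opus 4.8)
The plan is to analyze the structure of a prime $p$-tree $G'$ — which by definition is one of the four infinite families (spiked $p$-chains $P_k$, $\overline{P_k}$, $Q_k$, $\overline{Q_k}$) or one of finitely many small graphs — and observe that in each family there is a vertex of $G'$ that is either isolated, pending, or universal, and moreover that deleting such a vertex together with its corresponding module leaves a graph whose quotient is again a (smaller) prime $p$-tree or a graph of constant size. This gives a recursion. First I would reduce by Lemma~\ref{lem:mw-matching-reduction}, assuming from now on that every module $M\in{\cal M}(G)$ induces the matching $F_M$; so it suffices to compute a maximum matching of $G^*=(V,(E\setminus\bigcup_M E(G[M]))\cup F^*)$. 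The bookkeeping needed is only the list of modules, their sizes, and the number of unmatched vertices in each, all computable in ${\cal O}(|V(G')|+|E(G')|)$ time once $F^*$ is given.

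Next I would do a case analysis driven by the structure of $G'$. In a spiked $p$-chain $P_k$, the added vertex $x$ (if present) is pending on $v_2$, and likewise $y$ on $v_{k-1}$; if $x,y$ are absent then $v_1$ is pending on $v_2$ and $v_k$ on $v_{k-1}$. So there is always a pending vertex $w$ of $G'$; let $M_w$ be its module and $v$ the unique neighbour in $G$ of the vertices of $M_w$. Apply Lemma~\ref{lem:pending-module}: compute the maximum matching $F_{M_w}$ of $G[M_w]$ (already done — it is $F_w\subseteq F^*$), extend it by an edge to $v$ if it is not perfect, and either delete $M_w$ (if $F_w$ is perfect) or delete $M_w\cup\{v\}$ (otherwise). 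In either case the remaining graph has a quotient graph that is again a spiked $p$-chain $P_{k'}$ with $k'<k$, or has fewer than $q$ vertices in its quotient — so we recurse (invoking Theorem~\ref{thm:mw-maxmatching} for the constant-size base case). For $\overline{P_k}$, $Q_k$, $\overline{Q_k}$ one proceeds symmetrically: in each case the structural description given in Section~\ref{sec:prelim} exhibits a vertex of the quotient that is either pending, or universal — in the complemented chains the analogues of $v_1,v_k$ or $x,y$ become universal or near-universal — and for universal modules we use Lemma~\ref{lem:join} (applying {\sc SPLIT} and {\sc MATCH}) to absorb that module and its neighbourhood into the matching in time linear in the local degree sum plus the increase in matching size. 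The finitely many small prime $p$-trees of order $\le 7$ are handled directly (or via Theorem~\ref{thm:mw-maxmatching}) in constant time.

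For the running time: each reduction step is charged ${\cal O}(\deg_{G'}(w)+ (\text{local increase in }|F|))$ time, and the modules removed at successive steps are disjoint, so the degrees telescope to ${\cal O}(|V(G')|+|E(G')|)$ and the matching-size increases telescope to ${\cal O}(|F_{\max}|-|F^*|)$. Constant-size quotient base cases cost ${\cal O}(1)$ by Theorem~\ref{thm:mw-maxmatching} and there are ${\cal O}(|V(G')|)$ of them at most. Adding these gives the claimed ${\cal O}(|V(G')|+|E(G')|+|F_{\max}|-|F^*|)$ bound.

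The main obstacle I anticipate is verifying that the recursion \emph{closes}: I need that after peeling off a pending or universal module from a spiked $p$-chain (or its complement), the quotient of the resulting induced subgraph is \emph{exactly} a smaller member of the same family (or drops below size $q$), with no stray non-trivial modules appearing. This is where Theorem~\ref{thm:stronger-mdc-qq3} is essential — it pins down precisely which vertices $v$ of $G'$ can have $|M_v|\ge 2$ (only the endpoints $v_1,v_k$ and the "spikes" $x,y$ or $z_i$), so the modules being peeled are exactly the ones whose internal maximum matchings we have already computed, and the "interior" vertices of the chain are singleton modules that behave just like in the bare graph $P_k$, $\overline{P_k}$, $Q_k$, $\overline{Q_k}$. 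The secondary subtlety is the $\overline{P_k}$ and $\overline{Q_k}$ cases, where "pending in $P_k$" becomes "universal-to-almost-everything in $\overline{P_k}$" and one must be careful that the {\sc SPLIT} and {\sc MATCH} step of Lemma~\ref{lem:join} applies — which it does, since a universal module $M$ gives $G=G[M]\oplus G[V\setminus M]$. Once these structural closure facts are checked family-by-family, the rest is routine.
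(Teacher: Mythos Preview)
Your overall recursion idea is exactly what the paper does for the spiked $p$-chains $P_k$, $Q_k$, and $\overline{Q_k}$, and your use of Theorem~\ref{thm:stronger-mdc-qq3} to guarantee the interior vertices are singleton modules is on point. Two issues, one minor and one genuine.

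First, a small slip: in a spiked $p$-chain $P_k$ the vertex $x$ has $N(x)=\{v_2,v_3\}$, so $x$ is \emph{not} pending. This is harmless, since $v_1$ is always pending on $v_2$ regardless of whether $x$ is present, and after removing $M_{v_1}$ (and possibly $v_2$) the set $M_x$ or $M_x\cup\{v_2\}$ becomes a pending module on $v_3$ --- which is how the paper proceeds.

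The real gap is the case $G'\cong$ spiked $p$-chain $\overline{P_k}$. Here your trichotomy fails: no vertex of $\overline{P_k}$ is isolated, pending, or universal. Indeed $v_1$ misses $v_2$, the spike $x$ misses $\{v_2,v_3\}$, and every interior $v_i$ misses two neighbours; so ``near-universal'' is the best you get, and Lemma~\ref{lem:join} (which requires a genuine complete join $G=G[M]\oplus G[V\setminus M]$) does not apply. Your recursion therefore does not close in this case. The paper handles $\overline{P_k}$ by an entirely different, non-recursive argument: it writes down an explicit near-perfect matching $F_{k-2}$ of the interior $\overline{P_{k-2}}$, sets $F_0=F^*\cup F_{k-2}$, and then repeatedly applies {\sc SPLIT} and {\sc MATCH} to the four modules $M_{v_1},M_{v_k},M_x,M_y$ until stable. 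The non-trivial part is proving this terminates at a maximum matching, which the paper does by a direct shortest-augmenting-path contradiction exploiting that $|V\setminus N_G[M]|\le 2$ for each of these four $M$. You would need either this argument or some substitute for it; the peeling scheme alone does not suffice.
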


\begin{proof}
There are five cases.
If $G'$ has order at most $7$ then we can apply the same techniques as for Theorem~\ref{thm:mw-maxmatching}.
Otherwise, $G'$ is either a spiked $p$-chain $P_k$, a spiked $p$-chain $\overline{P_k}$, a spiked $p$-chain $Q_k$ or a spiked $p$-chain $\overline{Q_k}$.

\paragraph{Case $G$ is a spiked $p$-chain $P_k$.}
By Theorem~\ref{thm:stronger-mdc-qq3} we have that $(v_2,v_3,\ldots,v_{k-1})$ are vertices of $G$.
In this situation, since $N_{G'}(v_1) = v_2$, $M_{v_1}$ is a pending module.
We can apply the reduction rule of Lemma~\ref{lem:pending-module} to $M_{v_1}$.
Doing so, we discard $M_{v_1}$ and possibly $v_2$.
Let $S = M_x$ if $v_2$ has already been discarded and let $S = M_x \cup \{v_2\}$ otherwise.
We have that $S$ is a pending module in the resulting subgraph, with $v_3$ being its unique neighbour.
Furthermore, by Lemma~\ref{lem:pending-module} we can compute a maximum matching of $G[S]$ from $F_{M_{v_x}}$, by adding an edge between $v_2$ (if it is present) and an unmatched vertex in $M_x$ (if any).
So, we again apply the reduction rule of Lemma~\ref{lem:pending-module}, this time to $S$.
Doing so, we discard $S$, and possibly $v_3$.
Then, by a symmetrical argument we can also discard $M_{v_k}, \ M_y, \ v_{k-1}$ and possibly $v_{k-2}$.
We are left with computing a maximum matching for some subpath of $(v_3,v_4,\ldots,v_{k-2})$, that can be done in linear-time by taking half of the edges.

\paragraph{Case $G$ is a spiked $p$-chain $\overline{P_k}$.}
By Theorem~\ref{thm:stronger-mdc-qq3}, the nontrivial modules of ${\cal M}(G)$ can only be $M_{v_1},M_{v_k},M_x,M_y$. 
In particular, $F^* = F_{M_{v_1}} \cup F_{M_{v_k}} \cup F_{M_x} \cup F_{M_y}$.
Let $U = M_{v_1} \cup M_{v_k} \cup M_x \cup M_y$.
The graph $G \setminus U$ is isomorphic to $\overline{P_{k-2}}, \ k \geq 6$.
Furthermore, let $F_{k-2}$ contain the edges $\{v_2,v_{\left\lceil k/2 \right\rceil +1}\}, \ \{v_{\left\lfloor k /2 \right\rfloor}, v_{k-1}\}$ plus all the edges $\{v_i,v_{k+1-i}\}, \ 3 \leq i \leq \left\lfloor k/2 \right\rfloor -1$.
Observe that $F_{k-2}$ is a maximum matching of $\overline{P_{k-2}}$.
In particular it is a perfect matching of $\overline{P_{k-2}}$ if $k$ is even, and if $k$ is odd then it only leaves vertex $v_{\left\lceil k/2 \right\rceil}$ unmatched.
We set $F_0 = F^* \cup F_{k-2}$ to be the initial matching.
Then, we repeat the procedure below until we cannot increase the matching anymore.
We consider the modules $M \in \{ M_{v_1},M_{v_k},M_x,M_y \}$ sequentially.
For every $M$ we try to apply the {\sc SPLIT} and {\sc MATCH} technique of Definition~\ref{def:split-and-match}.

Overall, we claim that the above procedure can be implemented to run in constant-time per loop.
Indeed, assume that the matched vertices (resp., the unmatched vertices) are stored in a list in such a way that all the vertices in a same module $M_v, \ v \in V(G')$ are consecutive.
For every matched vertex $u$, we can access to the vertex that is matched with $u$ in constant-time.
Furthermore for every $v \in V(G')$, we keep a pointer to the first and last vertices of $M_v$ in the list of matched vertices (resp., in the list of unmatched vertices).
For any loop of the procedure, we iterate over four modules $M$, that is a constant.
Furthermore, since $|N_G(M)| \geq |V(G) \setminus M|-2$ then we only need to check three unmatched vertices of $V \setminus M$ in order to decide whether we can perform a {\sc MATCH} operation.
Note that we can skip scanning the unmatched vertices in $M$ using our pointer structure, so, it takes constant-time.
In the same way, we only need to consider three matched vertices of $V \setminus M$ in order to decide whether we can perform a {\sc SPLIT} operation.
Again, it takes constant-time.
Therefore, the claim is proved.

\smallskip
Let $F_{\max}$ be the matching so obtained.
By the above claim it takes ${\cal O}(|F_{\max}| - |F_0|)$-time to compute it with the above procedure.
Furthermore, we claim that $F_{\max}$ is maximum.
Suppose for the sake of contradiction that $F_{\max}$ is not a maximum matching.
By Lemma~\ref{lem:mw-matching-reduction}, $F_{\max}$ cannot be a maximum matching of $G^*$, obtained from $G$ by removing the edges in $(E(G[M_{v_1}]) \cup E(G[M_{v_x}])\cup E(G[M_{v_y}]) \cup E(G[M_{v_k}]))\setminus F^*$.
Let $P=(u_1,u_2,\ldots,u_{2\ell})$ be a {\em shortest} $F_{\max}$-augmenting path in $G^*$, that exists by Theorem~\ref{thm:berge}.

We prove as an intermediate subclaim that both $u_1,u_{2\ell}$ must be part of a same module amongst $M_{v_1},M_{v_k},M_x,M_y$.
Indeed, for every distinct $M,M' \in \{ M_{v_1},M_{v_k},M_x,M_y \}$, every vertex of $M$ is adjacent to every vertex of $M'$.
Furthermore, $V(F_{k-2}) \subseteq V(F_{\max})$ by construction and $v_{\left\lceil k/2 \right\rceil}$ (the only vertex of $\overline{P_{k-2}}$ possibly unmatched) is adjacent to every vertex of $U$.
Therefore, if the subclaim were false then $u_1,u_{2\ell}$ should be adjacent, hence they should have been matched together with a {\sc MATCH} operation.
A contradiction.
So, the subclaim is proved.

Let $M \in \{ M_{v_1},M_{v_k},M_x,M_y \}$ so that $u_1,u_{2\ell} \in M$.
Since $E(G^*[M])=F_M$, and $V(F_M) \subseteq V(F^*) \subseteq V(F_{\max})$ by construction, we have $u_2 \in N_G(M)$.
Furthermore, $u_3 \notin N_G(M)$ since otherwise, by considering $u_1,u_{2\ell} \in M$ and $u_2,u_3 \in N_G(M)$, we should have increased the matching with a {\sc SPLIT} operation.
In this situation, either $u_3 \in M$ or $u_3 \in V \setminus N_G[M]$.
We prove as another subclaim that $u_3,u_4 \in M$.
Indeed, suppose by contradiction $u_4 \in N_G(M)$.
In particular, $(u_1,u_4,u_5,\ldots,u_{2\ell})$ is a shorter augmenting path than $P$, thereby contradicting the minimality of $P$.
Therefore, $u_4 \notin N_G(M)$.
Moreover, if $u_3 \in V \setminus N_G[M]$ then, since the set $V \setminus N_G[M]$ induces a stable, we should have $u_4 \in N_G(M)$.
A contradiction.
So, $u_3 \in M$, and $u_4 \in N_G[M] \setminus N_G(M) = M$, that proves the subclaim.

The above subclaim implies $\{u_3,u_4\} \in F_M$.
Since $\{u_3,u_4\} \notin F_{\max}$, there exists a module $M'$ such that $u_2,u_5 \in M'$, and the edges $\{u_2,u_3\}, \ \{u_4,u_5\}$ have been obtained with a {\sc SPLIT} operation.
However, since $u_1,u_{2\ell} \in M$ are unmatched, and $M \subseteq N_G(M')$, we should have performed two {\sc MATCH} operations intead of performing a {\sc SPLIT} operation.
A contradiction.
Therefore, as claimed, $F_{\max}$ is a maximum matching of $G$.

\paragraph{Case $G$ is a spiked $p$-chain $Q_k$.}
For every $1 \leq i \leq \left\lceil k/2 \right\rceil$, let $V_i = \bigcup_{j \geq i} (M_{v_{2j-1}}\cup M_{v_{2j}} \cup M_{z_{2j-1}} \cup M_{z_{2j}})$ (by convention $M_{v} = \emptyset$ if vertex $v$ is not present).
Roughly, our algorithm tries to compute recursively a maximum matching for $G_i = G[V_i \cup U_{i-1}]$, where $U_{i-1}$ is a union of modules in $\{M_{v_{2i-2}}, \ M_{z_{2i-2}}\}$.  
Initially, we set $i=1$ and $U_0 = \emptyset$.
See Fig.~\ref{fig-p-tree-matching}.

\begin{figure}[h!]
\centering
\includegraphics[width=.5\textwidth]{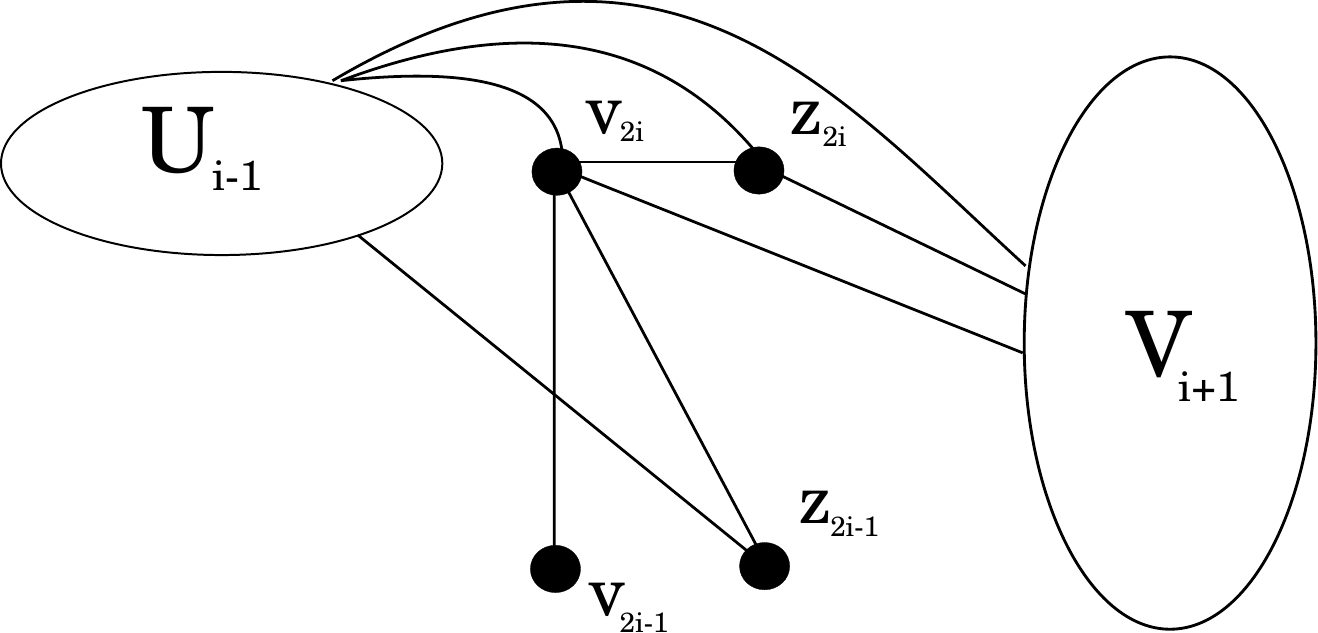}
\caption{Schematic view of graph $G_i$.}
\label{fig-p-tree-matching}
\end{figure}

If $i= \left\lceil k/2 \right\rceil$ then the quotient subgraph $G_i'$ has order at most six.
We can reuse the same techniques as for Theorem~\ref{thm:mw-maxmatching} in order to solve this case.
Thus from now on assume $i < \left\lceil k/2 \right\rceil$.
We need to observe that $v_{2i-1}$ is a pending vertex in the quotient subgraph $G_i'$, with $v_{2i}$ being its unique neighbour.
By Theorem~\ref{thm:stronger-mdc-qq3}, $v_{2i} \in V(G)$, hence $M_{v_{2i-1}}$ is a pending module of $G_i$.
Thus, we can apply the reduction rule of Lemma~\ref{lem:pending-module}.
Doing so, we can discard the set $S_i$, where $S_i = M_{v_{2i-1}} \cup \{v_{2i}\}$ if $F_{M_{v_{2i-1}}}$ is not a perfect matching of $G[M_{v_{2i-1}}]$, and $S_i = M_{v_{2i-1}}$ otherwise. 

Furthermore, in the case where $U_{i-1} \neq \emptyset$, there is now a complete join between $U_{i-1}$ and $V_i \setminus S_i$.
By Lemma~\ref{lem:join} we can compute a maximum matching of $G_i \setminus S_i$ from a maximum matching of $G[U_{i-1}]$ and a maximum matching of $G[V_i \setminus S_i]$.
In particular, since $U_{i-1}$ is a union of modules in $\{M_{v_{2i-2}}, \ M_{z_{2i-2}}\}$ and there is a complete join between $M_{v_{2i-2}}$ and $M_{z_{2i-2}}$, by Lemma~\ref{lem:join} a maximum matching of $G[U_{i-1}]$ can be computed from $F_{M_{v_{2i-2}}}$ and $F_{M_{z_{2i-2}}}$.
So, we are left to compute a maximum matching of $G[V_i \setminus S_i]$.

Then, there are two subcases.
If $v_{2i} \in S_i$ then $M_{z_{2i-1}}$ is disconnected in $G[V_i \setminus S_i]$.
Let $U_i = M_{z_{2i}}$.
The union of $F_{M_{z_{2i-1}}}$ with a maximum matching of $G_{i+1} = G[V_{i+1} \cup U_i]$ is a maximum matching of $G[V_i \setminus S_i]$.
Otherwise, $M_{z_{2i-1}}$ is a pending module of $G[V_i \setminus S_i]$ with $v_{2i}$ being its unique neighbour.
We apply the reduction rule of Lemma~\ref{lem:pending-module}.
Doing so, we can discard the set $T_i$, where $T_i = M_{z_{2i-1}} \cup \{v_{2i}\}$ if $F_{M_{z_{2i-1}}}$ is not a perfect matching of $G[M_{z_{2i-1}}]$, and $T_i = M_{z_{2i-1}}$ otherwise.
Let $U_i = M_{z_{2i}}$ if $v_{2i} \in T_i$ and $U_i = M_{z_{2i}} \cup M_{v_{2i}}$ otherwise.
We are left to compute a maximum matching of $G_{i+1} = G[V_{i+1} \cup U_i]$.
Overall, the procedure stops after we reach an empty subgraph, that takes ${\cal O}(|V(G')|)$ recursive calls. 

\paragraph{Case $G$ is a spiked $p$-chain $\overline{Q_k}$.}
Roughly, the case where $G'$ is isomorphic to a spiked $p$-chain $\overline{Q_k}$ is obtained by reverting the role of vertices with even index and vertices with odd index.
For every $1 \leq i \leq \left\lfloor k/2 \right\rfloor$, let $V_i = \bigcup_{j \geq i} (M_{v_{2j}} \cup M_{z_{2j}} \cup M_{v_{2j+1}} \cup M_{z_{2j+1}})$. 
Our algorithm tries to compute recursively a maximum matching for $G_i = G[V_i \cup U_{i-1}]$, where $U_{i-1}$ is a union of modules in $\{M_{v_{2i-1}}, \ M_{z_{2i-1}}\}$.  
Initially, we set $i=1$ and $U_0 = M_{v_1}$.

If $i= \left\lfloor k/2 \right\rfloor$ then the quotient subgraph $G_i'$ has order at most six.
We can reuse the same techniques as for Theorem~\ref{thm:mw-maxmatching} in order to solve this case.
Thus from now on assume $i < \left\lfloor k/2 \right\rfloor$.
We need to observe that $v_{2i}$ is a pending vertex in the quotient subgraph $G_i'$, with $v_{2i+1}$ being its unique neighbour.
By Theorem~\ref{thm:stronger-mdc-qq3}, $v_{2i+1} \in V(G)$, hence $M_{v_{2i}}$ is a pending module of $G_i$.
Thus, we can apply the reduction rule of Lemma~\ref{lem:pending-module}.
Doing so, we can discard the set $S_i$, where $S_i = M_{v_{2i}} \cup \{v_{2i+1}\}$ if $F_{M_{v_{2i}}}$ is not a perfect matching of $G[M_{v_{2i}}]$, and $S_i = M_{v_{2i}}$ otherwise.

Furthermore, in the case where $U_{i-1} \neq \emptyset$, there is now a complete join between $U_{i-1}$ and $V_i \setminus S_i$.
By Lemma~\ref{lem:join} we can compute a maximum matching of $G_i \setminus S_i$ from a maximum matching of $G[U_{i-1}]$ and a maximum matching of $G[V_i \setminus S_i]$.
In particular, since $U_{i-1}$ is a union of modules in $\{M_{v_{2i-1}}, \ M_{z_{2i-1}}\}$ and there is a complete join between $M_{v_{2i-1}}$ and $M_{z_{2i-1}}$, by Lemma~\ref{lem:join} a maximum matching of $G[U_{i-1}]$ can be computed from $F_{M_{v_{2i-2}}}$ and $F_{M_{z_{2i-2}}}$.
So, we are left to compute a maximum matching of $G[V_i \setminus S_i]$.

Then, there are two subcases.
If $v_{2i+1} \in S_i$ then $M_{z_{2i}}$ is disconnected in $G[V_i \setminus S_i]$.
Let $U_i = M_{z_{2i+1}}$.
The union of $F_{M_{z_{2i}}}$ with a maximum matching of $G_{i+1} = G[V_{i+1} \cup U_i]$ is a maximum matching of $G[V_i \setminus S_i]$.
Otherwise, $M_{z_{2i}}$ is a pending module of $G[V_i \setminus S_i]$ with $v_{2i+1}$ being its unique neighbour.
We apply the reduction rule of Lemma~\ref{lem:pending-module}.
Doing so, we can discard the set $T_i$, where $T_i = M_{z_{2i}} \cup \{v_{2i+1}\}$ if $F_{M_{z_{2i}}}$ is not a perfect matching of $G[M_{z_{2i}}]$, and $T_i = M_{z_{2i}}$ otherwise.
Let $U_i = M_{z_{2i+1}}$ if $v_{2i+1} \in T_i$ and $U_i = M_{z_{2i+1}} \cup M_{v_{2i+1}}$ otherwise.
We are left to compute a maximum matching of $G_{i+1} = G[V_{i+1} \cup U_i]$.
Overall, the procedure stops after ${\cal O}(|V(G')|)$ recursive calls. 
\end{proof}

\subsubsection*{Main result}

\begin{theorem}\label{thm:qq3-maxmatching}
For every $G=(V,E)$, {\sc Maximum Matching} can be solved in ${\cal O}(q(G)^4 \cdot n + m)$-time.
\end{theorem}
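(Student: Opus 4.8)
The plan is to follow the same recursive strategy as in Theorem~\ref{thm:mw-maxmatching}, but using the primeval decomposition of $G$ in place of the modular decomposition, and handling the super-constant prime quotient graphs via the structural dichotomy of Theorem~\ref{thm:stronger-mdc-qq3}. First I would recall that, by Lemma~\ref{lem:qq3-cw}, the primeval decomposition of a $(q,q-3)$-graph can be computed in linear time, and that summing the orders of all subgraphs it produces is ${\cal O}(n)$ (and the edges ${\cal O}(n+m)$). The recursion proceeds bottom-up: for each strong module $M$ we recursively compute a maximum matching $F_M$ of $G[M]$, set $F^* = \bigcup_M F_M$, and apply Lemma~\ref{lem:mw-matching-reduction} to reduce to the graph $G^*$ in which every module induces only the edges of its own maximum matching (hence a matching). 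It then remains, at each internal node, to augment $F^*$ to a maximum matching of the current graph, where the quotient graph $G'$ is one of the types listed in Theorem~\ref{thm:stronger-mdc-qq3}.

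Next I would split into the seven cases of Theorem~\ref{thm:stronger-mdc-qq3}. When $G$ is disconnected we process components independently; when $\overline{G}$ is disconnected, $G$ is a join $G_1 \oplus G_2$ and Lemma~\ref{lem:join} lets us combine the recursively computed matchings of $G_1$ and $G_2$ by \textsc{Split and Match} in time proportional to the size increase. When $G$ is a disc, $G = G'$ has no nontrivial module, so $F^* = \emptyset$ and Lemma~\ref{lem:disc} gives a maximum matching in linear time. When $G$ is a spider, Lemma~\ref{lem:spider} tells us a maximum matching consists of a perfect $S$--$K$ matching together with a maximum matching of $G[R]$; the latter is obtained recursively, and since $|S|=|K|$ and $R$ may itself be a single module, this is handled in ${\cal O}(|V(G')| + |F_{\max}|-|F^*|)$ time (using \textsc{Split and Match} to reconcile the $S$--$K$ matching with the edges already present inside modules on $S$, $K$, $R$). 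When $G'$ is a (complement of a) spiked $p$-chain $P_k$ or $Q_k$, Proposition~\ref{prop:maxmatching-prime-ptree} already does exactly this in the stated running time. Finally, when $|V(G')| \le q$, we run the witness-subgraph machinery of Theorem~\ref{thm:mw-maxmatching} on $G'$: we build $G'_{F_t}$ of size ${\cal O}(|E(G')|^2) = {\cal O}(q^4)$ by Lemma~\ref{lem:witness-size}, and by Lemma~\ref{lem:witness-maxmatching} an augmenting path exists in $G$ iff one exists in $G'_{F_t}$, so each augmentation costs ${\cal O}(q^4)$ time.

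For the running-time accounting I would argue as in Theorem~\ref{thm:mw-maxmatching}: every augmenting step increases the matching size by one, the final matching has size ${\cal O}(n)$, so the total number of augmentations over the whole recursion is ${\cal O}(n)$, each costing at most ${\cal O}(q(G)^4)$; the non-augmenting bookkeeping (building quotient graphs, applying pending-module reductions, maintaining the pointer structures of Proposition~\ref{prop:maxmatching-prime-ptree}) sums to ${\cal O}(n)$ across all nodes because the total order of all decomposition subgraphs is ${\cal O}(n)$; and the primeval decomposition is precomputed once in ${\cal O}(n+m)$ time. This yields the claimed ${\cal O}(q(G)^4 \cdot n + m)$ bound.

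\textbf{Main obstacle.} The delicate point is correctness of the recursion when $G'$ is a spider or a spiked $p$-chain and the ``prime'' vertices of $G'$ are blown up into nontrivial modules: one must verify that Lemma~\ref{lem:pending-module}, Lemma~\ref{lem:join} and \textsc{Split and Match} can be chained in the prescribed order without ever destroying optimality, and in particular that the modules which \emph{can} be nontrivial are exactly the ones at the ``ends'' or ``spikes'' (the $v_1,v_k,x,y$ or $v_1,v_k,z_i$ vertices) as guaranteed by Theorem~\ref{thm:stronger-mdc-qq3} --- this is precisely what makes the pending-module peeling terminate correctly. The second subtlety is making every reduction run in time proportional to the \emph{increase} in matching size rather than the module sizes, which forces the pointer-list data structure described in Proposition~\ref{prop:maxmatching-prime-ptree}; extending that bookkeeping uniformly across all cases of the recursion, and checking it is compatible with the witness-subgraph computation of the bounded-quotient case, is where the proof requires the most care.
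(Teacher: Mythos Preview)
Your proposal is correct and follows essentially the same approach as the paper: recurse on the modular decomposition, compute $F^* = \bigcup_M F_M$, and then branch on the cases of Theorem~\ref{thm:stronger-mdc-qq3}, invoking Lemma~\ref{lem:disc} for discs, Lemma~\ref{lem:spider} for spiders, Proposition~\ref{prop:maxmatching-prime-ptree} for the spiked $p$-chains, and the witness-subgraph machinery of Theorem~\ref{thm:mw-maxmatching} when $|V(G')|\le q$. Two small remarks: the linear-time computability of the decomposition is not Lemma~\ref{lem:qq3-cw} (that concerns $k$-expressions) but the modular/primeval decomposition references~\cite{TCHP08,Bau96}; and in the spider case Theorem~\ref{thm:stronger-mdc-qq3} guarantees that $G$ itself is a spider, so $S$ and $K$ contain only trivial modules and $F^*=F_R$ --- no \textsc{Split and Match} reconciliation on $S,K$ is needed, the paper simply outputs $F_R$ together with a perfect $S$--$K$ matching.
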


\begin{proof}
We generalize the algorithm for Theorem~\ref{thm:mw-maxmatching}.
In particular the algorithm is recursive.
If $G$ is trivial (reduced to a single node) then we output an empty matching.
Otherwise, let $G' = ({\cal M}(G),E')$ be the quotient graph of $G$.
For every module $M \in {\cal M}(G)$, we call the algorithm recursively on $G[M]$ in order to compute a maximum matching $F_M$ of $G[M]$.
Let $F^* = \bigcup_{M \in {\cal M}(G)} F_M$.
If $G'$ is either edgeless, complete or a prime graph with no more than $q(G)$ vertices then we apply the same techniques as for Theorem~\ref{thm:mw-maxmatching} in order to compute a maximum matching $F_{\max}$ for $G$.
It takes constant-time if $G'$ is a stable, ${\cal O}(q(G)^4\cdot(|F_{\max}|-|F^*|))$-time if $G'$ is prime and ${\cal O}(|V(G')|+ (|F_{\max}|-|F^*|))$-time if $G'$ is a complete graph.
Otherwise by Theorem~\ref{thm:stronger-mdc-qq3} the following cases need to be considered.
\begin{itemize}
\item Suppose $G$ is a disc. 
In particular, $G = G'$.
By Lemma~\ref{lem:disc}, we can compute a maximum matching for $G$ in ${\cal O}(|V(G')|+|E(G')|)$-time.
\item Suppose $G=(S \cup K \cup R, E)$ is a spider.
In particular, $G'=(S \cup K \cup R',E')$ is a prime spider.
By Lemma~\ref{lem:spider}, the union of $F_{R} = F^*$ with a perfect matching between $S$ and $K$ is a maximum matching of $G$.
It can be computed in ${\cal O}(|V(G')|+|E(G')|)$-time.

\item Otherwise $G'$ is a prime $p$-tree.
By Proposition~\ref{prop:maxmatching-prime-ptree}, a maximum matching $F_{\max}$ for $G$ can be computed in ${\cal O}(|V(G')|+|E(G')|+|F_{\max}|-|F^*|)$-time.
\end{itemize}
Overall, summing the order of all the subgraphs in the modular decomposition of $G$ amounts to ${\cal O}(n)$~\cite{Rao08b}.
Summing the size of all the subgraphs in the modular decomposition of $G$ amounts to ${\cal O}(n+m)$~\cite{Rao08b}.
Furthermore, a maximum matching of $G$ also has cardinality ${\cal O}(n)$.
Therefore, the total running time is in ${\cal O}(q(G)^4 \cdot n+m)$ if the modular decomposition of $G$ is given.
The latter decomposition can be precomputed in ${\cal O}(n+m)$-time~\cite{TCHP08}.
\end{proof}

\section{Applications to other graph classes}
\label{sec:applications}

Our algorithmic schemes in Sections~\ref{sec:dist} and~\ref{sec:maxmatching} are all based on preprocessing methods with either split decomposition or modular decomposition.
If the prime subgraphs of the decomposition have constant-size then the input graph has bounded clique-width.
However, when the prime subgraphs are ``simple'' enough w.r.t. the problem considered, we may well be able to generalize our techniques in order to apply to some graph classes with unbounded clique-width.
In what follows, we present such examples.

\smallskip
A graph is {\em weak bipolarizable} if every prime subgraph in its modular decomposition is a chordal graph~\cite{Ola89}.
Some cycle problems such as {\sc Girth} (trivially) and {\sc Triangle Counting} (by using a clique-tree) can be easily solved in linear-time for chordal graphs.
The latter extends to the larger class of weak bipolarizable graphs by using our techniques.

\smallskip
Another instructive example is the class of graphs with small prime subgraphs for {\em c-decomposition}.
The c-decomposition consists in successively decomposing a graph by the modular decomposition and the split decomposition until all the subgraphs obtained are either degenerate (complete, edgeless or star) or prime for both the modular decomposition and the split decomposition~\cite{Lan01}.
Let us call c-width the minimum $k \geq 2$ such that any prime subgraph in the c-decomposition has order at most $k$.
The following was proved in~\cite{Rao08}.

\begin{theorem}[~\cite{Rao08}]
The class of graphs with c-width $2$ ({\it i.e.}, completely decomposable by the c-decomposition) has unbounded clique-width.
\end{theorem}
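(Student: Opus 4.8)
The plan is to exhibit an explicit infinite family of graphs that are completely decomposable by the c-decomposition (so of c-width $2$) yet have clique-width growing without bound. The natural candidates are the graphs that are prime for \emph{both} modular and split decomposition and are themselves known to have large clique-width; the point is that we do not need the whole graph to be indecomposable, only its prime pieces, so we should build graphs whose every prime subgraph in the c-decomposition is degenerate. A clean way to do this is to start from a high-clique-width graph $H$ and, instead of using $H$ directly, replace each vertex of $H$ by a small gadget and each edge by a small gadget in such a way that every split and every module one could ever find ``peels off'' a degenerate piece, eventually reducing the whole graph to degenerate atoms — while the clique-width stays comparable to that of $H$. Concretely, I would look at subdivisions or at ``blow-ups by edges'' of grids (or of the graphs used in the standard clique-width lower bounds, e.g. the $n\times n$ grid has clique-width $\Theta(n)$), and argue that once every edge of the grid becomes an induced path of length $\geq 2$, the split decomposition strips the grid down to stars and single edges, so the c-decomposition terminates with only degenerate atoms.

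First I would fix the family: let $G_n$ be the graph obtained from the $n\times n$ grid $P_n\square P_n$ by subdividing every edge once (or, if a cleaner split structure is wanted, by replacing every edge $uv$ with a path $u - x_{uv} - y_{uv} - v$). First I would verify the lower bound on clique-width: subdividing edges changes clique-width by at most a constant factor for bounded-degree graphs — indeed, by Lemma~\ref{lem:rel-tw-cw}, clique-width and treewidth are within constant factors on bounded-degree graphs, and subdividing does not change treewidth (cited in the proof of Theorem~\ref{thm:cw-diam} via~\cite{BoK06}); since $tw(P_n\square P_n)=n$, we get $cw(G_n)=\Theta(n)\to\infty$. (One must check $G_n$ has bounded degree, which it does: maximum degree $4$, unchanged by subdivision.) Second, and this is the heart of the argument, I would show $G_n$ has c-width $2$: every subdivision vertex $x_{uv}$ has degree $2$ and its two neighbours are non-adjacent, so $\{x_{uv}, \text{(one neighbour)}\}$ or an appropriate larger set forms a split; more systematically, a graph in which every edge has been subdivided has the property that around each original vertex $v$ of degree $d$ the set $\{v\}\cup\{\text{incident subdivision vertices}\}$ together with the ``outward'' structure yields a star split, so the split decomposition produces only stars and single edges. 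I would make this precise by induction on $|V|$: show that any subdivided graph with at least one edge admits a split whose one side is a star, apply the simple decomposition, and observe that the other side is again (isomorphic to) a subdivided graph or a degenerate graph. Since the atoms produced are all stars/edges/isolated vertices — all degenerate — the c-decomposition (which is at least as fine as split decomposition) also terminates with only degenerate atoms, hence c-width $2$.

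The main obstacle I expect is the second step: making airtight the claim that the canonical split decomposition of a fully-subdivided graph contains \emph{only} degenerate components, i.e. no prime component of order $\geq 3$ survives. Subdivision kills induced $P_4$-free obstructions but one must genuinely argue about splits, not just local structure, because the canonical split decomposition is a global object; the right tool is probably to exhibit, for each original vertex $v$, the star split separating the ``closed star'' of $v$ from the rest, check these splits are pairwise non-crossing (strong), and then verify that recursively decomposing along all of them leaves each atom with at most a star's worth of internal structure. An alternative, possibly cleaner, route is to pick a family already analyzed in the literature on split decomposition — e.g. graphs of ``rank-width'' or ``split-width'' that are distance-hereditary-like but blown up — but since distance-hereditary graphs have clique-width $\leq 3$ that route fails, so one really does need a genuinely split-decomposable high-clique-width family, and subdivided grids are the most transparent such family. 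I would therefore present the subdivided-grid construction, prove $cw(G_n)=\Theta(n)$ via treewidth invariance under subdivision and Lemma~\ref{lem:rel-tw-cw}, and prove $cw$-width$(G_n)=2$ by the star-split induction sketched above; the statement follows since $\{G_n\}_{n\geq 1}$ is an infinite family of completely c-decomposable graphs with unbounded clique-width.
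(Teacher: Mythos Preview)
The paper does not prove this theorem; it is cited from~\cite{Rao08} without proof. So there is no ``paper's proof'' to compare against. I will therefore evaluate your attempt on its own merits.

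Your clique-width lower bound for subdivided grids is fine: subdivision preserves treewidth, the $n\times n$ grid has treewidth $n$, and by Lemma~\ref{lem:rel-tw-cw} clique-width and treewidth are equivalent up to constants on bounded-degree graphs.

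The fatal problem is the claim that subdivided grids have c-width $2$. They do not: the once-subdivided $n\times n$ grid is \emph{prime for both modular and split decomposition}, so its c-decomposition is trivial and its c-width equals its own order.

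For modular decomposition this is immediate: no two vertices share the same open neighbourhood (each subdivision vertex has a unique pair of original endpoints, each original vertex has a unique set of incident subdivision vertices), so the graph is prime.

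For split decomposition, recall that a split $(A,B)$ requires the crossing edges to induce a complete bipartite graph $K_{p,q}$. If $p=q=1$ the single crossing edge is a bridge; but the grid is $2$-edge-connected and subdivision preserves this, so there is no bridge. If $\min\{p,q\}=1$ and $\max\{p,q\}\geq 2$ then the single vertex on one side is a cut-vertex; but the grid is $2$-connected (for $n\ge 2$) and subdivision preserves $2$-connectedness, so there is no cut-vertex. If $p,q\ge 2$ then the graph contains $K_{2,2}=C_4$ as a subgraph; but the grid has girth $4$, so after subdividing every edge once the girth becomes $8$, and there is no $C_4$. Hence no split exists at all.

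Your own candidate splits illustrate the issue: for $A=\{v\}\cup N(v)$ with $v$ an original vertex, the crossing edges go from each subdivision vertex to its \emph{other} original endpoint --- this is a perfect matching, not a complete bipartite graph, so it is not a split unless $\deg(v)=1$. Likewise $\{x_{uv},u\}$ is not a split once $\deg(u)\ge 2$. Degree-$2$ vertices do not automatically create splits; only pendant vertices do.

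Subdividing more (paths of length $\geq 3$) does not help: the graph stays $2$-connected with even larger girth, so the same argument applies. Any construction that works must genuinely exploit the \emph{alternation} between modular and split decomposition --- the graph must have nontrivial modules whose quotient then acquires splits (or vice versa), recursively. A one-shot construction that is prime for modular decomposition, as yours is, forces all the work onto a single split decomposition, and then you are back to distance-hereditary graphs (clique-width $\le 3$), which you correctly ruled out.
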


It is not clear how to compute the c-decomposition in linear-time.
However, both the modular decomposition and the split decomposition of graphs with small c-width already have some interesting properties which can be exploited for algorithmic purposes.
Before concluding this section we illustrate this fact with {\sc Eccentricities}.

\begin{lemma}\label{lem:split-dec-cdec}
Let $G=(V,E)$ be a graph with c-width at most $k$ that is prime for modular decomposition.
Every split component of $G$ that is not degenerate either has order at most $k$ or contains a universal vertex.
\end{lemma}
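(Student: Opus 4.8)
The plan is to exploit the interplay between the modular decomposition and the split decomposition of $G$. Since $G$ is prime for modular decomposition, every split $(A,B)$ of $G$ has a very restricted shape, and we want to show this forces the non-degenerate split components to be small unless they happen to contain a universal vertex. First I would recall that in a split decomposition of $G$ we produce, for every split $(A,B)$ used along the way, the two subgraphs $G_A = G[A\cup\{b\}]$ and $G_B = G[B\cup\{a\}]$ where $a$ and $b$ are the split marker vertices. By the ``refinement'' relationship recalled just before the statement (if $M$ is a module of $G$ with $\min\{|M|,|V\setminus M|\}\ge 2$ then $(M,V\setminus M)$ is a split), and more usefully by its contrapositive, any split $(A,B)$ of the modular-prime graph $G$ must have the property that $A$ is \emph{not} a module; equivalently $N_G(A)\subsetneq V\setminus A$ strictly and $N_G(A)\neq\emptyset$, so there is at least one vertex of $B$ adjacent to $N_G(A)$ but the split structure still holds. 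The key point I would extract is that a split in a modular-prime graph is never ``hidden'' inside a module, so each simple decomposition strictly shrinks the graph in a controlled way.

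The core of the argument: consider a split component $C$ of $G$ that is prime for split decomposition (non-degenerate). I would show that either $|V(C)|\le k$ or $C$ contains a universal vertex. The idea is to run the c-decomposition conceptually: decompose $G$ by split decomposition; each prime-for-split component $C$ is then fed to modular decomposition again. If $C$ is \emph{also} prime for modular decomposition, then $C$ is a prime subgraph for the c-decomposition, hence $|V(C)|\le k$ by definition of c-width, and we are done. Otherwise $C$ is prime for split decomposition but \emph{not} prime for modular decomposition, so by Gallai's theorem (Theorem~\ref{thm:modular-dec}) either $C$ or $\overline{C}$ is disconnected, or the quotient graph $C'$ is prime for modular decomposition. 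A disconnected $C$ is impossible since split components here arise inside a connected graph and a prime-for-split graph on $\ge 3$ vertices with a nontrivial module that is a connected component would yield a split (a component of size $\ge 2$ together with the rest gives a split when the rest is nonempty); the degenerate cases are excluded by hypothesis. The remaining possibility, $\overline{C}$ disconnected, means $C$ is a join $C = C_1\oplus C_2$; but then for $C$ to be prime for split decomposition one of $C_1,C_2$ must be a single vertex (otherwise $(V(C_1),V(C_2))$ would be a split), i.e. $C$ has a universal vertex. The case where $C'$ is modular-prime of size $>2$ needs the observation that a split component that has a nontrivial proper module $M$ would give a split $(M,V(C)\setminus M)$ unless $|M|=|V(C)|-1$, i.e. $M=V(C)\setminus\{u\}$ for a single vertex $u$; iterating, the only modules are cofinite, which forces $C'$ small (at most $k$) or forces repeated universal vertices.

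I would then assemble these cases: the only way a non-degenerate split component $C$ escapes being a genuine c-decomposition prime subgraph (size $\le k$) is by being further decomposable only via modular decomposition, and the analysis above shows every such extra modular decomposition step peels off exactly a universal vertex. So $C$ either has order $\le k$ or has a universal vertex, which is the statement. Concretely I would state it as: take $C$ non-degenerate; if $C$ is prime for modular decomposition then it is a c-decomposition prime subgraph so $|V(C)|\le k$; otherwise apply Theorem~\ref{thm:modular-dec}, rule out the ``$C$ disconnected'' case (else $C$ has a split or is degenerate), and in the remaining two cases ($\overline C$ disconnected, or $C'$ modular-prime) show the obstruction to $C$ being split-prime is exactly the presence of a universal vertex.

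\textbf{Main obstacle.} The delicate step is the case where the quotient graph $C'$ of the split component $C$ is modular-prime with $|V(C')|>2$: one must argue that the \emph{modules} of $C$ that get contracted in forming $C'$ cannot all be small-and-non-universal without creating a split in $C$, which contradicts $C$ being split-prime. Handling this cleanly requires carefully relating ``$M$ is a module of $C$ with $2\le |M|$ and $|V(C)\setminus M|\ge 2$ $\Rightarrow$ $(M,V(C)\setminus M)$ is a split of $C$'' (the refinement fact quoted before the statement) with the possibility that several modules are involved; I expect the right way is to pick a single maximal strong module $M$ of $C$, note $(M, V(C)\setminus M)$ is a split unless $|V(C)\setminus M|\le 1$, and then a short case analysis on whether that forces $|V(C')|\le k$ via the c-width bound or forces $M$ to be (or sit next to) a universal vertex. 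This bookkeeping — making sure the degenerate subcases of modular decomposition (complete, edgeless, star) are exactly the ones excluded by ``non-degenerate'' and that every non-excluded step produces a universal vertex — is where the proof needs to be written with care.
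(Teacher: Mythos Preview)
Your approach is correct and essentially the same as the paper's; the paper compresses it to two lines by citing the proof of Lemma~\ref{lem:mw-to-sw} for the fact that a non-degenerate, modularly-decomposable graph with no universal vertex must have a split.

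The obstacle you flag is not real. In your Case~3 (the quotient $C'$ is modular-prime and $C\neq C'$) one has $|\mathcal{M}(C)|=|V(C')|\ge 3$, so for any nontrivial $M\in\mathcal{M}(C)$ (which exists since $C\neq C'$) the complement $V(C)\setminus M$ already has at least two vertices, and $(M,V(C)\setminus M)$ is a split of $C$---an immediate contradiction with split-primality, with no iteration and no ``$C'$ small'' subcase needed. Your opening paragraph on splits of the ambient graph $G$ is also unnecessary: the hypothesis that $G$ is modular-prime is used only to ensure that the split components of $G$ actually appear in its c-decomposition (so that a component prime for both decompositions has order at most $k$); the rest of the argument lives entirely inside $C$.
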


\begin{proof}
Since $G$ has c-width at most $k$, every non degenerate split component of $G$ with order at least $k+1$ can be modularly decomposed.
We show in the proof of Lemma~\ref{lem:mw-to-sw} that if a non degenerate graph can be modularly decomposed and it does not contain a universal vertex then it has a split.
Therefore, every non degenerate split component of size at least $k+1$ contains a universal vertex since it is prime for split decomposition.
\end{proof}

We now revisit the algorithmic scheme of Theorem~\ref{thm:sw-ecc}.

\begin{proposition}\label{prop:diam-unbounded-cw}
For every $G=(V,E)$ with c-width at most $k$, {\sc Eccentricities} can be solved in ${\cal O}(k^2 \cdot n + m)$-time.
In particular, {\sc Diameter} can also be solved in ${\cal O}(k^2 \cdot n + m)$-time.
\end{proposition}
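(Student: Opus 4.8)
The plan is to adapt the dynamic programming from Theorem~\ref{thm:sw-ecc} and Theorem~\ref{thm:mw-ecc}, using Lemma~\ref{lem:split-dec-cdec} to control the structure of the split components that arise. First I would reduce to the case where $G$ is connected (handle components separately) and then compute, in linear time, the modular decomposition of $G$. For each prime node of the modular decomposition we get a quotient graph $H$ that is prime for modular decomposition and has c-width at most $k$ (the c-width of an induced ``quotient'' subgraph is bounded by that of $G$, since c-decomposition refines modular decomposition). By Lemma~\ref{lem:split-dec-cdec}, every non-degenerate split component of $H$ either has order at most $k$ or contains a universal vertex; in the latter case the component has diameter at most $2$, so {\sc Eccentricities} on it is solvable in linear time (a vertex has eccentricity $1$ iff it is universal, $0$ iff the component is a single vertex, and $2$ otherwise). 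Degenerate components (complete graphs and stars) are handled exactly as in the proof of Theorem~\ref{thm:sw-ecc}.

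Next I would assemble these pieces. For each prime module $H$ of the modular decomposition, compute its split decomposition tree $T_H$ in linear time, root it, and run the two-pass dynamic programming of Theorem~\ref{thm:sw-ecc}: the first (leaves-to-root) pass computes $ecc_{G_i}(v)$ for each $v$ in each split component $C_i$ of $H$, and the second (root-to-leaves) pass corrects these to the true eccentricities within $H$ via Lemma~\ref{lem:diam-split}. The only change from Theorem~\ref{thm:sw-ecc} is that a split component of $H$ may now be large but still have diameter at most $2$; for such a component with $n_i$ vertices and $m_i$ edges, we solve {\sc Eccentricities} in $\mathcal{O}(n_i + m_i)$ time rather than $\mathcal{O}(k^2 n_i)$ time, so the per-component cost is $\mathcal{O}(k^2 n_i + m_i)$ in all cases. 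Summing, each prime module $H$ is processed in $\mathcal{O}(k^2 |V(H)| + |E(H)|)$ time. Finally, to lift from the prime modules to $G$ itself, I would use exactly the reduction of Theorem~\ref{thm:mw-ecc} / Lemma~\ref{lem:mw-to-sw}: expressing $G$ via its modular decomposition as a (partial) split decomposition whose only split components are the $G[M_i] + \{b_i\}$ and the quotient graphs, and propagating eccentricities up the modular decomposition tree with the identity $ecc_G(v) = \max\{ecc_{G_i}(v), ecc_{G'}(b_i)\}$ for $v \in M_i$, which takes linear time overall.

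For the running time, the total is $\mathcal{O}(n+m)$ for all decompositions plus $\sum_H \mathcal{O}(k^2 |V(H)| + |E(H)|)$; since the vertex sets of the prime subgraphs in the modular decomposition sum to $\mathcal{O}(n)$ and their edge sets sum to $\mathcal{O}(n+m)$, this is $\mathcal{O}(k^2 \cdot n + m)$, as claimed. The {\sc Diameter} bound is then immediate since $diam(G) = \max_v ecc_G(v)$.

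The main obstacle I anticipate is bookkeeping rather than conceptual: I must verify that Lemma~\ref{lem:split-dec-cdec} as stated (for a graph prime for modular decomposition) really does apply to every quotient graph appearing in the recursive modular decomposition of $G$, i.e.\ that the c-width hypothesis is inherited by these quotients, and that a ``universal vertex'' in a split component genuinely forces eccentricity-at-most-$2$ even though that component is itself a split-marker-augmented subgraph whose marker vertices carry no special metric meaning inside the component. A secondary point to check carefully is that the diameter-$\le 2$ split components, which may be large, do not break the amortization argument of Theorem~\ref{thm:sw-ecc} — but since the $\mathcal{O}(n_i + m_i)$ cost for such a component is dominated by $\mathcal{O}(k^2 n_i + m_i)$ and $\sum_i m_i = \mathcal{O}(n+m)$ for a split decomposition~\cite{Rao08b}, the bound goes through.
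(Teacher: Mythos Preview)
Your overall structure matches the paper's: reduce via Theorem~\ref{thm:mw-ecc} to the quotient graph $G'$, then run the split-decomposition dynamic programming of Theorem~\ref{thm:sw-ecc} on $G'$, using Lemma~\ref{lem:split-dec-cdec} to classify each split component as degenerate, of order at most $k$, or diameter at most~$2$. (Your full recursion through the modular decomposition is more elaborate than what the paper does, but not incorrect.)

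There is, however, a genuine gap in your treatment of the large diameter-$\le 2$ split components. You claim that for such a component ``we solve {\sc Eccentricities} in $\mathcal{O}(n_i+m_i)$ time'', and you justify this by the unweighted observation that ``a vertex has eccentricity $1$ iff it is universal, $0$ iff the component is a single vertex, and $2$ otherwise''. But the subproblem the DP of Theorem~\ref{thm:sw-ecc} actually requires at each split component $C_i$ is the \emph{weighted} one: given $e:V(C_i)\to\mathbb{N}$ (encoding eccentricities from the rest of the tree), compute $\max_{v\in V(C_i)}\bigl(dist_{C_i}(u,v)+e(v)\bigr)$ for every $u\in V(C_i)$. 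Knowing only that $dist_{C_i}(u,v)\in\{0,1,2\}$ does not by itself yield a linear-time algorithm: for each $u$ you must find both $\max_{v\in N(u)}(1+e(v))$ and $\max_{v\notin N[u]}(2+e(v))$, and the latter is not obviously computable in $O(\deg(u))$ time. Na\"ively this is $\Theta(n_i^2)$, which would break the bound.

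The paper closes precisely this gap with a small trick: bucket-sort $V(C_i)$ by decreasing $e$-value, then for each $u$ walk down the sorted list until the first non-neighbour $v_j$ of $u$ is found (this takes $O(\deg(u))$ steps), and output $\max\{\,1+\max_{v\in N(u)}e(v),\ 2+e(v_j)\,\}$. This yields $O(n_i+m_i)$ per component and makes the global bound go through. Your proposal needs this step (or an equivalent one) to be complete.
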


\begin{proof}
Let $G'=(V',E')$ be the quotient graph of $G$.
Note that $G'$ has c-width at most $k$.
Furthermore, by Theorem~\ref{thm:mw-ecc} the problem reduces in linear-time to solve {\sc Eccentricities} for $G'$.
We compute the split-decomposition of $G'$.
It takes linear-time~\cite{CDR12}.
By Lemma~\ref{lem:split-dec-cdec} every split component of $G'$ either has order at most $k$ or it has diameter at most $2$.

Let us consider the following subproblem for every split component $C_i$.
Given a weight function $e : V(C_i) \to \mathbb{N}$, compute $\max_{u \in V(C_i) \setminus \{v\}} dist_{C_i}(u,v) + e(u)$ for every $v \in C_i$.
Indeed, the algorithm for Theorem~\ref{thm:sw-ecc} consists in solving the above subproblem a constant-number of times for every split component, with different weight functions $e$ that are computed by tree traversal on the split decomposition tree.
In particular, if the above subproblem can be solved in ${\cal O}(k^2 \cdot |V(C_i)| + |E(C_i)|)$-time for every split component $C_i$ then we can solve {\sc Eccentricities} for $G'$ in ${\cal O}(k^2 \cdot |V(G')| + |E(G')|)$-time.

There are two cases.
If $C_i$ has order at most $k$ then the above subproblem can be solved in ${\cal O}(|V(C_i)||E(C_i)|)$-time, that is in ${\cal O}(k^2 \cdot |V(C_i)|)$.
Otherwise, by Lemma~\ref{lem:split-dec-cdec} $C_i$ contains a universal vertex, that can be detected in ${\cal O}(|V(C_i)| + |E(C_i)|)$-time.
In particular, $C_i$ has diameter at most two.
Let $V(C_i) = (v_1,v_2,\ldots,v_{|V(C_i)|})$ be totally ordered such that, for every $j < j'$ we have $e(v_j) \geq e(v_{j'})$.
An ordering as above can be computed in ${\cal O}(|V(C_i)|)$-time, for instance using a bucket-sort algorithm.
Then, for every $v \in V(C_i)$ we proceed as follows.
We compute $D_v = 1 + \max_{u \in N_{C_i}(v)} e(u)$.
It takes ${\cal O}(deg_{C_i}(v))$-time.
Then, we compute the smallest $j$ such that $v_j$ and $v$ are nonadjacent (if any).
Starting from $v_1$ and following the ordering, it takes ${\cal O}(deg_{C_i}(v))$-time.
Finally, we are left to compare, in constant-time, $D_v$ with $2 + e(v_i)$.
Overall, the subproblem is solved in ${\cal O}(|V(C_i)|+|E(C_i)|)$-time in this case.

Therefore, {\sc Eccentricities} can be solved in ${\cal O}(k^2 \cdot n + m)$-time for $G$.
\end{proof}

\bibliographystyle{abbrv}
\bibliography{bibliography}

\appendix

\section{Proof of Theorem~\ref{thm:stronger-mdc-qq3}}

Our proof in this section involves a refinement of modules, that is called {\em $p$-connected components}.
The notion of $p$-connectedness also generalizes connectivity in graphs.
A graph $G=(V,E)$ is $p$-connected if and only if, for every bipartition $(V_1,V_2)$ of $V$, there exists a path of length four with vertices in both $V_1$ and $V_2$.
The $p$-connected components of a graph are its maximal induced subgraphs which are $p$-connected. 
Furthermore, a $p$-connected graph is termed {\em separable} if there exists a bipartition $(V_1,V_2)$ of its vertex-set such that, for every crossing $P_4$, its two ends are in $V_2$ and its two internal vertices are in $V_1$. 
The latter bipartition $(V_1,V_2)$ is called a separation, and if it exists then it is unique.

We need a strengthening of Theorem~\ref{thm:modular-dec}:

\begin{theorem}[~\cite{JaO95}]\label{thm:primeval-dec}
For an arbitrary graph $G$ exactly one of the following conditions is satisfied.
\begin{enumerate}
\item $G$ is disconnected;
\item $\overline{G}$ is disconnected;
\item There is a unique proper separable $p$-connected component of $G$, with its separation being $(V_1,V_2)$ such that every vertex not in this component is adjacent to every vertex of $V_1$ and nonadjacent to every vertex of $V_2$;
\item $G$ is $p$-connected.
\end{enumerate}
\end{theorem}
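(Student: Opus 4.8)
\emph{Plan.} The statement is the primeval decomposition theorem of Jamison and Olariu~\cite{JaO95}. I would prove it in two parts: that the four alternatives are pairwise incompatible, and that at least one of them always holds, the latter through the theory of $p$-connected components.

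\emph{Mutual exclusivity.} On at least two vertices $G$ and $\overline{G}$ cannot both be disconnected, so (1) and (2) exclude each other. Since $\overline{P_4}=P_4$, for every bipartition $(X,Y)$ of $V$ the induced $P_4$'s of $G$ meeting both $X$ and $Y$ correspond to those of $\overline{G}$, so $G$ is $p$-connected if and only if $\overline{G}$ is; and a disconnected graph is not $p$-connected, as witnessed by the bipartition separating one connected component from the rest. Hence (4) is incompatible with (1) and with (2). If (4) holds then $G$ is its own maximal $p$-connected induced subgraph, so it has no \emph{proper} $p$-connected component and (3) fails. Finally, suppose (3) holds, with $H$ the proper separable $p$-connected component and $(V_1,V_2)$ its separation: then $V_1\neq\emptyset$, because a separable $p$-connected graph contains a crossing $P_4$ whose two middle vertices lie in $V_1$; if $G$ were disconnected, a vertex lying in a connected component of $G$ other than the one containing $V(H)$ would be nonadjacent to all of $V(H)\supseteq V_1$, contradicting the absorption clause of (3), and the complementary argument shows $\overline{G}$ is connected. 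So at most one of the four conditions holds.

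\emph{Exhaustiveness.} Assume none of (1), (2), (4) holds, i.e. $G$ and $\overline{G}$ are connected and $G$ is not $p$-connected; the goal is to produce the component of (3). A $P_4$-free graph is a cograph, hence either disconnected or its complement is disconnected, so $G$ contains an induced $P_4$ and therefore has at least one $p$-connected component. The first tool is the amalgamation lemma: if $H_1,H_2$ are induced $p$-connected subgraphs of $G$ sharing at least one vertex, then $G[V(H_1)\cup V(H_2)]$ is $p$-connected. Indeed, given a bipartition $(A,B)$ of $V(H_1)\cup V(H_2)$, if its restriction to $V(H_1)$ has both sides nonempty a crossing $P_4$ comes from $H_1$; otherwise, say $V(H_1)\subseteq A$, in which case $B$ meets $V(H_2)$ (since $B$ meets $V(H_1)\cup V(H_2)$ but not $V(H_1)$) while $A$ meets $V(H_2)$ (it contains the shared vertex), so the restriction to $V(H_2)$ is nontrivial and a crossing $P_4$ comes from $H_2$. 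Consequently distinct $p$-connected components are vertex-disjoint. The heart of the proof is then the structure/absorption lemma of~\cite{JaO95}: under these hypotheses there is a $p$-connected component $H$, necessarily proper, that is separable, and, writing $(V_1,V_2)$ for its separation, every vertex of $V\setminus V(H)$ is adjacent to all of $V_1$ and nonadjacent to all of $V_2$. This is exactly condition (3).

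\emph{Uniqueness, and the main obstacle.} The uniqueness asserted in (3) is then quick: if $H'$ were a second $p$-connected component with the absorption property, disjointness gives $V(H')\subseteq V\setminus V(H)$ and $V(H)\subseteq V\setminus V(H')$, so the absorption of $H$ makes the separation set $V_1'$ of $H'$ anticomplete to $V_2$, while the absorption of $H'$ makes $V_2$ complete to $V_1'$; since $V_1'$ and $V_2$ are nonempty (both are middle-vertex sets of crossing $P_4$'s), this is a contradiction. The degenerate cases $|V(G)|\le 1$ fall under (4), as such a graph is vacuously $p$-connected. The genuine difficulty is the structure/absorption lemma itself --- that a graph with $G$ and $\overline{G}$ connected and failing to be $p$-connected must contain a separable $p$-connected component uniformly absorbing its outside vertices. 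This is the core of Jamison--Olariu's homogeneous decomposition, and I expect it to be proved by a careful induced-$P_4$ analysis: pick a $p$-connected component $H$, fix crossing $P_4$'s realizing the candidate separation $(V_1,V_2)$, and show that any outside vertex with ``mixed'' adjacency to $V_1$ or to $V_2$, or the existence of a second suitably attached component, would create a crossing $P_4$ enlarging $H$ and contradicting its maximality; the separability of $H$ and the placement of all of $V\setminus V(H)$ then follow from pushing this analysis to completion.
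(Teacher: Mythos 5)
The paper states this theorem as a citation of Jamison and Olariu~\cite{JaO95} and does not prove it, so there is no internal proof to compare against. Evaluated on its own terms, your attempt gets the easy half right and leaves the hard half unproven. The mutual-exclusivity arguments are correct and essentially complete: $p$-connectedness is a self-complementary property since $\overline{P_4}=P_4$; a disconnected graph is never $p$-connected; under (3), $V_1\neq\emptyset$ together with absorption forces $G$ and $\overline G$ to be connected; and a $p$-connected $G$ has no proper $p$-connected component. The amalgamation lemma, the disjointness of $p$-connected components that follows from it, and the uniqueness argument built on disjointness plus absorption are all sound.

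The gap is exactly the one you flag yourself: the exhaustiveness argument reduces everything to a ``structure/absorption lemma'' that is in substance the theorem being proved, cites it back to~\cite{JaO95}, and then offers only a one-sentence plan (``a careful induced-$P_4$ analysis''). Nothing in the proposal establishes (i) that when $G$ is not $p$-connected and $G,\overline G$ are connected, some maximal $p$-connected subgraph $H$ is actually \emph{separable} --- in the Jamison--Olariu development this comes from analysing the quotient of $H$ and showing it is a split graph, cf.\ Theorem~\ref{thm:separable-p-comp}; (ii) that an outside vertex with mixed adjacency to $V_1$ or to $V_2$ gives rise to a crossing $P_4$ in $G[V(H)\cup\{v\}]$, contradicting maximality of $H$ via your amalgamation lemma; or (iii) why among the remaining homogeneous attachment patterns only ``complete to $V_1$, anticomplete to $V_2$'' can occur, rather than the reversed pattern or full completeness or anticompleteness to $V(H)$. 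As written the exhaustiveness section is a roadmap to the key lemma rather than an argument for it, so the theorem is not established.
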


If $G$ or $\overline{G}$ is disconnected then it corresponds to a degenerate node in the modular decomposition tree. So we know how to handle with the two first cases.
It remains to study the $p$-connected components of $(q,q-3)$-graphs.

For that, we need to introduce the class of $p$-trees:

\begin{definition}[~\cite{Bab00}]\label{def:p-trees}
A graph $G=(V,E)$ is a $p$-tree if one of the following conditions hold:
\begin{itemize}
\item the quotient graph $G'$ of $G$ is a $P_4$.
Furthermore, $G$ is obtained from $G'$ by replacing one vertex by a cograph.
\item the quotient graph $G'$ of $G$ is a spiked $p$-chain $P_k$, or its complement.
Furthermore, $G$ is obtained from $G'$ by replacing any of $x,y,v_1,v_k$ by a module inducing a cograph.
\item the quotient graph $G'$ of $G$ is a spiked $p$-chain $Q_k$, or its complement.
Furthermore, $G$ is obtained from $G'$ by replacing any of $v_1,v_k, z_2, z_3, \ldots, z_{k-5}$ by a module inducing a cograph.
\end{itemize}
\end{definition}

We stress that the case where the quotient graph $G'$ is a $P_4$, and so, of order $4 \leq 7 \leq q$ can be ignored in our analysis.
Other characterizations for $p$-trees can be found in~\cite{Bab98}.
The above Definition~\ref{def:p-trees} is more suitable to our needs.

\begin{theorem}[~\cite{BaO99}]\label{thm:p-comp}
A $p$-connected component of a $(q,q-3)$-graph either contains less than $q$ vertices, or is isomorphic to a prime spider, to a disc or to a $p$-tree.
\end{theorem}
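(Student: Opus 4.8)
The plan is to reduce the statement to the classification of prime $(q,q-3)$-graphs already available as Lemma~\ref{lem:reduce-qq3}, and then to control, by an induced-$P_4$ counting argument, which modules a $p$-connected graph can have.

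\textbf{Reduction to the quotient graph.} Let $H$ be a $p$-connected component of a $(q,q-3)$-graph. Since being a $(q,q-3)$-graph is hereditary and $H$ is an induced subgraph, $H$ is itself a $(q,q-3)$-graph; we may assume $|V(H)|\ge q$, otherwise we are in the first alternative. A $p$-connected graph is both connected and co-connected: splitting $V(H)$ into one connected (resp.\ co-connected) component and the rest leaves no crossing induced $P_4$ (a $P_4$ is connected and self-complementary), contradicting $p$-connectedness unless there is a single (co-)component. Hence by Theorem~\ref{thm:modular-dec} the quotient graph $H'$ of $H$ is prime for modular decomposition; being prime, connected and co-connected, $H'$ has either one vertex (impossible here) or at least four. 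Picking one vertex per module of $H$ gives an induced subgraph isomorphic to $H'$, so $H'$ is a $(q,q-3)$-graph, and since a prime graph equals its own quotient, Lemma~\ref{lem:reduce-qq3} applies to $H'$: it is a prime spider, a disc, a prime $p$-tree, or has at most $q$ vertices.

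\textbf{A module lemma.} It remains to understand when $H$ can differ from $H'$, i.e.\ when some module $M\in{\cal M}(H)$ has $|M|\ge 2$. A short case check shows that if $M$ is such a module and $P$ is an induced $P_4$ of $H$, then either $V(P)\subseteq M$, or $V(P)\cap M=\emptyset$, or $|V(P)\cap M|=1$ (as an endpoint or an interior vertex of $P$): a $P_4$ meeting $M$ in two or more vertices would force a forbidden adjacency through the module property. Consequently, if $H[M]$ is not a cograph it contains an induced $P_4$ on four vertices; since $H$ is $p$-connected, the bipartition $(M,V\setminus M)$ admits a crossing $P_4$, so the representative of $M$ lies on an induced $P_4$ of $H'$, which may then be realized using any of these four vertices of $H[M]$. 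Filling a set of at most $q$ vertices with blown-up copies then yields more than $q-3$ induced $P_4$'s, a contradiction. Hence $H[M]$ is a cograph for every nontrivial module $M$.

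\textbf{Case analysis on $H'$.} In each outcome of the reduction step we decide which vertices of $H'$ may be blown up into a nontrivial cograph while keeping $H$ a $(q,q-3)$-graph with $|V(H)|\ge q$. (i) If $H'$ is a disc $C_n$ or $\overline{C_n}$ ($n\ge 5$), a fixed $P_4$ through a blown-up vertex, realized with each of its copies, together with the companion $P_4$'s on the neighbours of that vertex, produces a linear-in-$|V(H)|$ number of induced $P_4$'s in a $q$-set; since $|V(H)|\ge q$ forces a blow-up of size $\Omega(q)$, this contradicts the $(q,q-3)$ property, so $H=H'$ is a disc. (ii) If $H'$ is a prime spider $(S\cup K\cup R,E')$, the same counting with the $P_4$'s $s\!-\!f(s)\!-\!f(s')\!-\!s'$ forbids blowing up any vertex of $S$; the vertices of $K$ have pairwise distinct neighbourhoods in $S$ so cannot be split; and blowing up a vertex of $R$ would destroy $p$-connectedness since the $R$-vertex of a prime spider lies on no induced $P_4$ (this already forces $R=\emptyset$). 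Thus $H=H'$ is a prime spider. (iii) If $H'$ is a prime $p$-tree — a spiked $p$-chain $P_k$, $\overline{P_k}$, $Q_k$, $\overline{Q_k}$, or one of the bounded graphs of~\cite{MaR99} — then the module lemma plus the explicit adjacencies show by a $P_4$-count that only the extremal vertices ($v_1,v_k,x,y$ in the $P_k$ families; $v_1,v_k$ and the $z_i$ in the $Q_k$ families) can carry a nontrivial cograph module, which is exactly the description in Definition~\ref{def:p-trees}; hence $H$ is a $p$-tree. (iv) If $|V(H')|\le q$ and $H=H'$ we are done; otherwise some module is nontrivial, and confronting the $P_4$-count with a prime graph on at most $q$ vertices forces $H'$ to be one of the $p$-tree quotients, returning us to case (iii).

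\textbf{Main obstacle.} The delicate point is the counting in the case analysis: one must produce, uniformly across discs, spiders, and the four spiked-$p$-chain families, a lower bound on the number of induced $P_4$'s in a set of at most $q$ vertices that strictly exceeds $q-3$ whenever an ``illegal'' blow-up is present, while remaining at or below $q$ vertices — handling the thin/thick dichotomy for spiders and the $z_i$-attachments for the $Q_k$ families. By comparison, the reduction to $H'$ and the module lemma are routine.
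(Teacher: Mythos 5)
The paper does not prove this statement: it is cited verbatim from Babel and Olariu~\cite{BaO99} and used as a black box in the proof of Theorem~\ref{thm:stronger-mdc-qq3}. So there is no internal proof to compare against; what you have written is an attempt to reconstruct a theorem the authors merely import from the literature.

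The main concern with your reconstruction is circularity. You derive Theorem~\ref{thm:p-comp} from Lemma~\ref{lem:reduce-qq3}, but Lemma~\ref{lem:reduce-qq3} is itself a cited result that, in Babel's original development, is obtained {\em from} the structure of $p$-connected $(q,q-3)$-graphs via the primeval decomposition --- that is, from (a form of) the very theorem you are trying to prove. Unless you supply an independent proof of Lemma~\ref{lem:reduce-qq3}, the reduction does not stand on its own. Beyond this, case~(iv) has a real gap: the assertion that a prime quotient on at most $q$ vertices carrying a nontrivial module ``must be one of the $p$-tree quotients'' is precisely the hard structural content of Babel's theorem and does not follow from a generic $P_4$-count --- a priori $H'$ could be any prime $(q,q-3)$-graph on up to $q$ vertices, and excluding all but the $p$-tree quotients is the whole point. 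The phrase ``if $|V(H')|\le q$ and $H=H'$ we are done'' is also false as written: the theorem demands $|V(H)|<q$ strictly, or one of the named isomorphism types, so a prime $p$-connected $(q,q-3)$-graph on exactly $q$ vertices is not covered by what you have shown. The routine observations are correct (a $p$-connected graph is connected and co-connected, hence its quotient is prime; an induced $P_4$ meets a module in $0$, $1$ or $4$ vertices; internal modules must be cographs; no $P_4$ of a spider visits $R$, so $p$-connectedness forces $R=\emptyset$), but the substance lies in the counting you defer and in the unproved structural claim of case~(iv).
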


Finally, before we can prove Theorem~\ref{thm:stronger-mdc-qq3}, we need to further characterize the {\em separable} $p$-connected components.
We use the following characterization of separable $p$-connected components.

\begin{theorem}[~\cite{JaO95}]\label{thm:separable-p-comp}
A $p$-connected graph $G=(V,E)$ is separable if and only if its quotient graph is a split graph.
Furthermore, its unique separation $(V_1,V_2)$ is given by the union $V_1$ of the strong modules inducing the clique and the union $V_2$ of the strong modules inducing the stable set. 
\end{theorem}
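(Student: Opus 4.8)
The plan is to prove the equivalence by first reducing to the prime case and then analysing directly how the classes of a separation interact with adjacency. Two elementary facts drive the reduction. First, induced $P_4$'s of $G$ ``project'': if $\{w_1,w_2,w_3,w_4\}$ induces a $P_4$ in $G$ and $M_i\in{\cal M}(G)$ is the maximal strong module containing $w_i$, then the $M_i$ are pairwise distinct and $\{M_1,M_2,M_3,M_4\}$ induces a $P_4$ in $G'$ in the same order --- since $P_4$ has no pair of twins, two $w_i$'s in one module would be indistinguishable by a third vertex of the $P_4$; conversely, choosing one representative in each of four modules inducing a $P_4$ in $G'$ gives an induced $P_4$ in $G$, because modules are complete or anticomplete to the outside. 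Second, since $G$ is $p$-connected, $G$ and $\overline{G}$ are connected ($P_4$ is self-complementary, so $\overline{G}$ is $p$-connected too, and a disconnected graph leaves some bipartition uncrossed), hence by Theorem~\ref{thm:modular-dec} the quotient $G'$ is prime, with $|V(G')|\ge 4$. Using these, one shows every $M\in{\cal M}(G)$ lies inside $V_1$ or inside $V_2$: take $v\in M\cap V_1$, $v'\in M\cap V_2$ and a $P_4$ $P$ through $v$ (which exists since $G$ is $p$-connected); if $v'\in P$ this contradicts distinctness of the four projecting modules, and otherwise $P'=P-v+v'$ is another induced $P_4$, and counting how many vertices of $P$, resp. $P'$, lie in $V_2$ (using that a crossing $P_4$ must have exactly two endpoints in $V_2$) yields a contradiction in every case. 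So a separation of $G$ is module-respecting, and via the projection fact $G$ is separable iff $G'$ is, with separations in bijection. It therefore suffices to prove the statement for $G=G'$ prime $p$-connected, and then push the induced separation back along modules to get the ``furthermore'' part.

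For the easy direction, suppose $G$ is prime $p$-connected and split with clique part $K$ and stable part $S$; I claim $(V_1,V_2):=(K,S)$ is a separation. Let $a\!-\!b\!-\!c\!-\!d$ be any crossing $P_4$. Its vertex set meets $K$ in a clique and $S$ in a stable set, so, as the clique number and the independence number of $P_4$ both equal $2$, it meets each of $K,S$ in exactly two vertices; the two in $K$ form an edge and the two in $S$ a non-edge of $a\!-\!b\!-\!c\!-\!d$, and the only partition of $\{a,b,c,d\}$ into an edge and a non-edge of this path is $\{b,c\}\sqcup\{a,d\}$. Hence the endpoints fall in $S=V_2$ and the midpoints in $K=V_1$, which is exactly the separation condition. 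Since a separation is unique, this also proves the split partition of such a $G$ is unique and coincides with $(V_1,V_2)$.

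For the hard direction, let $(V_1,V_2)$ be the unique separation of a prime $p$-connected $G$. One checks $V_1,V_2\neq\emptyset$ and that every $v\in V_1$ is the midpoint of some induced $P_4$ (apply $p$-connectedness to the bipartition $(\{v\},V\setminus v)$: the resulting crossing $P_4$ contains $v$, and $v\in V_1$ forces $v$ to be a midpoint). The goal is to show $V_1$ induces a clique and $V_2$ an independent set; then $G$ is split with split partition $(V_1,V_2)$ and we are done. Assume for contradiction $a,b\in V_2$ with $ab\in E$. Using primality (no twins) pick a vertex distinguishing $a$ from $b$, grow a path $c\!-\!a\!-\!b$, and then --- the real work --- argue, by applying $p$-connectedness to suitably chosen bipartitions, that this extends to an induced $P_4$ in which one of $a,b$ is a midpoint and which has a vertex in $V_1$, i.e. a crossing $P_4$ violating the ``endpoints in $V_2$, midpoints in $V_1$'' shape. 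Applying the same reasoning to $\overline{G}$ (also prime $p$-connected, with the same separation and with ``edge'' and ``non-edge'' interchanged) then rules out a non-edge inside $V_1$.

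The main obstacle is precisely this last step. A priori an induced $P_4$ could lie entirely inside $V_2$, and such a $P_4$ is not crossing, so the separation constrains nothing about it; one must show this cannot happen in a prime $p$-connected graph, equivalently that $V_2$ is independent. The only tools available are ``every bipartition is crossed by an induced $P_4$'' together with ``every two vertices are distinguished by a third'', and turning a hypothetical $V_2$-edge into a crossing $P_4$ with that edge in midpoint position is the technical heart of the proof. Everything else --- the module bookkeeping of the reduction and the clique/independence-number computation for $P_4$ --- is routine.
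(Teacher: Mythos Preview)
The paper does not prove this theorem at all: it is quoted as a known result from~\cite{JaO95} and simply used later in the proof of Theorem~\ref{thm:stronger-mdc-qq3}. So there is no ``paper's own proof'' to compare against; you are attempting to supply a proof the authors deliberately outsourced.

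As for your proposal itself: the reduction to the prime case and the easy direction (split $\Rightarrow$ separable) are fine, and your observation that a separation of $G$ must be module-respecting is the right first move. But the hard direction is not actually proved. You write that one should ``argue, by applying $p$-connectedness to suitably chosen bipartitions,'' that a hypothetical edge $ab\subseteq V_2$ extends to a crossing $P_4$ with $a$ or $b$ as a midpoint --- and then you explicitly concede in the next paragraph that this is ``the main obstacle'' and ``the technical heart of the proof,'' without carrying it out. That is exactly the gap: nothing you have written rules out an induced $P_4$ lying entirely inside $V_2$ (which is not a crossing $P_4$ and hence unconstrained by the separation), and your one concrete step --- ``using primality pick a vertex distinguishing $a$ from $b$, grow a path $c\!-\!a\!-\!b$'' --- does not by itself force that distinguishing vertex into $V_1$, nor does it produce a fourth vertex giving an induced $P_4$. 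The argument in~\cite{JaO95} does real work here (it uses, in effect, that in a $p$-connected graph every vertex lies on an induced $P_4$ and analyzes how such a $P_4$ through $a$ must interact with $b$ and with $V_1$), and that work cannot be replaced by the phrase ``suitably chosen bipartitions.'' Until that step is written out, the proposal is a plan, not a proof.
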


We are now ready to prove Theorem~\ref{thm:stronger-mdc-qq3}.

\begin{proofof}{Theorem~\ref{thm:stronger-mdc-qq3}}
Suppose $G$ and $\overline{G}$ are connected (otherwise we are done).
By Theorem~\ref{thm:primeval-dec} there are two cases.
First we assume $G$ to be $p$-connected.
By Theorem~\ref{thm:p-comp}, $G$ either contains less than $q$ vertices, or is isomorphic to a prime spider, to a disc or to a $p$-tree.
Furthermore, if $G$ is a $p$-tree then according to Definition~\ref{def:p-trees}, the nontrivial modules can be characterized.
So, we are done in this case.
Otherwise, $G$ is not $p$-connected.
Let $V = V_1 \cup V_2 \cup V_3$ such that: $H = G[V_1 \cup V_2]$ is a separable $p$-component with separation $(V_1,V_2)$, every vertex of $V_3$ is adjacent to every vertex of $V_1$ and nonadjacent to every vertex of $V_2$.
Note that $G'$ is obtained from the quotient graph $H'$ of $H$ by possibly adding a vertex adjacent to all the strong modules in $V_1$.
In particular, by Theorem~\ref{thm:separable-p-comp} $H'$ is a split graph, and so, $G'$ is also a split graph.
By Lemma~\ref{lem:reduce-qq3}, it implies that $G'$ is either a prime spider, a spiked $p$-chain $Q_k$, a spiked $p$-chain $\overline{Q_k}$, or a graph with at most $q$ vertices.
Furthermore, if $G'$ is a prime spider then by Theorem~\ref{thm:p-comp} so is $H$, hence $G$ is a spider.
Otherwise, $G'$ is either a spiked $p$-chain $Q_k$ or a spiked $p$-chain $\overline{Q_k}$.
It implies that $H$ is a $p$-tree.
In particular, the nontrivial modules in $H$ can be characterized according to Definition~\ref{def:p-trees}. 
The only nontrivial module of $G$ that is not a nontrivial module of $H$ (if any) contains $V_3$. 
Finally, since the module that contains $V_3$ has no neighbour among the modules in $V_2$, the corresponding vertex in the quotient can only be a $z_i$, for some $i$.
So, we are also done in this case.
\end{proofof}

\end{document}